\setlist{nolistsep}
\newtheorem{theorem}{Theorem}
\newtheorem{lemma}{Lemma}
\newtheorem{corollary}{Corollary}
\theoremstyle{definition}
\newtheorem{definition}{Definition}
\newenvironment{customthm}[1]
  {\innercustomthm}
  {\endinnercustomthm}
\newenvironment{customlem}[1]
{\innercustomlem}
  {\endinnercustomlem}
\newcommand\revision[1]{{\color{black} #1}}
\crefname{assumption}{Assumption}{Assumption}
\DeclarePairedDelimiterX{\inp}[2]{\langle}{\rangle}{#1 \, , \, #2 }
\DeclarePairedDelimiter{\norm}{\lVert}{\rVert_2}
\DeclarePairedDelimiter{\lonenorm}{\lVert}{\rVert_1}
\global\long\def\gx{\hat{\bm{f}}}%
\global\long\def\gy{\hat{\bm{h}}}%
\global\long\def\bgx{\hat{\widetilde{\mathbf{f}}}}%
\global\long\def\bgy{\hat{\widetilde{\mathbf{h}}}}%
\global\long\def\arb{\Diamond}%
\global\long\def\Ax{\bm{p}}%
\global\long\def\Aty{\bm{q}}%
\global\long\def\bAx{\widetilde{\mathbf{p}}}%
\global\long\def\bAty{\widetilde{\mathbf{q}}}%
\title{On the $O(1/T)$ Convergence of Alternating Gradient Descent–Ascent in Bilinear Games}
\author{Tianlong Nan \\
IEOR Department, Columbia University \\
New York, NY 10027, USA \\
\texttt{tianlong.nan@columbia.edu} \\
\And
Shuvomoy Das Gupta \\
CMOR Department, Rice University \\
Houston, TX 77005, USA \\
\texttt{sd158@rice.edu} \\
\AND
Garud Iyengar \\
IEOR Department, Columbia University \\
New York, NY 10027, USA \\
\texttt{garud@ieor.columbia.edu} \\
\And
Christian Kroer \\
IEOR Department, Columbia University \\
New York, NY 10027, USA \\
\texttt{christian.kroer@columbia.edu} \\
}
\begin{document}

\maketitle

\begin{abstract}
    We study the alternating gradient descent-ascent (AltGDA) algorithm in two-player zero-sum games. 
    Alternating methods, where players take turns to update their strategies, have long been recognized as simple and practical approaches for learning in games, exhibiting much better numerical performance than their simultaneous counterparts.
    However, our theoretical understanding of alternating algorithms remains limited, and results are mostly restricted to the unconstrained setting. 
    We show that for two-player zero-sum games that admit an interior Nash equilibrium, AltGDA converges at an $O(1/T)$ ergodic convergence rate when employing a small constant stepsize. This is the first result showing that alternation improves over the simultaneous counterpart of GDA in the constrained setting.
    For games without an interior equilibrium, we show an $O(1/T)$ local convergence rate with a constant stepsize that is independent of any game-specific constants. 
    In a more general setting, we develop a performance estimation programming (PEP) framework to jointly optimize the AltGDA stepsize along with its worst-case convergence rate. 
    The PEP results indicate that AltGDA may achieve an $O(1/T)$ convergence rate for a finite horizon $T$, whereas its simultaneous counterpart appears limited to an $O(1/\sqrt{T})$ rate.
\end{abstract}

\section{Introduction}
\label{sec:intro}

No-regret learning is one of the premier approaches for computing game-theoretic equilibria in multi-agent games. 
It is the primary method employed for solving extremely large-scale games, and was used for computing superhuman poker AIs~\citep{bowling2015heads,moravvcik2017deepstack,brown2018superhuman,brown2019superhuman}, as well as human-level AIs for Stratego~\citep{perolat2022mastering} and Diplomacy~\citep{meta2022human}. 

In theory it is known that no-regret learning dynamics can converge to a Nash equilibrium at a rate of $O(1/T)$ through the use of \emph{optimistic} learning dynamics, such as optimistic gradient descent-ascent or optimistic multiplicative weights~\citep{rakhlin2013online,rakhlin2013optimization,syrgkanis2015fast}.
Nonetheless, the practice of solving large games has mostly focused on theoretically slower methods that guarantee only an $O(1/\sqrt{T})$ convergence rate in the worst case, notably the CFR regret decomposition framework~\citep{zinkevich2007regret} combined with variants of the \emph{regret matching} algorithm~\citep{hart2000simple,tammelin2014solving,farina2021faster}.
A critical ``trick'' for achieving fast practical performance with these methods is the idea of \emph{alternation}, whereby the regret minimizers for the two players take turns updating their strategies and observing performance, rather than the simultaneous strategy updates traditionally employed in the classical folk-theorem that reduces Nash equilibrium computation in a two-player zero-sum game to a regret minimization problem in repeated play.

Initially, alternation was employed as a numerical trick that greatly improved performance (e.g., in~\citet{tammelin2015solving}), and was eventually shown not to \emph{hurt} performance in theory~\citep{farina2019online,burch2019revisiting}.
Yet its great practical performance begs the question of whether alternation provably \emph{helps} performance.
The first such result in a game context (and more generally for \emph{constrained} bilinear saddle-point problems), was given by \citet{wibisono2022alternating}, where they show that alternating \emph{mirror descent} with a Legendre regularizer guarantees $O(T^{1/3})$ regret, and thus $O(1/T^{2/3})$ convergence to equilibrium. 
This bound was later tightened by \citet{katona2024symplectic}. 
A Legendre regularizer is, loosely speaking, one that guarantees that the updates in mirror descent never touch the boundary. This is satisfied by the entropy regularizer, which leads to the multiplicative weights algorithm, but not by the Euclidean regularizer in the constrained setting, and thus not for alternating gradient descent-ascent (AltGDA). 
In practice, AltGDA often achieves better performance than Legendre-based methods~\citep{kroer2020ieor8100_note5}, and the practically-successful regret-matching methods are also more akin to GDA than multiplicative weights~\citep{farina2021faster}.

In spite of recent progress on alternation, it remains an open question whether AltGDA achieves a speedup over simultaneous GDA for game solving, which is known to achieve $O(1/\sqrt{T})$ convergence. More generally, it is unknown whether any of the standard learning methods that touch the boundary during play benefit from alternation. 
Empirically, there is evidence suggesting this may be the case. For instance, \citet{kroer2020ieor8100_note5} observed that the empirical performance of AltGDA exhibits $O(1/T)$ behavior on random matrix games. 
In this paper, we demonstrate that an $O(1/T)$ convergence rate can be achieved in various settings, thereby providing the first set of theoretical results supporting the success of AltGDA in solving games and constrained minimax problems. 

\paragraph{Contributions.} 
The contribution of this paper is three-fold. 
\begin{itemize}
    \item We show that AltGDA achieves a $O(1/T)$ rate of convergence in bilinear games with an interior Nash equilibrium.
    Our result shows that alternation is enough to achieve a $O(1/T)$ rate of convergence, whereas every prior result achieving a $O(1/T)$ rate of convergence for two-player zero-sum games required some form of optimism.
    \item We prove that AltGDA converges locally at an $O(1/T)$ rate in \emph{any} bilinear game. Moreover, in this case, we can set a constant stepsize that is independent of any game-specific constant. 
    \item By leveraging the techniques of performance estimation programming (PEP) framework, 
    we numerically compute worst-case convergence bounds for AltGDA by formulating the problem as SDPs. 
    We present the numerically optimal fixed stepsizes for each $T$, and the corresponding optimal worst-case convergence bounds. 
    Our methodology is the first instance of stepsize optimization of such performance estimation problems for primal-dual algorithms
    involving linear operators. 
\end{itemize}



\section{Related Work}
\label{sec:rel-work}


\textbf{AltGDA in unconstrained minimax problems.}
\citet{bailey2020finite} studied AltGDA in unconstrained bilinear problems, and showed an $O(1/T)$ convergence rate.
They also proposed a useful energy function that is a constant along the AltGDA trajectory. 
Proving a $O(1/T)$ convergence rate is easier in the unconstrained setting, where the pair of strategies $(\bm{0}, \bm{0})$ is guaranteed to be a Nash equilibrium no matter the payoff matrix.
More discussion is given in~\cref{sec:1-T-convergence-interior-NE}. 

\citet{zhang2022near} 
established local linear convergence rates for both unconstrained strongly-convex strongly-concave (SCSC)
minimax problems. 
\citet{lee2024fundamental} studied AltGDA for unconstrained smooth SCSC minimax problems. 
\revision{More recently, \citet{feng2025continuous} studied AltGDA with momentum in unconstrained smooth minimax problems.}

\textbf{AltGDA in constrained bilinear games.} 
From the game theory context, the constrained setting is more important, because it is the one capturing standard solution concepts such as Nash equilibrium.
Prior to our work, we are not aware of any theoretical results showing that alternation improves GDA compared to the simultaneous algorithm in constrained minimax problems.
See also \citet{orabona2019modern} for an extended discussion of the history of alternation in game solving and optimization. 

As a common technique in game-solving, alternation has been investigated in settings related to ours. 
\citet{mertikopoulos2018cycles} showed that the continuous-time dynamics (in their Section A.2) achieve an $O(1/T)$ average regret bound. 
\citet{cevher2023alternation} study a novel no-regret learning setting that captures the type of regret sequences observed in alternating self play in two-player zero-sum games.
They show a $O(T^{1/3})$ no-regret learning result for a somewhat complicated learning algorithm for the simplex, and show that $O(\log T)$ regret is possible when the simplex has two actions, through a reduction to learning on the Euclidean ball, where they show the same bound.
\revision{\citet{hait2025alternating} generalize this result to any convex-concave zero-sum games.}
Recently, \citet{lazarsfeld2025optimism} prove a lower bound of $\Omega(1/\sqrt{T})$ for alternation in the context of fictitious play.

\revision{
    \textbf{Optimistic methods in constrained bilinear games.}
    As mentioned earlier, it is well known that an $O(1/T)$ convergence rate can be achieved by certain variants of extragraidient methods~\citep{korpelevich1976extragradient} and optimistic GDA~\citep{popov1980modification} (here simply called optimistic methods).
    For constrained bilinear games, the $O(1/T)$ convergence rate has been established by a long line of work for various optimistic methods, including Mirror-Prox~\citep{nemirovski2004prox}, Dual Extrapolation~\citep{nesterov2007dual}, Primal-Dual Hybrid Gradient~\citep{chambolle2011first}, 
    Accelerated Primal-Dual~\citep{chen2014optimal},
    and Adaptive Mirror-Prox~\citep{antonakopoulos2019adaptive}, among others.
    We emphasize that AltGDA is not theoretically superior to optimistic methods in general; rather, it is an appealing algorithmic choice that is widely used in practice.
}

\textbf{PEP for primal-dual algorithms.} 
There has been prior work using the SDP-based PEP framework to evaluate the performance of primal-dual algorithms involving a linear operator with known stepsize~\citep{bousselmi2024interpolation,zamani2024exact,krivchenko2024convex}, but they do not investigate optimizing the stepsize to get the best convergence bound. 
\citet{das2024branch,jang2023computer} proposed for optimizing stepsizes of first-order methods for minimizing a single function or sum of two functions, by using spatial branch-and-bound based frameworks. 
Unfortunately such frameworks can become prohibitively slow when it comes to optimizing
primal-dual algorithms because of additional nonconvex coupling between the variables in the presence of the linear operator.

\textbf{Notation.}
For vectors $\bm{a},\bm{b}\in\mathbb{R}^{d}$, we write $\bm{a}^{\top}\bm{b}$ or $\inp{\bm{a}}{\bm{b}}$ for the standard inner product and $\|\bm{a}\|=\sqrt{\bm{a}^{\top}\bm{a}}$ for the Euclidean norm.
The spectral norm of a matrix $A$ is denoted by $\norm{ A } = \sigma_\textup{max}(A)$, where $\sigma_\textup{max}(A)$ represents the largest singular value of $A$. We use $\lonenorm{\bm{a}}$ and $\norm{ \bm{a} }$ to denote $\ell_1$ and $\ell_2$ vector norms, respectively.
Projection onto a compact convex set $\mathcal{X}$ is denoted by $\Pi_{\mathcal{X}}(x)=\textup{argmin}_{z\in\mathcal{X}}\|x-z\|_2^{2}$. 
We write $[d] = \{ 1, \ldots, d \}$ for any positive integer $d$. 

\section{Preliminaries}
\label{sec:prb-setup}

We consider bilinear saddle point problems (SPPs) of the form 
\begin{equation}
  \min_{\bm{x} \in \mathcal{X}}\max_{\bm{y} \in \mathcal{Y}} \; \bm{y}^\top A \bm{x}, 
  \label{eq:min-max}
\end{equation}
where $\mathcal{X} \subseteq \mathbb{R}^{n}$ and $\mathcal{Y} \subseteq \mathbb{R}^{m}$ are compact convex sets and $A$ 
is an $n \times m$ matrix. 
We are especially interested in \emph{bilinear two-player zero-sum games} (or \emph{matrix games}), where $\mathcal{X} = \Delta_n = \{\bm{x} \in \mathbb{R}^n_+ \mid \sum_{i=1}^{n} x_{i} = 1 \}$ and $\mathcal{Y} = \Delta_m = \{\bm{y} \in \mathbb{R}^m_+ \mid \sum_{j=1}^{m} y_{j}=1 \}$ are the probability simplexes. 
In the game context, \cref{eq:min-max} corresponds to a game in which two players (called the $x$-player and $y$-player) choose their strategies from decision sets $\Delta_n$ and $\Delta_m$, and the matrix $A$ encodes the payoff of the $y$ player (which the $x$ player wants to minimize). 

We say $(\bm{x}^*, \bm{y}^*) \in \Delta_n \times \Delta_m$ is a \emph{Nash equilibrium} (NE) or saddle point of the game 
if it satisfies
\begin{equation}
    \bm{y}^{\top}A \bm{x}^* \leq (\bm{y}^*)^{\top} A \bm{x}^* \leq  (\bm{y}^*)^{\top}A \bm{x} \hspace{10pt} \forall\; \bm{x} \in \Delta_n, \bm{y} \in \Delta_m. 
    \label{eq:def-NE-SP}
\end{equation}
By von Neumann’s min-max theorem~\citep{v1928theorie}, 
in every bilinear two-player zero-sum game, there always exists a Nash equilibrium, and a unique value $\nu^* := \min_{\bm{x} \in \Delta_n} \max_{\bm{y} \in \Delta_m} \bm{y}^\top A \bm{x} = \max_{\bm{y} \in \Delta_m} \min_{\bm{x} \in \Delta_n} \bm{y}^\top A \bm{x}$ which is called the \emph{value of the game}. 
Furthermore, the set of NE is convex, and $\nu^* = \min_i (A^\top \bm{y}^*)_i = \max_j (A \bm{x}^*)_j$. 
We call an NE $(x^*, y^*)$ an \emph{interior NE} if $x^*_i >0$ for all $i\in [n]$ and $y^*_j > 0$ for all $j \in [m]$. 

For a strategy pair $(\tilde{\bm{x}}, \tilde{\bm{y}}) \in \Delta_n \times \Delta_m$,
we use the \emph{duality gap} (or \emph{saddle-point residual}) to
measure the proximity to NE: 
\begin{align}
    \mathtt{DualityGap}(\tilde{\bm{x}},\tilde{\bm{y}}) 
    &:= \big(\sup\nolimits_{\bm{y}\in\Delta_m}\bm{y}^{\top}A\tilde{\bm{x}}-\tilde{\bm{y}}^{\top}A\tilde{\bm{x}}\big) + \big(\tilde{\bm{y}}^{\top}A\tilde{\bm{x}}-\inf\nonumber_{\bm{x}\in\Delta_n}\tilde{\bm{y}}^{\top}A\bm{x}\big) \nonumber \\ 
    &= \sup\nolimits_{\bm{x}\in\Delta_n,\bm{y}\in\Delta_m}\left(\bm{y}^{\top}A\tilde{\bm{x}}-\tilde{\bm{y}}^{\top}A\bm{x}\right). 
    \tag{Duality Gap}
    \label{eq:def:duality-gap}
\end{align}
By definition, $\mathtt{DualityGap}(\tilde{\bm{x}}, \tilde{\bm{y}}) \geq 0$ for
any $(\tilde{\bm{x}}, \tilde{\bm{y}}) \in \Delta_n \times \Delta_m$. Moreover,  
$\mathtt{DualityGap}(\tilde{\bm{x}},\tilde{\bm{y}})=0$ if and only if $(\tilde{\bm{x}},\tilde{\bm{y}})$
is a Nash equilibrium.

For general bilinear SPPs as in~\cref{eq:min-max}, 
$\mathtt{DualityGap}(\tilde{\bm{x}},\tilde{\bm{y}}) = \sup_{\bm{x}\in\mathcal{X}, \bm{y}\in\mathcal{Y}}\left(\bm{y}^{\top}A\tilde{\bm{x}}-\tilde{\bm{y}}^{\top}A\bm{x}\right)$. 
A point $(\tilde{\bm{x}}, \tilde{\bm{y}}) \in \mathcal{X} \times \mathcal{Y}$ is called an $\varepsilon$-saddle point if $\mathtt{DualityGap}(\tilde{\bm{x}},\tilde{\bm{y}}) \leq \varepsilon$. 

\paragraph{AltGDA and SimGDA.} 
For solving~\cref{eq:min-max}, the alternating and simultaneous GDA (AltGDA and SimGDA) algorithms are simple and commonly used in practice.
In AltGDA, the players take turns updating their strategies by performing a single \emph{projected gradient descent} update based on their expected payoff for the current state. 
We state the AltGDA algorithm in~\cref{alg:altgda}. 
In contrast, SimGDA updates both players' strategies simultaneously, using the expected payoff evaluated at the previous state. Compared to~\cref{alg:altgda}, the inner projected gradient descent takes the form 
\begin{equation}
    \bm{x}^{t+1} = \Pi_{\mathcal{X}} (\bm{x}^{t} - \eta A^{\top}\bm{y}^{t} ), \hspace{10pt} 
    \bm{y}^{t+1} = \Pi_{\mathcal{Y}} (\bm{y}^{t} + \eta A\bm{x}^{t} ). 
    \tag{SimGDA Updates}
\end{equation}


\begin{algorithm}[t]
    \caption{Alternating Gradient Descent-Ascent (AltGDA)}
    \label{alg:altgda}
    \begin{algorithmic} 
    \STATE \textbf{input:} Number of iterations $T$, step size $\eta>0$ 
    \STATE \textbf{initialize:} $(\bm{x}^{0}, \bm{y}^{0}) \in \mathcal{X}\times\mathcal{Y}$
    \FOR{$t=0,\dots,T-1$}
    \STATE $\bm{x}^{t+1} = \Pi_{\mathcal{X}} (\bm{x}^{t} - \eta A^{\top}\bm{y}^{t} )$
    \vspace{1pt}
    \STATE $\bm{y}^{t+1} = \Pi_{\mathcal{Y}} (\bm{y}^{t} + \eta A\bm{x}^{t+1} )$ 
    \ENDFOR 
    \STATE \textbf{output:} $(\frac{1}{T} \sum_{t = 1}^T \bm{x}^t, \frac{1}{T} \sum_{t = 1}^T \bm{y}^t)\in\mathcal{X}\times\mathcal{Y}$ 
    \end{algorithmic} 
\end{algorithm}

\section{Performance estimation programming for AltGDA \label{sec:A-computer-assisted-framework}}

In this section, we present a computer-assisted methodology based on the PEP framework \citep{drori2014performance, taylor2017smooth, taylor2017CompositePEP} along with results on PEP with linear operators \citep{bousselmi2024interpolation} to compute the tightest convergence rate of AltGDA numerically. 

\textbf{Computing the worst-case performance with a known $\eta$.}
We consider bilinear SPPs over compact convex sets as described by \eqref{eq:min-max}. The worst-case performance (or complexity) of AltGDA corresponds to the number of oracle calls the algorithm needs to find
an $\varepsilon$-saddle point. 
Equivalently, we can measure AltGDA's worst-case performance by looking at 
the duality gap of the averaged iterates, i.e., 
$\mathtt{DualityGap}(\frac{1}{T} \sum_{k = 1}^T \bm{x}^t, \frac{1}{T} \sum_{k = 1}^T \bm{y}^t) = \max_{\bm{x}\in\mathcal{X}, \bm{y}\in\mathcal{Y}}\big(\bm{y}^{\top}A (\frac{1}{T} \sum_{k = 1}^T \bm{x}^t)-(\frac{1}{T} \sum_{k = 1}^T \bm{y}^t)^{\top}A\bm{x}\big)$, 
where $\{(\bm{x}^{t}, \bm{y}^{t})\}_{1 \leq t \leq T }$ are generated by AltGDA with stepsize $\eta$. 

To keep the worst-case performance bounded, we need to bound the norm of $A$ and the radii of the compact convex sets $\mathcal{X}$, $\mathcal{Y}$. 
In particular, 
without loss of generality, 
we assume $\sigma_{\textup{max}}(A) \leq 1$. 
Let $R_x$ and $R_y$ be the radii of the sets $\mathcal{X}$ and $\mathcal{Y}$, respectively. Then, without loss of generality, we can set $R := \max\{R_{x},R_{y}\}=1$.
This is due to a scaling argument: for any other finite value of $R$, the new performance measure will be $R^{2}\times\text{(worst-case performances for }R=1)$.

Let $\mathtt{AltGDA}(\eta, \bm{x}^0, \bm{y}^0)$ denote the sequence of iterates generated by~\cref{alg:altgda} with stepsize \ensuremath{\eta} starting from initial point $(\bm{x}^{0}, \bm{y}^{0})$. 
Then, 
we can compute the worst-case performance of AltGDA with stepsize $\eta>0$ and total iteration $T$ by the following \emph{infinite-dimensional}
nonconvex optimization problem:
\begin{equation}
    \mathcal{P}_{T}(\eta) := \left(\begin{array}{l}
    
    \underset{\substack{
    \{ (\bm{x}^{t}, \bm{y}^{t}) \}_{0 \leq t \leq T}\subseteq\mathbb{R}^{n}\times \mathbb{R}^{m},\\
    \mathcal{X} \times \mathcal{Y}\subseteq\mathbb{R}^{n} \times \mathbb{R}^{m},\\ 
    A\in\mathbb{R}^{m\times n}, 
    m,n\in\mathbb{N}.
    }
    }{\mbox{maximize}} 
    \smashoperator[r]{\,}
    \frac{1}{T} \sum_{t=1}^{T} \left(\bm{y}^{\top}A\bm{x}^{t} - (\bm{y}^{t})^{\top}A\bm{x} \right) \\
    \hspace{1.2em} \textup{subject to} \hspace{1.5em} \mathcal{X}\text{ is a convex compact set in }\mathbb{R}^{n}\text{ with radius }1, \\
    \hspace{6.6em} \mathcal{Y}\text{ is a convex compact set in }\mathbb{R}^{m}\text{ with radius }1,\\
    \hspace{6.6em} \sigma_{\text{max}}(A) \leq 1,\\
    \hspace{6.6em} \{(\bm{x}^{t}, \bm{y}^{t})\}_{1 \leq t \leq T} = \mathtt{AltGDA}(\eta, \bm{x}^0, \bm{y}^0), \\ 
    \hspace{6.6em} (\bm{x}^{0}, \bm{y}^{0}), (\bm{x}, \bm{y}) \in \mathcal{X}\times\mathcal{Y}.
    \end{array}\right)
    \tag{INNER}
    \label{eq:worst-case-inner}
\end{equation}
Problem (\ref{eq:worst-case-inner}) is intractable because it contains infinite-dimensional objects
such as $\mathcal{X},\mathcal{Y}$, and $A$ where dimensions $n, m$ are also variables. 
\revision{Nevertheless, for our setup, all possible iterates and their associated gradients up to $T$ can be captured by a finite collection of interpolation inequalities. These inequalities fully encode the entire class of admissible instances, thereby allowing (\ref{eq:worst-case-inner}) to be reduced to a finite-dimensional SDP, as elaborated in Appendix~\ref{app:PEP-expansion}.}
This SDP is also free from the dimensions $n$ and $m$ under a large-scale assumption. 
In other words, computing
$\mathcal{P}_{T}(\eta)$ numerically provides us a tight dimension-independent convergence bound for AltGDA for a given $\eta$ and $T$.

\textbf{Best convergence rate with optimized $\eta$.}
For a fixed $T$, the best convergence rate of AltGDA can be found by 
solving
$\mathcal{P}_{T}^* = \mbox{minimize}_\eta \mathcal{P}_{T}(\eta)$.
To solve this problem, we perform a grid-like search on the stepsize $\eta$ and solve the corresponding SDP for each of the finitely-many $\eta$ choice:
\begin{itemize}
    \item Step 1: Set an initial search range $[\eta_{\min}, \eta_{\max}]$; 
    \item Step 2: Pick $n$ points within this range such that their reciprocal is equally spaced, i.e., $n$ candidate stepsizes s.t. $\eta_{\min} = \eta_1 \leq \cdots \leq \eta_n = \eta_{\max}$ and $\frac{1}{\eta_1} - \frac{1}{\eta_2} = \cdots = \frac{1}{\eta_{n - 1}} - \frac{1}{\eta_n}$;\footnote{By taking non-equally spaced points, we place greater emphasis on exploring the range of smaller step sizes.}
    \item Step 3: Compute the worst-case performance corresponding to each candidate stepsize, and denote the best stepsize as $\eta^*$; 
    \item Step 4: Set an updated search range: $[\eta_{\min}, \eta_{\max}] \leftarrow [\eta^* - \alpha \frac{\eta_{\max} - \eta_{\min}}{n - 1}, \eta^* + \alpha \frac{\eta_{\max} - \eta_{\min}}{n - 1}]$; 
    \item Step 5: Repeat Step 2 and Step 4 until $\eta_{\max} - \eta_{\min} \leq \varepsilon_{\eta}$. 
\end{itemize}
    Here, $\eta_{\min}, \eta_{\max}, n, \alpha, \varepsilon_\eta$ are hyperparameters to be fine-tuned. 
    In our numerical experiments, we set $n = 20$, $\alpha=1$ and $\varepsilon_\eta = 10^{-3}$; and fine-tuned $\eta_{\min}, \eta_{\max}$ based on different algorithms and time horizon $T$. Because the precision of the grid search $\varepsilon_{\eta}$ is not equal to exactly zero, we call our computed stepsize to be \emph{optimized} rather than \emph{optimal}.


\begin{figure}
  \centering
  \begin{subfigure}[b]{0.49\linewidth}
    \includegraphics[width=\linewidth]{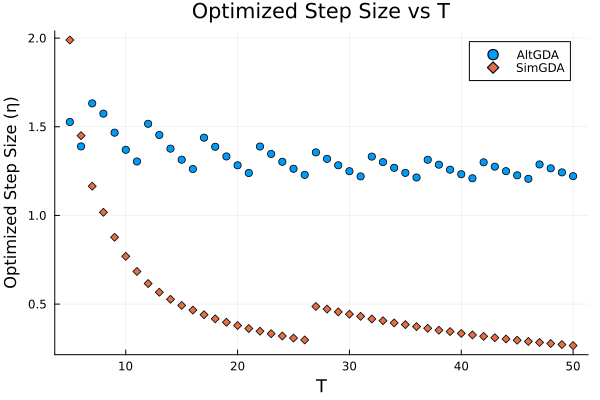}
    \caption{}
    \label{fig:PEP:a}
  \end{subfigure}%
  \; 
  \begin{subfigure}[b]{0.49\linewidth}
    \includegraphics[width=\linewidth]{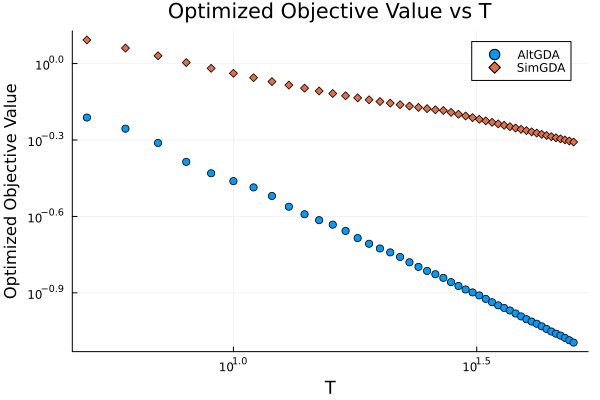}
    \caption{}
    \label{fig:PEP:b}
  \end{subfigure}
    \caption{Optimized stepsizes and corresponding optimized objective values for $T = 5, 6, \ldots, 50$ via PEP. The left plot shows the optimized stepsizes. The optimized objective value in the right plot denotes the worst-case performance measure (i.e., duality gap of the averaged iterates) corresponding to the optimized stepsizes on log scale.}
  \label{fig:PEP}
\end{figure}


\textbf{Results and discussion.} 
See~\cref{fig:PEP} for the optimized stepsizes and corresponding worst-case performance. We also provide the data values to generate~\cref{fig:PEP} in~\cref{app:subsec:data}. 

From~\cref{fig:PEP:a}, we observe a structured sequence of optimized stepsizes for AltGDA. 
The origin of this periodic optimized stepsize pattern is interesting in itself and worth exploring. 
Moreover, this phenomenon indicates the possibility of improving the convergence rate by employing iteration-dependent structured stepsize schedules in the minimax problems. 
Beyond this, we observe that the decay rate of the stepsizes scales as $O({1}/{(\log{T})^\alpha})$ for some $\alpha > 0$, which indicates that the optimal convergence rate may hold with ``nearly-constant'' stepsizes.
\cref{fig:PEP:b} shows that the optimized duality gap approaches a $O(1/T)$ convergence rate as $T$ increases. 
This suggests that AltGDA obtains a $O(1/T)$ convergence rate after a short transient phase. 
This finding also raises an interesting question about the origin of the initial convergence phase. 
In contrast, SimGDA exhibits a $O(1/\sqrt{T})$ convergence rate, even with an optimized stepsize schedule. 


The PEP literature provides us a potential solution to theoretically prove the tightest convergence rate for a given algorithm~\citep{drori2014performance, taylor2017smooth, taylor2017CompositePEP}. 
A proof in this framework requires discovering analytical solutions to the optimal dual variables of the underlying SDPs, including proving semi-definiteness of the SDP matrices~\citep{goujaud2023fundamental}. 
For AltGDA, our attempts at a proof via this route lead to us observing rather intricate optimal dual variable structures that appear to make the proof difficult. As an alternative, we will show in the following sections that more classical proof approaches, with some interesting variations, can be used to show $O(1/T)$ convergence in several settings. 

\section{$O(1/T)$ convergence rate with an interior Nash equilibrium}
\label{sec:1-T-convergence-interior-NE}

In this section, we establish an $O(1/T)$ convergence rate of AltGDA for bilinear two-player zero-sum games that admit an interior NE. 
We begin by presenting the motivation and interpretation of the proof, followed by a sketch of the formal proof. 

\subsection{Motivation and Interpretation}
We will start by presenting some new observations about the trajectory generated by AltGDA, which is the inspiration for our proof.
In contrast to the unconstrained setting~\citep{bailey2020finite}, the iterates of AltGDA do not necessarily cycle from the beginning, even in the presence of an interior NE. 
\begin{figure}[t]
    \begin{minipage}{5.5in}
        \centering
        \includegraphics[width=0.35\linewidth]{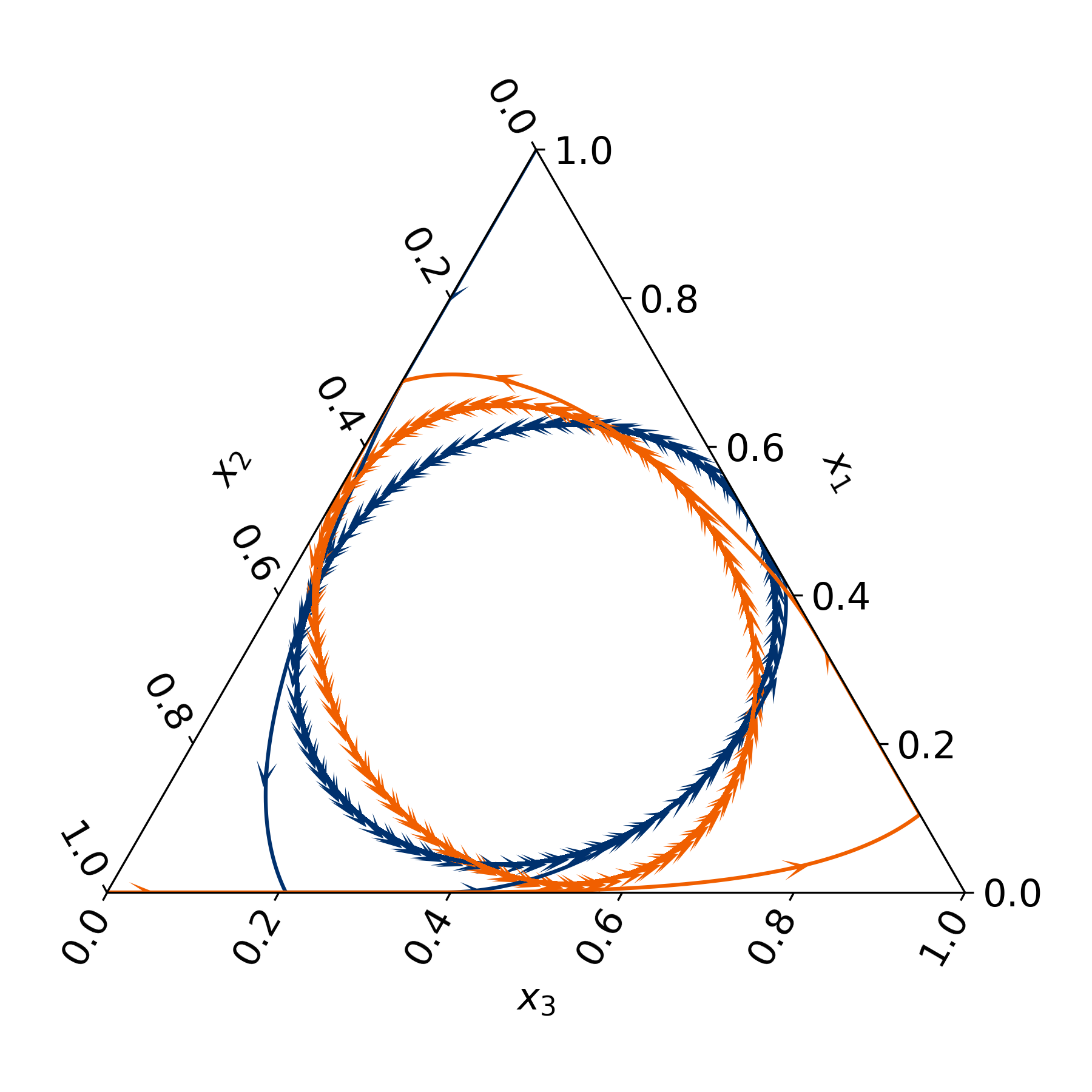}
        \hspace{-12pt}
        \raisebox{0.15\height}{
        \includegraphics[width=0.33\linewidth]{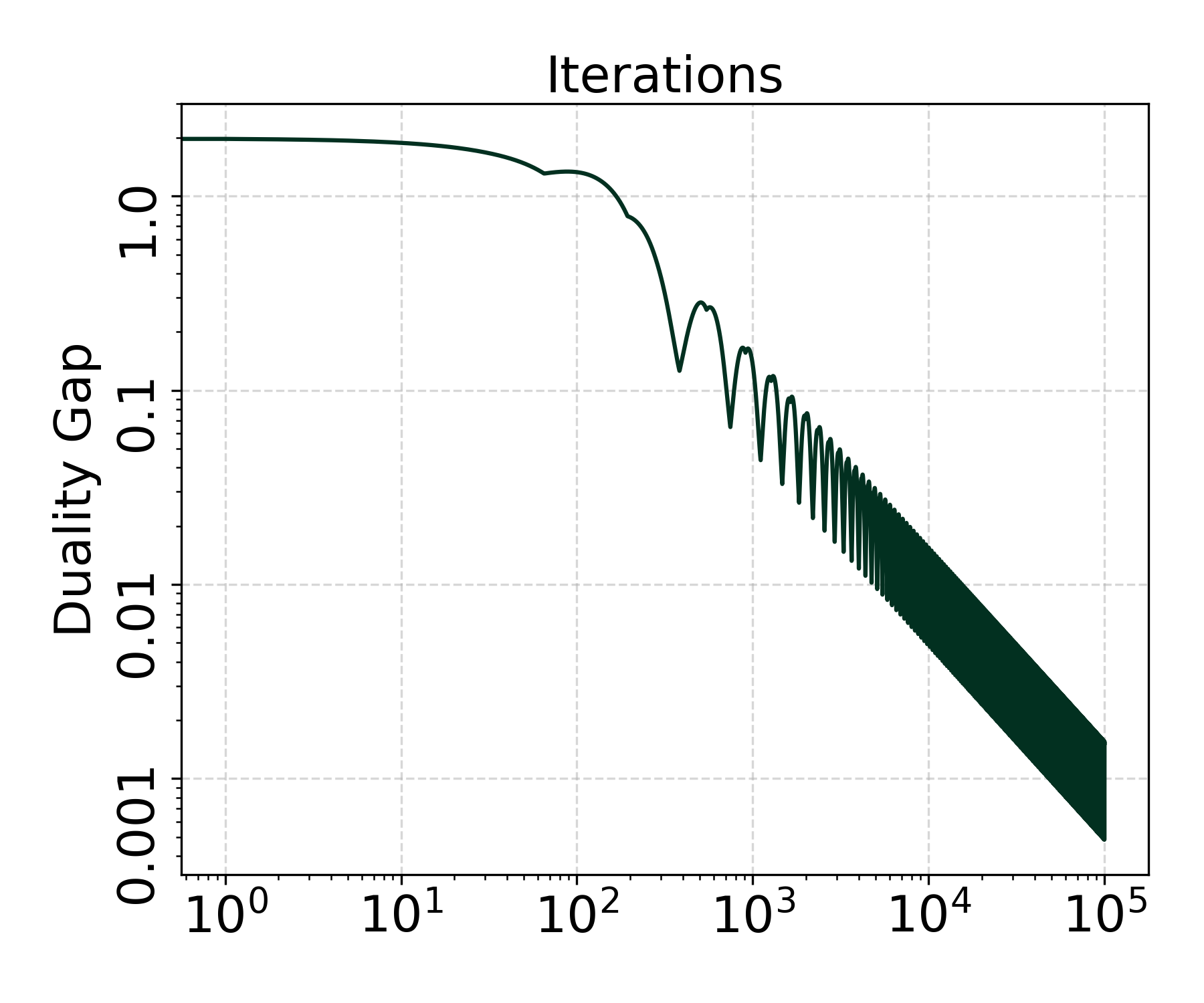}}
        \hspace{-10pt}
        \raisebox{0.15\height}{
        \includegraphics[width=0.33\linewidth]{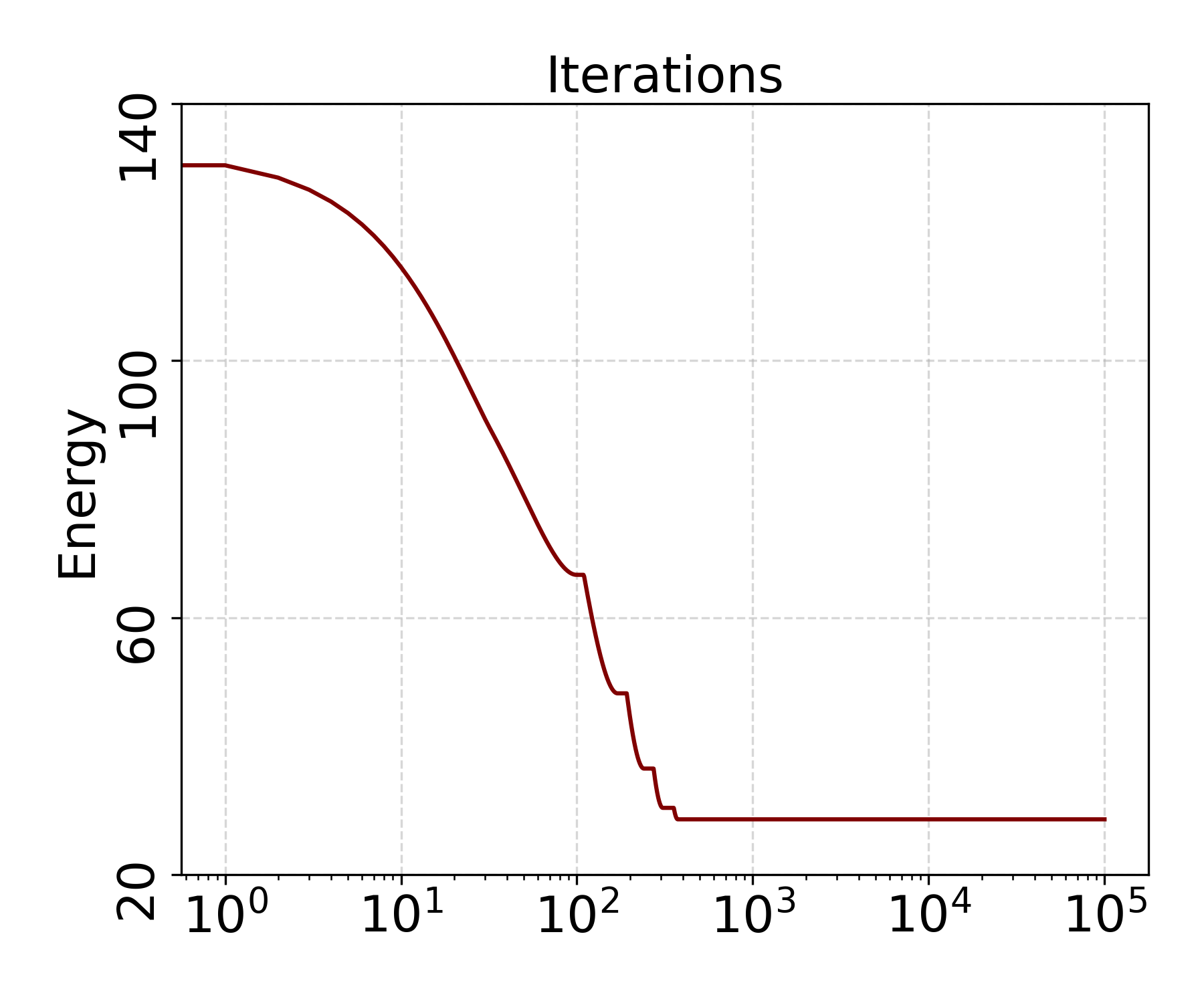}}
    \end{minipage}

    \caption{Numerical results on the rock-paper-scissor game. 
    From left to right, we show the trajectories of the AltGDA iterates (in ternary plots), the changes in duality gaps, and the evolution of the energy functions.}
    \label{fig:with-interior-NE}
\end{figure}
\cref{fig:with-interior-NE} shows the numerical behavior of AltGDA in the rock-paper-scissors game, which is a bilinear game admitting an interior NE. 
The left plot shows that the trajectories of the players' strategies exhibit two distinct phases. 
In the first phase, the orbit hits the boundary of the simplex and is ``pushed back'' into its interior. 
In the second phase, the orbit settles into a state where it cycles within the relative interior of the simplex and no longer touches the boundary. 

We observe that this two-phase behavior can be captured by the following energy function with respect to any interior NE $(\bm{x}^*, \bm{y}^*):$\footnote{While the energy function is dependent on the stepsize $\eta$,  we write $\mathcal{E}(\bm{x}^t, \bm{y}^t)$ rather than $\mathcal{E}(\eta, \bm{x}^t, \bm{y}^t)$ to reduce the notational burden.}
\begin{equation}
    \mathcal{E}(\bm{x}^t, \bm{y}^t) := \norm{\bm{x}^t - \bm{x}^{*}}^{2} + \norm{\bm{y}^t - \bm{y}^{*}}^{2} - \eta (\bm{y}^{t})^{\top} A \bm{x}^{t}. 
    \tag{Energy}
    \label{eq:def:energy}
\end{equation}
We plot the evolution of $\mathcal{E}(\bm{x}^t, \bm{y}^t)$ on the right of~\cref{fig:with-interior-NE}. 
Interestingly, we find a correspondence between the ``collision and friction'' of the trajectory and the ``energy decay'' of $\mathcal{E}(\bm{x}^t, \bm{y}^t)$. 
In particular, the energy function admits a meaningful physical interpretation---it decays whenever the trajectory collides with and rubs against the boundary of the simplex.

Moreover, in the middle of~\cref{fig:with-interior-NE}, we see the duality gap decreases slowly when the energy decreases, and shrinks at an $O(1/T)$ rate after the energy function remains constant. 
This indicates the connection between the energy function and the convergence rate of the averaged iterate, which forms the foundation of our proof. 

\subsection{Convergence Analysis}

In classical optimization analysis, convergence guarantees are often established using some potential function: 
one first establishes an inequality showing that 
the duality gap at an arbitrary iteration is bounded by the change of a potential function plus some \emph{summable} term, then telescopes this inequality to obtain the convergence rate.
In contrast, our proof works with an inequality involving the duality gap at two successive iterates, as shown in the following lemma. 
The complete proofs in this section are deferred to~\cref{app:sec:interior}. 
\begin{lemma}
    Let $\{ (\bm{x}^t, \bm{y}^t) \}_{t = 0, 1, \ldots }$ be a sequence of iterates generated by~\cref{alg:altgda} with $\eta > 0$. 
    Then, for any $(\bm{x}, \bm{y}) \in \Delta_n \times \Delta_m$, we have 
    \begin{align}
        \eta\left(\bm{y}^{\top}A\bm{x}^{t}-(\bm{y}^{t})^{\top}A\bm{x}\right) & \leq\psi_{t}(\bm{x},\bm{y})-\psi_{t+1}(\bm{x},\bm{y})+\eta\inp*{-A^{\top}\bm{y}^{t}}{\bm{x}^{t+1}-\bm{x}^{t}}\nonumber \\
        & \quad-\frac{1}{2}\norm*{\bm{x}^{t+1}-\bm{x}^{t}}^{2}-\frac{1}{2}\norm*{\bm{y}^{t+1}-\bm{y}^{t}}^{2},\;\text{ for }t\geq1,
        \label{eq:descent-inequality-2}
    \end{align}
    \begin{align}
        \eta\left(\bm{y}^{\top}A\bm{x}^{t+1}-(\bm{y}^{t+1})^{\top}A\bm{x}\right) & \leq\phi_{t}(\bm{x},\bm{y})-\phi_{t+1}(\bm{x},\bm{y})+\eta\inp*{A\bm{x}^{t+1}}{\bm{\bm{y}}^{t+1}-\bm{y}^{t}}\nonumber \\
        & -\frac{1}{2}\norm*{\bm{x}^{t+1}-\bm{x}^{t}}^{2}-\frac{1}{2}\norm*{\bm{y}^{t+1}-\bm{y}^{t}}^{2},\text{ for }t\geq0,\label{eq:descent-inequality-1}
    \end{align}
    where 
    $\phi_{t}(\bm{x}, \bm{y}) := \frac{1}{2}\norm*{\bm{x}^{t}-\bm{x}}^{2} + \frac{1}{2}\norm*{\bm{y}^{t}-\bm{y}}^{2} + \eta (\bm{y}^{t})^\top A \bm{x}$ and 
    $\psi_{t}(\bm{x},\bm{y}) := \frac{1}{2}\norm*{\bm{x}^{t}-\bm{x}}^{2}+\frac{1}{2}\norm*{\bm{y}^{t-1}-\bm{y}}^{2}-\frac{1}{2}\norm*{\bm{y}^{t}-\bm{y}^{t-1}}^{2}$. 
    \label{lem:two-bounds}
\end{lemma}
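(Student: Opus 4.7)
The plan is to derive both inequalities from a single workhorse: the standard projected-gradient three-point inequality, which says that whenever $\bm{u}^+=\Pi_{C}(\bm{u}-\eta\bm{g})$, first-order optimality of the projection yields
\begin{equation*}
\eta\inp*{\bm{g}}{\bm{u}^+-\bm{z}}\;\leq\;\tfrac{1}{2}\norm*{\bm{u}-\bm{z}}^{2}-\tfrac{1}{2}\norm*{\bm{u}^+-\bm{z}}^{2}-\tfrac{1}{2}\norm*{\bm{u}^+-\bm{u}}^{2}
\end{equation*}
for every $\bm{z}\in C$. Each AltGDA iteration supplies two such inequalities: one for the $\bm{x}$-update (gradient $A^{\top}\bm{y}^{t}$) and one for the $\bm{y}$-update (gradient $-A\bm{x}^{t+1}$). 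The whole proof reduces to picking the right pair to add and then routing the leftover bilinear cross-term either into the telescoping potential or into the single explicit inner product on the right of each claim.

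For \eqref{eq:descent-inequality-1} I would apply the three-point inequality to the $\bm{x}$- and $\bm{y}$-updates at the same step $t$, with test points $\bm{x}$ and $\bm{y}$, and sum. The combined left-hand side comes out to $\eta\bigl[(\bm{y}^t)^{\top}A\bm{x}^{t+1}-(\bm{y}^t)^{\top}A\bm{x}-(\bm{y}^{t+1})^{\top}A\bm{x}^{t+1}+\bm{y}^{\top}A\bm{x}^{t+1}\bigr]$, which differs from the target $\eta[\bm{y}^{\top}A\bm{x}^{t+1}-(\bm{y}^{t+1})^{\top}A\bm{x}]$ by exactly $\eta\inp*{A(\bm{x}^{t+1}-\bm{x})}{\bm{y}^{t+1}-\bm{y}^{t}}$. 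Splitting this mismatch as $\eta\inp*{A\bm{x}^{t+1}}{\bm{y}^{t+1}-\bm{y}^{t}}+\eta(\bm{y}^{t}-\bm{y}^{t+1})^{\top}A\bm{x}$, I would recognize the first piece as the ``extra'' inner product written on the right of \eqref{eq:descent-inequality-1} and the second piece as the only non-quadratic contribution of $\phi_t-\phi_{t+1}$; the remaining quadratic terms telescope verbatim.

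For \eqref{eq:descent-inequality-2} the same recipe applies under a shifted pairing: combine the $\bm{y}$-update at step $t-1$ (whose projection produces $\bm{y}^{t}$ from $\bm{x}^{t}$) with the $\bm{x}$-update at step $t$. This shift is exactly what forces $t\geq 1$, what explains the asymmetric role of $\bm{y}^{t-1}$ inside $\psi_t$, and what delivers the friction term $-\tfrac{1}{2}\norm*{\bm{y}^{t}-\bm{y}^{t-1}}^{2}$ directly from the three-point inequality at step $t-1$. The bilinear residual this time is $\eta(\bm{y}^{t})^{\top}A(\bm{x}^{t+1}-\bm{x}^{t})=-\eta\inp*{-A^{\top}\bm{y}^{t}}{\bm{x}^{t+1}-\bm{x}^{t}}$, which matches the extra inner product on the right of \eqref{eq:descent-inequality-2}. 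A brief bookkeeping check then confirms that the $+\tfrac{1}{2}\norm*{\bm{y}^{t+1}-\bm{y}^{t}}^{2}$ produced by $\psi_t-\psi_{t+1}$ cancels the $-\tfrac{1}{2}\norm*{\bm{y}^{t+1}-\bm{y}^{t}}^{2}$ stated on the right.

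The only real hazard is bookkeeping: tracking which test point goes with which inequality, which gradient each update carries, and ensuring that no cross-term is left stranded. It is worth noting that no structural hypothesis on the game enters---no interior-NE assumption, no spectral bound on $A$, no restriction on $\eta$---which is desirable, since this lemma is meant to serve as the common structural backbone of all the game-specific convergence analyses that follow.
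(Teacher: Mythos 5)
Your proposal is correct and follows essentially the same route as the paper: the ``three-point inequality'' you invoke is precisely the projection optimality condition $\inp*{\bm{u}-\eta\bm{g}-\bm{u}^{+}}{\bm{u}^{+}-\bm{z}}\geq 0$ combined with the polarization identity $2\inp*{\bm{u}-\bm{u}^{+}}{\bm{u}^{+}-\bm{z}}=\norm*{\bm{u}-\bm{z}}^{2}-\norm*{\bm{u}-\bm{u}^{+}}^{2}-\norm*{\bm{u}^{+}-\bm{z}}^{2}$ that the paper uses, the same-step pairing for the $\phi_t$ bound and the shift-by-one pairing for the $\psi_t$ bound match the paper's two derivations, and your routing of the residual cross-terms reproduces the paper's algebra up to harmless regroupings (the paper rewrites $\eta\inp*{A^{\top}\bm{y}^{t}}{\bm{x}^{t+1}-\bm{x}}$ by inserting $\bm{y}^{t+1}$ or $\bm{x}^{t}$ rather than first computing the target-minus-combined difference, but both yield the same identities). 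Your closing remark that no interior-NE assumption, spectral bound, or stepsize restriction enters is also accurate and matches the lemma's hypotheses.
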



The main challenge in the proof is determining whether the sum of the residual terms on the right-hand sides of~\cref{eq:descent-inequality-1,eq:descent-inequality-2} are summable, i.e., $\sum_{t=0}^\infty r_t < \infty$ where 
\begin{align}
    r_t :=& \eta \inp*{-A^{\top}\bm{y}^{t}}{\bm{x}^{t+1}-\bm{x}^{t}} + \eta \inp*{A\bm{x}^{t+1}}{\bm{\bm{y}}^{t+1}-\bm{y}^{t}}
    - \norm*{\bm{x}^{t+1} - \bm{x}^{t}}^{2} 
    - \norm*{\bm{y}^{t+1}-\bm{y}^{t}}^{2} \nonumber \\ 
    =& \inp*{-\eta A^{\top}\bm{y}^{t} - \bm{x}^{t+1} + \bm{x}^{t}}{\bm{x}^{t+1}-\bm{x}^{t}} + \inp*{\eta A\bm{x}^{t+1} - \bm{y}^{t+1} + \bm{y}^{t}}{\bm{y}^{t+1}-\bm{y}^{t}}.
    \label{eq:def:rt}
\end{align}
In the unconstrained case, we have $r_t \equiv 0$ for all $t \geq 0$, and hence the $O(1/T)$ convergence rate follows directly. 
In contrast, in the constrained case, the first-order optimality conditions of the projection operators imply that $r_t \geq 0$.
Therefore, it is not immediate whether $r_t$ is summable. 
To handle this, we exploit the connection between energy decay and the convergence rate of the duality gap, as shown in~\cref{fig:with-interior-NE}. 
In particular, when an interior NE exists, we show that the residual $r_t$ can be bounded by the decay of the energy function, as established in the following lemma. 
\begin{lemma}
    Assume that the bilinear game admits an interior NE. 
    Let $\{ (\bm{x}^t, \bm{y}^t) \}_{t = 0, 1, \ldots }$ be a sequence of iterates generated by~\cref{alg:altgda} with $\eta \leq \frac{1}{\norm*{A}} \min\{ \min_{i\in[n]} x_{i}^{*}, \min_{j \in [m]} y^*_j \}$. 
    Then, we have 
    $
        0 \leq r_t \leq \mathcal{E}(\bm{x}^t, \bm{y}^t) - \mathcal{E}(\bm{x}^{t + 1}, \bm{y}^{t + 1}) 
    $ for all $t \geq 0$. 
    \label{lem:energy-diff-bound-res-terms}
\end{lemma}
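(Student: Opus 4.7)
I would split the claim into two parts: (a) $r_t \geq 0$ and (b) $r_t \leq \mathcal{E}(\bm{x}^t, \bm{y}^t) - \mathcal{E}(\bm{x}^{t+1}, \bm{y}^{t+1})$.

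For part (a), I would use the equivalent form $r_t = \inp{\tilde{\bm{x}}^t - \bm{x}^{t+1}}{\bm{x}^{t+1} - \bm{x}^t} + \inp{\tilde{\bm{y}}^t - \bm{y}^{t+1}}{\bm{y}^{t+1} - \bm{y}^t}$ (with $\tilde{\bm{x}}^t := \bm{x}^t - \eta A^\top \bm{y}^t$ and $\tilde{\bm{y}}^t := \bm{y}^t + \eta A \bm{x}^{t+1}$), which is already noted in the excerpt. Each summand is non-negative by the projection variational inequality $\inp{\tilde{\bm{x}}^t - \bm{x}^{t+1}}{\bm{z} - \bm{x}^{t+1}} \leq 0$ for every $\bm{z} \in \Delta_n$, evaluated at $\bm{z} = \bm{x}^t$ (and analogously $\bm{z} = \bm{y}^t$ for the $\bm{y}$ summand). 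This part uses neither the interior-NE assumption nor the stepsize bound.

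For part (b), my plan is to reduce the inequality to a non-negativity question about the Lagrange multipliers of the two simplex projections. Expanding $\mathcal{E}_t - \mathcal{E}_{t+1}$ via the identity $\norm{\bm{u}}^2 - \norm{\bm{v}}^2 = 2\inp{\bm{u} - \bm{v}}{\bm{u}} - \norm{\bm{u} - \bm{v}}^2$, subtracting $r_t$, and simplifying the bilinear cross-terms using the interior-NE identities $A\bm{x}^* = \nu^* \bm{1}_m$ and $A^\top \bm{y}^* = \nu^* \bm{1}_n$ (which follow from complementary slackness in $\nu^* = \min_i(A^\top \bm{y}^*)_i = \max_j(A\bm{x}^*)_j$ given strict positivity of $\bm{x}^*, \bm{y}^*$) yields the symmetric identity
\[
\mathcal{E}(\bm{x}^t, \bm{y}^t) - \mathcal{E}(\bm{x}^{t+1}, \bm{y}^{t+1}) - r_t = 2\inp{\bm{x}^t - \bm{x}^*}{\tilde{\bm{x}}^t - \bm{x}^{t+1}} + 2\inp{\bm{y}^t - \bm{y}^*}{\tilde{\bm{y}}^t - \bm{y}^{t+1}}.
\]
I would then substitute the KKT decomposition of the simplex projections: $\tilde{\bm{x}}^t - \bm{x}^{t+1} = \tau_t \bm{1}_n - \bm{\mu}^t$ with $\bm{\mu}^t \geq 0$ and $\mu^t_i \bm{x}^{t+1}_i = 0$, and analogously $\tilde{\bm{y}}^t - \bm{y}^{t+1} = \sigma_t \bm{1}_m - \bm{\nu}^t$. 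Since $\bm{1}^\top \bm{x}^t = \bm{1}^\top \bm{x}^* = 1$, the $\tau_t \bm{1}_n$ and $\sigma_t \bm{1}_m$ contributions drop out, and the right-hand side collapses to $2\inp{\bm{\mu}^t}{\bm{x}^* - \bm{x}^t} + 2\inp{\bm{\nu}^t}{\bm{y}^* - \bm{y}^t}$. The stepsize bound now enters through the KKT stationarity: whenever $\mu^t_i > 0$, one has $\bm{x}^t_i \leq \tau_t + \eta (A^\top \bm{y}^t)_i \leq \tau_t + \eta \norm{A}$, and the bound $\eta \norm{A} \leq \min_i x^*_i$ (together with its $\bm{y}$-side counterpart) pushes $\bm{x}^t_i$ into the regime $\bm{x}^t_i \leq x^*_i$, so that the ``clipping slack'' is non-negative termwise.

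\textbf{Main obstacle.} The subtle step is the final non-negativity: a naive Pythagoras projection bound actually gives the wrong-direction inequality $\mathcal{E}_{t+1} - \mathcal{E}_t \leq r_t$, so one must use the interior-NE identities together with the explicit KKT structure of the simplex projections. The delicate part is the case analysis on the sign of the Lagrange multipliers $\tau_t, \sigma_t$, and arguing that the stepsize bounds on the $\bm{x}$ and $\bm{y}$ sides combine to cover both sign cases; this is what I expect the appendix to carry out in full detail. As a sanity check, in the no-clipping regime $\bm{\mu}^t = \bm{\nu}^t = 0$, both sides of the displayed identity vanish, consistent with $r_t = 0$ and the energy conservation along the interior part of the trajectory visible in Figure~\ref{fig:with-interior-NE}.
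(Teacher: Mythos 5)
Your parts (a) and (b) are structured correctly, and the reduction in (b) to showing $\inp{\bm{\mu}^t}{\bm{x}^* - \bm{x}^t} + \inp{\bm{\nu}^t}{\bm{y}^* - \bm{y}^t} \geq 0$ is the same identity the paper establishes (in~\cref{lem:energy-decay-interior}, stated there in terms of $\bm{\gamma}^t,\bm{\lambda}^t$). Your argument for $r_t \geq 0$ via the projection variational inequality also matches the paper. The gap is exactly where you flag it, and it is a genuine one: you attempt a \emph{termwise} argument ($x^t_i \leq x^*_i$ whenever $\mu^t_i > 0$) from $x^t_i \leq \tau_t + \eta(A^\top\bm{y}^t)_i \leq \tau_t + \eta\norm{A}$ together with $\eta\norm{A} \leq \min_i x^*_i$, but this chain only closes if $\tau_t \leq 0$. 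In your parametrization, $\tau_t$ is the threshold for projecting $\tilde{\bm{x}}^t = \bm{x}^t - \eta A^\top\bm{y}^t$, whose coordinate sum is $1 - \eta\,\bm{1}_n^\top A^\top\bm{y}^t$ and is not $1$; $\tau_t$ therefore carries a contribution $(\bm{1}_n^\top\tilde{\bm{x}}^t - 1)/n$ of uncontrolled sign, and the envisioned cross-player sign case analysis does not appear to patch this, since each player's termwise claim must be settled separately.

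The paper (\cref{lem:pos-qtt}) avoids the sign issue by arguing at the vector rather than coordinate level. It first replaces $\tilde{\bm{x}}^t - \bm{x}^{t+1}$ by $\eta\bm{\gamma}^t := \Pi_{\bar{\Delta}_n}(\tilde{\bm{x}}^t) - \bm{x}^{t+1}$, which has $\bm{1}_n^\top\bm{\gamma}^t = 0$ and yields the same inner product against $\bm{x}^t - \bm{x}^*$ (since $\bm{1}_n^\top(\bm{x}^t - \bm{x}^*) = 0$). It then writes $\inp{\bm{\gamma}^t}{\bm{x}^t - \bm{x}^*} = \inp{\bm{\gamma}^t}{\bm{x}^{t+1} - \bm{x}^*} + \inp{\bm{\gamma}^t}{\bm{x}^t - \bm{x}^{t+1}}$, lower bounds the second term by $-\eta\norm{A}\,\norm{\bm{\gamma}^t}$ using nonexpansiveness of $\Pi_{\Delta_n}$ (which gives $\norm{\bm{x}^{t+1}-\bm{x}^t}\le\eta\norm{A}$), and then absorbs both pieces into a single projection-optimality inequality: since $\eta\norm{A} \leq \min_i x^*_i$, the perturbed point $\bm{x}^* + \eta\norm{A}\,\bm{\gamma}^t/\norm{\bm{\gamma}^t}$ lies in $\Delta_n$, so $\inp{\bm{\gamma}^t}{\bm{x}^{t+1} - \bm{x}} \geq 0$ for all $\bm{x}\in\Delta_n$ applied at that point gives $\inp{\bm{\gamma}^t}{\bm{x}^t - \bm{x}^*}\geq 0$. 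The stepsize bound is used to guarantee an $\eta\norm{A}$-ball around the interior NE inside the simplex---a global geometric fact---rather than to control the sign of a Lagrange multiplier. If you want to salvage your KKT route, this is the step to replace: do not try to bound $\tau_t$; instead exploit the ``room around $\bm{x}^*$'' to pick a feasible test point for the projection inequality.
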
 





By combining~\cref{lem:two-bounds,lem:energy-diff-bound-res-terms}, telescoping over $t = 0, 1, \ldots, T$, and using the boundedness of $\phi, \psi, \mathcal{E}$, we obtain the $O(1/T)$ convergence rate. 
\begin{theorem}
    \label{thm:interior-NE} 
    Assume that the bilinear game admits an interior NE. 
    Let $\{ (\bm{x}^t, \bm{y}^t) \}_{t = 0, 1, \ldots }$ be a sequence of iterates generated by~\cref{alg:altgda} with $\eta \leq \frac{1}{\norm*{A}} \min\{ \min_{i\in[n]} x_{i}^{*}, \min_{j \in [m]} y^*_j \}$. 
    Then, we have 
    $$\mathtt{DualityGap}\left( \frac{1}{T}\sum_{t=1}^{T} \bm{x}^t, \frac{1}{T}\sum_{t=1}^{T} \bm{y}^t \right) \leq \frac{9 + 4\eta \norm*{A}}{\eta T}.$$
\end{theorem}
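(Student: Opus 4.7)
The plan is to add the two one-step inequalities of Lemma~1 with matching time indices so that their residuals align with the quantity $r_t$ controlled by Lemma~2, telescope, and then bound the remaining potentials by compactness.

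For an arbitrary $(\bm{x},\bm{y})\in\Delta_n\times\Delta_m$, I would sum~\eqref{eq:descent-inequality-2} over $t=1,\dots,T$ and~\eqref{eq:descent-inequality-1} over $t=0,\dots,T-1$. Both left-hand sides equal $\eta\sum_{t=1}^{T}\bigl(\bm{y}^{\top}A\bm{x}^{t}-(\bm{y}^{t})^{\top}A\bm{x}\bigr)$, so adding yields twice that sum on the LHS. The $\phi_t$'s telescope to $\phi_0-\phi_T$ and the $\psi_t$'s to $\psi_1-\psi_{T+1}$; each of these is at most $O(1)+O(\eta\|A\|)$ via the simplex-diameter bound $\|\bm{x}^t-\bm{x}\|^2\le 4$ and the trivial estimate $|(\bm{y}^t)^{\top} A\bm{x}|\le \|A\|$.

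The crux of the proof is the residual accounting. Pairing the residual of~\eqref{eq:descent-inequality-2} at time $t$ with that of~\eqref{eq:descent-inequality-1} at the same $t$ reproduces exactly the quantity $r_t$ defined just before Lemma~2. This pairing works for every interior $t\in\{1,\dots,T-1\}$ and leaves two ``unpaired'' boundary pieces, one from~\eqref{eq:descent-inequality-1} at $t=0$ and one from~\eqref{eq:descent-inequality-2} at $t=T$. Lemma~2 then telescopes the paired part into $\mathcal{E}(\bm{x}^1,\bm{y}^1)-\mathcal{E}(\bm{x}^T,\bm{y}^T)$, which is again $O(1)+O(\eta\|A\|)$, and each boundary piece is bounded by $O(\eta^2\|A\|^2)$ via Cauchy--Schwarz together with the non-expansiveness estimate $\|\bm{y}^{t+1}-\bm{y}^t\|\le \eta\|A\|$. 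Dividing through by $2\eta T$ and taking $\sup_{(\bm{x},\bm{y})\in\Delta_n\times\Delta_m}$ then gives the claimed bound.

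The hard part will be the residual-pairing step: the two inequalities in Lemma~1 are stated on slightly asymmetric index ranges, and their residuals only combine into the full $r_t$ when paired at identical time indices, so one must find a summation scheme that both doubles the gap sum on the LHS and exactly aligns the cross-terms into $r_t$. Any mismatch leaves ``half'' residuals that Lemma~2 does not control. The stepsize restriction in the theorem enters only through the hypothesis of Lemma~2, which in turn is the place where the interior-NE assumption is indispensable---without it, the positive quantity $\min_i x_i^*\wedge \min_j y_j^*$ degenerates and the residual $r_t$ is uncontrolled.
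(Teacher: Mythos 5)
Your proposal is correct and follows the same core route as the paper: pair the residuals of \eqref{eq:descent-inequality-1} and \eqref{eq:descent-inequality-2} at identical time indices so they sum to $r_t$, control $r_t$ via \cref{lem:energy-diff-bound-res-terms}'s energy-decay bound, telescope the potentials $\phi,\psi,\mathcal{E}$, and bound the leftovers by simplex compactness. The only difference is bookkeeping: the paper sums the combined per-step inequality over $t=1,\dots,T$ and adds \eqref{eq:descent-inequality-1} at $t=0$ (producing an extra $t=T+1$ duality-gap term on the LHS that it absorbs into $C(\bm{x},\bm{y})$), whereas your offset ranges $\{1,\dots,T\}$ and $\{0,\dots,T-1\}$ yield a clean $2\eta\sum_{t=1}^{T}$ on the LHS at the cost of two unpaired boundary residuals, which you correctly bound by $O(\eta^2\norm{A}^2)$ using $\norm{\bm{x}^{t+1}-\bm{x}^t}\le\eta\norm{A}$ and $\norm{\bm{y}^{t+1}-\bm{y}^t}\le\eta\norm{A}$.
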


    \cref{thm:interior-NE} provides the first finite regret and $O(1/T)$ convergence rate result for AltGDA in constrained minimax problems. 
    Although such a result has been known for several years in the unconstrained setting~\citep{bailey2020finite}, 
    no better than $O(1/\sqrt{T})$ convergence rate has been established in the constrained case. 
    Even for the broader class of alternating mirror descent algorithms, no instantiations of the algorithm were known to achieve a $O(1/T)$ convergence rate---despite having been observed numerically~\citep{wibisono2022alternating,katona2024symplectic,kroer2025games}. 

    \revision{
    Although our primary goal is to develop the theoretical foundations for AltGDA, we also include additional results relevant for practice in~\cref{app:sec:additional-results}.
    For instance, we present an adaptive step-size rule that does not require knowledge of the interior NE yet still achieves an $O(1/T)$ convergence rate in~\cref{app:subsec:adaptive}.
    
    }

\begin{figure}[t]
    \begin{minipage}{5.5in}
        \centering
        \includegraphics[width=0.35\linewidth]{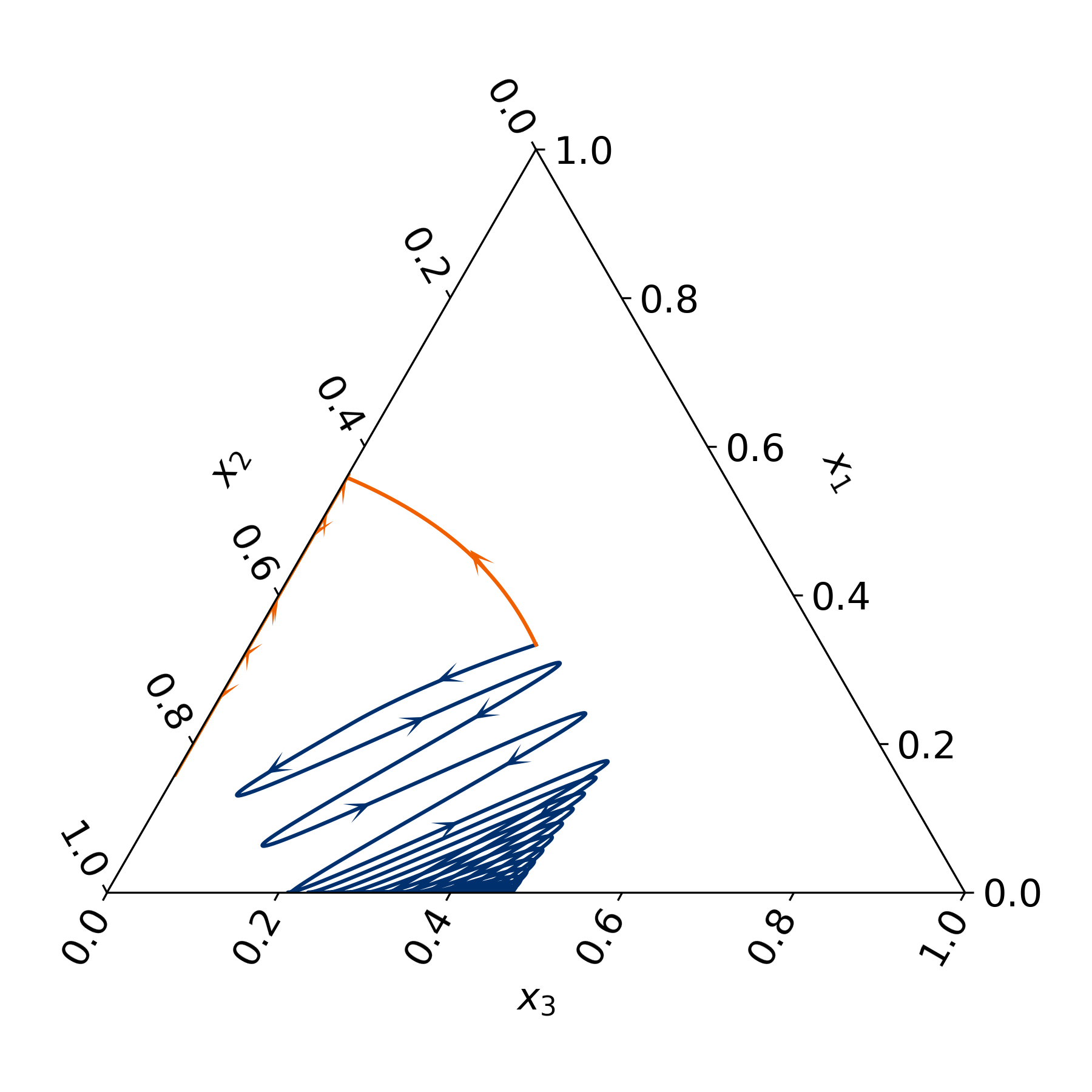}
        \hspace{-12pt}
        \raisebox{0.15\height}{
        \includegraphics[width=0.33\linewidth]{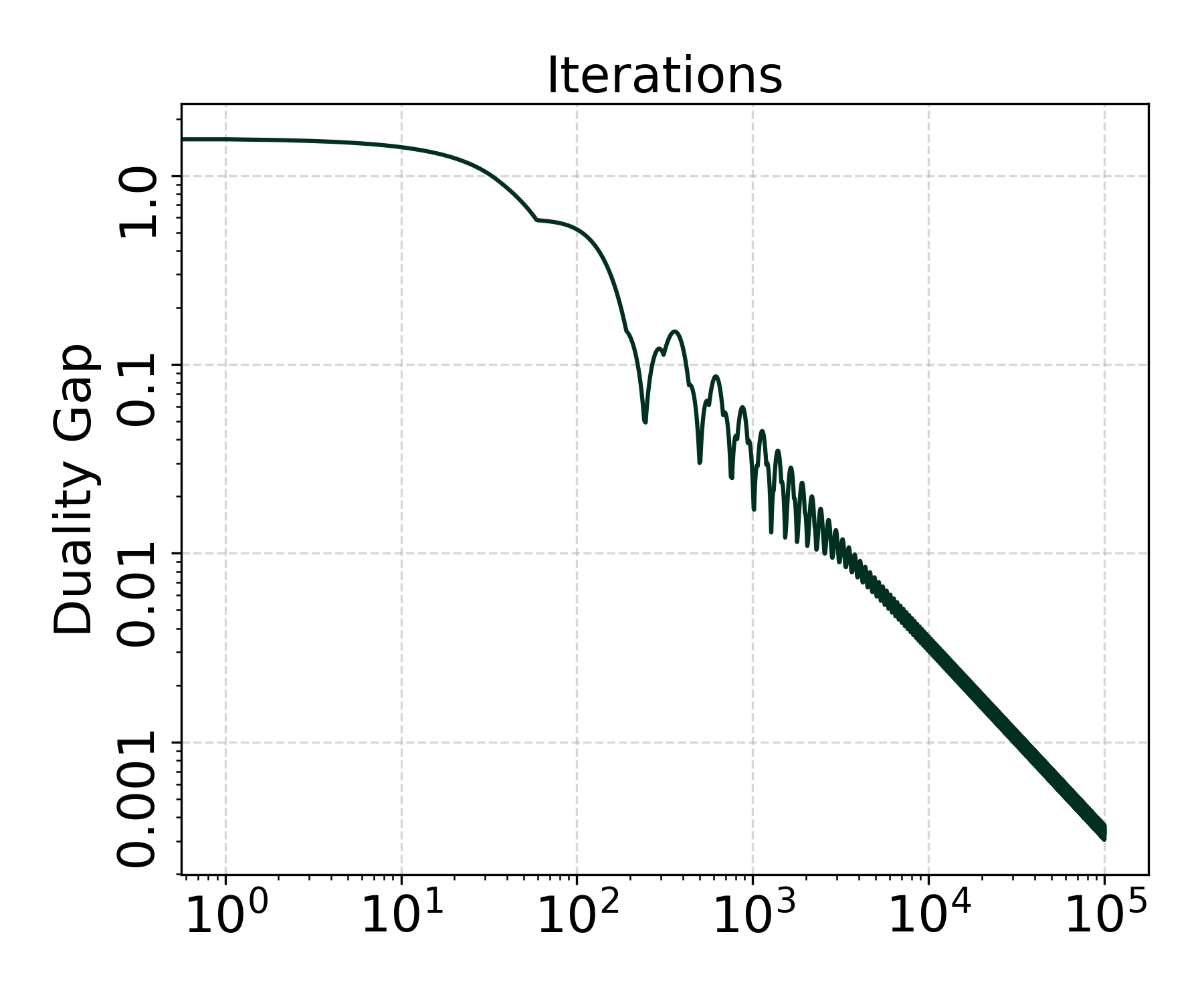}}
        \hspace{-10pt}
        \raisebox{0.15\height}{
        \includegraphics[width=0.33\linewidth]{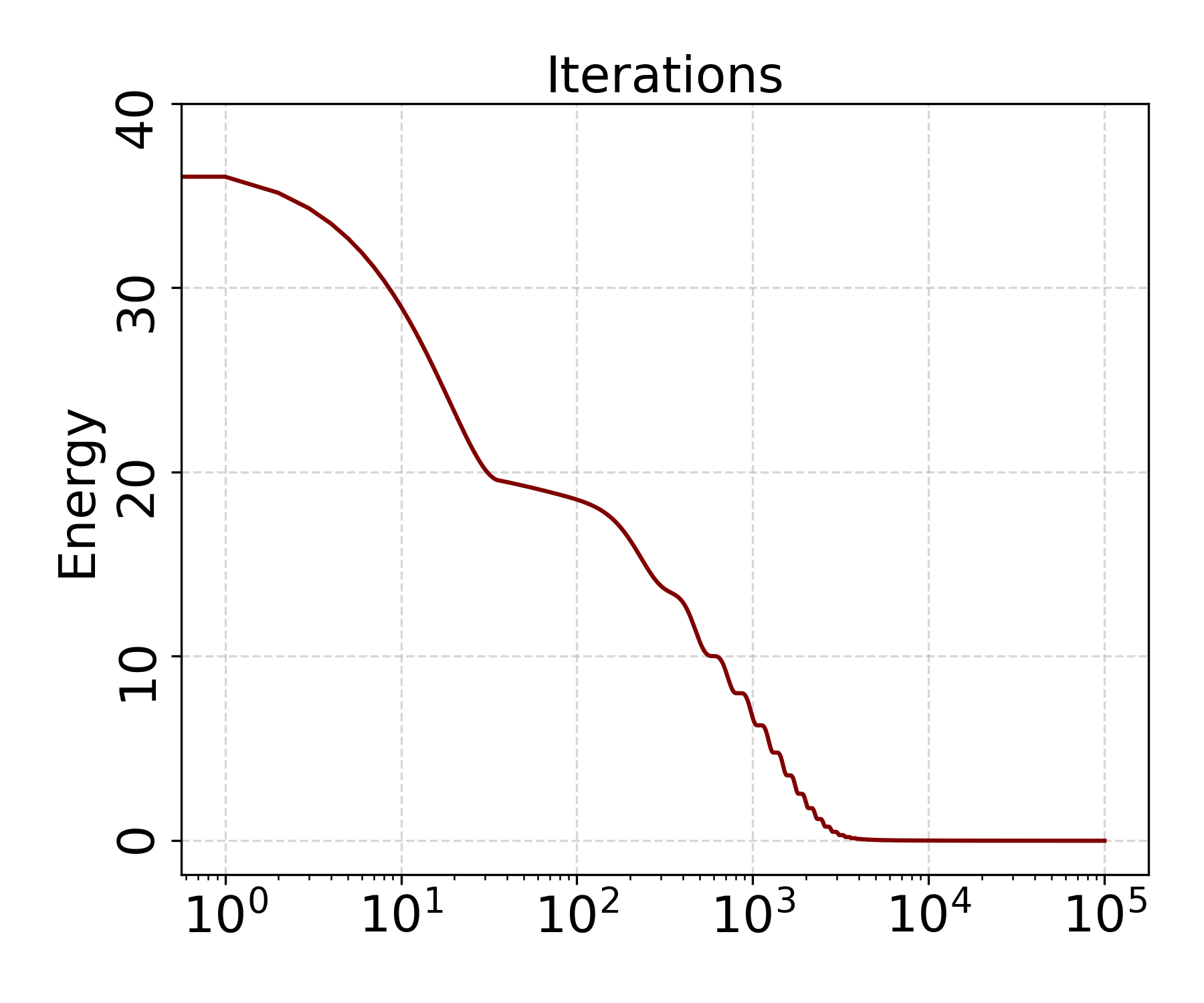}}
    \end{minipage}

    \caption{Numerical results on a $3 \times 3$ random matrix instance without an interior NE. 
    The experimental setup is the same as in~\cref{fig:with-interior-NE}.}
    \label{fig:without-interior-NE}
\end{figure}

The trajectory of AltGDA exhibits more intricate behavior when the game does not have an interior NE. 
As shown in~\cref{fig:without-interior-NE}, the trajectory tends to approach the face of the simplex spanned by the NE with maximal support, which we refer to as the \emph{essential face}. 
However, the trajectory does not converge to the essential face monotonically---it can leave the face after touching it. 
This non-monotonicity persists even after many iterations in our experiments, and, accordingly, the energy may increase on some iterations. 
In this case, the difference of the energy no longer yields an upper bound for $r_t$ as in~\cref{lem:energy-diff-bound-res-terms}.

\section{Local $O(1/T)$ Convergence Rate}
\label{sec:local-1-T-convergence-rate}

As previously discussed, our $O(1/T)$ convergence rate only applies to games with an interior NE due to non-monotonicity of the energy function in the general case.
Nevertheless, even without an interior NE, we show that in a local neighborhood of an NE, we can prove an $O(1/T)$ convergence rate with a constant stepsize. 
Notably, this stepsize is independent of any game-specific parameters.

Let $(\bm{x}^*,\bm{y}^*)$ be a NE with maximal support.
Then we first partition each player’s action set into two subsets: $I^* = \{ i \in [n] \mid x^*_i > 0 \}$ and $[n] \setminus I^*$; $J^* = \{ j \in [m] \mid y^*_j > 0 \}$ and $[m] \setminus J^*$, and introduce the following parameter measuring the gap between the suboptimal payoffs to the optimal payoff for both players\footnote{Note that the parameter $\delta$ is invariant under scaling of the payoff matrix $A$.}: 
\begin{equation}
    \delta := \min \left\{ \min_{i \notin I^*} \frac{(A^\top \bm{y}^*)_i - \nu^*}{\norm*{A}},\, \min_{j \notin J^*} \frac{\nu^* - (A \bm{x}^*)_j}{\norm*{A}},\, \min_{i \in I^*} x^*_i,\, \min_{j \in J^*} y^*_i \right\}. 
    \label{eq:def-delta}
\end{equation}
If the equilibrium has full support, then $\delta > 0$ is the minimum probability of any action played in the full-support equilibrium.
If there is no full-support equilibrium, then \citet[Lemma C.3]{mertikopoulos2018cycles} show that for a maximum-support equilibrium we have that $\delta > 0$.
Define $r_x = \min\{ \frac{\lvert I^* \rvert}{n - \lvert I^* \rvert}, n \}$, $r_y = \min\{ \frac{\lvert J^* \rvert}{m - \lvert J^* \rvert}, m \}$. 
and a local region\footnote{The last two constraints are redundant when $\lvert I^* \rvert = n$ or $\lvert J^* \rvert = m$.} $$S := \Big\{ (\bm{x}, \bm{y}) \Big\vert \norm*{\bm{x} - \bm{x}^*} \leq \frac{\delta}{4},\, \norm*{\bm{y} - \bm{y}^*} \leq \frac{\delta}{4},\, \max_{i \notin I^*} x_i \leq \frac{\eta \norm*{A}}{2} r_x \delta,\, \max_{j \notin J^*} y_j \leq \frac{\eta \norm*{A}}{2} r_y \delta \Big\}. $$
The following lemma establishes a separation between the entries in $I^*$ and $[n] \setminus I^*$; $J^*$ and $[m] \setminus J^*$. 
The complete proofs in this section are deferred to~\cref{app:sec:local}. 
\begin{lemma}
    If the current iterate $(\bm{x}, \bm{y}) \in S$, and the next iterate $(\bm{x}^+, \bm{y}^+)$ is generated by~\cref{alg:altgda} with the stepsize $\eta \leq \frac{1}{2\norm*{A}}$, 
    then we have (i) $x^+_i, x_i \geq \frac{\delta}{2}$ for all $i \in I^*$ and $y^+_j, y_j \geq \frac{\delta}{2}$ for all $j \in J^*$; (ii) $x^+_i \leq x_i$ for all $i \notin I^*$ and $y^+_j \leq y_j$ for all $j \notin J^*$. 
    \label{lem:local-region-separation}
\end{lemma}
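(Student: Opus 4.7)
The plan is to exploit two structural facts: complementary slackness at the Nash equilibrium gives $(A^\top \bm{y}^*)_i = \nu^*$ for $i \in I^*$ and the strict separation $(A^\top \bm{y}^*)_i \geq \nu^* + \norm{A}\delta$ for $i \notin I^*$ (with the symmetric statements for $\bm{y}^*$); and the simplex projection is $1$-Lipschitz and invariant under uniform shifts along $\mathbf{1}$, so $\bm{x}^*$ is a fixed point of the shifted one-step map $\bm{x}\mapsto\Pi_{\Delta_n}(\bm{x}-\eta A^\top\bm{y}^* + \eta\nu^*\mathbf{1})$ (with Lagrange multiplier $0$).

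The ``current iterate'' half of (i) is immediate from the triangle inequality: for $i \in I^*$, $x_i \geq x_i^* - \norm{\bm{x}-\bm{x}^*} \geq \delta - \delta/4 = 3\delta/4$, and symmetrically for $\bm{y}$. For the ``next iterate'' half, I would apply the $1$-Lipschitz property to the pair of shifted pre-projections $\bm{x}-\eta A^\top\bm{y}+\eta\nu^*\mathbf{1}$ and $\bm{x}^*-\eta A^\top\bm{y}^*+\eta\nu^*\mathbf{1}$. Complementary slackness turns the latter into a vector equal to $x_i^*$ on $I^*$ and at most $-\eta\norm{A}\delta$ on $[n]\setminus I^*$, whose simplex projection is $\bm{x}^*$. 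This gives $\norm{\bm{x}^+-\bm{x}^*} \leq \norm{\bm{x}-\bm{x}^*} + \eta\norm{A}\norm{\bm{y}-\bm{y}^*} \leq 3\delta/8$, so $x_i^+ \geq x_i^* - 3\delta/8 \geq \delta/2$ for $i \in I^*$. The analogous Lipschitz bound for $\bm{y}^+$, using $\norm{\bm{x}^+-\bm{x}^*} \leq 3\delta/8$ in place of $\norm{\bm{y}-\bm{y}^*} \leq \delta/4$, yields $\norm{\bm{y}^+-\bm{y}^*} \leq 7\delta/16 < \delta/2$ and hence $y_j^+ \geq \delta/2$ for $j \in J^*$.

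Part (ii) on the $\bm{x}$-side I would prove by contradiction using the KKT formula $x_j^+ = (\tilde x_j - \lambda)_+$ for the projection. If $x_{j_0}^+ > x_{j_0}$ for some $j_0 \notin I^*$, the KKT conditions force $\lambda < -\eta(A^\top\bm{y})_{j_0} \leq -\eta\nu^* - \tfrac{3}{4}\eta\norm{A}\delta$, using the separation $(A^\top\bm{y}^*)_{j_0} - \nu^* \geq \norm{A}\delta$ and $\max_i\lvert(A^\top(\bm{y}-\bm{y}^*))_i\rvert \leq \norm{A}\delta/4$. Plugging this back into $x_i^+ \geq \tilde x_i - \lambda$ for $i \in I^*$, with $\tilde x_i \geq x_i - \eta\nu^* - \eta\norm{A}\delta/4$ by complementary slackness $(A^\top\bm{y}^*)_i = \nu^*$, yields the strict inequality $x_i^+ > x_i + \tfrac{1}{2}\eta\norm{A}\delta$ for every $i \in I^*$. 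The region is designed so that this cannot hold: the constraint $\max_{i\notin I^*} x_i \leq \eta\norm{A} r_x \delta/2$ together with $(n-\lvert I^*\rvert) r_x \leq \lvert I^*\rvert$ yields $\sum_{j\notin I^*} x_j \leq \lvert I^*\rvert \eta\norm{A}\delta/2$, hence $\sum_{i\in I^*} x_i \geq 1 - \lvert I^*\rvert \eta\norm{A}\delta/2$. Summing the strict inequality over $I^*$ then produces $\sum_{i\in I^*} x_i^+ > 1$, contradicting $\sum_i x_i^+ = 1$.

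The main obstacle I anticipate is the $\bm{y}$-side of (ii). Running the analogous contradiction for $\bm{y}^+ = \Pi_{\Delta_m}(\bm{y}+\eta A\bm{x}^+)$ would require a per-coordinate separation of size $\tfrac{1}{2}\eta\norm{A}\delta$ on $J^*$, but the pre-projection depends on $\bm{x}^+$, for which the Lipschitz bound only gives $\norm{\bm{x}^+-\bm{x}^*} \leq 3\delta/8$ rather than the $\delta/4$ used for the $\bm{x}$-side. The crude spectral estimate $\lvert\mathbf{1}_{J^*}^\top A(\bm{x}^+-\bm{x}^*)\rvert \leq \norm{A}\sqrt{\lvert J^*\rvert}\cdot 3\delta/8$ then shrinks the effective separation to $\tfrac{1}{4}\eta\norm{A}\delta$, which is exactly borderline and leaves the contradiction short by a constant factor. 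I expect the fix to come from exploiting part (ii) for $\bm{x}$ (which already shows $x_i^+ \leq x_i \leq \eta\norm{A}r_x\delta/2$ on $[n]\setminus I^*$ and so permits a support-vs-non-support split of $A(\bm{x}^+-\bm{x}^*)$ that beats the spectral bound), rather than invoking $\norm{A}\norm{\bm{x}^+-\bm{x}^*}$ uniformly.
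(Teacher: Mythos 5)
Your approach is correct where you have verified it, and it differs substantively from the paper's. The paper works through the projected-gradient vectors $\bm v^t$, $\bm\gamma^t$ and their maxima $\bar\gamma^t$: it establishes a separation $v_{i'}\le v_i-\tfrac{\delta}{2}\norm{A}$ between $I^*$ and its complement, proves $v_i-\gamma_i\ge-\tfrac{\delta}{2}\norm{A}$ by contradiction, and in part~(ii) lower-bounds $\bar\gamma$ by summing $x^+_i-x_i$ over the simplex. Your fixed-point-plus-nonexpansiveness argument for part~(i) is cleaner and, importantly, handles the $\bm y$-side without a separate separation lemma: $\norm{\bm x^+-\bm x^*}\le\norm{\bm x-\bm x^*}+\eta\norm{A}\norm{\bm y-\bm y^*}\le 3\delta/8$, and feeding that into the same estimate gives $\norm{\bm y^+-\bm y^*}\le 7\delta/16$, hence $y^+_j\ge 9\delta/16>\delta/2$. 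Your part~(ii) for $\bm x$, phrased via the water-filling form $x_j^+=(\tilde x_j-\lambda)_+$ and a summation over $I^*$, is essentially the paper's $\bar\gamma\ge v_{i'}$ step in KKT language; both are sound.

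The gap you flagged on the $\bm y$-side of part~(ii) is genuine, and in fact the paper's own proof elides exactly this point: it proves the $\bm x$-case and asserts the $\bm y$-case ``can be done symmetrically,'' but under alternation the two sides are \emph{not} symmetric. The $\bm y$-update sees $A\bm x^+$, not $A\bm x$, and membership in $S$ only controls $\norm{\bm x-\bm x^*}\le\delta/4$, whereas $\norm{\bm x^+-\bm x^*}$ is controlled only to $3\delta/8$. Propagated through either your KKT contradiction or the paper's $\bar\lambda\ge u_{j'}$ step, this halves the effective $J^*$-vs-complement separation from $\tfrac{\delta}{2}\norm{A}$ to $\tfrac{\delta}{4}\norm{A}$, which is precisely the factor-of-two shortfall you computed. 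Your proposed remedy (split $A(\bm x^+-\bm x^*)$ by support and exploit $x^+_i\le x_i\le\eta\norm{A}r_x\delta/2$ off $I^*$) is not obviously sufficient on its own: the loss originates in the $I^*$-block of $\bm x^+-\bm x^*$, which the off-support constraint does not tighten. A more direct repair is to shrink the $\bm x$-radius in the definition of $S$ so that $\norm{\bm x-\bm x^*}+\eta\norm{A}\norm{\bm y-\bm y^*}\le\delta/4$ holds inside $S$, at the price of re-propagating the modified constants through the invariance argument of \cref{lem:energy-conservativeness}.
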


Next, we define an initial region: 
\begin{equation}
    S_0 := \Big\{ (\bm{x}, \bm{y}) \Big\vert \norm*{\bm{x} - \bm{x}^*} \leq \frac{\delta}{8},\, \norm*{\bm{y} - \bm{y}^*} \leq \frac{\delta}{8},\, \max_{i \notin I^*} x_i \leq \frac{c}{2} r_x \delta,\, \max_{j \notin J^*} y_j \leq \frac{c}{2} r_y \delta \Big\} \subset S, 
    \label{eq:def-init-region}
\end{equation}
where $c = \min\{ \eta \norm{A}, \frac{\delta}{192 \lvert I^* \rvert}, \frac{\delta}{192 \lvert J^* \rvert} \}$. 
Also, for ease of presentation, we define a variant of the energy function: ${\mathcal{V}}(\bm{x}, \bm{y}) = \norm{\bm{x} - \bm{x}^*}^2 + \norm{\bm{y} - \bm{y}^*}^2 - \eta (\bm{y} - \bm{y}^*)^\top A (\bm{x} - \bm{x}^*)$.\footnote{Again, we pick any NE with the maximum support if there are multiple.} 
In the following lemma, we prove that if we initialize AltGDA within $S_0$, then the sequence of iterates stays within $S$. 
With this in hand, we can derive an upper bound for the cumulative increase of the energy 
$\mathcal{V}$. 
\begin{lemma}
    Let $\{ (\bm{x}^t, \bm{y}^t) \}_{t \geq 0}$ be a sequence of iterates generated by~\cref{alg:altgda} with stepsize $\eta \leq \frac{1}{2\norm*{A}}$ and an initial point $(\bm{x}^0, \bm{y}^0) \in S_0$. 
    Then, the iterates $\{ (\bm{x}^t, \bm{y}^t) \}_{t \geq 0}$ stay within the local region $S$. 
    Furthermore, for any $T > 0$, we have 
    $\sum\nolimits_{t=0}^{T} \left( \mathcal{V}(\bm{x}^{t + 1}, \bm{y}^{t + 1}) - \mathcal{V}(\bm{x}^t, \bm{y}^t) \right) \leq \frac{1}{128} \delta^2$. 
    \label{lem:energy-conservativeness}
\end{lemma}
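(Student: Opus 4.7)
The plan is to prove the containment claim and the energy-budget bound \emph{jointly} by induction on $t$. The inductive hypothesis at step $t$ is: (i) $(\bm{x}^s, \bm{y}^s) \in S$ for every $s \leq t$, and (ii) $\sum_{s=0}^{t-1} \bigl(\mathcal{V}(\bm{x}^{s+1}, \bm{y}^{s+1}) - \mathcal{V}(\bm{x}^s, \bm{y}^s)\bigr) \leq \tfrac{1}{128}\delta^2$. The base case follows immediately from $S_0 \subset S$.

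For the inductive step, I first establish $(\bm{x}^{t+1}, \bm{y}^{t+1}) \in S$. Applying \cref{lem:local-region-separation} to $(\bm{x}^t, \bm{y}^t) \in S$ yields $x^{t+1}_i \leq x^t_i$ for $i \notin I^*$ and $y^{t+1}_j \leq y^t_j$ for $j \notin J^*$. Combined with the initial bounds $x^0_i \leq \tfrac{c}{2} r_x \delta \leq \tfrac{\eta\|A\|}{2} r_x \delta$ from the definition of $S_0$ (and symmetrically for $\bm{y}$), this verifies the coordinate-wise ``bad'' constraints of $S$. For the norm constraints, I use $\mathcal{V}$ as a Lyapunov-like surrogate: the stepsize assumption $\eta \leq 1/(2\|A\|)$ and Cauchy--Schwarz give
\[
\tfrac{3}{4}\bigl(\|\bm{x} - \bm{x}^*\|^2 + \|\bm{y} - \bm{y}^*\|^2\bigr) \leq \mathcal{V}(\bm{x}, \bm{y}) \leq \tfrac{5}{4}\bigl(\|\bm{x} - \bm{x}^*\|^2 + \|\bm{y} - \bm{y}^*\|^2\bigr).
\]
Since each initial squared distance is at most $\delta^2/64$, the upper bound plus the inductive budget yields $\mathcal{V}(\bm{x}^{t+1}, \bm{y}^{t+1}) \leq 5\delta^2/128 + \delta^2/128 = 3\delta^2/64$, and then the lower bound gives $\|\bm{x}^{t+1} - \bm{x}^*\|^2 + \|\bm{y}^{t+1} - \bm{y}^*\|^2 \leq \delta^2/16$, completing the containment.

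The crucial remaining task is to bound the per-iteration increment $\Delta_t := \mathcal{V}(\bm{x}^{t+1}, \bm{y}^{t+1}) - \mathcal{V}(\bm{x}^t, \bm{y}^t)$ so that $\sum_t \Delta_t \leq \delta^2/128$. The guiding principle is that unconstrained bilinear AltGDA exactly preserves $\mathcal{V}$ (a Bailey--Piliouras-style identity), so $\Delta_t$ is entirely attributable to the simplex projections. Since \cref{lem:local-region-separation} keeps the ``good'' coordinates above $\delta/2$ inside $S$, the projection is active only on $[n]\setminus I^*$ and $[m]\setminus J^*$. Using NE complementary slackness ($A^\top \bm{y}^* = \nu^* \mathbf{1} + \bm{w}_x$ with $\bm{w}_x \geq 0$ supported on $[n]\setminus I^*$ and $(\bm{w}_x)_i \geq \|A\|\delta$ there, symmetrically for $\bm{y}$) together with the shift-invariance of $\Pi_{\Delta_n}$ under multiples of $\mathbf{1}$, I would decompose $\Delta_t$ into a bound controlled by the decrement in total ``bad'' mass $\sum_{i \notin I^*}(x^t_i - x^{t+1}_i) + \sum_{j \notin J^*}(y^t_j - y^{t+1}_j)$. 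Since these masses are monotonically non-increasing by \cref{lem:local-region-separation} and start at most $\tfrac{c}{2}|I^*|\delta$ and $\tfrac{c}{2}|J^*|\delta$ respectively (using $r_x(n-|I^*|) \leq |I^*|$), the telescoped sum is bounded by a constant multiple of $c \delta \max(|I^*|,|J^*|)$; the choice $c \leq \min\{\delta/(192|I^*|), \delta/(192|J^*|)\}$ in $S_0$ is calibrated precisely so that this aggregate falls below $\delta^2/128$.

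The main obstacle I anticipate is this per-step bound on $\Delta_t$: one must expand $\mathcal{V}$ along the projected AltGDA updates and use the simplex projection's water-filling structure, together with NE complementary slackness, to express the energy deficit purely in terms of ``bad''-coordinate outflow rather than in terms of the full displacement norms $\|\bm{x}^{t+1}-\bm{x}^t\|$. The shift invariance of the projection under multiples of $\mathbf{1}$ is essential, since it lets the $\nu^*\mathbf{1}$ component of $A^\top \bm{y}^*$ drop out and leaves only the residual force $\bm{w}_x$, which is localized on the bad coordinates and simultaneously drives the bad mass down and bounds the projection-induced energy loss.
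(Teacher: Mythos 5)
Your argument is essentially the same as the paper's, only phrased as a forward induction where the paper argues by contradiction at the first exit time $t$ from $S$; both hinge on the same three ingredients — the coordinate monotonicity of \cref{lem:local-region-separation}, the quadratic sandwich $\tfrac{3}{4}(\|\bm{x}-\bm{x}^*\|^2+\|\bm{y}-\bm{y}^*\|^2)\le\mathcal{V}\le\tfrac{5}{4}(\cdot)$ obtained from $\eta\|A\|\le\tfrac12$, and the telescoping bound on the projection-induced energy increments with the $c$-calibration ($c\le\min\{\eta\|A\|,\tfrac{\delta}{192|I^*|},\tfrac{\delta}{192|J^*|}\}$) that keeps the cumulative budget below $\delta^2/128$. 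The $\Delta_t$ bound you sketch (energy increment attributed entirely to simplex projections, localized to bad coordinates via NE complementary slackness and the shift-invariance of $\Pi_{\Delta_n}$, then telescoped using that each bad coordinate leaves the support at most once) is exactly what the paper formalizes through \cref{lem:refined-energy-change,lem:property-nonsmooth-terms,lem:inp-identity}.
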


Combining this results with analogous inequalities as in~\cref{lem:two-bounds}, we obtain the local $O(1/T)$ convergence rate. 
\begin{theorem}
    Let $\{ (\bm{x}^t, \bm{y}^t) \}_{t \geq 0}$ be a sequence of iterates generated by~\cref{alg:altgda} with stepsize $\eta \leq \frac{1}{2\norm*{A}}$ and an initial point $(\bm{x}^0, \bm{y}^0) \in S_0$, where $S_0$ is defined in~\cref{eq:def-init-region}. 
    Then, we have that 
    $$\mathtt{DualityGap}\left( \frac{1}{T}\sum_{t=1}^{T} \bm{x}^t, \frac{1}{T}\sum_{t=1}^{T} \bm{y}^t \right) \leq \frac{9 + 7\eta \norm*{A} + (\delta^2 / 128)}{\eta T},$$ 
    where $\delta$ is defined in~\cref{eq:def-delta}. 
    \label{thm:local-conv-rate} 
\end{theorem}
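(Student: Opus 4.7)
The plan is to mirror the telescoping strategy behind Theorem~\ref{thm:interior-NE}, but replace the pointwise residual bound of Lemma~\ref{lem:energy-diff-bound-res-terms} (which was only available under an interior NE assumption) by the aggregated near-conservation statement of Lemma~\ref{lem:energy-conservativeness}. First I would invoke Lemma~\ref{lem:energy-conservativeness} to record two things: (i) the iterates $\{(\bm{x}^t,\bm{y}^t)\}_{t\geq 0}$ all remain in the local region $S$, and (ii) the cumulative change of the shifted energy satisfies $\sum_{t=0}^{T}(\mathcal{V}(\bm{x}^{t+1},\bm{y}^{t+1})-\mathcal{V}(\bm{x}^t,\bm{y}^t))\leq \delta^2/128$. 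Since $(\bm{x}^*,\bm{y}^*)$ is still a Nash equilibrium of the game, both descent inequalities~\eqref{eq:descent-inequality-2} and~\eqref{eq:descent-inequality-1} of Lemma~\ref{lem:two-bounds} remain valid, with the same nonnegative residual $r_t$.

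Next I would sum the two descent inequalities after aligning indices so they share the same residual, telescope over $t=0,\ldots,T-1$, and take the supremum over $(\bm{x},\bm{y})\in\Delta_n\times\Delta_m$. Exactly as in the proof of Theorem~\ref{thm:interior-NE}, the telescoping eliminates all intermediate $\phi_t,\psi_t$ terms and the averaging structure on the duality gap gives
\begin{equation*}
\eta T\cdot\mathtt{DualityGap}\left(\tfrac{1}{T}\sum_{t=1}^T\bm{x}^t,\tfrac{1}{T}\sum_{t=1}^T\bm{y}^t\right)\leq C_{\phi,\psi}+\sum_{t=0}^{T-1}r_t,
\end{equation*}
where $C_{\phi,\psi}$ collects the boundary $\phi_0(\bm{x},\bm{y})$, $\psi_1(\bm{x},\bm{y})$ and $\inp{A\bm{x}^1}{\bm{y}^1-\bm{y}^0}$ contributions. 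Bounding these using the simplex diameter and $\sigma_{\max}(A)\leq\|A\|$ gives the $9+7\eta\|A\|$ part of the final constant; the extra $3\eta\|A\|$ over Theorem~\ref{thm:interior-NE} arises from the $\eta(\bm{y}-\bm{y}^*)^\top A(\bm{x}-\bm{x}^*)$ cross term present in $\mathcal{V}$ but absent from $\mathcal{E}$.

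The core new step is bounding $\sum_{t=0}^{T-1}r_t$. I would establish a per-step inequality of the form $r_t\leq \mathcal{V}(\bm{x}^t,\bm{y}^t)-\mathcal{V}(\bm{x}^{t+1},\bm{y}^{t+1})+\epsilon_t$ with $\sum_{t\geq 0}\epsilon_t\leq \delta^2/128$: the algebraic manipulation producing Lemma~\ref{lem:energy-diff-bound-res-terms} carries over if one replaces $\mathcal{E}$ by its $(\bm{x}^*,\bm{y}^*)$-shifted version $\mathcal{V}$, at the cost of boundary contributions involving the coordinates $x_i$ with $i\notin I^*$ and $y_j$ with $j\notin J^*$. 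Lemma~\ref{lem:energy-conservativeness} is effectively the statement that, once the iterates enter $S_0\subset S$ and the separation in Lemma~\ref{lem:local-region-separation} applies, these boundary contributions are summable with total mass at most $\delta^2/128$. Telescoping then yields
\begin{equation*}
\sum_{t=0}^{T-1} r_t\leq \mathcal{V}(\bm{x}^0,\bm{y}^0)-\mathcal{V}(\bm{x}^T,\bm{y}^T)+\tfrac{\delta^2}{128},
\end{equation*}
which is a constant bound absorbed into the $\delta^2/128$ term of the final rate since $(\bm{x}^t,\bm{y}^t)\in S$ keeps $\mathcal{V}$ uniformly bounded. Combining the two displays and dividing by $\eta T$ delivers the claimed bound.

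The main obstacle is Step 3: turning the aggregated bound of Lemma~\ref{lem:energy-conservativeness} into a per-step residual bound whose correction $\epsilon_t$ accumulates to \emph{exactly} the $\delta^2/128$ budget. This requires leveraging the separation guarantees of Lemma~\ref{lem:local-region-separation}, which enforce the strict lower bound $\min_{i\in I^*}x_i^t\geq \delta/2$ and $\min_{j\in J^*}y_j^t\geq \delta/2$ while simultaneously damping the non-support coordinates monotonically. The monotone decay of the non-support coordinates is precisely what makes the boundary correction summable, and it is this structural argument—rather than any new telescoping—that replaces the interior NE hypothesis used in Lemma~\ref{lem:energy-diff-bound-res-terms}.
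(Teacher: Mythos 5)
Your proposal follows the same high-level strategy as the paper: sum the two descent inequalities from Lemma~\ref{lem:two-bounds}, replace the pointwise residual bound used under an interior NE by an aggregated bound built from the shifted energy $\mathcal{V}$, and invoke Lemma~\ref{lem:energy-conservativeness} to control the cumulative correction. This is precisely what the paper's proof of Theorem~\ref{thm:local-conv-rate} does.

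Two imprecisions are worth flagging before you treat the sketch as a proof. First, the version of Lemma~\ref{lem:energy-conservativeness} that the paper actually uses is its form in the appendix,
\begin{equation*}
-\eta\sum_{t=0}^{T}\bigl(\inp*{\bm{\gamma}^{t}}{\bm{x}^t-\bm{x}^*}+\inp*{\bm{\lambda}^{t}}{\bm{y}^t-\bm{y}^*}\bigr)\le\tfrac{1}{128}\delta^2,
\end{equation*}
not the statement (ii) you record, $\sum_{t=0}^T(\mathcal{V}_{t+1}-\mathcal{V}_t)\le\delta^2/128$, which telescopes to $\mathcal{V}_{T+1}-\mathcal{V}_0\le\delta^2/128$ and by itself does not bound your $\sum\epsilon_t$. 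Second, the correction term is exactly $\epsilon_t=-2\eta\inp*{\bm{\gamma}^t}{\bm{x}^t-\bm{x}^*}-2\eta\inp*{\bm{\lambda}^t}{\bm{y}^t-\bm{y}^*}$, obtained via the identity $\eta\inp*{\bm{\gamma}^t}{\bm{x}^{t+1}-\bm{x}^t}=\eta\inp*{\bm{\gamma}^t}{\bm{x}^{t+1}+\bm{x}^t-2\bm{x}^*}-2\eta\inp*{\bm{\gamma}^t}{\bm{x}^t-\bm{x}^*}$ together with Lemma~\ref{lem:refined-energy-change}; because of the factor $2$, the appendix bound gives $\sum\epsilon_t\le\delta^2/64$ rather than the $\delta^2/128$ you claim. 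This factor of two disappears in the final normalization by $2\eta T$, so the stated rate $\bigl(9+7\eta\norm*{A}+\delta^2/128\bigr)/(\eta T)$ is still reached, but the intermediate display bounding $\sum_{t=0}^{T-1}r_t$ needs to be corrected, and the explicit $\gamma,\lambda$-based decomposition together with Lemma~\ref{lem:refined-energy-change} is what makes ``the algebraic manipulation producing Lemma~\ref{lem:energy-diff-bound-res-terms} carries over'' rigorous.
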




\section{Numerical Experiments}

\begin{figure}[t]
    \centering
    \includegraphics[width=0.325\linewidth]{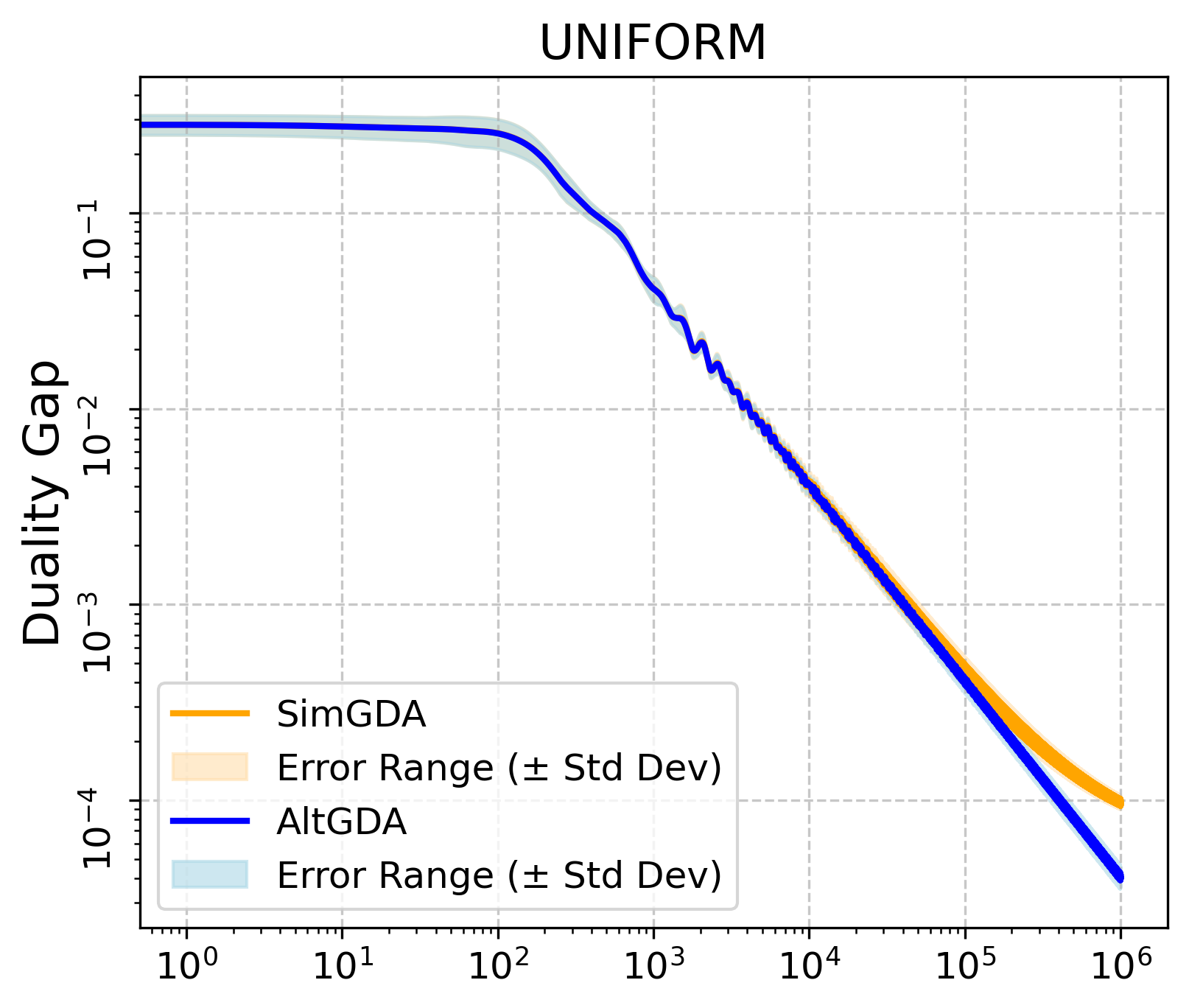}
    \includegraphics[width=0.325\linewidth]{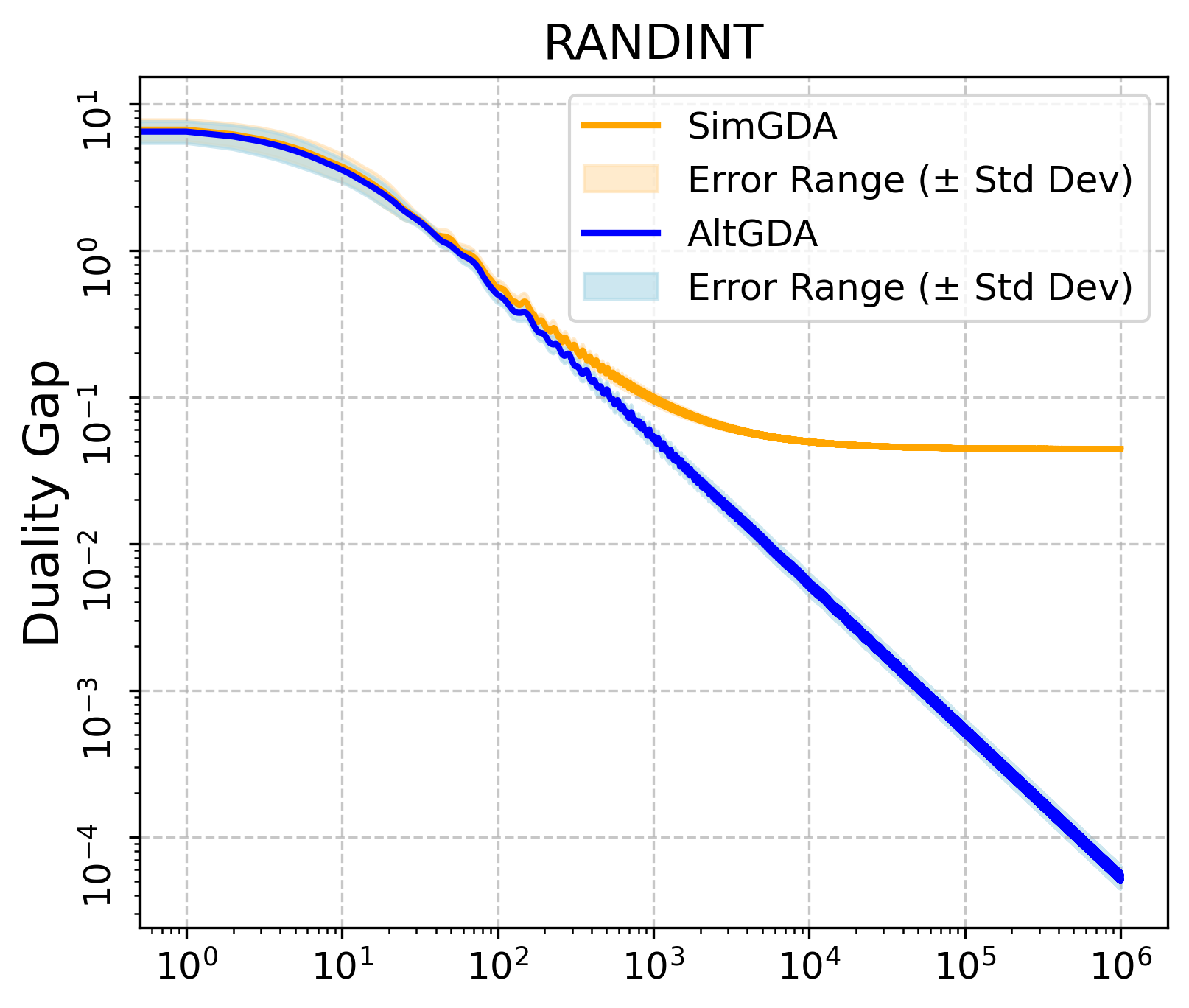}
    \includegraphics[width=0.325\linewidth]{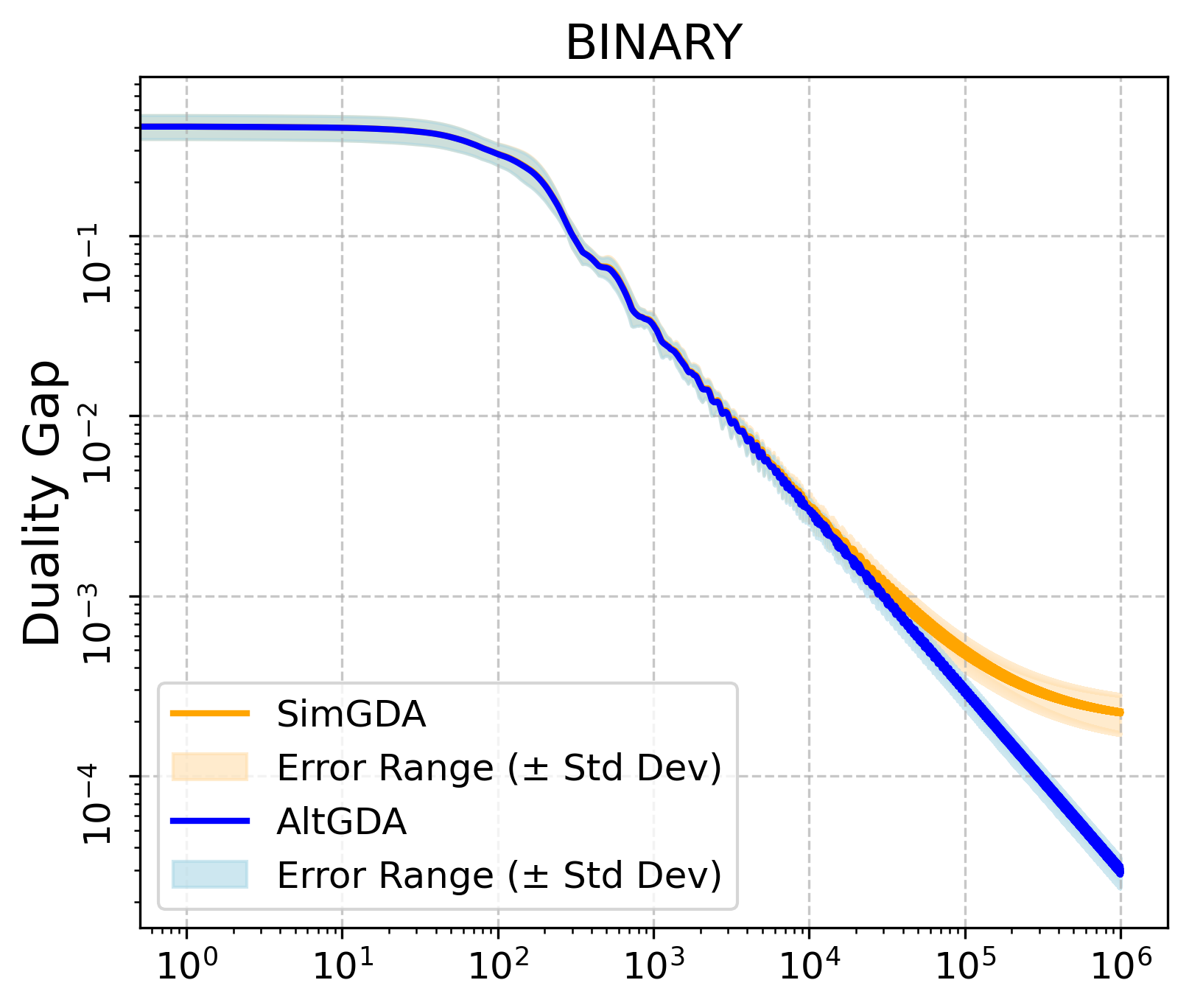}

    \includegraphics[width=0.325\linewidth]{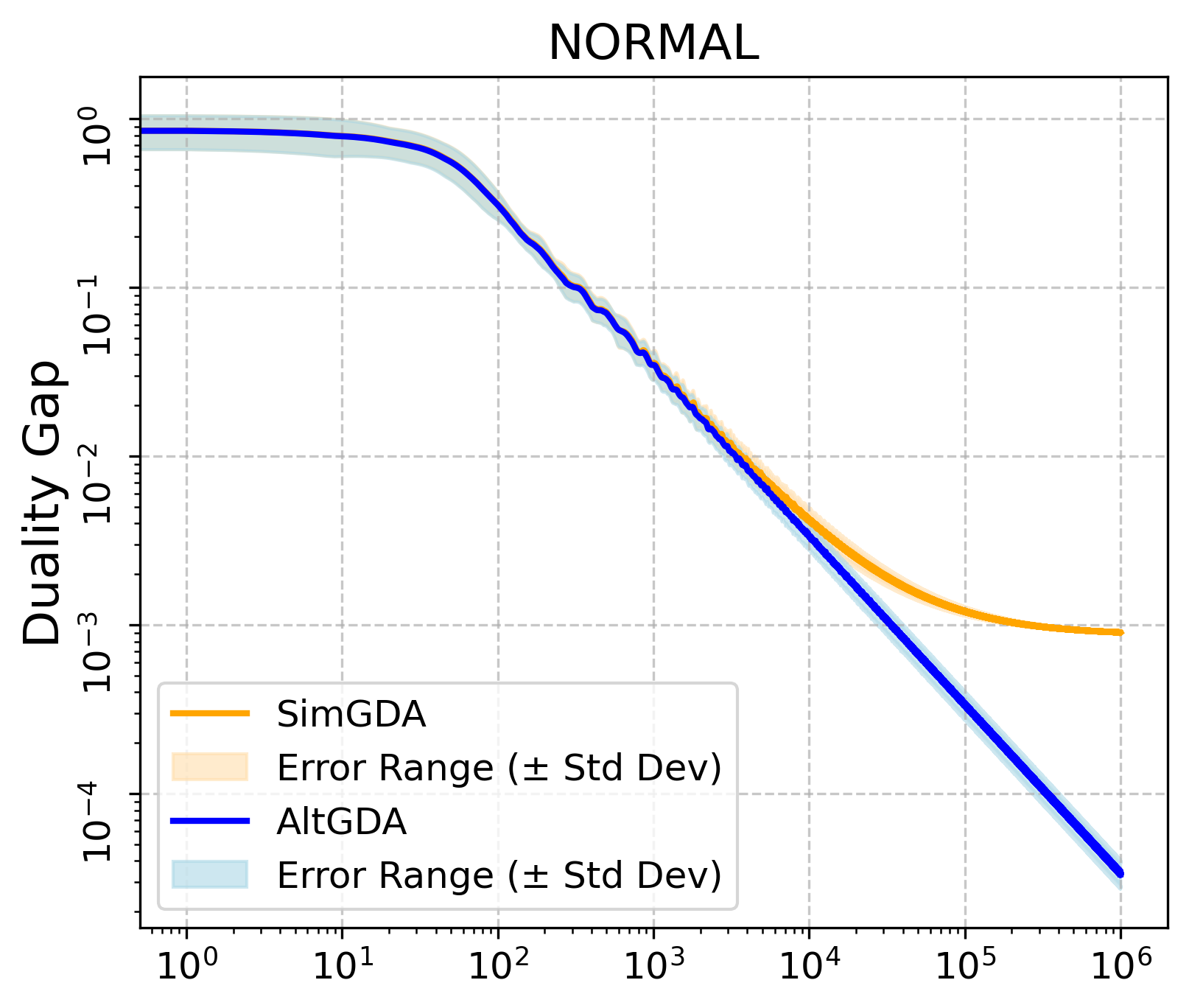}
    \includegraphics[width=0.325\linewidth]{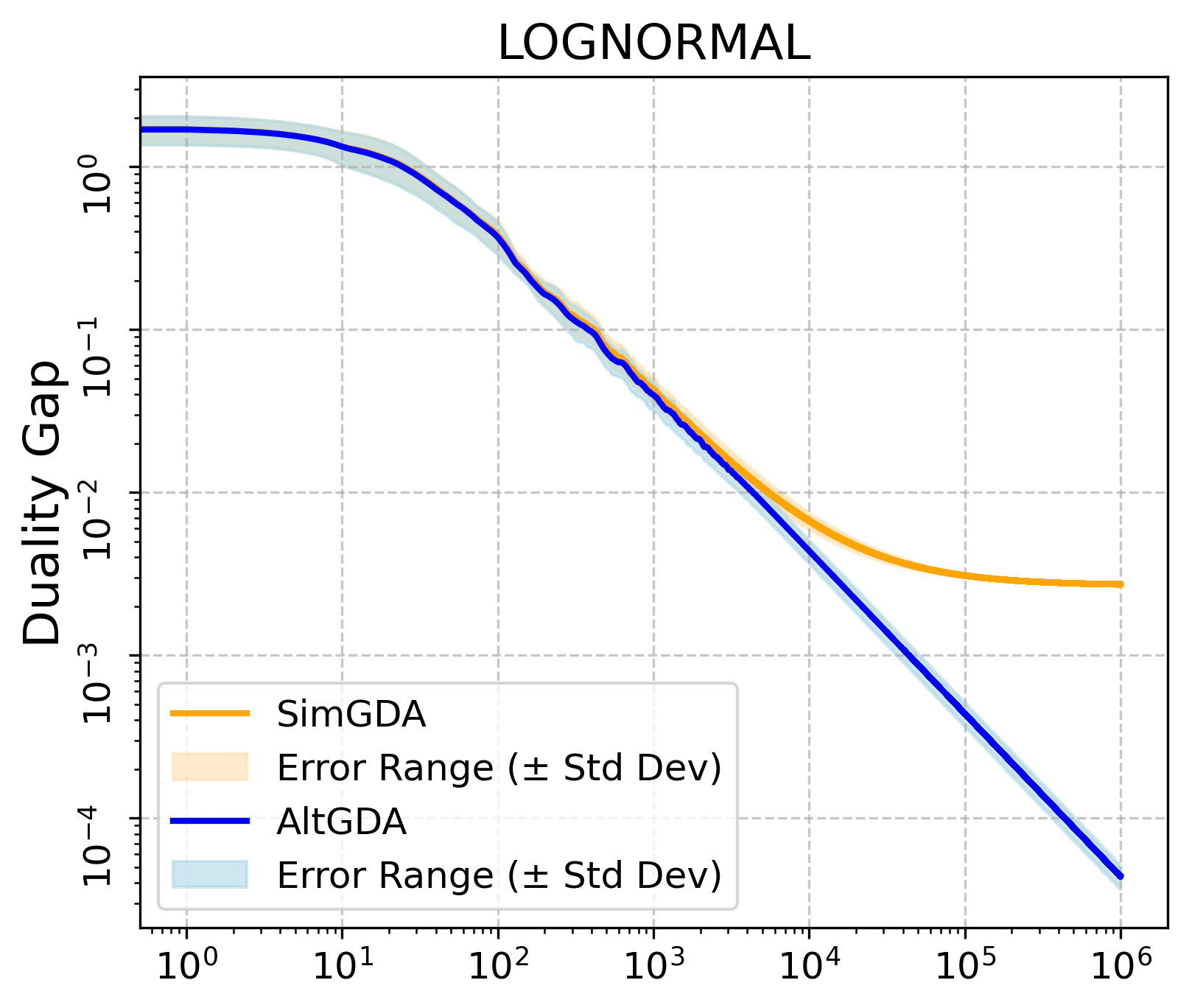}
    \includegraphics[width=0.325\linewidth]{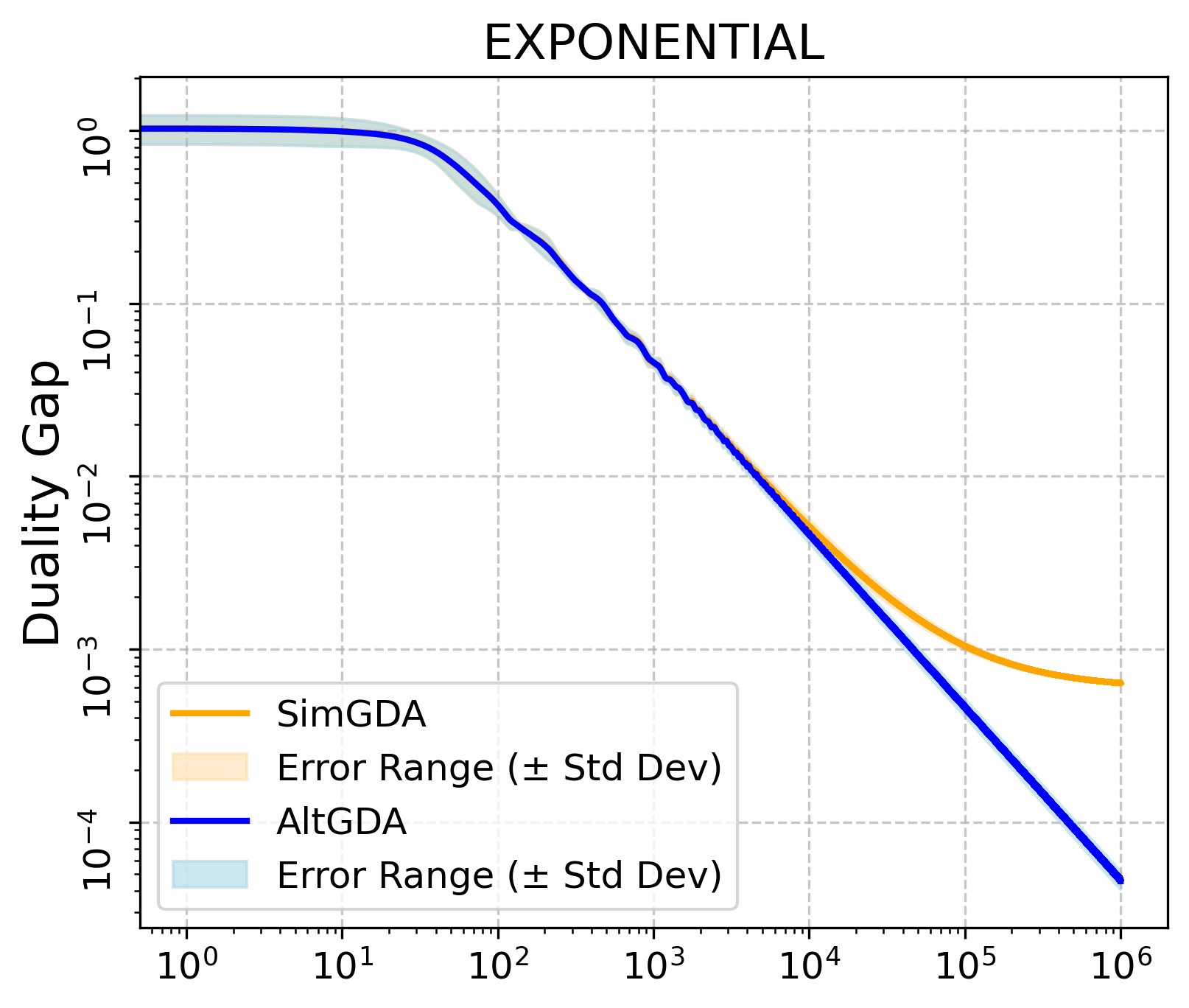}
    \caption{Numerical performances of AltGDA and SimGDA on $10 \times 20$ synthesized matrix games.}
    \label{fig:numerical}
\end{figure}

We conduct numerical experiments to compare the performance of AltGDA and SimGDA on bilinear matrix games, under a constant stepsize over a large time horizon.

We evaluate AltGDA and SimGDA on random matrix game instances. The payoff matrices are generated from six distributions: uniform over $[0, 1]$, uniform over integers in $[-10, 10]$, binary $\{0, 1\}$ with $P(0) = 0.8$, standard normal, standard lognormal, and exponential with location $0$ and scale $1$. 
For each distribution, we generate instances of sizes $10 \times 20$, $30 \times 60$, and $60 \times 120$. All algorithms are implemented with stepsize $\eta = 0.01$ and run for $T = 10^6$ iterations. 
We repeat each experiment ten times, and we initialize the starting point randomly. We report the mean and standard deviation across repeats at every iteration. Results on the $10 \times 20$ instances are shown in~\cref{fig:numerical}, while the remaining figures are provided in \cref{app:sec:numerical}. 

The experimental results show that AltGDA achieves an $O(1/T)$ convergence rate numerically, and this rate is robust to the choice of the initial point. 
As consistently observed, the convergence is slow in the early phase, which can be explained by the ``energy decay'' introduced in~\cref{sec:1-T-convergence-interior-NE}. 
In contrast, SimGDA fails to converge under a constant stepsize that is independent of the time horizon.
\revision{In~\cref{app:subsec:numerical-different-stepsizes}, we test AltGDA with different stepsizes, demonstrating that the empirical convergence rate scales linearly with ${1}/{\eta}$, which is roughly in agreement with~\cref{thm:interior-NE,thm:local-conv-rate}.}

\section{Conclusion} 
We establish the first result demonstrating AltGDA achieves faster convergence than its simultaneous counterpart in constrained minimax problems. 
In particular, we prove an $O(1/T)$ convergence rate of AltGDA in bilinear games with an interior NE, along with a local $O(1/T)$ convergence rate for arbitrary bilinear games. 
Moreover, we develop a PEP framework that simultaneously optimizes the performance measure(s) and stepsizes, and we show that AltGDA achieves an $O(1/T)$ convergence rate for any bilinear minimax problem over convex compact sets when the total number of iterations is moderately small. 

\subsubsection*{Acknowledgments}
This research was supported by the Office of Naval Research awards N00014-22-1-2530 and
N00014-23-1-2374, and the National Science Foundation awards IIS-2147361 and IIS-2238960.
S. Das Gupta acknowledges support from AFOSR Grant Number FA9550-25-1-0183.

\bibliography{iclr2026_refs}
\bibliographystyle{iclr2026_conference}

\newpage

\appendix

\begin{center}
    {\Large APPENDIX}
\end{center}


\section{Additional Details on \cref{fig:with-interior-NE} and \cref{fig:without-interior-NE}}

Since the behavior of AltGDA can differ depending on whether an interior NE exists, we examine the behavior of AltGDA on two instances. 
In the rock-paper-scissors game which admits an interior NE, we show the trajectory of AltGDA starting from the initial points $x_0 = (1, 0, 0)$ and $y_0 = (0, 1, 0)$. 
For the game without interior NE, we generate a $3 \times 3$ matrix game whose payoff matrix is sampled from the standard normal distribution with random seed $1$. 
This matrix has a non-interior NE: $x^* = (0, 0.56, 0.44), y^* = (0.37, 0.63, 0)$. 
We initialize AltGDA from $x_0 = y_0 = (1/3, 1/3, 1/3)$. 

In both instances, we use a stepsize of $\eta = 0.01$, 
and we plot the evolution of the duality gap and the energy function as defined in~\cref{eq:def:duality-gap,eq:def:energy}.

See~\cref{fig:with-interior-NE-evolution} and \cref{fig:without-interior-NE-evolution} for the evolutions of~\cref{fig:with-interior-NE} and \cref{fig:without-interior-NE}, respectively. 

\section{Omitted proofs in~\cref{sec:1-T-convergence-interior-NE}}
\label{app:sec:interior}

We start by summarizing the notations used in~\cref{app:sec:interior,app:sec:local} in~\cref{notation-table}. 
\begin{table}[H]
    \caption{Notation table}
    \label{notation-table}
    \begin{center}
        \begin{tabular}{cc}
        \multicolumn{1}{c}{\bf NOTATION}  &\multicolumn{1}{c}{\bf EXPRESSION}
        \\ \hline \\ 
        $\bm{0}_n$ & \text{$n$-dimensional all-zero vector} \\ 
        $\bm{1}_n$ & \text{$n$-dimensional all-one vector} \\ 
        $\Delta_n, \Delta_m$ & \text{Probability simplices for $x$-player and $y$-player} \\ 
        $\bar{\Delta}_{n}, \bar{\Delta}_{m}$ & 
        $\{\bm{x}\in\mathbb{R}^{n}\mid\sum_{i=1}^{n}x_{i}=1\}, \{\bm{y}\in\mathbb{R}^{m}\mid\sum_{j=1}^{m}y_{j}=1\}$ \\ 
        $(\bm{x}, \bm{y})$ & \text{An arbitrary pair of strategies in $\Delta_n \times \Delta_m$} \\ 
        $(\bm{x}^*, \bm{y}^*)$ & \text{An arbitrary NE of the maximum support} \\ 
        $(\bm{x}^t, \bm{y}^t), \; \forall\, t \geq 0$ & \text{A pair of iterates at the $t$-th iteration} \\ 
        $\phi_{t}(\bm{x}, \bm{y}), \; \forall\, t \geq 0$ & $\frac{1}{2}\norm*{\bm{x}^{t}-\bm{x}}^{2} + \frac{1}{2}\norm*{\bm{y}^{t}-\bm{y}}^{2} + \eta (\bm{y}^{t})^\top A \bm{x}$ \\ 
        $\psi_{t}(\bm{x},\bm{y}), \; \forall\, t \geq 1$ & $\frac{1}{2}\norm{\bm{x}^{t}-\bm{x}}^{2}+\frac{1}{2}\norm{\bm{y}^{t-1}-\bm{y}}^{2}-\frac{1}{2}\norm{\bm{y}^{t}-\bm{y}^{t-1}}^{2}$ \\ 
        $I^*$ & $\{ i \in [n] \mid x^*_{i} > 0 \}$ \\ 
        $J^*$ & $\{ j \in [m] \mid y^*_{j} > 0 \}$ \\ 
        $I^t, \; \forall\, t \geq 0$ & $\{ i \in [n] \mid x^{t}_i > 0 \}$ \\ 
        $J^t, \; \forall\, t \geq 0$ & $\{ j \in [m] \mid y^{t}_j > 0 \}$ \\ 
        $\mathcal{E} \left( \bm{x}, \bm{y} \right)$ & $\norm*{\bm{x} - \bm{x}^*}^2 + \norm*{\bm{y} - \bm{y}^*}^2 - \eta \bm{y}^\top A \bm{x}^t$ \\ 
        ${\mathcal{V}}(\bm{x}, \bm{y})$ & $\norm{\bm{x} - \bm{x}^*}^2 + \norm{\bm{y} - \bm{y}^*}^2 - \eta (\bm{y} - \bm{y}^*)^\top A (\bm{x} - \bm{x}^*)$ \\ 
        ${\mathcal{V}}_t, \; \forall\, t \geq 0$ & ${\mathcal{V}}(\bm{x}^t, \bm{y}^t)$ \\ 
        $\bm{v}^{t}, \; \forall\, t \geq 0$ & $-A^{\top}\bm{y}^t + \frac{\sum_{\ell=1}^{n}(A^{\top}\bm{y}^t)_{\ell}}{n}\mathbf{1}_{n}$ \\
        $\bm{u}^{t}, \; \forall\, t \geq 0$ & $A\bm{x}^t - \frac{\sum_{\ell=1}^{m}(A\bm{x}^t)_{\ell}}{m}\mathbf{1}_{m}$ \\ 
        $\bm{\gamma}^{t}, \; \forall\, t \geq 0$ & $ \frac{\Pi_{\bar{\Delta}_n}\left( \bm{x}^t - \eta A^\top \bm{y}^t \right) - \bm{x}^{t + 1}}{\eta} = \frac{\bm{x}^t + \eta \bm{v}^t - \bm{x}^{t + 1}}{\eta}$ \\ 
        $\bm{\lambda}^{t}, \; \forall\, t \geq 0$ & $ \frac{\Pi_{\bar{\Delta}_m}\left( \bm{y}^t + \eta A \bm{x}^{t + 1} \right) - \bm{y}^{t + 1}}{\eta} = \frac{\bm{y}^t + \eta \bm{u}^{t + 1} - \bm{y}^{t + 1}}{\eta}$ \\
        $\bar{\gamma}^t, \; \forall\, t \geq 0$ & $\max_{i \in [n]} \gamma_i$ \\ 
        $\bar{\lambda}^t, \; \forall\, t \geq 0$ & $\max_{j \in [m]} \lambda_j$ \\ 
        \end{tabular}
    \end{center}
\end{table}


Before the proof, we first show the following elementary inequalities that will be used later. 
\begin{lemma}
    For any $\bm{x}, \bm{x}' \in \Delta_n, \bm{y}, \bm{y}' \in \Delta_m$, we have 
    \begin{enumerate}
        \item $\norm{\bm{x} - \bm{x}'} \leq 2, \; \norm{\bm{y} - \bm{y}'} \leq 2$, 
        \item $(\bm{y} - \bm{y}')^\top A (\bm{x} - \bm{x}') \leq \norm{A} \norm{\bm{x} - \bm{x}'} \norm{\bm{y} - \bm{y}'} \leq 4\norm{A}$, 
        \item $\bm{y}^\top A \bm{x} \leq \norm{A}$, 
        \item $\norm{A^\top \bm{y}} \leq \norm{A}$ and $\norm{A \bm{x}} \leq \norm{A}$. 
    \end{enumerate}
    \label{lem:simplex-elementary}
\end{lemma}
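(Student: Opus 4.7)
The plan is to derive all four inequalities from a single elementary observation about probability simplices, combined with Cauchy--Schwarz and the definition of the spectral norm. The key fact is that for any $\bm{x} \in \Delta_n$, since the entries are nonnegative and sum to one, we have $\|\bm{x}\|_2 \leq \|\bm{x}\|_1 = 1$; the analogous bound holds for $\bm{y} \in \Delta_m$. This will be the single ingredient that pins down each claim.

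For part (1), I would apply the triangle inequality $\|\bm{x} - \bm{x}'\| \leq \|\bm{x}\| + \|\bm{x}'\|$ and then invoke the observation above to bound each summand by $1$, yielding the bound $2$. The argument for $\|\bm{y} - \bm{y}'\| \leq 2$ is identical. For part (2), I would apply Cauchy--Schwarz once to write $(\bm{y} - \bm{y}')^\top A(\bm{x} - \bm{x}') \leq \|\bm{y} - \bm{y}'\| \cdot \|A(\bm{x} - \bm{x}')\|$, then use the definition of the spectral norm $\|A(\bm{x} - \bm{x}')\| \leq \|A\|\cdot\|\bm{x} - \bm{x}'\|$, and finally plug in the bound from part (1) to get the $4\|A\|$ upper bound.

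For part (3), I would again invoke Cauchy--Schwarz, namely $\bm{y}^\top A\bm{x} \leq \|\bm{y}\|\cdot\|A\bm{x}\|$, bound $\|A\bm{x}\| \leq \|A\|\cdot\|\bm{x}\|$, and then use $\|\bm{x}\|, \|\bm{y}\| \leq 1$ from the simplex observation. For part (4), the definition of the spectral norm immediately gives $\|A^\top \bm{y}\| \leq \|A^\top\|\cdot\|\bm{y}\| = \|A\|\cdot\|\bm{y}\| \leq \|A\|$, using the fact that $\|A^\top\| = \|A\|$ for the spectral norm, and the analogous argument handles $\|A\bm{x}\|$.

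There is no real obstacle in this lemma, since every bound reduces to one application of triangle inequality or Cauchy--Schwarz combined with the elementary fact that probability vectors have $\ell_2$ norm at most $1$. The only mild care needed is to remember that we are using the spectral norm (i.e., the operator norm induced by the Euclidean norm), so that inequalities like $\|A\bm{x}\| \leq \|A\|\cdot\|\bm{x}\|$ are justified directly by the definition $\|A\| = \sigma_{\max}(A) = \sup_{\|\bm{z}\|=1} \|A\bm{z}\|$.
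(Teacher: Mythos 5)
Your proof is correct and follows essentially the same route as the paper: each bound is obtained from the observation that $\|\bm{x}\|_2 \leq \|\bm{x}\|_1 = 1$ on the simplex, combined with the triangle inequality, Cauchy--Schwarz, and compatibility of the spectral norm with the Euclidean vector norm. The only cosmetic difference is in part (3), where you factor as $\|\bm{y}\| \cdot \|A\bm{x}\|$ while the paper factors as $\|\bm{x}\| \cdot \|A^\top\bm{y}\|$; both work identically.
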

\begin{proof} 
    The first item can be shown by $\norm{\bm{x} - \bm{x}'} \leq \norm{\bm{x}} + \norm{\bm{x}'} \leq \lonenorm{\bm{x}} + \lonenorm{\bm{x}'} = 2$, where the last equality follows by $\bm{x}, \bm{x}' \in \Delta_n$; the $\bm{y}$ part can be done in the same way. 

    The second item follows from Cauchy-Schwarz inequality and the fact that because the vector norm $\norm{\cdot}$ is compatible with the matrix norm $\norm{\cdot}$~\citep[Theorem 5.6.2]{horn2012matrix}: 
    $(\bm{y} - \bm{y}')^\top A (\bm{x} - \bm{x}') 
    \leq  \norm{\bm{y} - \bm{y}'}\norm{A (\bm{x} - \bm{x}')} 
    \leq \norm{A} \norm{\bm{x} - \bm{x}'} \norm{\bm{y} - \bm{y}'}$. 
    Then, the first item implies the second one. 
    
    For the third item, for any $\bm{x} \in \Delta_n, \bm{y} \in \Delta_m$, we have 
    $$
    \bm{y}^\top A \bm{x} \overset{(a)}{\leq} \norm{\bm{x}} \norm{A^\top \bm{y}} 
    \leq \lonenorm{\bm{x}} \norm{A^\top \bm{y}} = \norm{A^\top \bm{y}} 
    \overset{(b)}{\leq} 
    \norm{A} \norm{\bm{y}} 
    \leq \norm{A} \lonenorm{\bm{y}} = \norm{A}, 
    $$ 
    where $(a)$ follows from Cauchy-Schwarz inequality, $(b)$ follows because the vector norm $\norm{\cdot}$ is compatible with the matrix norm $\norm{\cdot}$~\citep[Theorem 5.6.2]{horn2012matrix}, and the two inequalities hold because $\bm{x} \in \Delta_n$ and $\bm{y} \in \Delta_m$. 

    The proof of the forth item is analogous to that of the second one: for any $\bm{y} \in \Delta_m$, we have $\norm{A^\top \bm{y}} \leq \norm{A} \norm{\bm{y}} \leq \norm{A} \lonenorm{\bm{y}} = \norm{A}$, where the first inequality follows by~\citet[Theorem 5.6.2]{horn2012matrix} and the last inequality holds because $\bm{y} \in \Delta_m$. 
    Similarly, for any $\bm{x} \in \Delta_n$, we have $\norm{A \bm{x}} \leq \norm{A} \norm{\bm{x}} \leq \norm{A} \lonenorm{\bm{x}} = \norm{A}$. 
\end{proof}

We start with the proof of~\cref{lem:two-bounds}. 
\begin{customlem}{1}
    Let $\{ (\bm{x}^t, \bm{y}^t) \}_{t = 0, 1, \ldots }$ be a sequence of iterates generated by~\cref{alg:altgda} with $\eta > 0$. 
    Then, for any $(\bm{x}, \bm{y}) \in \Delta_n \times \Delta_m$, we have 
    \begin{align}
        & \eta \left(\bm{y}^{\top}A\bm{x}^{t} - (\bm{y}^{t})^{\top} A\bm{x} \right) \nonumber \\ 
        & \leq \psi_{t}(\bm{x},\bm{y}) - \psi_{t+1}(\bm{x},\bm{y}) + \eta \inp*{-A^{\top}\bm{y}^{t}}{\bm{x}^{t+1}-\bm{x}^{t}}-\frac{1}{2}\norm*{\bm{x}^{t+1}-\bm{x}^{t}}^{2} - \frac{1}{2}\norm*{\bm{y}^{t+1}-\bm{y}^{t}}^{2}, \nonumber \\ 
        & \hspace{4.6in} \forall\, t \geq 1 
    \end{align}
    \begin{align}
        & \eta \left(\bm{y}^{\top}A \bm{x}^{t+1}-(\bm{y}^{t+1})^\top A \bm{x} \right) 
        \nonumber \\ 
        & \hspace{8pt} \leq \phi_{t}(\bm{x},\bm{y}) - \phi_{t+1}(\bm{x},\bm{y}) + \eta \inp*{A\bm{x}^{t+1}}{\bm{\bm{y}}^{t+1}-\bm{y}^{t}} - \frac{1}{2}\norm*{\bm{x}^{t+1}-\bm{x}^{t}}^{2} - \frac{1}{2} \norm*{\bm{y}^{t+1}-\bm{y}^{t}}^{2}, \nonumber \\ 
        & \hspace{4.6in} \forall\, t \geq 0 
    \end{align}
    where 
    $\phi_{t}(\bm{x}, \bm{y}) := \frac{1}{2}\norm*{\bm{x}^{t}-\bm{x}}^{2} + \frac{1}{2}\norm*{\bm{y}^{t}-\bm{y}}^{2} + \eta (\bm{y}^{t})^\top A \bm{x}$ and 
    $\psi_{t}(\bm{x},\bm{y}) := \frac{1}{2}\norm*{\bm{x}^{t}-\bm{x}}^{2}+\frac{1}{2}\norm*{\bm{y}^{t-1}-\bm{y}}^{2}-\frac{1}{2}\norm*{\bm{y}^{t}-\bm{y}^{t-1}}^{2}$. 
\end{customlem}
\begin{proof}[Proof of~\cref{lem:two-bounds}]
    Consider any $\bm{x} \in \mathcal{X}$ and $\bm{y} \in \mathcal{Y}$. 
    By the property of the projection operators, we have 
    \begin{equation}
        \begin{aligned}
            \inp*{\bm{x}^t - \eta A^\top \bm{y}^t - \bm{x}^{t + 1}}{\bm{x}^{t + 1} - \bm{x}} &\geq 0, \; \forall\, t \geq 0 \\ 
            \inp*{\bm{y}^t + \eta A \bm{x}^{t + 1} - \bm{y}^{t + 1}}{\bm{y}^{t + 1} - \bm{y}} &\geq 0, \; \forall\, t \geq 0. 
        \end{aligned}
        \label{eq:projection-property-1}
    \end{equation}
    
    Thus, we have 
    \begin{align}
        \inp*{\bm{x}^t - \bm{x}^{t + 1}}{\bm{x}^{t + 1} - \bm{x}} &\geq \eta \inp*{A^\top \bm{y}^t}{\bm{x}^{t + 1} - \bm{x}} \nonumber \\ 
        &= \eta \inp*{A^\top \bm{y}^{t + 1}}{\bm{x}^{t + 1} - \bm{x}} + \eta \inp*{A^\top \bm{y}^t - A^\top \bm{y}^{t + 1}}{\bm{x}^{t + 1} - \bm{x}}, \\ 
        \inp*{\bm{y}^t - \bm{y}^{t + 1}}{\bm{y}^{t + 1} - \bm{y}} &\geq - \eta \inp*{A \bm{x}^{t + 1}}{\bm{y}^{t + 1} - \bm{y}}. 
        \label{eq:descent-inequality-y}
    \end{align}
    
    Note that 
    \begin{equation*}
        \begin{aligned}
            2 \inp*{\bm{x}^t - \bm{x}^{t + 1}}{\bm{x}^{t + 1} - \bm{x}} &= \norm*{\bm{x}^t - \bm{x}}^2 - \norm*{\bm{x}^t - \bm{x}^{t + 1}}^2 - \norm*{\bm{x}^{t + 1} - \bm{x}}^2 \\ 
            2 \inp*{\bm{y}^t - \bm{y}^{t + 1}}{\bm{y}^{t + 1} - \bm{y}} &= \norm*{\bm{y}^t - \bm{y}}^2 - \norm*{\bm{y}^t - \bm{y}^{t + 1}}^2 - \norm*{\bm{y}^{t + 1} - \bm{y}}^2 
        \end{aligned}
    \end{equation*}
    and 
    \begin{equation*}
        \inp*{A^\top \bm{y}^{t + 1}}{\bm{x}^{t + 1} - \bm{x}} - \inp*{A \bm{x}^{t + 1}}{\bm{y}^{t + 1} - \bm{y}} = \bm{y}^\top A \bm{x}^{t + 1} - (\bm{y}^{t + 1})^\top A \bm{x}. 
    \end{equation*}
    
    Denote $\phi_t(\bm{x}, \bm{y}) = \frac{1}{2} \norm*{\bm{x}^t - \bm{x}}^2 + \frac{1}{2} \norm*{\bm{y}^t - \bm{y}}^2 + \eta \inp*{A^\top \bm{y}^t}{\bm{x}}$. 
    Combining the above inequalities and identities, we obtain~\cref{eq:descent-inequality-1}. 
    
    Similar to~\cref{eq:projection-property-1}, we have 
    \begin{equation*}
        \begin{aligned}
            \inp*{\bm{x}^t - \eta A^\top \bm{y}^t - \bm{x}^{t + 1}}{\bm{x}^{t + 1} - \bm{x}} &\geq 0, \; \forall\, t \geq 0 \\ 
            \inp*{\bm{y}^{t - 1} + \eta A \bm{x}^t - \bm{y}^t}{\bm{y}^t - \bm{y}} &\geq 0, \; \forall\, t \geq 1. 
        \end{aligned}
    \end{equation*}
    
    Thus, we have 
    \begin{equation*}
        \begin{aligned} 
            \inp*{\bm{x}^t-\bm{x}^{t+1}}{\bm{x}^{t+1} - \bm{x}} & \geq\eta\inp*{A^{\top} \bm{y}^{t}}{\bm{x}^{t+1} - \bm{x}} = \eta\inp*{A^{\top} \bm{y}^{t}}{\bm{x}^t - \bm{x}} + \eta\inp*{A^{\top}\bm{y}^{t}}{\bm{x}^{t+1} - \bm{x}^t},\\
            \inp*{\bm{y}^{t - 1}-\bm{y}^{t}}{\bm{y}^{t} - \bm{y}} & \geq - \eta\inp*{A\bm{x}^t}{\bm{y}^{t} - \bm{y}}.
        \end{aligned}
    \end{equation*}

    Denote $\psi_t(\bm{x}, \bm{y}) = \frac{1}{2} \norm*{\bm{x}^t - \bm{x}}^2 + \frac{1}{2} \norm*{\bm{y}^{t - 1} - \bm{y}}^2 - \frac{1}{2} \norm*{\bm{y}^t - \bm{y}^{t - 1}}^2$. 
    Combining the above two inequalities, we obtain~\cref{eq:descent-inequality-2}. 
\end{proof}

Next, we proceed with proving~\cref{lem:energy-diff-bound-res-terms}. Before that, we present a few lemmas. 

For any positive integer $d$, we denote $\bar{\Delta}_{d}=\{\bm{x}\in\mathbb{R}^{d}\mid\sum_{i=1}^{d}x_{i}=1\}$,
which is the affine hull of the probability simplex $\Delta_{d}$.
The following lemma connects the projection onto a simplex $\Delta_{d}$
with the projection onto its affine hull.

\begin{lemma} 
    For any $\bm{y}\in\mathbb{R}^{d}$, we have $\Pi_{\Delta_{d}}\left(\bm{y}\right)=\Pi_{\Delta_{d}}\left(\Pi_{\bar{\Delta}_{d}}\left(\bm{y}\right)\right)$.
    Furthermore, for any $\bm{x}\in\Delta_{d}$, we have $\inp*{\bm{\gamma}}{\Pi_{\Delta_{d}}\left(\bm{y}\right)-\bm{x}}\geq0$
    where $\bm{\gamma}:=\Pi_{\bar{\Delta}_{d}}\left(\bm{y}\right)-\Pi_{\Delta_{d}}\left(\bm{y}\right)$.
    \label{lem:projection-linear-manifold} 
\end{lemma}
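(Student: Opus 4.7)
The plan is to exploit the fact that $\Delta_{d}$ sits inside its affine hull $\bar{\Delta}_{d}=\{\bm{v}\in\mathbb{R}^{d}:\sum_{i}v_{i}=1\}$, which is a hyperplane. The underlying geometric fact is a two-stage projection identity: if $C$ is a closed convex subset of an affine subspace $L$, then $\Pi_{C}=\Pi_{C}\circ\Pi_{L}$. Both claims will follow from this identity combined with the variational characterization of projection onto a convex set.

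For the first equality, I would set up a Pythagorean decomposition. The linear subspace parallel to $\bar{\Delta}_{d}$ is $H=\{\bm{v}\in\mathbb{R}^{d}:\sum_{i}v_{i}=0\}$, and by definition of Euclidean projection onto an affine set, $\bm{y}-\Pi_{\bar{\Delta}_{d}}(\bm{y})$ is orthogonal to $H$. For any $\bm{z}\in\Delta_{d}\subseteq\bar{\Delta}_{d}$, the vector $\Pi_{\bar{\Delta}_{d}}(\bm{y})-\bm{z}$ lies in $H$, so the Pythagorean identity gives
\[
\|\bm{y}-\bm{z}\|^{2}=\|\bm{y}-\Pi_{\bar{\Delta}_{d}}(\bm{y})\|^{2}+\|\Pi_{\bar{\Delta}_{d}}(\bm{y})-\bm{z}\|^{2}.
\]
The first term is independent of $\bm{z}$, so minimizing the left-hand side over $\bm{z}\in\Delta_{d}$ is equivalent to minimizing the second term, yielding $\Pi_{\Delta_{d}}(\bm{y})=\Pi_{\Delta_{d}}(\Pi_{\bar{\Delta}_{d}}(\bm{y}))$.

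For the second claim, I would apply the first-order (variational inequality) characterization of the Euclidean projection onto the convex set $\Delta_{d}$ evaluated at the point $\Pi_{\bar{\Delta}_{d}}(\bm{y})$: for every $\bm{x}\in\Delta_{d}$,
\[
\inp*{\Pi_{\bar{\Delta}_{d}}(\bm{y})-\Pi_{\Delta_{d}}(\Pi_{\bar{\Delta}_{d}}(\bm{y}))}{\bm{x}-\Pi_{\Delta_{d}}(\Pi_{\bar{\Delta}_{d}}(\bm{y}))}\le 0.
\]
Using the first part to replace $\Pi_{\Delta_{d}}(\Pi_{\bar{\Delta}_{d}}(\bm{y}))$ by $\Pi_{\Delta_{d}}(\bm{y})$ and recognizing the left factor as $\bm{\gamma}$, this becomes $\inp{\bm{\gamma}}{\bm{x}-\Pi_{\Delta_{d}}(\bm{y})}\le 0$, which is precisely the asserted inequality after a sign flip.

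There is no real obstacle: the only subtlety is making sure the orthogonality between $\bm{y}-\Pi_{\bar{\Delta}_{d}}(\bm{y})$ and $H$ is invoked cleanly, and that the variational inequality for the projection onto $\Delta_{d}$ is applied at $\Pi_{\bar{\Delta}_{d}}(\bm{y})$ rather than at $\bm{y}$ so that the left factor is exactly $\bm{\gamma}$. Once that is set up, both assertions fall out in one line each.
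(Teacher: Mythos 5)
Your proof is correct and follows essentially the same route as the paper's: the Pythagorean identity for the affine projection (which the paper cites from Bauschke--Combettes and you rederive from orthogonality to the parallel subspace $H$) gives the two-stage projection identity, and the variational characterization of $\Pi_{\Delta_d}$ evaluated at $\Pi_{\bar{\Delta}_d}(\bm{y})$ gives the inequality. No substantive difference.
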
 \begin{proof}
Using the properties of projection onto a closed affine set \citep[Corollary 3.22]{Bauschke2017},
we have $\norm*{\bm{x}-\bm{y}}^{2}=\norm*{\bm{x}-\Pi_{\bar{\Delta}_{d}}\left(\bm{y}\right)}^{2}+\norm*{\Pi_{\bar{\Delta}_{d}}\left(\bm{y}\right)-\bm{y}}^{2}$
for any $\bm{x}\in\bar{\Delta}_{d}$. Hence, using the definition
of projection, 
\[
\Pi_{\Delta_{d}}\left(\bm{y}\right) = \underset{\bm{x}\in\Delta_{d}}{\textnormal{argmin}} \norm*{\bm{x}-\bm{y}}^{2} 
= \underset{\bm{x}\in\Delta_{d}}{\textnormal{argmin}} \norm*{\bm{x}-\Pi_{\bar{\Delta}_{d}}\left(\bm{y}\right)}^{2}=\Pi_{\Delta_{d}}\left(\Pi_{\bar{\Delta}_{d}}\left(\bm{y}\right)\right).
\]
Then, using the properties of projection onto a closed convex set again, we have $\inp*{\Pi_{\bar{\Delta}_{d}}\left(\bm{y}\right)-\Pi_{\Delta_{d}}\left(\bm{y}\right)}{\Pi_{\Delta_{d}}\left(\bm{y}\right)-\bm{x}}\geq0$
for any $\bm{x}\in\Delta_{d}$. \end{proof}

Denote 
\begin{equation}
    \bm{\gamma}^{t} := \frac{\Pi_{\bar{\Delta}_n}\left( \bm{x}^t - \eta A^\top \bm{y}^t \right) - \bm{x}^{t + 1}}{\eta}
    \label{eq:def:gamma}
\end{equation}
and 
\begin{equation}
    \bm{\lambda}^{t} := \frac{\Pi_{\bar{\Delta}_m}\left( \bm{y}^t + \eta A \bm{x}^{t + 1} \right) - \bm{y}^{t + 1}}{\eta}. 
    \label{eq:def:lambda}
\end{equation} 

The following lemma provides two useful inequalities involving $\bm{\gamma}^t$ and $\bm{\lambda}^t$. 
\begin{lemma} 
    Assume that the bilinear game admits an interior NE.
    Let $\{(\bm{x}^{t},\bm{y}^{t})\}_{t=0,1,\ldots}$ be a sequence of
    iterates generated by~\cref{alg:altgda} with $\eta\leq\frac{1}{\norm*{A}}\min\{\min_{i\in[n]}x_{i}^{*},\min_{j\in[m]}y_{j}^{*}\}$.
    Then, the iterates of AltGDA satisfy 
    \begin{enumerate}
        \item $\inp*{\bm{\gamma}^{t}}{\bm{x}^{t+1}-\bm{x}} \geq 0, \; \forall\, \bm{x}\in\Delta_{n}$ and $\inp*{\bm{\lambda}^{t}}{\bm{y}^{t+1}-\bm{y}} \geq 0, \; \forall\, \bm{y}\in\Delta_{m}$, 
        \item $\inp*{\bm{\gamma}^{t}}{\bm{x}^{t}-\bm{x}^{*}}\geq0$ and $\inp*{\bm{\lambda}^{t}}{\bm{y}^{t}-\bm{y}^{*}}\geq0$. 
    \end{enumerate}
    \label{lem:pos-qtt} 
\end{lemma}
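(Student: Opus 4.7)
My plan is to derive item 1 as a direct consequence of the preceding lemma, and to handle item 2 via a KKT analysis of the simplex projection combined with the interior Nash equilibrium structure.

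\emph{Item 1.} Set $\bm{z}=\bm{x}^{t}-\eta A^{\top}\bm{y}^{t}$. The preceding lemma guarantees that $\bm{x}^{t+1}=\Pi_{\Delta_{n}}(\bm{z})=\Pi_{\Delta_{n}}(\Pi_{\bar{\Delta}_{n}}(\bm{z}))$ and $\inp*{\Pi_{\bar{\Delta}_{n}}(\bm{z})-\Pi_{\Delta_{n}}(\bm{z})}{\Pi_{\Delta_{n}}(\bm{z})-\bm{x}}\geq 0$ for every $\bm{x}\in\Delta_{n}$. Recognizing the left factor as $\eta\bm{\gamma}^{t}$ and dividing by $\eta>0$ yields the first claim. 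The statement for $\bm{\lambda}^{t}$ follows by repeating the argument with $\bm{z}'=\bm{y}^{t}+\eta A\bm{x}^{t+1}$ in place of $\bm{z}$ and $\bm{y}^{t+1}$ in place of $\bm{x}^{t+1}$.

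\emph{Item 2.} I would write the KKT conditions for the simplex projection that defines $\bm{x}^{t+1}$. These produce a scalar multiplier $\mu\in\mathbb{R}$ for the sum-to-one constraint and nonnegative multipliers $\bm{\alpha}\geq 0$ for the coordinatewise nonnegativity constraints, with complementarity $\alpha_{i}x_{i}^{t+1}=0$. Stationarity gives $\eta\bm{\gamma}^{t}=\bar{\bm{z}}-\bm{x}^{t+1}=\mu\bm{1}_{n}-\bm{\alpha}$ (where $\bar{\bm{z}}=\bm{x}^{t}+\eta\bm{v}^{t}$), and summing yields $\sum_{i}\alpha_{i}=n\mu\geq 0$. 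Because $\inp*{\bm{1}_{n}}{\bm{x}^{t}-\bm{x}^{*}}=0$, the $\mu\bm{1}_{n}$ term drops out, leaving
\[
\eta\inp*{\bm{\gamma}^{t}}{\bm{x}^{t}-\bm{x}^{*}} \;=\; \sum_{i:\,\alpha_{i}>0}\alpha_{i}\bigl(x_{i}^{*}-x_{i}^{t}\bigr).
\]
I would then argue that each summand is nonnegative, i.e.\ $x_{i}^{t}\leq x_{i}^{*}$ whenever $\alpha_{i}>0$. From $\alpha_{i}=\mu-x_{i}^{t}-\eta v_{i}^{t}>0$ we have $x_{i}^{t}<\mu-\eta v_{i}^{t}$. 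The interior NE property $A^{\top}\bm{y}^{*}=\nu^{*}\bm{1}_{n}$ simplifies $\bm{v}^{t}$ to the tangent-space projection of $-A^{\top}(\bm{y}^{t}-\bm{y}^{*})$, so Cauchy--Schwarz yields $\norm*{\bm{v}^{t}}\leq\norm*{A}\norm*{\bm{y}^{t}-\bm{y}^{*}}$. Together with the stepsize bound $\eta\norm*{A}\leq\min_{j}y_{j}^{*}$ and the vertex-maximization estimate for $\norm*{\bm{y}^{t}-\bm{y}^{*}}^{2}$ on $\Delta_{m}$ afforded by $\bm{y}^{*}$ being interior, this controls $\eta\lvert v_{i}^{t}\rvert$. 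A water-filling analysis bounds $\mu$ by the shortfall $\sum_{i:\,\alpha_{i}>0}\max\{-\bar{z}_{i},0\}$; these two estimates together give $\mu-\eta v_{i}^{t}\leq x_{i}^{*}$. The statement for $\bm{\lambda}^{t}$ follows by the symmetric argument: swap the roles of the two players, use $A\bm{x}^{*}=\nu^{*}\bm{1}_{m}$, and apply $\eta\norm*{A}\leq\min_{i}x_{i}^{*}$.

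\emph{Main obstacle.} The crux of the proof is keeping the bound on $\lvert v_{i}^{t}\rvert$ sharp enough that, together with the water-filling bound on $\mu$, the inequality $\mu-\eta v_{i}^{t}\leq x_{i}^{*}$ holds on every overflowing coordinate. A naïve triangle inequality on $\norm*{\bm{v}^{t}}_{\infty}$ costs a multiplicative factor and cannot close the estimate; the argument must leverage (i) the Cauchy--Schwarz refinement from $\bm{v}^{t}$ living in the codimension-one tangent space of $\bar{\Delta}_{n}$, (ii) the restricted vertex-maximum of $\norm*{\bm{y}^{t}-\bm{y}^{*}}^{2}$ under interior $\bm{y}^{*}$, and (iii) the suppression of $\mu$ when only a small fraction of coordinates overflow simultaneously. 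Absent the interior-NE hypothesis this control is known to fail, which is precisely why the $O(1/T)$ rate in~\cref{thm:interior-NE} is restricted to games with an interior equilibrium.
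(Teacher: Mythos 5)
Your Item 1 is correct and matches the paper's proof exactly: both invoke \cref{lem:projection-linear-manifold} and identify $\Pi_{\bar{\Delta}_{n}}(\bm{z})-\Pi_{\Delta_{n}}(\bm{z})$ with $\eta\bm{\gamma}^{t}$.

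For Item 2, however, there is a genuine gap. You reduce the claim to showing $x_{i}^{t}\leq x_{i}^{*}$ on every coordinate with $\alpha_{i}>0$, and you yourself flag the verification of $\mu-\eta v_{i}^{t}\leq x_{i}^{*}$ as ``the crux of the proof'' that requires an unspecified ``water-filling analysis'' combined with a ``vertex-maximization estimate.'' These ingredients are not supplied, and it is not at all clear that this coordinate-wise monotonicity claim even holds globally under the stated stepsize bound; it is a considerably stronger assertion than the lemma's scalar inequality $\inp*{\bm{\gamma}^{t}}{\bm{x}^{t}-\bm{x}^{*}}\geq 0$, which only needs the signed \emph{sum} $\sum_{i}\alpha_{i}(x_{i}^{*}-x_{i}^{t})$ to be nonnegative. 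So the proposal reduces the lemma to a harder-looking subproblem and then stops.

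The paper avoids this entirely by a short global argument that reuses Item 1 rather than dissecting the KKT multipliers. Nonexpansiveness of $\Pi_{\Delta_{n}}$ gives $\norm*{\bm{x}^{t+1}-\bm{x}^{t}}\leq\eta\norm*{A}$, so
\begin{equation*}
\inp*{\bm{\gamma}^{t}}{\bm{x}^{t}-\bm{x}^{*}}
=\inp*{\bm{\gamma}^{t}}{\bm{x}^{t+1}-\bm{x}^{*}}+\inp*{\bm{\gamma}^{t}}{\bm{x}^{t}-\bm{x}^{t+1}}
\geq\inp*{\bm{\gamma}^{t}}{\bm{x}^{t+1}-\bigl(\bm{x}^{*}+\eta\norm*{A}\tfrac{\bm{\gamma}^{t}}{\norm*{\bm{\gamma}^{t}}}\bigr)}.
\end{equation*}
The test point $\bm{x}^{*}+\eta\norm*{A}\,\bm{\gamma}^{t}/\norm*{\bm{\gamma}^{t}}$ lies in $\bar{\Delta}_{n}$ because $\sum_{i}\gamma_{i}^{t}=0$, and lies in $\Delta_{n}$ because the interiority of $\bm{x}^{*}$ and the stepsize constraint $\eta\norm*{A}\leq\min_{i}x_{i}^{*}$ make the ball of radius $\eta\norm*{A}$ around $\bm{x}^{*}$ stay inside the simplex. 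Item 1 then finishes the inequality. This is exactly where the interior-NE assumption is used, and it is far more economical than a per-coordinate KKT analysis; I would recommend you adopt this route.
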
 
\begin{proof}
The first item directly follows from~\cref{lem:projection-linear-manifold}. 

For the second item, 
we have 
\begin{align}
    \norm{ \bm{x}^{t+1}-\bm{x}^t } = \norm{ \Pi_{\Delta_{n}}(\bm{x}^t-\eta A^{\top}\bm{y}^t)-\Pi_{\Delta_{n}}(\bm{x}^t)}
    \leq \norm{ \bm{x}^t-\eta A^{\top}\bm{y}^t-\bm{x}^t }
    \leq \eta\norm{ A }, 
    \label{eq:x-gap}
\end{align}
where the first inequality is by the nonexpansiveness of the projection operator $\Pi_{\Delta_n}$ and the last inequality follows by~\cref{lem:simplex-elementary}. 
As a result, $\bm{x}^{t+1}-\bm{x}^t \in \mathcal{B}(\bm{0}_n, \eta\norm{A})$. 
Then, we have 
    \begin{align}
        & \inp*{\bm{\gamma}^t}{\bm{x}^t - \bm{x}^*} \nonumber \\ 
        = & \inp*{\bm{\gamma}^t}{\bm{x}^{t + 1} - \bm{x}^*} + \inp*{\bm{\gamma}^t}{\bm{x}^t - \bm{x}^{t + 1}} \nonumber \\ 
        \geq & \inp*{\bm{\gamma}^t}{\bm{x}^{t + 1} - \bm{x}^*} + \inp*{\bm{\gamma}^t}{- \eta \norm{A} \frac{\bm{\gamma}^t}{\norm{\bm{\gamma}^t}}} \tag{by $\bm{x}^{t+1}-\bm{x}^t \in \mathcal{B}(\bm{0}_n, \eta\norm{ A })$} \\ 
        = & \inp*{\bm{\gamma}^t}{\bm{x}^{t + 1} - \eta \norm*{A} \frac{\bm{\gamma}^t}{\norm{\bm{\gamma}^t}} - \bm{x}^*} \geq 0, 
    \end{align}
    where the last inequality follows from the first item and 
    \begin{equation*}
        \bm{x}^* + \eta \norm{A} \frac{\bm{\gamma}^t}{\norm*{\bm{\gamma}^t}} \in \mathcal{B}\left( \bm{x}^*, \min \left\{ \min_{i \in [n]} x^*_i, \min_{j \in [m]} y^*_j \right\} \right) \bigcap \bar{\Delta}_n \subset \Delta_n. 
    \end{equation*}
    Here, $\bm{x}^* + \eta \norm{A} \frac{\bm{\gamma}^t}{\norm*{\bm{\gamma}^t}} \in \bar{\Delta}_n$ is because $\sum_{i \in [n]} \gamma^t_i = 0$. 
    Similarly, we can prove that $\inp*{\bm{\lambda}^{t}}{\bm{y}^t - \bm{y}^*} \geq 0$.
\end{proof}

Recall that the energy function $\mathcal{E}: \Delta_n \bigtimes \Delta_m \rightarrow \mathbb{R}$ is defined as 
\begin{equation*}
    \mathcal{E} \left( \bm{x}^t, \bm{y}^t \right) = \norm*{\bm{x}^t - \bm{x}^*}^2 + \norm*{\bm{y}^t - \bm{y}^*}^2 - \eta (\bm{y}^t)^\top A \bm{x}^t, 
\end{equation*}
where $(\bm{x}^*,\bm{y}^*)$ is any Nash equilibrium with full support. 
We now show this energy function is non-increasing in $t$ in the following lemma.
\begin{lemma} 
    Assume that the bilinear game admits an interior NE.
    Let $\{(\bm{x}^{t},\bm{y}^{t})\}_{t=0,1,\ldots}$ be a sequence of
    iterates generated by~\cref{alg:altgda} with $\eta\leq\frac{1}{\norm{A}}\min\{\min_{i\in[n]}x_{i}^{*},\min_{j\in[m]}y_{j}^{*}\}$. 
    Then, we have $\mathcal{E}\left(\bm{x}^{t+1},\bm{y}^{t+1}\right)\leq\mathcal{E}\left(\bm{x}^{t},\bm{y}^{t}\right)$
    for all $t \geq 0$. 
    In particular, we have for all $t \geq 0$
    \begin{equation}
        \mathcal{E}\left(\bm{x}^{t},\bm{y}^{t}\right)-\mathcal{E}\left(\bm{x}^{t+1},\bm{y}^{t+1}\right) = \eta\inp*{\bm{\gamma}^{t}}{\bm{x}^{t+1}+\bm{x}^{t}-2\bm{x}^{*}} + \eta\inp*{\bm{\lambda}^{t}}{\bm{y}^{t+1}+\bm{y}^{t}-2\bm{y}^{*}}\geq0. 
        \label{eq:energy-decay-interior}
    \end{equation}
    \label{lem:energy-decay-interior} 
\end{lemma}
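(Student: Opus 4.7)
The plan is to establish the identity in \cref{eq:energy-decay-interior} by a careful algebraic expansion of $\mathcal{E}(\bm{x}^t,\bm{y}^t) - \mathcal{E}(\bm{x}^{t+1},\bm{y}^{t+1})$, and then derive the non-negativity directly from \cref{lem:pos-qtt}. Using the notation table, I would first rewrite the AltGDA updates in tangent-space form: $\bm{x}^{t+1} = \bm{x}^t + \eta \bm{v}^t - \eta \bm{\gamma}^t$ and $\bm{y}^{t+1} = \bm{y}^t + \eta \bm{u}^{t+1} - \eta \bm{\lambda}^t$, noting that $-\bm{v}^t$ agrees with $A^\top \bm{y}^t$ up to a multiple of $\mathbf{1}_n$ (and similarly $\bm{u}^{t+1}$ agrees with $A\bm{x}^{t+1}$ up to a multiple of $\mathbf{1}_m$).

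I would then expand each squared-norm difference using $\|a\|^2 - \|b\|^2 = \inp*{a-b}{a+b}$, obtaining
$$\norm*{\bm{x}^t - \bm{x}^*}^2 - \norm*{\bm{x}^{t+1} - \bm{x}^*}^2 = \eta \inp*{-\bm{v}^t + \bm{\gamma}^t}{\bm{x}^t + \bm{x}^{t+1} - 2 \bm{x}^*}$$
and an analogous expression on the $\bm{y}$ side. Because $\bm{x}^t + \bm{x}^{t+1} - 2\bm{x}^*$ has coordinates summing to zero, the $\mathbf{1}_n$-component of $-\bm{v}^t$ drops out and the inner product collapses to $(\bm{y}^t)^\top A (\bm{x}^t + \bm{x}^{t+1} - 2\bm{x}^*)$ plus the multiplier term in $\bm{\gamma}^t$; the same trick reduces the $\bm{y}$-side to $-(\bm{y}^t + \bm{y}^{t+1} - 2 \bm{y}^*)^\top A \bm{x}^{t+1}$ plus the $\bm{\lambda}^t$ term. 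Combining these with the remaining bilinear pieces $-\eta (\bm{y}^t)^\top A\bm{x}^t + \eta (\bm{y}^{t+1})^\top A \bm{x}^{t+1}$ in the energy difference, the $\pm\eta (\bm{y}^t)^\top A \bm{x}^{t+1}$ contributions telescope and the $\eta (\bm{y}^t)^\top A \bm{x}^t$ and $\eta (\bm{y}^{t+1})^\top A \bm{x}^{t+1}$ terms cancel, leaving only $-2\eta (\bm{y}^t)^\top A \bm{x}^* + 2\eta (\bm{y}^*)^\top A \bm{x}^{t+1}$ alongside the two multiplier inner products. This is where the interior-NE hypothesis is essential: the indifference conditions $A \bm{x}^* = \nu^* \mathbf{1}_m$ and $A^\top \bm{y}^* = \nu^* \mathbf{1}_n$, together with $\mathbf{1}_m^\top \bm{y}^t = \mathbf{1}_n^\top \bm{x}^{t+1} = 1$, force both bilinear residuals to equal $\nu^*$, so they cancel exactly and the identity emerges.

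Non-negativity then follows by splitting $\bm{x}^t + \bm{x}^{t+1} - 2\bm{x}^* = (\bm{x}^{t+1} - \bm{x}^*) + (\bm{x}^t - \bm{x}^*)$ and invoking \cref{lem:pos-qtt}: its first item with $\bm{x} = \bm{x}^*$ yields $\inp*{\bm{\gamma}^t}{\bm{x}^{t+1} - \bm{x}^*} \geq 0$, and its second item, which is precisely where the stepsize bound $\eta \leq \frac{1}{\norm*{A}} \min\{\min_i x^*_i, \min_j y^*_j\}$ is consumed, yields $\inp*{\bm{\gamma}^t}{\bm{x}^t - \bm{x}^*} \geq 0$, with symmetric statements for $\bm{\lambda}^t$ and $\bm{y}^*$. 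I do not anticipate any delicate obstacle in the calculation itself; the main subtlety to guard against is making sure the cancellation of bilinear residuals is attributed correctly to \emph{interior} NE status, since without full-support indifference the vectors $A\bm{x}^*$ and $A^\top \bm{y}^*$ need not be scalar multiples of $\mathbf{1}$ and the cancellation fails, which is the heuristic reason why the general case in \cref{sec:local-1-T-convergence-rate} requires a different, more delicate treatment.
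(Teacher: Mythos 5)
Your proposal follows exactly the paper's proof: you use the tangent-space update $\bm{x}^{t+1}=\bm{x}^t+\eta\bm{v}^t-\eta\bm{\gamma}^t$ (equivalent to the paper's use of \citep[Lemma 6.26]{beck2017first}), expand the squared-norm differences against $\bm{x}^{t+1}+\bm{x}^t-2\bm{x}^*$ using the sum-to-zero property, observe the cancellation of the paired bilinear terms, invoke the interior-NE indifference conditions $A\bm{x}^*=\nu^*\mathbf{1}_m$, $A^\top\bm{y}^*=\nu^*\mathbf{1}_n$ to kill the residual $-2\eta(\bm{y}^t)^\top A\bm{x}^*+2\eta(\bm{y}^*)^\top A\bm{x}^{t+1}$, and close with the two sign conditions from \cref{lem:pos-qtt}. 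You also correctly identify that the stepsize bound is consumed by the second item of \cref{lem:pos-qtt} and that interiority is what makes the bilinear cancellation work; no gaps.
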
 
\begin{proof} Because
$\Pi_{\bar{\Delta}_{d}}\left(\bm{u}+\bm{g}\right)=\bm{u}+\bm{g}-\frac{1}{d}\left(\mathbf{1}_{d}^{\top}\bm{g}\right)\mathbf{1}_{d}$
for any $\bm{u}\in\bar{\Delta}_{d}$ and $\bm{g}\in\mathbb{R}^{d}$~\citep[Lemma 6.26]{beck2017first},
we have 
\begin{equation}
    \begin{aligned} & \bm{x}^{t+1}=\bm{x}^{t}-\eta A^{\top}\bm{y}^{t}+\frac{\eta}{n}\sum_{i=1}^{n}\left(A^{\top}\bm{y}^{t}\right)_{i}\mathbf{1}_{n}-\eta\bm{\gamma}^{t}\\
     & \bm{y}^{t+1}=\bm{y}^{t}+\eta A\bm{x}^{t+1}-\frac{\eta}{m}\sum_{j=1}^{m}\left(A\bm{x}^{t+1}\right)_{j}\mathbf{1}_{m}-\eta\bm{\lambda}^{t}.
    \end{aligned}
    \label{eq:iterate-update-eqn}
\end{equation}
Hence, we have 
\begin{equation}
\begin{aligned}\inp*{\bm{x}^{t+1}-\bm{x}^{t}+\eta A^{\top}\bm{y}^{t}-\frac{\eta}{n}\sum\nolimits _{i=1}^{n}(A^{\top}\bm{y}^{t})_{i}\cdot\mathbf{1}_{n}+\eta\bm{\gamma}^{t}}{\bm{x}^{t+1}+\bm{x}^{t}-2\bm{x}^{*}}=0\\
\inp*{\bm{y}^{t+1}-\bm{y}^{t}-\eta A\bm{x}^{t+1}+\frac{\eta}{m}\sum\nolimits _{j=1}^{m}(A\bm{x}^{t+1})_{j}\cdot\mathbf{1}_{m}+\eta\bm{\lambda}^{t}}{\bm{y}^{t+1}+\bm{y}^{t}-2\bm{y}^{*}}=0.
\end{aligned}
\end{equation}
Because $\inp*{\mathbf{1}_{n}}{\bm{x}^{t+1}+\bm{x}^{t}-2\bm{x}^{*}}=\inp*{\mathbf{1}_{m}}{\bm{y}^{t+1}+\bm{y}^{t}-2\bm{y}^{*}}=0$,
and $\inp*{\bm a- \bm b}{\bm a+\bm b}=\norm{\bm a}^{2}-\norm{\bm b}^{2}$ for any vectors $\bm a,\bm b$,
the above inequalities are equivalent to 
\begin{equation}
    \begin{aligned}
        \norm*{\bm{x}^{t+1}-\bm{x}^{*}}^{2}-\norm*{\bm{x}^{t}-\bm{x}^{*}}^{2}+\eta\inp*{A^{\top}\bm{y}^{t}}{\bm{x}^{t+1}+\bm{x}^{t}-2\bm{x}^{*}}+\eta\inp*{\bm{\gamma}^{t}}{\bm{x}^{t+1}+\bm{x}^{t}-2\bm{x}^{*}}=0\\
        \norm*{\bm{y}^{t+1}-\bm{y}^{*}}^{2}-\norm*{\bm{y}^{t}-\bm{y}^{*}}^{2}-\eta\inp*{A\bm{x}^{t+1}}{\bm{y}^{t+1}+\bm{y}^{t}-2\bm{y}^{*}}+\eta\inp*{\bm{\lambda}^{t}}{\bm{y}^{t+1}+\bm{y}^{t}-2\bm{y}^{*}}=0.
    \end{aligned}
\end{equation}
Summing up the above two inequalities and plugging in the definition
of energy function $\mathcal{E}$, we have 
\begin{align}
     & \mathcal{E}\left(\bm{x}^{t+1},\bm{y}^{t+1}\right)-\mathcal{E}\left(\bm{x}^{t},\bm{y}^{t}\right)-2\eta\inp*{A\bm{x}^{*}}{\bm{y}^{t}}+2\eta\inp*{A^{\top}\bm{y}^{*}}{\bm{x}^{t+1}}\nonumber \\
     & \hspace{60pt}+\eta\inp*{\bm{\gamma}^{t}}{\bm{x}^{t+1}+\bm{x}^{t}-2\bm{x}^{*}}+\eta\inp*{\bm{\lambda}^{t}}{\bm{y}^{t+1}+\bm{y}^{t}-2\bm{y}^{*}}=0.
     \label{eq:energy-sumup-interior}
\end{align}

Equivalently, 
\begin{align}
    & \mathcal{E}\left(\bm{x}^{t+1},\bm{y}^{t+1}\right)-\mathcal{E}\left(\bm{x}^{t},\bm{y}^{t}\right) + 2\eta \inp{\nu^* \mathbf{1}_m - A \bm{x}^*}{\bm{y}^t}
    + 2\eta \inp{A^\top \bm{y}^* - \nu^* \mathbf{1}_n}{\bm{x}^{t + 1}}
    \nonumber \\
    & \hspace{60pt}+\eta\inp*{\bm{\gamma}^{t}}{\bm{x}^{t+1}+\bm{x}^{t}-2\bm{x}^{*}}+\eta\inp*{\bm{\lambda}^{t}}{\bm{y}^{t+1}+\bm{y}^{t}-2\bm{y}^{*}}=0.
    \label{eq:general-energy-change-per-iteration-1}
\end{align}

Note that $\bm{y}^{\top}A \bm{x}^* \leq \nu^* = (\bm{y}^*)^{\top} A \bm{x}^* \leq  (\bm{y}^*)^{\top}A \bm{x} \; \forall\; \bm{x} \in \Delta_n, \bm{y} \in \Delta_m$
implies that 
$A \bm{x}^* \leq \nu^* \mathbf{1}_m$ and $A^\top \bm{y}^* \geq \nu^* \mathbf{1}_n$.
Therefore, we have 
\begin{equation*}          
    \mathcal{E}\left(\bm{x}^{t+1},\bm{y}^{t+1}\right)-\mathcal{E}\left(\bm{x}^{t},\bm{y}^{t}\right) + \eta\inp*{\bm{\gamma}^{t}}{\bm{x}^{t+1}+\bm{x}^{t}-2\bm{x}^{*}}+\eta\inp*{\bm{\lambda}^{t}}{\bm{y}^{t+1}+\bm{y}^{t}-2\bm{y}^{*}} \leq 0.
\end{equation*}
That is, 
\begin{equation*}          
     \mathcal{E}\left(\bm{x}^{t},\bm{y}^{t}\right) - \mathcal{E}\left(\bm{x}^{t+1},\bm{y}^{t+1}\right) \geq \eta\inp*{\bm{\gamma}^{t}}{\bm{x}^{t+1}+\bm{x}^{t}-2\bm{x}^{*}}+\eta\inp*{\bm{\lambda}^{t}}{\bm{y}^{t+1}+\bm{y}^{t}-2\bm{y}^{*}}.
\end{equation*}

If additionally the game admits an interior NE, 
we have $A \bm{x}^* = \nu^* \mathbf{1}_m$ and $A^\top \bm{y}^* = \nu^* \mathbf{1}_n$.
Therefore,
\begin{equation}          
     \mathcal{E}\left(\bm{x}^{t},\bm{y}^{t}\right) - \mathcal{E}\left(\bm{x}^{t+1},\bm{y}^{t+1}\right) = \eta\inp*{\bm{\gamma}^{t}}{\bm{x}^{t+1}+\bm{x}^{t}-2\bm{x}^{*}}+\eta\inp*{\bm{\lambda}^{t}}{\bm{y}^{t+1}+\bm{y}^{t}-2\bm{y}^{*}}.
     \label{eq:energy-diff-with-interior-NE}
\end{equation}

Combining~\cref{eq:energy-diff-with-interior-NE} with~\cref{lem:pos-qtt} (whose second item requires the presence of an interior NE) completes this lemma.
\qedhere 
\end{proof} 

Now, we are ready to prove~\cref{lem:energy-diff-bound-res-terms}. 
Recall that 
\begin{align*}
    r_t =& \eta \inp*{-A^{\top}\bm{y}^{t}}{\bm{x}^{t+1}-\bm{x}^{t}} + \eta \inp*{A\bm{x}^{t+1}}{\bm{\bm{y}}^{t+1}-\bm{y}^{t}}
    - \norm*{\bm{x}^{t+1} - \bm{x}^{t}}^{2} 
    - \norm*{\bm{y}^{t+1}-\bm{y}^{t}}^{2} \\ 
    =& \inp*{-\eta A^{\top}\bm{y}^{t} - \bm{x}^{t+1} + \bm{x}^{t}}{\bm{x}^{t+1}-\bm{x}^{t}} + \inp*{\eta A\bm{x}^{t+1} - \bm{y}^{t+1} + \bm{y}^{t}}{\bm{y}^{t+1}-\bm{y}^{t}}. 
\end{align*}

\begin{customlem}{2}
    Assume that the bilinear game admits an interior NE. 
    Let $\{ (\bm{x}^t, \bm{y}^t) \}_{t = 0, 1, \ldots }$ be a sequence of iterates generated by~\cref{alg:altgda} with $\eta \leq \frac{1}{\norm*{A}} \min\{ \min_{i\in[n]} x_{i}^{*}, \min_{j \in [m]} y^*_j \}$. 
    Then, we have 
    $$
        0 \leq r_t \leq \mathcal{E}(\bm{x}^t, \bm{y}^t) - \mathcal{E}(\bm{x}^{t + 1}, \bm{y}^{t + 1}), \hspace{10pt} \forall\, t \geq 0. 
    $$ 
\end{customlem} 
\begin{proof}[Proof of~\cref{lem:energy-diff-bound-res-terms}]

By~\cref{eq:iterate-update-eqn} and $\inp*{\mathbf{1}_{n}}{\bm{x}^{t+1}-\bm{x}^{t}}=\inp*{\mathbf{1}_{m}}{\bm{y}^{t+1}-\bm{y}^{t}}=0$,
we have 
\[
    \begin{aligned}\eta\inp*{-A^{\top}\bm{y}^{t}}{\bm{x}^{t+1}-\bm{x}^{t}}-\norm*{\bm{x}^{t+1}-\bm{x}^{t}}^{2}=\inp*{\eta\bm{\gamma}^{t}}{\bm{x}^{t+1}-\bm{x}^{t}} \\
    \eta\inp*{A\bm{x}^{t+1}}{\bm{y}^{t+1}-\bm{y}^{t}}-\norm*{\bm{y}^{t+1}-\bm{y}^{t}}^{2}=\inp*{\eta\bm{\lambda}^{t}}{\bm{y}^{t+1}-\bm{y}^{t}}, 
    \end{aligned}
\]
On the other hand, we have 
\[
    \begin{aligned}\inp*{\eta\bm{\gamma}^{t}}{\bm{x}^{t+1}+\bm{x}^{t}-2\bm{x}^{*}}-\inp*{\eta\bm{\gamma}^{t}}{\bm{x}^{t+1}-\bm{x}^{t}}=2\inp*{\eta\bm{\gamma}^{t}}{\bm{x}^{t}-\bm{x}^{*}}\geq0\\
    \inp*{\eta\bm{\lambda}^{t}}{\bm{y}^{t+1}+\bm{y}^{t}-2\bm{y}^{*}}-\inp*{\eta\bm{\lambda}^{t}}{\bm{y}^{t+1}-\bm{y}^{t}}=2\inp*{\eta\bm{\lambda}^{t}}{\bm{y}^{t}-\bm{y}^{*}}\geq0,
    \end{aligned}
\]
where the inequalities follow from the second item in~\cref{lem:pos-qtt}.
Combining the above equalities and inequalities yields
\begin{equation}
    \begin{aligned}
        \eta\inp*{-A^{\top}\bm{y}^{t}}{\bm{x}^{t+1}-\bm{x}^{t}}-\norm*{\bm{x}^{t+1}-\bm{x}^{t}}^{2} &\leq \inp*{\eta\bm{\gamma}^{t}}{\bm{x}^{t+1}+\bm{x}^{t}-2\bm{x}^{*}}\\
        \eta\inp*{A\bm{x}^{t+1}}{\bm{y}^{t+1}-\bm{y}^{t}}-\norm*{\bm{y}^{t+1}-\bm{y}^{t}}^{2} &\leq \inp*{\eta\bm{\lambda}^{t}}{\bm{y}^{t+1}+\bm{y}^{t}-2\bm{y}^{*}}.
    \end{aligned}
    \label{eq:pos-res-terms-upper-bounds}
\end{equation}
Summing up the two inequalities in~\cref{eq:pos-res-terms-upper-bounds}, by~\cref{lem:energy-decay-interior}, we obtain~\cref{lem:energy-diff-bound-res-terms}. 
\end{proof}

Then, we arrive at the $O(1/T)$ convergence rate. 
\begin{customthm}{1}    
    Assume that the bilinear game admits an interior NE. 
    Let $\{ (\bm{x}^t, \bm{y}^t) \}_{t = 0, 1, \ldots }$ be a sequence of iterates generated by~\cref{alg:altgda} with $\eta \leq \frac{1}{\norm*{A}} \min\{ \min_{i\in[n]} x_{i}^{*}, \min_{j \in [m]} y^*_j \}$. 
    Then, we have 
    \begin{equation}
        \mathtt{DualityGap}\left( \frac{1}{T}\sum_{t=1}^{T} \bm{x}^t, \frac{1}{T}\sum_{t=1}^{T} \bm{y}^t \right) \leq \frac{9 + 4\eta \norm*{A}}{\eta T}. 
    \end{equation}
\end{customthm} 

\begin{proof}[Proof of~\cref{thm:interior-NE}]
    Summing up~\cref{eq:descent-inequality-2,eq:descent-inequality-1}, we have 
    \begin{align*}
        & \eta \left( \bm{y}^\top A \bm{x}^{t + 1} - (\bm{y}^{t + 1})^\top A \bm{x} \right) + \eta\left( \bm{y}^\top A \bm{x}^t - (\bm{y}^t)^\top A \bm{x} \right) \nonumber \\ 
        \leq & \phi_t(\bm{x}, \bm{y}) - \phi_{t + 1}(\bm{x}, \bm{y}) + \psi_t(\bm{x}, \bm{y}) - \psi_{t + 1}(\bm{x}, \bm{y}) \nonumber \\ 
        & \hspace{20pt} + \eta \inp*{A \bm{x}^{t  + 1}}{\bm{y}^{t + 1} - \bm{y}^t} - \eta \inp*{A^\top \bm{y}^t}{\bm{x}^{t  + 1} - \bm{x}^t} - \norm*{\bm{x}^{t  + 1} - \bm{x}^t}^2 - \norm*{\bm{y}^{t + 1} - \bm{y}^t}^2. 
    \end{align*}
    By~\cref{lem:energy-diff-bound-res-terms}, we obtain that 
    \begin{align}
        & \eta \left( \bm{y}^\top A \bm{x}^{t  + 1} - (\bm{y}^{t + 1})^\top A \bm{x} \right) + \eta\left( \bm{y}^\top A \bm{x}^t - (\bm{y}^t)^\top A \bm{x} \right) \leq \nonumber \\ 
        & \hspace{30pt} \phi_t(\bm{x}, \bm{y}) - \phi_{t + 1}(\bm{x}, \bm{y}) + \psi_t(\bm{x}, \bm{y}) - \psi_{t + 1}(\bm{x}, \bm{y}) + \mathcal{E}\left( \bm{x}^t, \bm{y}^t \right) - \mathcal{E}\left( \bm{x}^{t  + 1}, \bm{y}^{t + 1} \right). 
        \label{eq:to-telescope-interior}
    \end{align}

    Summing up~\cref{eq:to-telescope-interior} over $t = 1, \ldots, T$ plus~\cref{eq:descent-inequality-1} for $t = 0$, we have 
    \begin{align*}
        & 2\eta \sum_{t = 1}^{T} \left( \bm{y}^\top A \bm{x}^t - (\bm{y}^t)^\top A \bm{x} \right) + \eta \left( \bm{y}^\top A \bm{x}^{T + 1} - (\bm{y}^{T + 1})^\top A \bm{x} \right) \nonumber \\ 
        \leq & \phi_1(\bm{x}, \bm{y}) - \phi_{T + 1}(\bm{x}, \bm{y}) + \psi_1(\bm{x}, \bm{y}) - \psi_{T + 1}(\bm{x}, \bm{y}) + \mathcal{E}\left( \bm{x}^1, \bm{y}^1 \right) - \mathcal{E}\left( \bm{x}^{T + 1}, \bm{y}^{T + 1} \right) \nonumber \\ 
        &\; + \phi_{0}(\bm{x},\bm{y}) - \phi_{1}(\bm{x},\bm{y}) + \eta \inp*{A\bm{x}^{1}}{\bm{\bm{y}}^{1}-\bm{y}^{0}} - \frac{1}{2}\norm*{\bm{x}^{1}-\bm{x}^{0}}^{2} - \frac{1}{2} \norm*{\bm{y}^{1}-\bm{y}^{0}}^{2} \nonumber \\ 
        \leq& \phi_0(\bm{x}, \bm{y}) - \phi_{T + 1}(\bm{x}, \bm{y}) + \psi_1(\bm{x}, \bm{y}) - \psi_{T + 1}(\bm{x}, \bm{y}) + \mathcal{E}\left( \bm{x}^1, \bm{y}^1 \right) - \mathcal{E}\left( \bm{x}^{T + 1}, \bm{y}^{T + 1} \right) \nonumber \\ 
        & \hspace{20em} + \eta \inp*{A\bm{x}^{1}}{\bm{\bm{y}}^{1}-\bm{y}^{0}}. 
    \end{align*} 

    This inequality gives the following upper bound:
    \begin{equation}
        \bm{y}^{\top}A\left(\frac{1}{T}\sum_{t=1}^{T} \bm{x}^{t} \right)-\left(\frac{1}{T}\sum_{t=1}^{T} \bm{y}^{t} \right)^{\top}A \bm{x} = \frac{1}{T}\sum_{t=1}^{T}\left(\bm{y}^{\top}A\bm{x}^{t}-(\bm{y}^{t})^{\top}A\bm{x}\right)\leq\frac{C(\bm{x},\bm{y})}{2\eta T}, 
        \label{eq:interior-convergence-sum}
    \end{equation}
    where 
    \begin{align*}
        C(\bm{x}, \bm{y}) 
        & =\phi_{0}(\bm{x}, \bm{y})-\phi_{T+1}(\bm{x}, \bm{y}) +\psi_{1}(\bm{x}, \bm{y})-\psi_{T+1}(\bm{x}, \bm{y}) + \mathcal{E}\left( \bm{x}^0, \bm{y}^0 \right) - \mathcal{E}\left( \bm{x}^{T + 1}, \bm{y}^{T + 1} \right) \\ 
        & \hspace{60pt} - \eta \inp*{A\bm{x}^{1}}{\bm{\bm{y}}^{1}-\bm{y}^{0}} -\eta\left(\bm{y}^{\top}A\bm{x}^{T+1}-(\bm{y}^{T+1})^{\top}A\bm{x}\right) \\ 
        &\hspace{280pt} \forall\, \bm{x}, \bm{y} \in \Delta_{m} \times \Delta_{n}. 
    \end{align*}

For any $\bm{x} \in \Delta_n, \bm{y} \in \Delta_m$, we can bound each term in $C(\bm{x}, \bm{y})$ as follows: 
\begin{equation}
    \begin{aligned}
        \phi_0(\bm{x}, \bm{y}) &= \frac{1}{2}\norm*{\bm{x}^{0}-\bm{x}}^{2} + \frac{1}{2}\norm*{\bm{y}^{0}-\bm{y}}^{2} + \eta (\bm{y}^{0})^\top A \bm{x} 
        \leq 
        4 + \eta \norm{A}, \\ 
        - \phi_{T+1}(\bm{x}, \bm{y}) &= - \frac{1}{2}\norm*{\bm{x}^{T + 1} - \bm{x}}^{2} - \frac{1}{2}\norm*{\bm{y}^{T + 1}-\bm{y}}^{2} - \eta (\bm{y}^{T + 1})^\top A \bm{x} \leq \eta \norm{A}, \\ 
        \psi_{1}(\bm{x},\bm{y}) &= \frac{1}{2}\norm*{\bm{x}^{1}-\bm{x}}^{2}+\frac{1}{2}\norm*{\bm{y}^{0}-\bm{y}}^{2}-\frac{1}{2}\norm*{\bm{y}^{1}-\bm{y}^{0}}^{2} \leq 4, \\ 
        - \psi_{T + 1}(\bm{x},\bm{y}) &= - \frac{1}{2}\norm*{\bm{x}^{T + 1}-\bm{x}}^{2} - \frac{1}{2}\norm*{\bm{y}^{T} + \bm{y}}^{2} + \frac{1}{2}\norm*{\bm{y}^{T + 1}-\bm{y}^{T}}^{2} \leq 2, \\ 
        \mathcal{E}\left( \bm{x}^0, \bm{y}^0 \right) &= \norm*{\bm{x}^0 - \bm{x}^*}^2 + \norm*{\bm{y}^0 - \bm{y}^*}^2 - \eta (\bm{y}^0)^\top A \bm{x}^0 \leq 8 + \eta \norm{A}, \\ 
        - \mathcal{E}\left( \bm{x}^{T + 1}, \bm{y}^{T + 1} \right) &= - \norm*{\bm{x}^{T + 1} - \bm{x}^*}^2 - \norm*{\bm{y}^{T + 1} - \bm{y}^*}^2 + \eta (\bm{y}^{T + 1})^\top A \bm{x}^{T + 1} \leq \eta \norm{A}, 
    \end{aligned}
    \label{eq:potential-energy-bounds}
\end{equation}
and 
$- \eta \inp*{A\bm{x}^{1}}{\bm{\bm{y}}^{1}-\bm{y}^{0}} -\eta\left(\bm{y}^{\top}A\bm{x}^{T+1}-(\bm{y}^{T+1})^{\top}A\bm{x}\right) \leq 4 \eta \norm{A}$, where all the inequalities follow by~\cref{lem:simplex-elementary}. 
Therefore, we can bound $C(\bm{x}, \bm{y})$ by $18 + 8\eta \norm*{A}$. 
By taking the maximum on the both sides of~\cref{eq:interior-convergence-sum}, we complete the proof. 
\end{proof}

\section{Additional results in~\cref{sec:1-T-convergence-interior-NE}}
\label{app:sec:additional-results}


In this section, we provide several additional results implied by our main results in~\cref{sec:1-T-convergence-interior-NE}.

First, we notice that in harmonic games~\citep{candogan2011flows}, the uniformly mixed strategy profile is always a Nash equilibrium~\citep[Theorem 5.5.]{candogan2011flows}.
Therefore, as a corollary of~\cref{thm:interior-NE}, we have 
\begin{corollary}
    In harmonic games, 
    let $\{ (\bm{x}^t, \bm{y}^t) \}_{t = 0, 1, \ldots }$ be a sequence of iterates generated by~\cref{alg:altgda} with $\eta \leq \frac{1}{\norm*{A}} \min\{ \frac{1}{n}, \frac{1}{m} \}$, starting from any initial point within $\Delta_n \times \Delta_m$. 
    Then, the duality gap of the averaged iterate converges with a rate of $O(1/T)$.
\end{corollary}

\subsection{``Interior cyclic trajectories'' indicate interior NE}

As an additional result, we show that we can detect the presence of interior NE via observing the trajectories of AltGDA.
In particular,
the presence of ``interior cyclic trajectories'' indicates that the game cannot admit non-degenerate non-interior NE.

Here, we adopt a strict definition of ``interior cyclic trajectories.'' Specifically, we call a trajectory an \emph{interior cyclic trajectory} if 
(1) it evolves along a periodic orbit, 
and (2) it remains strictly in the interior of the simplices.
Formally, we say AltGDA eventually exhibits an interior cyclic trajectory if there exist $T, s \geq 1$ such that 
\begin{itemize}
    \item $(\bm{x}^t, \bm{y}^t) = (\bm{x}^{t+s}, \bm{y}^{t+s})$ for all $t > T$; 
    \item $\bm{\gamma}^t = \bm{0}_n$ and $\bm{\lambda}^t = \bm{0}_m$ for all $t > T$; 
    \item for any $i \in [n]$, there exists $t_i > T$ such that $x^{t_i}_i > 0$, and for any $j \in [m]$, there exists $t_j > T$ such that $y^{t_j}_j > 0$.
\end{itemize}

A Nash equilibrium is \emph{non-degenerate} if $(A \bm{x}^*)_j < \nu^*$ whenever $\bm{y}^*_j = 0$ and $(A^\top \bm{y}^*)_i > \nu^*$ whenever $\bm{x}^*_i = 0$.
With these notions, we present the following lemma.
\begin{lemma}
    If AltGDA eventually exhibits an interior cyclic trajectory in a bilinear game, 
    then there cannot be any non-degenerate non-interior NE in the game.
\end{lemma}
\begin{proof}
    We prove this lemma by contradiction.
    Suppose that there exists a non-degenerate non-interior NE $(\bm{x}', \bm{y}')$.
    Let $\mathcal{E}'(\bm{x}^t, \bm{y}^t)$ be the energy function defined w.r.t. $(\bm{x}', \bm{y}')$ as in~\cref{eq:def:energy}.
    Recall that, without assuming an interior NE, we have~\cref{eq:general-energy-change-per-iteration-1}, i.e., 
    \begin{align}
        & \mathcal{E}'\left(\bm{x}^{t+1},\bm{y}^{t+1}\right)-\mathcal{E}'\left(\bm{x}^{t},\bm{y}^{t}\right) + 2\eta \inp{\nu^* \mathbf{1}_m - A \bm{x}'}{\bm{y}^t}
        + 2\eta \inp{A^\top \bm{y}' - \nu^* \mathbf{1}_n}{\bm{x}^{t + 1}}
        \nonumber \\
        & \hspace{60pt}+\eta\inp*{\bm{\gamma}^{t}}{\bm{x}^{t+1}+\bm{x}^{t}-2\bm{x}'}+\eta\inp*{\bm{\lambda}^{t}}{\bm{y}^{t+1}+\bm{y}^{t}-2\bm{y}'}=0.
    \end{align}
    Given that $\bm{\gamma}^t = \bm{0}_n, \bm{\lambda}^t = \bm{0}_m$ for all $t > T$, and $A \bm{x}' \leq \nu^* \mathbf{1}_m$ and $A^\top \bm{y}' \geq \nu^* \mathbf{1}_n$, we have 
    \begin{equation*}
        \mathcal{E}'\left(\bm{x}^{t+1},\bm{y}^{t+1}\right) \leq \mathcal{E}'\left(\bm{x}^{t},\bm{y}^{t}\right)   \hspace{10pt} \forall\, t > T.
    \end{equation*}
    Further, with a non-degenerate non-interior NE there has to be at least an $i'$ or $j'$ such that $(A^\top \bm{y}')_{i'} > \nu^*$ or $(A \bm{x}')_{j'} < \nu^*$. 
    Because of the third items in the definition of ``interior cyclic trajectory'',
    there has to be at least one iteration per period in which 
    \begin{equation*}
        \eta \inp{\nu^* \mathbf{1}_m - A \bm{x}'}{\bm{y}^t}
        + \eta \inp{A^\top \bm{y}' - \nu^* \mathbf{1}_n}{\bm{x}^{t + 1}} > 0.
    \end{equation*}
    Then, by the second item in the definition of interior cyclic trajectory,
    we obtain that 
    \begin{equation*}
        \mathcal{E}\left(\bm{x}^{t},\bm{y}^{t}\right) < \mathcal{E}\left(\bm{x}^{t+s},\bm{y}^{t+s}\right) \hspace{10pt} \forall\, t > T, 
    \end{equation*}
    which contradicts the first item in the definition of interior cyclic trajectory.
\end{proof}

\subsection{An adaptive stepsize rule without knowing the NE}
\label{app:subsec:adaptive}

    In this subsection, we design an adaptive stepsize rule that searches for an admissible stepsize. 
    With this rule, even without knowing any interior NE, we can still set up the AltGDA algorithm and achieves $O(1/T)$ convergence rate.
    Additionally, this rule allows us to start with an initial stepsize that is potentially much larger than the theoretical one, therefore might lead to better performances in practice.
    We present the AltGDA algorithm equipped with this adaptive stepsize rule in~\cref{alg:altgda-adaptive}.

    \begin{algorithm}[t]
        \caption{AltGDA with an adaptive stepsize rule}
        \label{alg:altgda-adaptive}
        \begin{algorithmic} 
        \STATE \textbf{input:} number of iterations $T$, initial step size $\eta^0 > 0$ 
        \STATE \textbf{initialize:} $(\bm{x}^{0}, \bm{y}^{0}) \in \mathcal{X}\times\mathcal{Y}, \eta^t = \eta^0 \leq \frac{1}{\norm{A}}, r_{\mbox{sum}} = 0$
        \FOR{$t=0,\dots,T-1$}
        \STATE $\bm{x}^{t+1} = \Pi_{\mathcal{X}} (\bm{x}^{t} - \eta^t A^{\top}\bm{y}^{t} )$\\
        \STATE $\bm{y}^{t+1} = \Pi_{\mathcal{Y}} (\bm{y}^{t} + \eta^t A\bm{x}^{t+1} )$\\
        \STATE Compute $r_t$ via~\cref{eq:def:rt}\\
        \STATE $r_{\mbox{sum}} = r_{\mbox{sum}} + r_t$\\
        \IF{$r_{\mbox{sum}} > 8 + 2\eta^0 \norm{A}$}{
            \STATE $\eta^{t + 1} = \eta^t / 2$\\
            \STATE $r_{\mbox{sum}} = 0$
        }
        \ELSE{
            \STATE $\eta^{t + 1} = \eta^t$
        }
        \ENDIF
        \ENDFOR 
        \STATE \textbf{output:} $(\frac{1}{T} \sum_{t = 1}^T \bm{x}^t, \frac{1}{T} \sum_{t = 1}^T \bm{y}^t)\in\mathcal{X}\times\mathcal{Y}$ 
        \end{algorithmic} 
    \end{algorithm}

    \begin{theorem}
        Assume that the bilinear game admits an interior NE. 
        Let $\{ (\bm{x}^t, \bm{y}^t) \}_{t = 0, 1, \ldots }$ be a sequence of iterates generated by~\cref{alg:altgda-adaptive} with an initial stepsize $\eta^0 \leq \frac{1}{\norm{A}}$. 
        Then, we have 
        \begin{equation}
            \mathtt{DualityGap}\left( \frac{1}{T}\sum_{t=1}^{T} \bm{x}^t, \frac{1}{T}\sum_{t=1}^{T} \bm{y}^t \right) \leq \frac{C}{\eta^* T},
        \end{equation}
        where 
        $\eta^* = \frac{1}{\norm{A}} \min\left\{ \min_{i\in[n]} x_{i}^{*}, \min_{j \in [m]} y^*_j \right\}$, $(\bm{x}^*, \bm{y}^*)$ is any interior NE, and 
        $C = \left\lceil \log_2{\left({\eta^0}/{\eta^*}\right)} \right\rceil (9 + 5 \eta^0 \norm{A}) + (18 + 8\eta^0 \norm*{A})$.
    \end{theorem}
    \begin{proof}
        First, note that the stepsizes are monotonically non-increasing, thereby $\eta^t \leq \eta^0$ for all $t \geq 0$. 
        Second, 
        if we are using an admissible stepsize, i.e., a stepsize $\eta$ satisfying 
        $\eta \leq \eta^*$.
        Then, by~\cref{lem:energy-diff-bound-res-terms},
        the residual terms $r_t$ (defined in~\cref{eq:def:rt}) is summable and (by the same bounds as in~\cref{eq:potential-energy-bounds}) 
        \begin{equation*}
            \sum\nolimits_{t=0}^T r_t \leq \mathcal{E}(\bm{x}^0, \bm{y}^0) - \mathcal{E}(\bm{x}^{T + 1}, \bm{y}^{T + 1}) \leq 8 + 2\eta^0 \norm{A}.
        \end{equation*}
    
        Next, we denote $t_k$ as the $k$-th time point at which the stepsize shrinks (by $1/2$).
        We call the 
        iterations from $0$ to $t_1 - 1$ (inclusively) as the $0$-th epoch, and the iterations from $t_k$ to $t_{k + 1} - 1$ (inclusively) as the $k$-th epoch for $k = 1, 2, \ldots$.
        Let $\eta^k$ denote the stepsize across the $k$-th epoch.
        Let $K$ denote the total number of stepsize shrinkage before we find an admissible stepsize, i.e., $\eta^K = \hat{\eta} \leq \eta^*$.
        It then follows that $K \leq \lceil \log_2{({\eta^0}/{\eta^*})} \rceil$ and the final stepsize $\hat{\eta} \geq \eta^* / 2$.
        Then, we consider the following two cases:
        
        (the first case) For all $k \leq K - 1$, i.e., for all $k$ such that $\eta^k > \eta^*$, 
        we have 
        \begin{align}
            &2 \sum_{t = t_k}^{t_{k + 1} - 1} \eta^k\left(\bm{y}^{\top}A\bm{x}^{t}-(\bm{y}^{t})^{\top}A\bm{x}\right) \nonumber \\ 
            & \hspace{100pt} - \eta^k\left(\bm{y}^{\top}A\bm{x}^{t_k} - (\bm{y}^{t_k})^{\top}A\bm{x}\right) + \eta^k\left(\bm{y}^{\top}A\bm{x}^{t_{k+1}} - (\bm{y}^{t_{k+1}})^{\top}A\bm{x}\right) \nonumber \\ 
            =& \sum\nolimits_{t = t_k}^{t_{k + 1} - 1} \eta^k\left(\bm{y}^{\top}A\bm{x}^{t}-(\bm{y}^{t})^{\top}A\bm{x}\right) + \eta^k\left(\bm{y}^{\top}A\bm{x}^{t+1}-(\bm{y}^{t+1})^{\top}A\bm{x}\right) \nonumber \\ 
            \leq& \psi_{t_k}(\bm{x},\bm{y}) - \psi_{t_{k + 1}}(\bm{x},\bm{y}) + \phi_{t_k}(\bm{x},\bm{y}) - \phi_{t_{k + 1}}(\bm{x},\bm{y}) + \sum\nolimits_{t = t_k}^{t_{k + 1} - 1} r_t \tag{by~\cref{lem:two-bounds}} \\ 
            \leq& \psi_{t_k}(\bm{x},\bm{y}) - \psi_{t_{k + 1}}(\bm{x},\bm{y}) + \phi_{t_k}(\bm{x},\bm{y}) - \phi_{t_{k + 1}}(\bm{x},\bm{y}) + (8 + 2\eta^0\norm{A}) + 2(\eta^0)^2 \norm{A}^2 \nonumber \\
            \leq& 10 + 2\eta^k\norm{A} + (8 + 2\eta^0\norm{A}) + 2(\eta^0)^2 \norm{A}^2, 
            \label{eq:adaptive-stepsize-bound-case-1}
        \end{align}
        where the last inequality follows by the similar bounds as in~\cref{eq:potential-energy-bounds}.
        To show the second to last inequality, 
        first by the stepsize shrinkage condition, we have the accumulated residual terms are at most $8 + 2\eta^0\norm{A} + \bar{r}$ where $r_t \leq \bar{r}$ for all $t$.
        Then,
        we obtain $\bar{r}$ as follows:
        \begin{align*}
            r_t =& \eta^t \inp*{-A^{\top}\bm{y}^{t}}{\bm{x}^{t+1}-\bm{x}^{t}} + \eta^t \inp*{A\bm{x}^{t+1}}{\bm{\bm{y}}^{t+1}-\bm{y}^{t}}
            - \norm*{\bm{x}^{t+1} - \bm{x}^{t}}^{2} 
            - \norm*{\bm{y}^{t+1}-\bm{y}^{t}}^{2} \\ 
            \leq& \eta^t \norm{A^{\top}\bm{y}^{t}}\norm{\bm{x}^{t+1}-\bm{x}^{t}} + \eta^t \norm{A\bm{x}^{t+1}}\norm{\bm{\bm{y}}^{t+1}-\bm{y}^{t}} \\ 
            \leq& \eta^t \norm{A^{\top}\bm{y}^{t}}\norm{\bm{x}^{t} - \eta^t A^{\top}\bm{y}^{t} -\bm{x}^{t}} + \eta^t \norm{A\bm{x}^{t+1}}\norm{\bm{\bm{y}}^{t} + \eta^t A\bm{x}^{t+1} -\bm{y}^{t}} \\ 
            \leq& 2 (\eta^t)^2 \norm{A}^2 \\ 
            \leq& 2 (\eta^0)^2 \norm{A}^2, 
        \end{align*}
        where the last three inequalities follow by non-expansiveness, the forth item in~\cref{lem:simplex-elementary}, and $\eta^t \leq \eta^0$ for all $t \geq 0$.
        It then follows by \cref{eq:adaptive-stepsize-bound-case-1} and the third item in~\cref{lem:simplex-elementary} that 
        \begin{equation*}
            \sum_{t = t_k}^{t_{k + 1} - 1} \left(\bm{y}^{\top}A\bm{x}^{t}-(\bm{y}^{t})^{\top}A\bm{x}\right) \leq \frac{C_k}{2\eta^k} \leq \frac{C_k}{2\eta^*},
        \end{equation*}
        where $C_k = 10 + 2\eta^k\norm{A} + (8 + 2\eta^0\norm{A}) + 2(\eta^0)^2 \norm{A}^2 + 4 \eta^k \norm{A}$.
    
        (the second case) For $k = K$, i.e., for all $t \geq t_K$ and $\eta^t = \hat{\eta} \leq \eta^*$, the regret will remain finitely bounded by the same proof as in the proof of~\cref{thm:interior-NE}. 
        Specifically, we have 
        \begin{equation*}
            \sum_{t = t_K}^{T} \left(\bm{y}^{\top}A\bm{x}^{t} - (\bm{y}^{t})^{\top}A\bm{x}\right) \leq \frac{C(\bm{x}, \bm{y})}{2\hat{\eta}} \leq \frac{C(\bm{x}, \bm{y})}{\eta^*} \leq \frac{18 + 8\eta^0 \norm*{A}}{\eta^*}.
        \end{equation*}
        Let $\bar{C} = \max_{k=0,1,\ldots,K} C_k \leq 18 + 10 \eta^0 \norm{A}$. 
        Combining the two cases, we have 
        \begin{equation}
            \sum_{t = 0}^{T} \left(\bm{y}^{\top}A\bm{x}^{t} - (\bm{y}^{t})^{\top}A\bm{x}\right) \leq \left\lceil \log_2{\left(\frac{\eta^0}{\eta^*}\right)} \right\rceil \frac{18 + 10 \eta^0 \norm{A}}{2\eta^*} + \frac{18 + 8\eta^0 \norm*{A}}{\eta^*}.
            \label{eq:accumulated-inner-duality-gap-upper-bound}
        \end{equation}
        We completes the proof by dividing $T$ and taking maximum over $(\bm{x}, \bm{y}) \in \Delta_n \times \Delta_m$ on the both sides of~\cref{eq:accumulated-inner-duality-gap-upper-bound}.
    \end{proof}

\section{Omitted proofs in~\cref{sec:local-1-T-convergence-rate}} 
\label{app:sec:local}


In this section, we introduce some additional notations to facilitate the proof. 
We already define $I^* = \{ i \in [n] \mid x^*_{i} > 0 \}$ and $J^* = \{ j \in [m] \mid y^*_{j} > 0 \}$ in~\cref{sec:1-T-convergence-interior-NE}. 
Analogously, we denote $I^t = \{ i \in [n] \mid x^{t}_i > 0 \}$ and $J^t = \{ j \in [m] \mid y^{t}_j > 0 \}$ for all $t \geq 0$. 
For conciseness, for any $t \geq 0$, we introduce the following vectors to denote the ``projected'' gradients for a pair of $(\bm{x}^t, \bm{y}^t)$: 
\begin{equation}
    \begin{aligned} 
        & \bm{v}^{t} := -A^{\top}\bm{y}^t + \frac{\sum_{\ell=1}^{n}(A^{\top}\bm{y}^t)_{\ell}}{n}\mathbf{1}_{n}, \\
        & \bm{u}^{t} := A\bm{x}^t - \frac{\sum_{\ell=1}^{m}(A\bm{x}^t)_{\ell}}{m}\mathbf{1}_{m}.
    \end{aligned}
    \label{eq:proj-grad}
\end{equation} 



Note that $\sum_{i \in [n]} v_i^t = \sum_{j \in [m]} u_j^t = 0$. 
Recall that $\Pi_{\bar{\Delta}_d}\left( \bm{u} + \bm{g} \right) = \bm{u} + \bm{g} - \frac{1}{d}\left(\mathbf{1}_d^\top \bm{g}\right) \mathbf{1}_d$ for any $\bm{u} \in \bar{\Delta}_d$ and $\bm{g} \in \mathbb{R}^d$~\citep[Lemma 6.26]{beck2017first}, thereby we have 
$\Pi_{\bar{\Delta}_n}\big( \bm{x}^t - \eta A^\top \bm{y}^t \big) = \bm{x}^t + \eta \bm{v}^t$ and $\Pi_{\bar{\Delta}_m}\big( \bm{y}^t + \eta A \bm{x}^{t + 1} \big) = \bm{y}^t + \eta \bm{u}^{t + 1}$. 
With $\bm{v}^t$ and $\bm{u}^t$, we can also write the nonsmooth parts of the iterate updates $\bm{\gamma}^t$ and $\bm{\lambda}^t$ defined in~\cref{eq:def:gamma,eq:def:lambda} as follows: 
\begin{equation}
    \begin{aligned}
        \bm{\gamma}^t &= \frac{\bm{x}^t + \eta \bm{v}^t - \bm{x}^{t + 1}}{\eta}\\ 
        \bm{\lambda}^t &= \frac{\bm{y}^t + \eta \bm{u}^{t + 1} - \bm{y}^{t + 1}}{\eta}. 
    \end{aligned}
    \label{eq:def-nonsmooth-term}
\end{equation}
Additionally, we define  
\begin{equation}
    \bar{\gamma}^t = \max_{i \in [n]} \gamma_i^t \; \mbox{ and } \; 
    \bar{\lambda}^t = \max_{j \in [m]} \lambda_j^t. 
    \label{eq:def-nonsmooth-bar}
\end{equation}
In this convention, the update rule of~\cref{alg:altgda} can be expressed as
\begin{equation}
    \begin{aligned}
        \bm{x}^{t + 1} &= \bm{x}^t + \eta \bm{v}^t - \eta \bm{\gamma}^t \\ 
        \bm{y}^{t + 1} &= \bm{y}^t + \eta \bm{u}^{t + 1} - \eta \bm{\lambda}^t. 
    \end{aligned}
    \label{eq:equivalent-updates}
\end{equation}

We start the proof of the $O(1/T)$ local convergence rate with the following lemma. 
This lemma captures useful properties of $\bm{\gamma}^t$ and $\bm{\lambda}^t$. 
\begin{lemma}
    For any $t \geq 0$, we have 
    $\gamma^{t}_{i} = \bar{\gamma}^{t} \geq 0$ for all $i \in I^{t + 1}$ and $\lambda^{t}_{j} = \bar{\lambda}^{t} \geq 0$ for all $j \in J^{t + 1}$. 
    Furthermore, if $\gamma^{t}_{i} \leq 0$ for some $i$ then $\lvert \gamma^{t}_{i} \rvert \leq \lvert v^{t}_{i} \rvert$, similarly, if $\lambda^{t}_{j} \leq 0$ for some $j$ then $\lvert \lambda^{t}_{j} \rvert \leq \lvert u^{t + 1}_{j} \rvert$. 
    \label{lem:property-nonsmooth-terms}
\end{lemma}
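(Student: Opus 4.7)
The plan is to exploit the explicit water-filling characterization of Euclidean projection onto the simplex. By~\cref{lem:projection-linear-manifold} together with the identities just preceding the lemma statement, we have $\bm{x}^{t+1} = \Pi_{\Delta_n}(\bm{z}^x)$ and $\bm{y}^{t+1} = \Pi_{\Delta_m}(\bm{z}^y)$, where $\bm{z}^x := \bm{x}^t + \eta \bm{v}^t \in \bar{\Delta}_n$ and $\bm{z}^y := \bm{y}^t + \eta \bm{u}^{t+1} \in \bar{\Delta}_m$. A standard fact (see, e.g., \citep[Section 6.2]{beck2017first}) is that the projection onto the simplex takes the form $x^{t+1}_i = \max(z^x_i - c, 0)$ for the unique scalar $c$ satisfying $\sum_i \max(z^x_i - c, 0) = 1$; since $\sum_i z^x_i = 1$, one immediately checks that $c \geq 0$ (otherwise the clipped sum would strictly exceed $1$).

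Armed with this formula, I would compute $\eta \gamma^t_i = z^x_i - x^{t+1}_i$ explicitly: for indices with $z^x_i > c$ this equals $c$, while for $i$ with $z^x_i \leq c$ it equals $z^x_i \leq c$. Since $x^{t+1}_i > 0 \Longleftrightarrow z^x_i > c$, the first case is exactly $I^{t+1}$, so every $i \in I^{t+1}$ shares the common value $\gamma^t_i = c/\eta$. This common value is both the maximum entry of $\bm{\gamma}^t$ (hence equal to $\bar{\gamma}^t$) and nonnegative since $c \geq 0$. The derivation for $\bm{\lambda}^t$ and $J^{t+1}$ is verbatim the same, with $\bm{y}^{t+1} = \Pi_{\Delta_m}(\bm{z}^y)$ in place of $\bm{x}^{t+1} = \Pi_{\Delta_n}(\bm{z}^x)$.

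For the second assertion, suppose $\gamma^t_i \leq 0$. If $i \in I^{t+1}$, then by the first part $\gamma^t_i = \bar{\gamma}^t \geq 0$, so $\gamma^t_i = 0$ and the bound is trivial. Otherwise $i \notin I^{t+1}$ and $\eta \gamma^t_i = z^x_i \leq 0$, i.e., $x^t_i + \eta v^t_i \leq 0$. Combined with $x^t_i \geq 0$, this forces $\eta v^t_i \leq -x^t_i \leq 0$, and therefore
\begin{equation*}
    |\eta \gamma^t_i| = -z^x_i = -x^t_i - \eta v^t_i \leq -\eta v^t_i = \eta |v^t_i|,
\end{equation*}
which is the claimed bound. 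The companion bound for $\lambda^t_j$ follows by the identical argument after the substitutions $(x^t_i, v^t_i, \bm{z}^x) \mapsto (y^t_j, u^{t+1}_j, \bm{z}^y)$.

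The entire argument is elementary bookkeeping around the closed form for simplex projection, so I do not foresee a substantive obstacle; the only point worth flagging is the degenerate case $c = 0$ in the second assertion, which as above is absorbed into the trivial subcase $\gamma^t_i = 0$.
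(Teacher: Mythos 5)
Your proposal is correct and follows essentially the same route as the paper's proof: both rely on the water-filling/soft-threshold characterization $x^{t+1}_i = \max(z^x_i - c, 0)$ of simplex projection (the paper calls the threshold $\tau$ and cites Held et al.\ 1974 rather than Beck), both establish $c \geq 0$ by the same summation argument, and both identify $c/\eta$ with $\bar{\gamma}^t$ via the positive-support indices. Your case split in the second part ($i \in I^{t+1}$ versus $i \notin I^{t+1}$, rather than the paper's $\bar{\gamma}^t > 0$ versus $\bar{\gamma}^t = 0$) is a mild cosmetic reorganization that reaches the same conclusion.
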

\begin{proof}
    Note that $\eta \bm{\gamma}^{t} = \bm{x}^t + \eta \bm{v}^t - \Pi_{\Delta_n}(\bm{x}^t + \eta \bm{v}^t)$. 
    By the first-order optimality of the minimization problem corresponding to $\Pi_{\Delta_n}$, there exists a unique $\tau$ such that $x^{t + 1}_i = \max\{ x^t_i + \eta v^t_i - \tau, 0 \}$ for all $i \in [n]$ (See, e.g., Page 77 in~\citet{held1974validation}). 
    Note that $\tau \geq 0$ because
    \begin{equation*}
        1 = \sum_{i \in [n]} x^{t + 1}_i = \sum_{i \in [n]} \max\{ x^t_i + \eta v^t_i - \tau, 0 \} \geq \sum_{i \in [n]} \left( x^t_i + \eta v^t_i - \tau \right) = 1 - n\tau,  
    \end{equation*}
    where we have used $\sum_{i\in [n]} v_i^t = 0$. It follows that $\eta \gamma^{t}_i = x^t_i + \eta v^t_i - \max\{ x^t_i + \eta v^t_i - \tau, 0 \} \leq x^t_i + \eta v^t_i - (x^t_i + \eta v^t_i - \tau) = \tau$ for all $i \in [n]$. 
    Moreover, if $x^{t+1}_{i} > 0$ (i.e., $i \in I^{t + 1}$), we have $x^{t+1}_i = x^t_i + \eta v^t_i - \tau$ thus $\eta \gamma^{t}_{i} = \tau$. 
    As $\eta \gamma^{t}_i \leq \tau$ for all $i \in [n]$ and $\eta \gamma^{t}_{i} = \tau$ for all $i \in I^{t + 1}$, we have $\tau = \eta \bar{\gamma}^t$. 
    Symmetrically, we can show $\lambda^{t}_{j} = \bar{\lambda}^{t}$ for all $j \in J^{t + 1}$. 

    To show the second part of this lemma, we consider two cases: $\bar{\gamma}^t > 0$ and $\bar{\gamma}^t = 0$. 
    We first assume $\bar{\gamma}^t > 0$. 
    If $\gamma^{t}_{i} \leq 0$ for some $i$, then we have that $x^{t + 1}_i = x^t_i + \eta v^t_i - \eta \gamma^{t}_{i} \geq x^t_i + \eta v^t_i$. 
    On the other hand, because $x^{t + 1}_i = \max\{ x^t_i + \eta v^t_i - \eta \bar{\gamma}^t, 0 \}$ and 
    $x^t_i + \eta v^t_i - \eta \bar{\gamma}^t < x^t_i + \eta v^t_i$, we have $x^{t + 1}_i = 0 = x^t_i + \eta v^t_i - \eta \gamma^t_i$ and therefore $x^t_i + \eta v^t_i \leq 0$. 
    Since $x^t_i \geq 0$, we have $v^t_i \leq 0$. Also, it holds that $\eta \gamma^t_i = x^t_i + \eta v^t_i \geq \eta v^t_i$. This implies $\lvert \gamma^{t}_{i} \rvert \leq \lvert v^{t}_{i} \rvert$ as $\gamma^{t}_{i} \leq 0$ and $v^t_i \leq 0$. 
    For the other case in which $\bar{\gamma}^t = 0$, by the definition of $\bar{\gamma}^t$ we have $\gamma^t_i \leq 0$ for all $i$. 
    Then, we have $x^t_i + \eta v^t_i - \eta \gamma^t_i \geq x^t_i + \eta v^t_i$ for each $i \in [n]$ and 
    \begin{equation*}
        1 = \sum_{i \in [n]} x^{t + 1}_{i} = \sum_{i \in [n]} x^t_i + \eta v^t_i - \eta \gamma^t_i \geq \sum_{i \in [n]} x^t_i + \eta v^t_i = 1, 
    \end{equation*}
    which implies $\sum_{i\in[n]}\gamma_{i}^{t}=0$.
    Because $\gamma_{i}^{t} \leq 0$ for all $i\in[n]$ when $\bar{\gamma}^{t}=0$,
    we must have $\gamma_{i}^{t}=0$ for every $i\in[n]$. 
    Therefore, $\lvert \gamma^{t}_{i} \rvert \leq \lvert v^{t}_{i} \rvert$ holds trivially. 
    Symmetrically, we can show $\lvert \lambda^{t}_{j} \rvert \leq \lvert u^{t + 1}_{j} \rvert$ if $\lambda^{t}_{j} \leq 0$. 
\end{proof}



Recall that, the value of the game is denoted as $\nu^* = \min_i (A^\top \bm{y}^*)_i = \max_j (A \bm{x}^*)_j$ 
and the game-specific parameter is defined as  
\begin{equation}
    \delta = \min \left\{ \min_{i \notin I^*} \frac{(A^\top \bm{y}^*)_i - \nu^*}{\norm*{A}},\, \min_{j \notin J^*} \frac{\nu^* - (A \bm{x}^*)_j}{\norm*{A}},\, \min_{i \in I^*} x^*_i,\, \min_{j \in J^*} y^*_i \right\}. 
\end{equation}
This parameter measures the gap between the suboptimal payoffs to the optimal payoff for the both players. 
In particular, 
\begin{equation}
    \begin{aligned}
        (A^\top \bm{y}^*)_i \geq \nu^* + \delta \norm*{A} \hspace{30pt} & \forall\, i \notin I^*, \\ 
        (A \bm{x}^*)_j \leq \nu^* - \delta \norm*{A} \hspace{30pt} & \forall\, j \notin J^*. 
    \end{aligned}
    \label{eq:delta-interpretation}
\end{equation}


Now, we present the proof of~\cref{lem:local-region-separation}. In words, this lemma says that if the current iterate $(\bm{x}, \bm{y})$ is in $S$, then the components $x_i$ and $y_j$ corresponding to $I^*$ and $J^*$ are kept bounded away from zero; and other components monotonically decrease and approach zero. 
In a high level, this lemma provides the monotonicity we need to finish the proof. 
\begin{customlem}{3}
    If the current iterate $(\bm{x}, \bm{y}) \in S$, and the next iterate $(\bm{x}^+, \bm{y}^+)$ is generated by~\cref{alg:altgda} with the stepsize $\eta \leq \frac{1}{2\norm*{A}}$, 
    then we have 
    \begin{enumerate}
        \item $x^+_i, x_i \geq \frac{\delta}{2}$ for all $i \in I^*$ and $y^+_j, y_j \geq \frac{\delta}{2}$ for all $j \in J^*$; 
        \item $x^+_i \leq x_i$ for all $i \notin I^*$ and $y^+_j \leq y_j$ for all $j \notin J^*$. 
    \end{enumerate}
\end{customlem}
\begin{proof}[Proof of~\cref{lem:local-region-separation}]
    To keep the presentation concise, we only prove the ``$\bm{x}$'' part; the ``$\bm{y}$'' part can be done symmetrically. 
    Because $\norm*{\bm{y} - \bm{y}^*} \leq \frac{\delta}{4}$, for all $i \in [n]$, we have 
    \begin{equation}
        \left| - ( A^\top \bm{y} )_{i} + ( A^\top \bm{y}^* )_{i} \right| \leq \norm*{A^\top \bm{y}^* - A^\top \bm{y}} \\ 
        \leq \frac{\delta}{4} \norm{A}. 
        \label{eq:close-to-NE-in-payoff}
    \end{equation}
    As a result, for any $i, i' \in I^*$, 
    we have $\nu^* = ( A^\top \bm{y}^* )_i = ( A^\top \bm{y}^* )_{i'}$, therefore 
    \begin{equation}
        \left| - ( A^\top \bm{y} )_{i} + ( A^\top \bm{y} )_{i'} \right| \leq \left| - ( A^\top \bm{y} )_i + ( A^\top \bm{y}^* )_i \right| + \left| - ( A^\top \bm{y}^* )_{i'} + ( A^\top \bm{y} )_{i'} \right| \leq \frac{\delta}{2} \norm*{A}. 
    \end{equation}
    This further implies that 
    \begin{equation}
        v_{i'} \leq v_i + \frac{\delta}{2}\norm{A} \hspace{10pt} \forall\; i, i' \in I^*. 
        \label{eq:relation-v-insupport}
    \end{equation} 
    Moreover, 
    for all $i \in I^*$ and $i' \notin I^*$, 
    \begin{align*}
        (A^\top \bm{y})_i 
        \overset{(\ref{eq:close-to-NE-in-payoff})}{\leq} (A^\top \bm{y}^*)_i + \frac{\delta}{4} \norm*{A} 
        \overset{(\ref{eq:delta-interpretation})}{\leq} (A^\top \bm{y}^*)_{i'} - \delta \norm{A} + \frac{\delta}{4} \norm*{A} 
        &\overset{(\ref{eq:close-to-NE-in-payoff})}{\leq} (A^\top \bm{y})_{i'} - \delta \norm{A} + \frac{\delta}{2} \norm*{A} \nonumber \\ 
        &\;\,= (A^\top \bm{y} )_{i'} - \frac{\delta}{2} \norm{A}. 
    \end{align*}
    Equivalently, $- (A^\top \bm{y} )_{i'} \leq - (A^\top \bm{y})_i - \frac{\delta}{2} \norm{A}$ and therefore 
    \begin{equation}
        v_{i'} \leq v_i - \frac{\delta}{2} \norm{A} \leq v_i \hspace{30pt} \forall\, i \in I^*, i' \notin I^*. 
        \label{eq:relation-v-between-support-and-outside}
    \end{equation} 

    Next, we show that 
    $v_i - \gamma_i \geq - \frac{\delta}{2} \norm*{A}$ for all $i \in I^*$ by contradiction. 
    Suppose otherwise, we have 
    $v_i - \bar{\gamma} \leq v_i - \gamma_i < - \frac{\delta}{2} \norm*{A}$ for some $i \in I^*$. 
    Fix this $i \in I^*$. 
    Then, for all $\ell \in [n]$ such that $x^+_\ell > 0$, 
    we have 
    \begin{equation*}
        x^+_\ell = x_\ell + \eta v_\ell - \eta \bar{\gamma} 
        \leq x_\ell + \eta v_i + \frac{\delta}{2} \eta \norm{A} - \eta \bar{\gamma} < x_\ell,
    \end{equation*}
    where the first equality follows by~\cref{lem:property-nonsmooth-terms}, and the first inequality is implied by~\cref{eq:relation-v-insupport,eq:relation-v-between-support-and-outside}. 
    This leads to $\sum_{\ell \in [n]} x^+_\ell = \sum_{\ell: x^+_\ell > 0} x^+_\ell < \sum_{\ell: x^+_\ell > 0} x_\ell \leq \sum_{\ell \in [n]} x_{\ell} = 1$ and thus a contradiction. 
    
    Then, we can prove the first part of the lemma. 
    For each $i \in I^*$, since $v_i - \gamma_i \geq - \frac{\delta}{2} \norm*{A}$ and $0 < \eta \leq \frac{1}{2\norm*{A}}$, we have $\eta v_i - \eta \gamma_i \geq - \frac{\delta}{4}$. 
    On the other hand, since $\left| x_i - x^*_i \right| \leq \norm*{\bm{x} - \bm{x}^*} \leq \frac{\delta}{4}$, we have $x_i \geq x^*_i - \frac{\delta}{4} \geq \delta - \frac{\delta}{4} = \frac{3\delta}{4}$ where the second inequality follows by the definition of $\delta$. 
    Thus, $x^+_i = x_i + \eta v_i - \eta \gamma_i \geq \frac{\delta}{2} > 0$ for all $i \in I^*$. 
    By~\cref{lem:property-nonsmooth-terms}, $\gamma_i = \bar{\gamma}$ for all $i \in I^*$. 

    To show the second part, we first provide a lower bound for $\bar{\gamma}$. 
    Observe that 
    \begin{equation}
        0 = \sum_{i \in I^*} (x^+_i - x_i) + \sum_{i \notin I^*} (x^+_i - x_i) \geq \sum_{i \in I^*} \left( \eta v_i - \eta \bar{\gamma} \right) - (n - | I^* |) \frac{\eta \norm*{A}}{2} \frac{\lvert I^* \rvert}{n - \lvert I^* \rvert} \delta, 
        \label{eq:yield-lower-bound-gamma-bar}
    \end{equation}
    where the inequality follows by $\bm{x}^+ \geq 0$ and $\max_{i \notin I^*} x_i \leq \frac{\eta \norm*{A}}{2} \frac{\lvert I^* \rvert}{n - \lvert I^* \rvert} \delta$. 
    By rearranging terms, 
    \cref{eq:yield-lower-bound-gamma-bar} yields 
    \begin{equation*}
        \bar{\gamma} \geq \frac{1}{| I^* |} \sum_{i \in I^*} v_i - \frac{\norm*{A}}{2} \delta \geq \min_{i \in I^*} v_i - \frac{\norm*{A}}{2} \delta \nonumber 
        \overset{(\ref{eq:relation-v-between-support-and-outside})}{\geq} v_{i'} \hspace{20pt} \forall\, i' \notin I^*. 
    \end{equation*}
    Then, $x^+_i \leq x_i$ for each $i \notin I^*$ can be shown by contradiction: suppose that $x^+_i > x_i \geq 0$, then by~\cref{lem:property-nonsmooth-terms} we have $\gamma_i = \bar{\gamma}$ and therefore $x^+_i = x_i + \eta v_i - \eta \bar{\gamma} \leq x_i$, which is a contradiction. 
\end{proof}

Recall that, for ease of presentation, we define a variant of the energy function, ${\mathcal{V}}: \Delta_n \bigtimes \Delta_m \rightarrow \mathbb{R}$, as follows:  
\begin{equation*}
   {\mathcal{V}}(\bm{x}, \bm{y}) = \norm{\bm{x} - \bm{x}^*}^2 + \norm{\bm{y} - \bm{y}^*}^2 - \eta (\bm{y} - \bm{y}^*)^\top A (\bm{x} - \bm{x}^*). 
\end{equation*}
For simplicity, we also use a shorthand notation 
\begin{equation}
    \mathcal{V}_t := {\mathcal{V}}(\bm{x}^t, \bm{y}^t). 
    \label{eq:new-energy}
\end{equation} 

Then, we derive an upper bound for the difference of this variant of the energy function. 
\begin{lemma}
    Let $\mathcal{V}_t$ be defined as in~\cref{eq:new-energy}, where $\{ (\bm{x}^t, \bm{y}^t) \}_{t \geq 0}$ be a sequence of iterates generated by~\cref{alg:altgda} with stepsize $\eta > 0$, then the change of the energy function per iteration satisfies that 
    \begin{align}
        \Delta \mathcal{V}_t := \mathcal{V}_{t + 1} - \mathcal{V}_t 
        &\leq - \eta \inp{\bm{\gamma}^{t}}{\bm{x}^{t + 1} + \bm{x}^t - 2 \bm{x}^*} - \eta \inp{\bm{\lambda}^{t}}{\bm{y}^{t + 1} + \bm{y}^t - 2 \bm{y}^*} \nonumber \\ 
        &\leq - \eta \inp{\bm{\gamma}^{t}}{\bm{x}^t - \bm{x}^*} - \eta \inp{\bm{\lambda}^{t}}{\bm{y}^t - \bm{y}^*}. 
        \label{eq:refined-energy-change}
    \end{align}
    \label{lem:refined-energy-change}
\end{lemma}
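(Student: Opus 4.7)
The plan is to mirror the computation used in the proof of Lemma \ref{lem:energy-decay-interior}, but now working with the modified energy $\mathcal{V}_t = \norm{\bm{x}^t - \bm{x}^*}^2 + \norm{\bm{y}^t - \bm{y}^*}^2 - \eta(\bm{y}^t - \bm{y}^*)^\top A(\bm{x}^t - \bm{x}^*)$, which differs from $\mathcal{E}$ by centering the bilinear term around $(\bm{x}^*,\bm{y}^*)$. The key observation is that this centering is what forces the residual ``value-of-the-game'' terms to drop out by NE inequalities rather than by the interior-NE identity $A\bm{x}^* = \nu^*\bm{1}_m$, $A^\top\bm{y}^* = \nu^*\bm{1}_n$ that was available in Lemma \ref{lem:energy-decay-interior}.

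First, I would expand $\mathcal{V}_{t+1} - \mathcal{V}_t$ using the identity $\norm{a}^2 - \norm{b}^2 = \inp{a-b}{a+b}$ on the two squared-norm terms, and substitute the update rules from \eqref{eq:equivalent-updates}, namely $\bm{x}^{t+1} - \bm{x}^t = \eta \bm{v}^t - \eta \bm{\gamma}^t$ and $\bm{y}^{t+1} - \bm{y}^t = \eta \bm{u}^{t+1} - \eta \bm{\lambda}^t$. Because the $\mathbf{1}_n$ and $\mathbf{1}_m$ components inside $\bm{v}^t$ and $\bm{u}^{t+1}$ are orthogonal to $\bm{x}^{t+1}+\bm{x}^t - 2\bm{x}^*$ and $\bm{y}^{t+1}+\bm{y}^t - 2\bm{y}^*$ respectively (both sum to zero), those mean-correction terms disappear, leaving pure $\pm A^\top \bm{y}^t$ and $A\bm{x}^{t+1}$ contributions together with the $\bm{\gamma}^t, \bm{\lambda}^t$ terms that already appear on the right-hand side of the claimed inequality.

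Second, I would expand the bilinear difference $-\eta[(\bm{y}^{t+1}-\bm{y}^*)^\top A (\bm{x}^{t+1}-\bm{x}^*) - (\bm{y}^t-\bm{y}^*)^\top A (\bm{x}^t-\bm{x}^*)]$ into its four scalar pieces and collect every $A$-term across all three lines. A careful bookkeeping shows that the terms $(\bm{y}^{t+1})^\top A \bm{x}^{t+1}$, $(\bm{y}^t)^\top A \bm{x}^t$, and $(\bm{y}^t)^\top A \bm{x}^{t+1}$ cancel exactly, and the surviving ``boundary'' terms collapse to
\begin{equation*}
\eta\bigl[(\bm{y}^t + \bm{y}^{t+1})^\top A \bm{x}^* - (\bm{y}^*)^\top A(\bm{x}^t + \bm{x}^{t+1})\bigr].
\end{equation*}
This is where the main work lies: applying the Nash equilibrium inequalities $(\bm{y})^\top A \bm{x}^* \leq \nu^*$ for all $\bm{y}\in\Delta_m$ and $(\bm{y}^*)^\top A \bm{x} \geq \nu^*$ for all $\bm{x}\in\Delta_n$, both sides reduce to $2\nu^*$ so the bracketed quantity is $\leq 0$. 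Combining this with the $\bm{\gamma}^t, \bm{\lambda}^t$ terms yields the first inequality in \eqref{eq:refined-energy-change}.

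For the second inequality, I would split $\bm{x}^{t+1}+\bm{x}^t - 2\bm{x}^* = (\bm{x}^{t+1}-\bm{x}^*) + (\bm{x}^t - \bm{x}^*)$ and invoke the first item of Lemma \ref{lem:pos-qtt}, which gives $\inp{\bm{\gamma}^t}{\bm{x}^{t+1} - \bm{x}^*} \geq 0$ since $\bm{x}^* \in \Delta_n$; symmetrically $\inp{\bm{\lambda}^t}{\bm{y}^{t+1}-\bm{y}^*} \geq 0$. Dropping these nonnegative summands from the upper bound produces the desired second inequality. The main obstacle I anticipate is purely organizational: the bilinear difference generates eight scalar products that must be tracked sign-by-sign, and one needs to be careful that the Nash-equilibrium step uses only the defining inequalities \eqref{eq:def-NE-SP} rather than any interior-support assumption, so that the argument remains valid for arbitrary bilinear games as required by Section \ref{sec:local-1-T-convergence-rate}.
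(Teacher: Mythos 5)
Your proposal is correct and follows essentially the same route as the paper's proof: expand $\Delta\mathcal{V}_t$ via the update identities and $\inp{a-b}{a+b} = \norm{a}^2-\norm{b}^2$, cancel the cross-bilinear terms, show the surviving residual $\eta[(\bm{y}^t+\bm{y}^{t+1})^\top A\bm{x}^* - (\bm{y}^*)^\top A(\bm{x}^t+\bm{x}^{t+1})]$ is nonpositive by the NE inequalities (the paper phrases this equivalently via the componentwise bounds $A^\top\bm{y}^* \geq \nu^*\mathbf{1}_n$ and $A\bm{x}^* \leq \nu^*\mathbf{1}_m$), and then drop the nonnegative $\inp{\bm{\gamma}^t}{\bm{x}^{t+1}-\bm{x}^*}$ and $\inp{\bm{\lambda}^t}{\bm{y}^{t+1}-\bm{y}^*}$ terms using the first item of \cref{lem:pos-qtt}. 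Your bookkeeping of the cancellations and your choice of which item of \cref{lem:pos-qtt} applies (item 1, which requires no interior-NE assumption) are both correct.
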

\begin{proof}
    By~\cref{eq:equivalent-updates}, we have 
    \begin{equation}
        \begin{aligned}
            \inp*{\bm{x}^{t + 1} - \bm{x}^t - \eta \bm{v}^t + \eta \bm{\gamma}^{t}}{\bm{x}^{t + 1} + \bm{x}^t - 2 \bm{x}^*} &= 0 \\ 
            \inp*{\bm{y}^{t + 1} - \bm{y}^t - \eta \bm{u}^{t + 1} + \eta \bm{\lambda}^{t}}{\bm{y}^{t + 1} + \bm{y}^t - 2 \bm{y}^*} &= 0. 
        \end{aligned}
        \label{eq:update-in-energy}
    \end{equation} 
    By~\cref{eq:proj-grad} and the fact that $\bm{x}^{t+1}, \bm{x}^t, \bm{x}^* \in \Delta_n$ and $\bm{y}^{t+1}, \bm{y}^t, \bm{y}^* \in \Delta_m$, we have $\inp*{\mathbf{1}_{n}}{\bm{x}^{t+1}+\bm{x}^{t}-\bm{x}^*}=\inp*{\mathbf{1}_{m}}{\bm{y}^{t+1}+\bm{y}^{t}-2\bm{y}^*}=0$, which leads to 
    \begin{equation}
        \begin{aligned}
            \inp*{\bm{v}^t}{\bm{x}^{t + 1} + \bm{x}^t - 2 \bm{x}^*} &= -\inp*{A^\top \bm{y}^t}{\bm{x}^{t + 1} + \bm{x}^t - 2 \bm{x}^*} \\ 
            \inp*{\bm{u}^{t + 1}}{\bm{y}^{t + 1} + \bm{y}^t - 2 \bm{y}^*} &= \inp*{A \bm{x}^{t + 1}}{\bm{y}^{t + 1} + \bm{y}^t - 2 \bm{y}^*}. 
        \end{aligned}
        \label{eq:vu-in-energy}
    \end{equation}
    By using~\cref{eq:vu-in-energy} and $\inp*{\bm{a} - \bm{b}}{\bm{a} + \bm{b}} = \norm{\bm{a}}^2 - \norm{\bm{b}}^2$ for any vectors $\bm{a}, \bm{b}$, one can see that \cref{eq:update-in-energy} is equivalent to 
    \begin{equation}
        \norm{\bm{x}^{t + 1} - \bm{x}^*}^2 - \norm{\bm{x}^t - \bm{x}^*}^2 + \eta \inp*{A^\top \bm{y}^t}{\bm{x}^{t + 1} + \bm{x}^t - 2 \bm{x}^*} + \eta \inp*{\bm{\gamma}^t}{\bm{x}^{t + 1} + \bm{x}^t - 2 \bm{x}^*} = 0
        \label{eq:energy-identity-1}
    \end{equation}
    \begin{equation}
        \norm{\bm{y}^{t + 1} - \bm{y}^*}^2 - \norm{\bm{y}^t - \bm{y}^*}^2 - \eta \inp*{A \bm{x}^{t + 1}}{\bm{y}^{t + 1} + \bm{y}^t - 2 \bm{y}^*} + \eta \inp*{\bm{\lambda}^t}{\bm{y}^{t + 1} + \bm{y}^t - 2 \bm{y}^*} = 0. 
        \label{eq:energy-identity-2}
    \end{equation}
    To derive the energy change between two consecutive iterates, we notice that 
    \begin{align}
        &\eta \inp*{A^\top \bm{y}^t}{\bm{x}^{t + 1} + \bm{x}^t - 2 \bm{x}^*} - \eta \inp*{A \bm{x}^{t + 1}}{\bm{y}^{t + 1} + \bm{y}^t - 2 \bm{y}^*} \nonumber \\ 
        =& \eta \inp{\bm{y}^t}{A \bm{x}^t} - 2\eta \inp{\bm{y}^t}{A \bm{x}^*} - \eta \inp{\bm{y}^{t+1}}{A \bm{x}^{t+1}} + 2\eta \inp{\bm{y}^*}{A \bm{x}^{t + 1}} \nonumber \\
        =& \eta (\bm{y}^t - \bm{y}^*)^\top A (\bm{x}^t - \bm{x}^*) + \eta \inp{\bm{y}^*}{A \bm{x}^t} - \eta \inp{\bm{y}^*}{A \bm{x}^*} - \eta \inp{\bm{y}^t}{A \bm{x}^*} \nonumber \\ 
        & \hspace{30pt} - \eta (\bm{y}^{t + 1} - \bm{y}^*)^\top A (\bm{x}^{t + 1} - \bm{x}^*) - \eta \inp{\bm{y}^{t + 1}}{A \bm{x}^*} + \eta \inp{\bm{y}^*}{A \bm{x}^*} + \eta \inp{\bm{y}^*}{A \bm{x}^{t + 1}} \nonumber \\ 
        =& \eta (\bm{y}^t - \bm{y}^*)^\top A (\bm{x}^t - \bm{x}^*) - \eta (\bm{y}^{t + 1} - \bm{y}^*)^\top A (\bm{x}^{t + 1} - \bm{x}^*) \nonumber \\ 
        & \hspace{100pt} + \eta \inp{A^\top \bm{y}^*}{\bm{x}^t + \bm{x}^{t + 1}} - \eta \inp{A \bm{x}^*}{\bm{y}^t + \bm{y}^{t + 1}}
        \label{eq:inner-prod-to-energy-diff}
    \end{align}
    and 
    \begin{align}
        \eta \inp{A^\top \bm{y}^*}{\bm{x}^t + \bm{x}^{t + 1}} &- \eta \inp{A \bm{x}^*}{\bm{y}^t + \bm{y}^{t + 1}} \nonumber \\ 
        &= \eta \inp{A^\top \bm{y}^* - \nu^* \mathbf{1}_n}{\bm{x}^t + \bm{x}^{t + 1}} + \eta \inp{\nu^* \mathbf{1}_m - A \bm{x}^*}{\bm{y}^t + \bm{y}^{t + 1}} \geq 0. 
        \label{eq:general-inner-prod-nonneg}
    \end{align}
    Summing up~\cref{eq:inner-prod-to-energy-diff,eq:general-inner-prod-nonneg}, we have 
    \begin{align}
        \eta \inp*{A^\top \bm{y}^t}{\bm{x}^{t + 1} + \bm{x}^t - 2 \bm{x}^*} &- \eta \inp*{A \bm{x}^{t + 1}}{\bm{y}^{t + 1} + \bm{y}^t - 2 \bm{y}^*} \nonumber \\ 
        &\geq \eta (\bm{y}^t - \bm{y}^*)^\top A (\bm{x}^t - \bm{x}^*) - \eta (\bm{y}^{t + 1} - \bm{y}^*)^\top A (\bm{x}^{t + 1} - \bm{x}^*). 
        \label{eq:inner-prod-to-payoff}
    \end{align}

    Combining~\cref{eq:energy-identity-1,eq:energy-identity-2,eq:inner-prod-to-payoff}, and the definition of energy function $\mathcal{V}_t, \mathcal{V}_{t + 1}$, we have 
    \begin{align*}
        \mathcal{V}_{t + 1} &\leq \mathcal{V}_t - \eta \inp{\bm{\gamma}^{t}}{\bm{x}^{t + 1} + \bm{x}^t - 2 \bm{x}^*} - \eta \inp{\bm{\lambda}^{t}}{\bm{y}^{t + 1} + \bm{y}^t - 2 \bm{y}^*}. 
        \label{eq:energy-sumup-interior}
    \end{align*}
    Additionally, by~\cref{lem:pos-qtt}, we further have~\cref{eq:refined-energy-change}. 
\end{proof}

By leveraging~\cref{lem:property-nonsmooth-terms}, 
we can derive the following identities regarding the right-hand side of~\cref{eq:refined-energy-change}. 
\begin{lemma}
    Let $\bm{\gamma}, \bm{\lambda}, \bar{\gamma}, \bar{\lambda}$ defined as in~\cref{eq:def-nonsmooth-term,eq:def-nonsmooth-bar}. Then, we have 
    \begin{equation}
        \begin{aligned}
            \inp{\bm{\gamma}^{t}}{\bm{x}^t - \bm{x}^*} = \sum\nolimits_{i \notin I^{t + 1}} ( \gamma^{t}_{i} - \bar{\gamma}^{t} ) (x^{t}_{i} - x^*_i) \\ 
            \inp{\bm{\lambda}^{t}}{\bm{y}^t - \bm{y}^*} = \sum\nolimits_{j \notin J^{t + 1}} ( \lambda^{t}_{j} - \bar{\lambda}^{t} ) (y^{t}_{j} - y^*_j). 
        \end{aligned}
    \end{equation}
    \label{lem:inp-identity}
\end{lemma}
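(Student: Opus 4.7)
The plan is to prove both identities by splitting the inner product according to the partition $[n] = I^{t+1} \cup ([n] \setminus I^{t+1})$ (and analogously for $[m]$) and then exploiting the structural fact from Lemma~\ref{lem:property-nonsmooth-terms} that $\gamma^t_i$ is constant (equal to $\bar\gamma^t$) on the support set $I^{t+1}$. Since the two identities are completely symmetric, I would prove the $\bm{x}$-identity in detail and just note that the $\bm{y}$-identity follows by the same argument with $\bm{\lambda}^t, \bar\lambda^t, J^{t+1}, \bm{y}^t, \bm{y}^*$ in place of their $\bm{x}$-analogues.

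First, I would write
\[
\inp{\bm{\gamma}^{t}}{\bm{x}^t - \bm{x}^*} = \sum_{i \in I^{t+1}} \gamma^{t}_{i}(x^t_i - x^*_i) + \sum_{i \notin I^{t+1}} \gamma^{t}_{i}(x^t_i - x^*_i).
\]
By Lemma~\ref{lem:property-nonsmooth-terms}, $\gamma^t_i = \bar\gamma^t$ for every $i \in I^{t+1}$, so the first sum equals $\bar\gamma^t \sum_{i \in I^{t+1}}(x^t_i - x^*_i)$.

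Next, I would invoke the simplex constraint: since $\bm{x}^t, \bm{x}^* \in \Delta_n$, we have $\sum_{i \in [n]}(x^t_i - x^*_i) = 0$, and therefore
\[
\sum_{i \in I^{t+1}}(x^t_i - x^*_i) = -\sum_{i \notin I^{t+1}}(x^t_i - x^*_i).
\]
Substituting this back gives
\[
\inp{\bm{\gamma}^{t}}{\bm{x}^t - \bm{x}^*} = -\bar\gamma^t \sum_{i \notin I^{t+1}}(x^t_i - x^*_i) + \sum_{i \notin I^{t+1}} \gamma^{t}_{i}(x^t_i - x^*_i) = \sum_{i \notin I^{t+1}}(\gamma^t_i - \bar\gamma^t)(x^t_i - x^*_i),
\]
which is the claimed identity. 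There is no obstacle here: the proof is a two-line bookkeeping argument combining Lemma~\ref{lem:property-nonsmooth-terms} with the affine-hull property $\mathbf{1}_n^\top(\bm{x}^t - \bm{x}^*) = 0$. The analogous identity for $\bm{\lambda}^t$ follows by the same partition and the fact that $\lambda^t_j = \bar\lambda^t$ for every $j \in J^{t+1}$ together with $\mathbf{1}_m^\top(\bm{y}^t - \bm{y}^*) = 0$.
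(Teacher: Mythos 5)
Your proof is correct, and it is slightly more streamlined than the paper's. The paper proceeds by first inserting $\bm{x}^{t+1}$ as an intermediate point, writing $\inp{\bm{\gamma}^{t}}{\bm{x}^t - \bm{x}^*} = \inp{\bm{\gamma}^{t}}{\bm{x}^t - \bm{x}^{t+1}} + \inp{\bm{\gamma}^{t}}{\bm{x}^{t+1} - \bm{x}^*}$, evaluating each piece using the defining property $x^{t+1}_i = 0$ for $i \notin I^{t+1}$ together with the constancy of $\gamma^t$ on $I^{t+1}$, then reassembling via $\inp{\mathbf{1}_n}{\bm{x}^{t+1} - \bm{x}^*} = 0$. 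You instead partition the index set directly, apply the constancy of $\gamma^t$ on $I^{t+1}$ once, and close the argument with the single affine-hull identity $\inp{\mathbf{1}_n}{\bm{x}^{t} - \bm{x}^*} = 0$. This avoids the detour through $\bm{x}^{t+1}$ entirely and does not need the fact that the excluded coordinates of $\bm{x}^{t+1}$ vanish; the only structural input is the portion of Lemma~\ref{lem:property-nonsmooth-terms} stating that $\gamma^t_i = \bar\gamma^t$ on $I^{t+1}$. The two routes are equivalent in strength, but yours is cleaner bookkeeping. Your treatment of the $\bm{y}$-identity by symmetry is fine.
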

\begin{proof}
    \begin{align}
        \inp{\bm{\gamma}^{t}}{\bm{x}^t - \bm{x}^*} =& \inp{\bm{\gamma}^{t}}{\bm{x}^t - \bm{x}^{t + 1}} + \inp{\bm{\gamma}^{t}}{\bm{x}^{t + 1} - \bm{x}^*} \nonumber \\ 
        =& \sum\nolimits_{i \notin I^{t + 1}} \gamma^{t}_{i} x^t_{i} + \sum\nolimits_{i \in I^{t + 1}} \gamma^{t}_{i} (x^t_{i} - x^{t + 1}_{i}) + \inp{\bm{\gamma}^{t}}{\bm{x}^{t + 1} - \bm{x}^*} \nonumber \\ 
        =& \sum\nolimits_{i \notin I^{t + 1}} \gamma^{t}_{i} x^t_{i} + \bar{\gamma}^{t} \sum\nolimits_{i \in I^{t + 1}} (x^t_{i} - x^{t + 1}_{i}) + \inp{\bm{\gamma}^{t}}{\bm{x}^{t + 1} - \bm{x}^*} \tag{by~\cref{lem:property-nonsmooth-terms}} \\ 
        =& \sum\nolimits_{i \notin I^{t + 1}} \gamma^{t}_{i} x^t_{i} - \bar{\gamma}^{t} (1 - \sum\nolimits_{i \in I^{t + 1}} x^t_{i}) + \inp{\bm{\gamma}^{t}}{\bm{x}^{t + 1} - \bm{x}^*} \tag{by the definition of $I^{t + 1}$} \\ 
        =& \sum\nolimits_{i \notin I^{t + 1}} \gamma^{t}_{i} x^t_{i} - \bar{\gamma}^{t} \sum\nolimits_{i \notin I^{t + 1}} x^t_{i} + \inp{\bm{\gamma}^{t}}{\bm{x}^{t + 1} - \bm{x}^*} \nonumber \\ 
        =& \sum\nolimits_{i \notin I^{t + 1}} ( \gamma^{t}_{i} - \bar{\gamma}^{t} ) x^{t}_{i} + \inp{\bm{\gamma}^{t} - \bar{\gamma}^{t}\mathbf{1}_n}{\bm{x}^{t + 1} - \bm{x}^*} \tag{by $\inp{\mathbf{1}_n}{\bm{x}^{t + 1} - \bm{x}^*} = 0$} \\ 
        =& \sum\nolimits_{i \notin I^{t + 1}} ( \gamma^{t}_{i} - \bar{\gamma}^{t} ) x^{t}_{i} - \sum\nolimits_{i \notin I^{t + 1}} \left( \gamma^{t}_{i} - \bar{\gamma}^{t} \right){x^*_i} \tag{by~\cref{lem:property-nonsmooth-terms} and the definition of $I^{t + 1}$} \\ 
        =& \sum\nolimits_{i \notin I^{t + 1}} ( \gamma^{t}_{i} - \bar{\gamma}^{t} ) (x^{t}_{i} - x^*_i). 
    \end{align}
    Symmetrically, we have $\inp{\bm{\lambda}^{t}}{\bm{y}^t - \bm{y}^*} = \sum\nolimits_{j \notin J^{t + 1}} ( \lambda^{t}_{j} - \bar{\lambda}^{t} ) (y^{t}_{j} - y^*_j)$. 
\end{proof}

If the game does not have an interior NE, then the right-hand side of~\cref{eq:refined-energy-change} can be positive for some iterations. That said, the energy function is not monotonically decreasing. 
Even though, as shown below, by exploiting the local property we can derive that the sum of the energy increase has an upper bound, and hence we still obtains an $O(1/T)$ convergence rate. 

In the rest of the proof, we provides the proof of~\cref{lem:energy-conservativeness} to formalize this idea, and then conclude the $O(1/T)$ convergence rate by an analogous argument as in the proof of~\cref{thm:interior-NE}. 

Recall that 
\begin{equation*}
    S_0 = \Big\{ (\bm{x}, \bm{y}) \Big\vert \norm*{\bm{x} - \bm{x}^*} \leq \frac{\delta}{8},\, \norm*{\bm{y} - \bm{y}^*} \leq \frac{\delta}{8},\, \max_{i \notin I^*} x_i \leq \frac{c}{2} r_x \delta,\, \max_{j \notin J^*} y_j \leq \frac{c}{2} r_y \delta \Big\} \subset S 
\end{equation*}
where $c = \min\{ \eta \norm{A}, \frac{\delta}{192 \lvert I^* \rvert}, \frac{\delta}{192 \lvert J^* \rvert} \}$ always stay in $S$. 
\begin{customlem}{4}
    Let $\{ (\bm{x}^t, \bm{y}^t) \}_{t \geq 0}$ be a sequence of iterates generated by~\cref{alg:altgda} with stepsize $\eta \leq \frac{1}{2\norm*{A}}$ and an initial point $(\bm{x}^0, \bm{y}^0) \in S_0$. 
    Then, the iterates $\{ (\bm{x}^t, \bm{y}^t) \}_{t \geq 0}$ stay within the local region $S$. 
    Furthermore, for any $T > 0$, we have 
    \begin{equation*}
        - \eta \sum\nolimits_{t=0}^{T} \Big( \inp*{\bm{\gamma}^{t}}{\bm{x}^t - \bm{x}^*} + \inp*{\bm{\lambda}^{t}}{\bm{y}^t - \bm{y}^*} \Big) \leq \frac{1}{128} \delta^2. 
    \end{equation*} 
\end{customlem}
\begin{proof}[Proof of~\cref{lem:energy-conservativeness}]
    We prove the first part of this lemma by contradiction. 
    Since $(\bm{x}^{0},\bm{y}^{0})\in S_{0}\subset S$,
    by \cref{lem:local-region-separation}, as long as $(\bm{x}^{t'},\bm{y}^{t'})\in S$ for all $t' < t$,
    we have that 
    \begin{align*}
        & x_{i}^{t} \leq 
        x_{i}^{t - 1} \leq x_i^0 \leq \frac{\eta\|A\|}{2}\,r_{x}\delta,\quad \forall\, i\notin I^{*},\\
        & y_{j}^{t} \leq y_{j}^{t - 1} \leq y_i^0 \leq \frac{\eta\|A\|}{2}\,r_{y}\delta,\quad \forall\, j\notin J^{*}.
    \end{align*}
    Suppose, to the contrary, that there exists a time point $t \geq 0$ such that $(\bm{x}^t, \bm{y}^t)$ leaves the region $S$ for the first time. Then, the above observation implies that at least one of $\norm*{\bm{x}^t - \bm{x}^*} > \frac{\delta}{4}$ and $\norm*{\bm{y}^t - \bm{y}^*} > \frac{\delta}{4}$ happens. 
    Therefore, the energy at the $t$-th iteration has the following lower bound:  
    \begin{align}
        \mathcal{V}_t &= \norm*{\bm{x}^t - \bm{x}^*}^2 + \norm*{\bm{y}^t - \bm{y}^*}^2 - \eta (\bm{y}^t - \bm{y}^*)^\top A (\bm{x}^t - \bm{x}^*) \nonumber \\ 
        &\geq \norm*{\bm{x}^t - \bm{x}^*}^2 + \norm*{\bm{y}^t - \bm{y}^*}^2 - \eta \norm*{A} \norm*{\bm{x}^t - \bm{x}^*} \norm*{\bm{y}^t - \bm{y}^*} \nonumber \\ 
        &\geq \norm*{\bm{x}^t - \bm{x}^*}^2 + \norm*{\bm{y}^t - \bm{y}^*}^2 - \frac{\eta \norm*{A}}{2} \left( \norm*{\bm{x}^t - \bm{x}^*}^2 + \norm*{\bm{y}^t - \bm{y}^*}^2 \right) \nonumber \\ 
        &\geq \frac{3}{4} \norm*{\bm{x}^t - \bm{x}^*}^2 + \frac{3}{4} \norm*{\bm{y}^t - \bm{y}^*}^2 \nonumber \\ 
        &> \frac{3}{4} \left( \frac{\delta}{4} \right)^2 = \frac{3}{64} \delta^2. 
        \label{eq:energy-t-lb}
    \end{align}
    On the other hand, the initial energy is guaranteed to be sufficiently small. Let $\mathcal{V}_0$ be the initial energy corresponding to $(\bm{x}^0, \bm{y}^0)$. 
    By definition, we have 
    \begin{align}
        \mathcal{V}_0 &= \norm*{\bm{x}^0 - \bm{x}^*}^2 + \norm*{\bm{y}^0 - \bm{y}^*}^2 - \eta (\bm{y}^0 - \bm{y}^*)^\top A (\bm{x}^0 - \bm{x}^*) \nonumber \\ 
        &\leq \norm*{\bm{x}^0 - \bm{x}^*}^2 + \norm*{\bm{y}^0 - \bm{y}^*}^2 + \eta \norm{A} \norm{\bm{y}^0 - \bm{y}^*} \norm{\bm{x}^0 - \bm{x}^*} \nonumber \\ 
        &\leq \left( \frac{\delta}{8} \right)^2 + \left( \frac{\delta}{8} \right)^2 + \eta \norm*{A} \left( \frac{\delta}{8} \right)^2 = \frac{2 + \eta \norm*{A}}{64} \delta^2 \leq \frac{5}{128} \delta^2. 
        \label{eq:energy-0-ub}
    \end{align}
    
    By~\cref{lem:refined-energy-change}, we know the change of the energy function $\Delta \mathcal{V}_{k}$ is upper bounded by $- \eta \inp*{\bm{\gamma}^{k}}{\bm{x}^k - \bm{x}^*} - \eta \inp*{\bm{\lambda}^{k}}{\bm{y}^k - \bm{y}^*}$ for all $k \geq 0$. 
    As $t$ denotes the first time at which the iterate leaves the local region $S$, for each $k = 0, \ldots, t - 1$, we can further bound $\Delta \mathcal{V}_{k}$ as  
    \begin{align}
        \Delta \mathcal{V}_{k} \leq& - \eta \inp*{\bm{\gamma}^{k}}{\bm{x}^k - \bm{x}^*} - \eta \inp*{\bm{\lambda}^{k}}{\bm{y}^k - \bm{y}^*} \tag{by~\cref{lem:refined-energy-change}} \\ 
        =& - \eta \sum\nolimits_{i \notin I^{k + 1}} ( \gamma^{k}_{i} - \bar{\gamma}^{k} ) (x^{k}_{i} - x^*_i) - \eta \sum\nolimits_{j \notin J^{k + 1}} ( \lambda^{k}_{j} - \bar{\lambda}^{k} ) (y^{k}_{j} - y^*_j) \tag{by~\cref{lem:inp-identity}} \\ 
        =& - \eta \sum\nolimits_{i: i \notin I^{k + 1}, i \notin I^*} \big( \gamma^{k}_i - \bar{\gamma}^k \big) x^k_i - \eta \sum\nolimits_{j: j \notin J^{k + 1}, j \notin J^*} \big( \lambda^{k}_j - \bar{\lambda}^k \big) y^k_j \tag{by~\cref{lem:local-region-separation}} \\ 
        =& \eta \sum\nolimits_{i: i \notin I^{k + 1}, i \notin I^*} \big( \bar{\gamma}^k - \gamma^{k}_i \big) x^k_i + \eta \sum\nolimits_{j: j \notin J^{k + 1}, j \notin J^*} \big( \bar{\lambda}^k - \lambda^{k}_j \big) y^k_j 
        \label{eq:upper-bound-change-of-energy}
    \end{align}
    The first term in the right-hand side of~\cref{eq:upper-bound-change-of-energy} can be bounded as follows: 
    \begin{align}
        & \eta \sum\nolimits_{i: i \notin I^{k + 1}, i \notin I^*} \big( \bar{\gamma}^k - \gamma^{k}_i \big) x^k_i \nonumber \\ 
        =& \eta \sum\nolimits_{i: i \notin I^{k + 1}, i \notin I^*, \gamma^k_i > 0} \big( \bar{\gamma}^k - \gamma^{k}_i \big) x^k_i + \eta \sum\nolimits_{i: i \notin I^{k + 1}, i \notin I^*, \gamma^k_i \leq 0} \big( \bar{\gamma}^k - \gamma^{k}_i \big) x^k_i \nonumber \\ 
        \leq& \eta \sum\nolimits_{i: i \notin I^{k + 1}, i \notin I^*, \gamma^k_i > 0} \bar{\gamma}^k x^k_i + \eta \sum\nolimits_{i: i \notin I^{k + 1}, i \notin I^*, \gamma^k_i \leq 0} \big( \bar{\gamma}^k + \lvert \gamma^{k}_i \rvert \big) x^k_i. 
    \end{align}

    To derive an upper bound for $\bar{\gamma}^k$, we observe that $(\bm{x}^k, \bm{y}^k) \in S$ for all $k \in [0, t - 1]$. Thereby,~\cref{lem:local-region-separation} implies that $x^k_i > 0$ for all $i \in I^*$. 
    Then, we have $x^{k + 1}_i = x^k_i + \eta v^k_i - \eta \bar{\gamma}^k, \, \forall\, i \in I^*$. 
    Summing up this equation over $i \in I^*$, we have 
    \begin{align*}
        \lvert I^* \rvert \eta \bar{\gamma}^k &= \sum\nolimits_{i \in I^*} (x^k_i - x^{k + 1}_i) + \eta \sum\nolimits_{i \in I^*} v^k_i \\ 
        &\leq \sum\nolimits_{i \in I^*} \vert x^k_i - x^{k + 1}_i \rvert + \eta \sum\nolimits_{i \in I^*} \lvert v^k_i \rvert \\ 
        &\leq \lvert I^* \rvert \norm{\bm{x}^{k + 1} - \bm{x}^k} + \eta \lvert I^* \rvert \norm{\bm{v}^k} \\ 
        &\leq 2 \lvert I^* \rvert \eta\norm{A}, 
    \end{align*}
    where the last inequality holds because 
    \begin{align*}
         & \norm*{\bm{x}^{k+1} - \bm{x}^{k}} \overset{(a)}{\leq} \norm*{\bm{x}^{k}-\eta A^{\top}\bm{y}^{k}-\bm{x}^{k}}\leq\eta\norm{A}\\
         & \norm{\bm{v}^{k}}=\norm*{-(A^{\top}\bm{y}^{k})+\frac{1}{n}\sum\nolimits_{\ell=1}^{n}(A^{\top}\bm{y}^{k})_{\ell}}\leq\norm{A^{\top}\bm{y}^{k}}\leq\norm{A},
    \end{align*}
    where $(a)$ follows from nonexpansiveness of projection onto a closed convex set.                 
    Therefore, we obtain $\bar{\gamma}^k \leq 2 \norm{A}$. 
    On the other hand, by~\cref{lem:property-nonsmooth-terms}, $\lvert \gamma^{k}_i \rvert \leq \lvert v^k_i \rvert \leq \norm{\bm{v}^k} \leq \norm{A}$ for each $i$ such that $\gamma^{t}_{i} \leq 0$. 
    Combining the above results, we have 
    \begin{equation*}
        \sum_{i: i \notin I^{k + 1}, i \notin I^*} \big( \bar{\gamma}^k - \gamma^{k}_i \big) x^k_i \leq 3\norm{A} \sum_{i: i \notin I^{k + 1}, i \notin I^*} x^k_i = 3\norm{A} \sum_{i: i \in I^k, i \notin I^{k + 1}, i \notin I^*} x^k_i. 
    \end{equation*} 
    Notice that, by~\cref{lem:local-region-separation}, 
    $x^{k + 1}_i \leq x^{k}_i$ for all $i \notin I^*$ and $k \in [0, t - 1]$. 
    Hence, there is at most one $k \in [0, t - 1]$ satisfying $i \in I^k, i \notin I^{k + 1}$ for each $i \notin I^*$. 
    Also, $x^k_i \leq \frac{c}{2} r_x \delta \leq \frac{r_x}{384 \lvert I^* \rvert} \delta^2$ for all $i \notin I^*$ and $k \in [0, t - 1]$. 
    This translate to 
    \begin{align*}
        \eta \sum_{k=0}^{t - 1} \sum_{i: i \notin I^{k + 1}, i \notin I^*} \big( \bar{\gamma}^k - \gamma^{k}_i \big) x^k_i 
        &\leq \eta \sum_{k=0}^{t - 1} \sum_{i: i \in I^k, i \notin I^{k + 1}, i \notin I^*} 3 \norm{A} x^k_i \\ 
        &\leq \frac{1}{2\norm{A}}(n - \lvert I^* \rvert) 3 \norm{A} \frac{r_x}{\lvert I^* \rvert} \frac{1}{384} \delta^2 = \frac{1}{256} \delta^2. 
    \end{align*}
    A symmetrical analysis gives us that 
    \begin{equation*}
        \eta \sum_{k=0}^{t - 1} \sum_{j: j \notin J^{k + 1}, j \notin J^*} \big( \bar{\lambda}^k - \lambda^{k}_j \big) y^k_j \leq \frac{1}{2\norm{A}} (m - \lvert J^* \rvert) 3 \norm{A} \frac{r_y}{\lvert J^* \rvert} \frac{1}{384} \delta^2 = \frac{1}{256} \delta^2. 
    \end{equation*}
    

    Therefore, the change of energy up to $t$ is at most 
    \begin{equation}
        \mathcal{V}_t - \mathcal{V}_0 
        = \sum_{k=0}^{t - 1} \Delta \mathcal{V}_k 
        \leq \eta \sum_{k=0}^{t-1} \Big( \sum_{i: i \notin I^{k + 1}, i \notin I^*} \big( \bar{\gamma}^k - \gamma^{k}_i \big) x^k_i + \sum_{j: j \notin J^{k + 1}, j \notin J^*} \big( \bar{\lambda}^k - \lambda^{k}_j \big) y^k_j \Big)
        \leq \frac{\delta^2}{128}. 
        \label{eq:upper-bound-change-of-energy-2}
    \end{equation}
    This contradicts~\cref{eq:energy-0-ub,eq:energy-t-lb}. 

    Because $(\bm{x}^t, \bm{y}^t)$ for all $t \geq 0$, i.e., the condition in~\cref{lem:local-region-separation} is satisfied by all iterates generated by~\cref{alg:altgda} with stepsize $\eta \leq \frac{1}{2\norm*{A}}$ and an initial point $(\bm{x}^0, \bm{y}^0) \in S_0$, 
    one can then verify that the upper bound in~\cref{eq:upper-bound-change-of-energy-2} still holds for an arbitrary $t \geq 0$ by the same derivation as above. In this way, the second part of this lemma follows. 
\end{proof}

\begin{customthm}{2}
    Let $\{ (\bm{x}^t, \bm{y}^t) \}_{t \geq 0}$ be a sequence of iterates generated by~\cref{alg:altgda} with stepsize $\eta \leq \frac{1}{2\norm*{A}}$ and an initial point $(\bm{x}^0, \bm{y}^0) \in S_0$, where $S_0$ is defined in~\cref{eq:def-init-region}. 
    Then, we have that 
    \begin{equation}
        \mathtt{DualityGap}\left( \frac{1}{T}\sum_{t=1}^{T} \bm{x}^t, \frac{1}{T}\sum_{t=1}^{T} \bm{y}^t \right) \leq \frac{9 + 7\eta \norm*{A} + (\delta^2 / 128)}{\eta T}, 
        \label{eq:local-conv-rate}
    \end{equation} 
    where $\delta$ is defined in~\cref{eq:def-delta}. 
\end{customthm}

\begin{proof}[Proof of~\cref{thm:local-conv-rate}]
    By~\cref{eq:equivalent-updates}, we have 
    \begin{align}
        \eta \inp*{-  A^\top \bm{y}^t}{\bm{x}^{t + 1} - \bm{x}^t} - \norm*{\bm{x}^{t + 1} - \bm{x}^{t}}^2 &= \inp*{-\eta A^{\top}\bm{y}^{t} - \bm{x}^{t+1} + \bm{x}^{t}}{\bm{x}^{t+1}-\bm{x}^{t}} \nonumber \\ 
        &= \eta \inp{\bm{\gamma}^t}{\bm{x}^{t + 1} - \bm{x}^t} \nonumber \\ 
        &= \eta \inp{\bm{\gamma}^{t}}{\bm{x}^{t + 1} + \bm{x}^t - 2 \bm{x}^*} - 2\eta\inp{\bm{\gamma}^{t}}{\bm{x}^t - \bm{x}^*}
        \label{eq:pos-res-terms-upper-bounds-general-1}
    \end{align}
    \begin{align}
        \eta \inp*{A \bm{x}^{t + 1}}{\bm{y}^{t + 1} - \bm{y}^{t}} - \norm*{\bm{y}^{t + 1} - \bm{y}^t}^2 &= \inp*{\eta A\bm{x}^{t+1} - \bm{y}^{t+1} + \bm{y}^{t}}{\bm{y}^{t+1}-\bm{y}^{t}} \nonumber \\ 
        &= \eta \inp{\bm{\lambda}^t}{\bm{y}^{t + 1} - \bm{y}^t} \nonumber \\ 
        &= \eta \inp{\bm{\lambda}^{t}}{\bm{y}^{t + 1} + \bm{y}^t - 2 \bm{y}^*} - 2\eta\inp{\bm{\lambda}^{t}}{\bm{y}^t - \bm{y}^*}. 
        \label{eq:pos-res-terms-upper-bounds-general-2}
    \end{align}

    By~\cref{lem:two-bounds,eq:pos-res-terms-upper-bounds-general-2,eq:pos-res-terms-upper-bounds-general-1}, we have 
    \begin{align*}
        & \;\; \eta \left(\bm{y}^{\top}A\bm{x}^{t+1}-(\bm{y}^{t+1})^{\top}A\bm{x}\right) + \eta \left(\bm{y}^{\top}A\bm{x}^{t}-(\bm{y}^{t})^{\top}A\bm{x}\right) \\ 
        &\leq - \phi_{t + 1}(\bm{x}, \bm{y}) + \phi_{t}(\bm{x}, \bm{y}) - \psi_{t + 1}(\bm{x}, \bm{y}) + \psi_{t}(\bm{x}, \bm{y}) \\ 
        & \; + 
        \eta \inp{\bm{\gamma}^{t}}{\bm{x}^{t + 1} + \bm{x}^t - 2 \bm{x}^*} - 2\eta\inp{\bm{\gamma}^{t}}{\bm{x}^t - \bm{x}^*} 
        + 
        \eta \inp{\bm{\lambda}^{t}}{\bm{y}^{t + 1} + \bm{y}^t - 2 \bm{y}^*} - 2\eta\inp{\bm{\lambda}^{t}}{\bm{y}^t - \bm{y}^*}.
    \end{align*}
    
    By~\cref{lem:refined-energy-change,}, for any $\bm{x}, \bm{y} \in \Delta_n \times \Delta_m$ and $t \geq 0$, we have: 
    \begin{align}
        & \eta \left(\bm{y}^{\top}A\bm{x}^{t+1}-(\bm{y}^{t+1})^{\top}A\bm{x}\right) + \eta \left(\bm{y}^{\top}A\bm{x}^{t}-(\bm{y}^{t})^{\top}A\bm{x}\right) \nonumber \\ 
        \leq& - \phi_{t + 1}(\bm{x}, \bm{y}) + \phi_{t}(\bm{x}, \bm{y}) - \psi_{t + 1}(\bm{x}, \bm{y}) + \psi_{t}(\bm{x}, \bm{y}) + 
        \mathcal{V}_t - \mathcal{V}_{t + 1} \nonumber \\ 
        & \hspace{120pt} - 2\eta\inp{\bm{\gamma}^{t}}{\bm{x}^t - \bm{x}^*} - 2\eta\inp{\bm{\lambda}^{t}}{\bm{y}^t - \bm{y}^*}. 
        \label{eq:local-convergence-to-sum-up}
    \end{align}
    Recall that 
    $\phi_{t}(\bm{x}, \bm{y}) := \frac{1}{2}\norm*{\bm{x}^{t}-\bm{x}}^{2} + \frac{1}{2}\norm*{\bm{y}^{t}-\bm{y}}^{2} + \eta (\bm{y}^{t})^\top A \bm{x}$ and 
    $\psi_{t}(\bm{x},\bm{y}) := \frac{1}{2}\norm*{\bm{x}^{t}-\bm{x}}^{2}+\frac{1}{2}\norm*{\bm{y}^{t-1}-\bm{y}}^{2}-\frac{1}{2}\norm*{\bm{y}^{t}-\bm{y}^{t-1}}^{2}$. 

    Summing up~\cref{eq:local-convergence-to-sum-up} over $t = 1, \ldots, T$ plus~\cref{eq:descent-inequality-1} for $t = 0$, we have 
    \begin{align*}
        & 2\eta \sum_{t = 1}^{T} \left( \bm{y}^\top A \bm{x}^t - (\bm{y}^t)^\top A \bm{x} \right) + \eta \left( \bm{y}^\top A \bm{x}^{T + 1} - (\bm{y}^{T + 1})^\top A \bm{x} \right) \nonumber \\ 
        \leq & \phi_1(\bm{x}, \bm{y}) - \phi_{T + 1}(\bm{x}, \bm{y}) + \psi_1(\bm{x}, \bm{y}) - \psi_{T + 1}(\bm{x}, \bm{y}) + \mathcal{V}_1 - \mathcal{V}_{T + 1} + \frac{1}{64}\delta^2  \nonumber \\ 
        &\; + \phi_{0}(\bm{x},\bm{y}) - \phi_{1}(\bm{x},\bm{y}) + \eta \inp*{A\bm{x}^{1}}{\bm{\bm{y}}^{1}-\bm{y}^{0}} - \frac{1}{2}\norm*{\bm{x}^{1}-\bm{x}^{0}}^{2} - \frac{1}{2} \norm*{\bm{y}^{1}-\bm{y}^{0}}^{2} \nonumber \\ 
        \leq& \phi_0(\bm{x}, \bm{y}) - \phi_{T + 1}(\bm{x}, \bm{y}) + \psi_1(\bm{x}, \bm{y}) - \psi_{T + 1}(\bm{x}, \bm{y}) + \mathcal{V}_1 - \mathcal{V}_{T + 1} + \eta \inp*{A\bm{x}^{1}}{\bm{\bm{y}}^{1}-\bm{y}^{0}} + \frac{1}{64}\delta^2 . 
    \end{align*} 

    This inequality gives the following upper bound:
    \begin{equation}
        \bm{y}^{\top}A\left(\frac{1}{T}\sum_{t=1}^{T} \bm{x}^{t} \right)-\left(\frac{1}{T}\sum_{t=1}^{T} \bm{y}^{t} \right)^{\top}A \bm{x} = \frac{1}{T}\sum_{t=1}^{T}\left(\bm{y}^{\top}A\bm{x}^{t}-(\bm{y}^{t})^{\top}A\bm{x}\right)\leq\frac{C(\bm{x},\bm{y})}{2\eta T}, 
        \label{eq:local-convergence-sum}
    \end{equation}
    where 
    \begin{align*}
        C(\bm{x}, \bm{y}) 
        & =\phi_{0}(\bm{x}, \bm{y})-\phi_{T+1}(\bm{x}, \bm{y}) +\psi_{1}(\bm{x}, \bm{y})-\psi_{T+1}(\bm{x}, \bm{y}) + \mathcal{V}_1 - \mathcal{V}_{T + 1} + \frac{1}{64}\delta^2 \\ 
        & \hspace{60pt} + \eta \inp*{A\bm{x}^{1}}{\bm{\bm{y}}^{1}-\bm{y}^{0}} -\eta\left(\bm{y}^{\top}A\bm{x}^{T+1}-(\bm{y}^{T+1})^{\top}A\bm{x}\right) \\ 
        &\hspace{280pt} \forall\, \bm{x}, \bm{y} \in \Delta_{m} \times \Delta_{n}. 
    \end{align*}

    For any $\bm{x} \in \Delta_n, \bm{y} \in \Delta_m$, we can bound each term in $C(\bm{x}, \bm{y})$ as follows: 
    \begin{equation*}
        \begin{aligned}
            \phi_0(\bm{x}, \bm{y}) &= \frac{1}{2}\norm*{\bm{x}^{0}-\bm{x}}^{2} + \frac{1}{2}\norm*{\bm{y}^{0}-\bm{y}}^{2} + \eta (\bm{y}^{0})^\top A \bm{x} 
            \leq 
            4 + \eta \norm{A}, \\ 
            - \phi_{T+1}(\bm{x}, \bm{y}) &= - \frac{1}{2}\norm*{\bm{x}^{T + 1} - \bm{x}}^{2} - \frac{1}{2}\norm*{\bm{y}^{T + 1}-\bm{y}}^{2} - \eta (\bm{y}^{T + 1})^\top A \bm{x} \leq \eta \norm{A}, \\ 
            \psi_{1}(\bm{x},\bm{y}) &= \frac{1}{2}\norm*{\bm{x}^{1}-\bm{x}}^{2}+\frac{1}{2}\norm*{\bm{y}^{0}-\bm{y}}^{2}-\frac{1}{2}\norm*{\bm{y}^{1}-\bm{y}^{0}}^{2} \leq 4, \\ 
            - \psi_{T + 1}(\bm{x},\bm{y}) &= - \frac{1}{2}\norm*{\bm{x}^{T + 1}-\bm{x}}^{2} - \frac{1}{2}\norm*{\bm{y}^{T} + \bm{y}}^{2} + \frac{1}{2}\norm*{\bm{y}^{T + 1}-\bm{y}^{T}}^{2} \leq 2, \\ 
            \mathcal{V}_0 &= \norm*{\bm{x}^0 - \bm{x}^*}^2 + \norm*{\bm{y}^0 - \bm{y}^*}^2 - \eta (\bm{y}^0 - \bm{y}^*)^\top A (\bm{x}^0 - \bm{x}^*) \leq 8 + 4 \eta \norm{A}, \\ 
            - \mathcal{V}_{T + 1} &= - \norm*{\bm{x}^{T + 1} - \bm{x}^*}^2 - \norm*{\bm{y}^{T + 1} - \bm{y}^*}^2 + \eta (\bm{y}^{T + 1} - \bm{y}^*)^\top A (\bm{x}^0 - \bm{x}^*) \\ 
            &\leq 4 \eta \norm{A}, 
        \end{aligned}
    \end{equation*}
    and 
    $- \eta \inp*{A\bm{x}^{1}}{\bm{\bm{y}}^{1}-\bm{y}^{0}} -\eta\left(\bm{y}^{\top}A\bm{x}^{T+1}-(\bm{y}^{T+1})^{\top}A\bm{x}\right) \leq 4 \eta \norm{A}$, where all the inequalities follow by~\cref{lem:simplex-elementary}. 
    Therefore, we can bound $C(\bm{x}, \bm{y})$ by $18 + 14\eta \norm*{A} + \delta^2 / 64$. 
    By taking the maximum on the both sides of~\cref{eq:local-convergence-sum}, we complete the proof. 
    \qedhere

\end{proof}




	\section{SDP formulation of (\ref{eq:worst-case-inner})}
	
	\label{app:PEP-expansion}
	
	In this section, we reformulate the inner problem (\ref{eq:worst-case-inner})
	as a convex SDP by using results from \citep{taylor2017CompositePEP, bousselmi2024interpolation}.
	We use the following notation: write $\odot(\bm{x},\bm{y})=(\bm{x}\bm{y}^{\top}+\bm{y}\bm{x}^{\top})/2$
	to denote the symmetric outer product between the vectors $\bm{x},\bm{y}\in\mathbb{R}^{d}$.
	For a symmetric matrix $M\succeq0$ means that $M$ is positive semidefinite.
	
	\paragraph{Span based form of AltGDA.}
	
	First, we present an equivalent form of AltGDA, which we will use
	in our transformation to keep the resultant formulation in a compact
	form by decoupling the iterates and their interaction with $A$. To
	that goal, we first recall the following definition. \begin{definition}[Indicator
		function and normal cone of a set.] \emph{ For any set $\mathcal{S}\subseteq\mathbb{R}^{n}$,
			its indicator function $\delta_{\mathcal{S}}(\bm{x})$ is $0$ if
			$\bm{x}\in\mathcal{S}$ and is $\infty$ if $\bm{x}\notin\mathcal{S}$.
			For a closed convex set $\mathcal{C}\subseteq\mathbb{R}^{n}$, the
			subdifferential of its indicator function (also called normal cone),
			denoted by $\partial\delta_{\mathcal{C}}$, satisfies: 
			\[
			\partial\delta_{\mathcal{C}}(\bm{x})=\begin{cases}
				\emptyset & \text{if }\bm{x}\notin\mathcal{C}\\
				\{\bm{y}\mid\bm{y}^{\top}(\bm{z}-\bm{x})\le0\text{ for all }\bm{z}\in\mathcal{C}\} & \text{if }\bm{x}\in\mathcal{C}.
			\end{cases}
			\]
			Define an arbitrary element of $\partial\delta_{\mathcal{C}}(\bm{x})$
			by $\delta_{\mathcal{C}}^{\prime}(\bm{x})$.} \end{definition}
	
	\begin{lemma}[Equivalent representation of AltGDA] \label{lem:altgda-equiv}
		Algorithm \ref{alg:altgda} can be written equivalently as: \end{lemma}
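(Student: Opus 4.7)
The plan is to replace each Euclidean projection in \cref{alg:altgda} with an equivalent first-order optimality condition written through the subdifferential of the indicator function of the feasible set, and then to introduce auxiliary symbols for the images of the iterates under $A$ and $A^{\top}$ so that the linear operator appears only through a small, explicit list of primal-dual products. The key fact I would invoke is the standard variational characterization of the Euclidean projection onto a closed convex set $\mathcal{C}$: a point $\bm{x}^+$ equals $\Pi_{\mathcal{C}}(\bm{z})$ if and only if $\bm{z}-\bm{x}^+ \in \partial\delta_{\mathcal{C}}(\bm{x}^+)$, which follows directly from Fermat's rule applied to the strongly convex proximal objective $\tfrac{1}{2}\|\bm{x}-\bm{z}\|^2+\delta_{\mathcal{C}}(\bm{x})$.

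First I would apply this characterization to the $x$-update $\bm{x}^{t+1}=\Pi_{\mathcal{X}}(\bm{x}^{t}-\eta A^{\top}\bm{y}^{t})$ to produce some $\gx^{t+1}\in\partial\delta_{\mathcal{X}}(\bm{x}^{t+1})$ satisfying
\begin{equation*}
\bm{x}^{t+1} = \bm{x}^{t} - \eta A^{\top}\bm{y}^{t} - \gx^{t+1}.
\end{equation*}
The identical reasoning for the $y$-update, which in the alternating scheme depends on the \emph{already updated} $\bm{x}^{t+1}$, yields some $\gy^{t+1}\in\partial\delta_{\mathcal{Y}}(\bm{y}^{t+1})$ with
\begin{equation*}
\bm{y}^{t+1} = \bm{y}^{t} + \eta A\bm{x}^{t+1} - \gy^{t+1}.
\end{equation*}
Next I would introduce the auxiliary variables $\Aty^{t}:=A^{\top}\bm{y}^{t}$ and $\Ax^{t+1}:=A\bm{x}^{t+1}$ as primitive objects and rewrite the two displays purely in terms of $\bm{x}^{t},\bm{y}^{t},\Ax^{t+1},\Aty^{t},\gx^{t+1},\gy^{t+1}$. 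This decoupling is precisely the form needed to apply the linear-operator interpolation machinery of \citet{bousselmi2024interpolation} later on, because it isolates $A$ into a finite list of vector-vector equalities that can be handled by the SDP reformulation.

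The equivalence in both directions is then essentially tautological. In the forward direction, any sequence generated by \cref{alg:altgda} supplies valid certificates $\gx^{t+1},\gy^{t+1}$ via the projection characterization above. In the reverse direction, if $\gx^{t+1}\in\partial\delta_{\mathcal{X}}(\bm{x}^{t+1})$ satisfies the displayed identity, then $\bm{x}^{t}-\eta A^{\top}\bm{y}^{t}-\bm{x}^{t+1}\in\partial\delta_{\mathcal{X}}(\bm{x}^{t+1})$, which by the variational characterization (and uniqueness of projection onto a closed convex set) forces $\bm{x}^{t+1}=\Pi_{\mathcal{X}}(\bm{x}^{t}-\eta A^{\top}\bm{y}^{t})$, and analogously for the $y$-update.

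There is no genuine technical obstacle here—this lemma is a reformulation rather than a theorem—so the ``hard part'' is purely bookkeeping: one must faithfully preserve the alternating order (ensuring $\Ax^{t+1}$, not $\Ax^{t}$, enters the $y$-update) and separate the auxiliary equalities $\Ax^{t+1}=A\bm{x}^{t+1}$ and $\Aty^{t}=A^{\top}\bm{y}^{t}$ from the subgradient equalities, since it is precisely this separation that makes the downstream SDP correctly encode AltGDA rather than its simultaneous counterpart.
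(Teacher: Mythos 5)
Your proposal is correct and follows essentially the same route as the paper: both invoke the variational characterization $\bm{x}^{+}=\Pi_{\mathcal{C}}(\bm{z})\Leftrightarrow\bm{z}-\bm{x}^{+}\in\partial\delta_{\mathcal{C}}(\bm{x}^{+})$ (the paper cites \citep[Proposition 6.47]{Bauschke2017}), rewrite each AltGDA step as a one-step recursion with a subgradient of the indicator, and then abstract $A\bm{x}^{t}$, $A^{\top}\bm{y}^{t}$ into auxiliary variables so the operator appears only through primal-dual product constraints. The one thing you leave implicit is the final telescoping of the one-step recursion into the cumulative ``span'' form $\bm{x}^{t}=\bm{x}^{0}-\sum_{j=1}^{t}\delta_{\mathcal{X}}^{\prime}(\bm{x}^{j})-\eta\sum_{j=0}^{t-1}\bm{q}^{j}$ (and likewise for $\bm{y}^{t}$), which is precisely the displayed content of \cref{lem:altgda-equiv}; it is a trivial unrolling, but since the lemma is stated in that form you should make the expansion explicit.
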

	
	\begin{equation}
		\begin{aligned} & \bm{x}^{t}=\bm{x}^{0}-\sum_{j=1}^{t}\delta_{\mathcal{X}}^{\prime}(\bm{x}^{j})-\eta\sum_{j=0}^{t-1}\bm{q}^{j},\quad t\in\{1,2,\ldots,T\}\\
			& \bm{y}^{t}=\bm{y}^{0}-\sum_{j=1}^{t}\delta_{\mathcal{Y}}^{\prime}(\bm{y}^{j})+\eta\sum_{j=1}^{t}\bm{p}^{j},\quad t\in\{1,2,\ldots,T\}\\
			& \bm{p}^{t}=A\bm{x}^{t},\quad t\in\{1,2,\ldots,T\}\\
			& \bm{q}^{t}=A^{\top}\bm{y}^{t},\quad t\in\{1,2,\ldots,T\}.
		\end{aligned}
		\label{eq:AltGDA-expanded}
	\end{equation}
	
	\begin{proof} Recall that for any closed convex set $\mathcal{C}$,
		we have $\bm{p}=\Pi_{\mathcal{C}}(\bm{x})$ if and only if $\bm{x}-\bm{p}=\delta_{\mathcal{C}}^{\prime}(\bm{p})$
		for some $\delta_{\mathcal{C}}^{\prime}(\bm{p})\in\partial\delta_{\mathcal{C}}(\bm{p})$
		\citep[Proposition 6.47]{Bauschke2017}. Using this, we can write the
		$\bm{x}$-iterates of AltGDA as 
		\begin{align*}
			& \bm{x}^{t+1}=\Pi_{\mathcal{X}}\left(\bm{x}^{t}-\eta A^{\top}\bm{y}^{t}\right)\\
			\Leftrightarrow & \bm{x}^{t+1}=\bm{x}^{t}-\delta_{\mathcal{X}}^{\prime}(\bm{x}^{t+1})-\eta A^{\top}\bm{y}^{t}\text{ for some }\delta_{\mathcal{X}}^{\prime}(\bm{x}^{t+1})\in \partial \delta_{\mathcal{X}}(\bm{x}^{t+1})
		\end{align*}
		which can be expanded to 
		\begin{equation}
			\bm{x}^{t}=\bm{x}^{0}-\sum_{j=1}^{t}\delta_{\mathcal{X}}^{\prime}(\bm{x}^{j})-\eta\sum_{j=0}^{t-1}A^{\top}\bm{y}^{j},\quad t\in\{1,2,\ldots,T\}.\label{eq:x_k_expanded}
		\end{equation}
		Similarly, we can write the $\bm{y}$-iterates of AltGDA as 
		\begin{align*}
			& \bm{y}^{t+1}=\Pi_{\mathcal{Y}}\left(\bm{y}^{t}+\eta A\bm{x}^{t+1}\right)\\
			\Leftrightarrow & \bm{y}^{t+1}=\bm{y}^{t}-\delta_{\mathcal{Y}}^{\prime}(\bm{y}^{t+1})+\eta A\bm{x}^{t+1},\text{ where }\delta_{\mathcal{Y}}^{\prime}(\bm{y}^{t+1})\in\partial\delta_{\mathcal{Y}}(\bm{y}^{t+1})
		\end{align*}
		leading to: 
		\begin{equation}
			\bm{y}^{t}=\bm{y}^{0}-\sum_{j=1}^{t}\delta_{\mathcal{Y}}^{\prime}(\bm{y}^{j})+\eta\sum_{j=1}^{t}A\bm{x}^{j}\quad t\in\{1,2,\ldots,T\}.\label{eq:y_k_expanded}
		\end{equation}
		Finally, setting 
		\begin{align*}
			& \bm{p}^{t}=A\bm{x}^{t},\quad t\in\{1,2,\ldots,T\}\\
			& \bm{q}^{t}=A^{\top}\bm{y}^{t},\quad t\in\{0,1,2,\ldots,T\}
		\end{align*}
		in (\ref{eq:x_k_expanded}) and (\ref{eq:y_k_expanded}), we arrive
		at (\ref{eq:AltGDA-expanded}). \end{proof}
	
	\paragraph{Infinite-dimensional inner maximization problem.}
	
	For notational convenience of indexing the variables, first we write
	$\bm{x}\coloneqq\bm{x}^{\arb}$, $\bm{y}\coloneqq\bm{y}^{\arb}$ and
	merely rewrite (\ref{eq:worst-case-inner-alter}) as follows:
	
	\begin{equation}
		\mathcal{P}_{T}(\eta)=\left(\begin{array}{l}
			\underset{\substack{\mathcal{X}\subseteq\mathbb{R}^{n},\mathcal{Y}\subseteq\mathbb{R}^{m},A\in\mathbb{R}^{m\times n},\\
					\{\bm{x}^{t}\}_{t\in\{\arb,0,1,\ldots,T\}}\subseteq\mathbb{R}^{n},\\
					\{\bm{y}^{t}\}_{t\in\{\arb,0,1,\ldots,T\}}\subseteq\mathbb{R}^{m},\\
					m,n\in\mathbb{N}.
				}
			}{\mbox{maximize}}\;\frac{1}{T}\sum_{t=1}^{T}\left((\bm{y}^{\arb})^{\top}A\bm{x}^{t}-(\bm{y}^{t})^{\top}A\bm{x}^{\arb}\right)\\
			\textup{subject to}\\
			\mathcal{X}\text{ is a convex compact set in }\mathbb{R}^{n}\text{ with radius }1,\\
			\mathcal{Y}\text{ is convex compact set in }\mathbb{R}^{m}\text{ with radius }1,\\
			A\in\mathbb{R}^{m\times n}\text{ has maximum singular value }1,\\
			\{(\bm{x}^{t},\bm{y}^{t})\}_{t\in\{1,2,\ldots,T\}}\text{ are generated by AltGDA with stepsize \ensuremath{\eta} }\\
			\hfill\hfill\hfill\text{from initial point }(\bm{x}^{0},\bm{y}^{0})\in\mathcal{X}\times\mathcal{Y},\\
			(\bm{x}^{\arb},\bm{y}^{\arb})\in\mathcal{X}\times\mathcal{Y}.
		\end{array}\right)\tag{INNER}\label{eq:worst-case-inner-alter}
	\end{equation}
	
	Using~\cref{lem:altgda-equiv} and by denoting $\bm{p}^{\arb}=A\bm{x}^{\arb}$
	and $\bm{q}^{\arb}=A^{\top}\bm{y}^{\arb}$, we can write (\ref{eq:worst-case-inner})
	in the following infinite-dimensional form:
	
	\begin{equation}
		\mathcal{P}_{T}(\eta)=\left(\begin{array}{l}
			\underset{\substack{\mathcal{X}\subseteq\mathbb{R}^{n},\mathcal{Y}\subseteq\mathbb{R}^{m},A\in\mathbb{R}^{m\times n},\\
					\{\bm{x}^{t}\}_{t\in\{\arb,0,1,\ldots,T\}}\subseteq\mathbb{R}^{n},\\
					\{\bm{y}^{t}\}_{t\in\{\arb,0,1,\ldots,T\}}\subseteq\mathbb{R}^{m},\\
					m,n\in\mathbb{N}.
				}
			}{\mbox{maximize}}\;\frac{1}{T}\sum_{t=1}^{T}\left((\bm{q}^{\arb})^{\top}\bm{x}^{t}-(\bm{y}^{t})^{\top}\bm{p}^{\arb}\right)\\
			\textup{subject to}\\
			\mathcal{X}\text{ is a convex compact set in }\mathbb{R}^{n}\text{ with radius }1,\\
			\mathcal{Y}\text{ is convex compact set in }\mathbb{R}^{m}\text{ with radius }1,\\
			\bm{x}^{t}=\bm{x}^{0}-\sum_{j=1}^{t}\delta_{\mathcal{X}}^{\prime}(\bm{x}^{j})-\eta\sum_{j=0}^{t-1}\bm{q}^{j},\quad t\in\{1,2,\ldots,T\}\\
			\bm{y}^{t}=\bm{y}^{0}-\sum_{j=1}^{t}\delta_{\mathcal{Y}}^{\prime}(\bm{y}^{j})+\eta\sum_{j=1}^{t}\bm{p}^{j},\quad t\in\{1,2,\ldots,T\}\\
			A\in\mathbb{R}^{m\times n}\text{ has maximum singular value }1,\\
			\bm{p}^{t}=A\bm{x}^{t},\quad t\in\{\arb,1,2,\ldots,T\}\\
			\bm{q}^{t}=A^{\top}\bm{y}^{t},\quad t\in\{\arb,1,2,\ldots,T\}.\\
			(\bm{x}^{\arb},\bm{y}^{\arb})\in\mathcal{X}\times\mathcal{Y}.
		\end{array}\right)\label{eq:worst-case-inner-1}
	\end{equation}

	\paragraph{Interpolation argument.}
	
	We next convert the infinite-dimensional inner maximization problem
	(\ref{eq:worst-case-inner-1}) into a finite-dimensional (albeit still
	intractable) one with the following interpolation results. The core
	intuition behind these results is that a first-order algorithm such
	as AltGDA interacts with the infinite-dimensional objects $\mathcal{X}$,
	$\mathcal{Y}$, or $A$ only through the first-order information it
	observes at the iterates. Hence, under suitable conditions, it may
	be possible to reconstruct these objects from the iterates and their
	associated first-order information in such a way that, based solely
	on the first-order information, the algorithm cannot distinguish between
	the original infinite-dimensional object and the reconstructed one.
	The following lemmas show that such reconstruction is possible in
	our setup. \begin{lemma}[{Interpolation of a convex compact set
			with bounded radius.\citep[Theorem 3.6]{taylor2017CompositePEP}}]
		\label{lem:set-interpolation} Let $\mathcal{I}$ be an index set
		and let $\{\bm{x}^{i},\bm{g}^{i}\}_{i\in\mathcal{I}}\subseteq\mathbb{R}^{d}\times\mathbb{R}^{d}$.
		Then there exists a compact convex set $\mathcal{C}\subseteq\mathbb{R}^{d}$
		with radius $R$ satisfying $\delta_{\mathcal{C}}^{\prime}(\bm{x}^{i})=\bm{g}^{i}$
		for all $i\in\mathcal{I}$ if and only if 
		\begin{align*}
			& (\bm{g}^{j})^{\top}(\bm{x}^{i}-\bm{x}^{j})\leq0,\quad\forall i,j\in\mathcal{I}\\
			& \|\bm{x}^{i}\|_2^{2}\leq R^{2},\quad\forall i\in\mathcal{I}.
		\end{align*}
	\end{lemma}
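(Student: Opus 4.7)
The plan is to prove the equivalence by establishing necessity and sufficiency separately. Necessity is essentially a definition-unpacking exercise: if $\mathcal{C}$ is a compact convex set of radius $R$ with $\bm{g}^i \in \partial\delta_{\mathcal{C}}(\bm{x}^i)$ for all $i$, then since $\partial\delta_{\mathcal{C}}$ is empty outside $\mathcal{C}$ by the definition stated just before the lemma, each $\bm{x}^i$ lies in $\mathcal{C}$, which immediately yields $\|\bm{x}^i\|_2 \leq R$. The same definition gives $(\bm{g}^j)^\top(\bm{z} - \bm{x}^j) \leq 0$ for every $\bm{z} \in \mathcal{C}$; specializing $\bm{z} = \bm{x}^i \in \mathcal{C}$ produces the first listed inequality.

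For sufficiency I would exhibit an explicit interpolating set, namely the intersection of the closed Euclidean ball of radius $R$ with the tightest half-space consistent with each observed subgradient:
\[
\mathcal{C} := \bigl\{\bm{z} \in \mathbb{R}^d : \|\bm{z}\|_2 \leq R\bigr\} \;\cap\; \bigcap_{i \in \mathcal{I}} \bigl\{\bm{z} \in \mathbb{R}^d : (\bm{g}^i)^\top(\bm{z} - \bm{x}^i) \leq 0\bigr\}.
\]
As an intersection of closed convex sets contained in a ball of radius $R$, this $\mathcal{C}$ is automatically closed, convex, bounded, and of radius at most $R$. Two verifications remain: (a) $\bm{x}^j \in \mathcal{C}$ for each $j$, which follows by combining the two hypotheses --- the norm bound places $\bm{x}^j$ in the ball, while the first hypothesis, applied with the roles of $i$ and $j$ swapped, places $\bm{x}^j$ in every half-space; and (b) $\bm{g}^i \in \partial\delta_{\mathcal{C}}(\bm{x}^i)$, which is built into the construction since $\mathcal{C}$ is, by definition, contained in the $i$-th half-space, so the normal-cone inequality holds for every $\bm{z} \in \mathcal{C}$.

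The main obstacle, such as it is, lies in choosing the right construction. A naive attempt using the convex hull of the $\bm{x}^i$ would force a separation-type argument to verify the normal-cone condition, and would not obviously respect the radius bound when some $\bm{x}^i$ lie strictly inside the ball. Taking the half-spaces-cut-by-the-ball construction instead makes both the radius bound and the normal-cone condition automatic, reducing the entire proof to two lines of definition checking; the closed ball plays no role beyond enforcing the radius constraint in a manner compatible with the additional supporting-hyperplane cuts.
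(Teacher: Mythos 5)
Your proof is correct, and it follows the standard construction used in the cited source. Note that the paper itself does not prove \cref{lem:set-interpolation}; it is invoked as a black-box citation to \citet[Theorem 3.6]{taylor2017CompositePEP}, so there is no in-paper argument to compare against. Your necessity direction is exactly the definition-unpacking it should be: nonemptiness of the normal cone forces $\bm{x}^i \in \mathcal{C}$, the radius bound then gives $\|\bm{x}^i\|_2 \le R$, and specializing $\bm{z} = \bm{x}^i$ in the normal-cone inequality for $\bm{g}^j$ gives the cross-inequality. Your sufficiency construction --- intersecting the ball of radius $R$ with the half-spaces $\{\bm{z} : (\bm{g}^i)^\top(\bm{z} - \bm{x}^i) \le 0\}$ --- is the canonical one: membership of each $\bm{x}^j$ follows from the hypotheses (the cross-inequality with indices swapped plus the norm bound), and $\bm{g}^i \in \partial\delta_{\mathcal{C}}(\bm{x}^i)$ is automatic because $\mathcal{C}$ is by design contained in the $i$-th half-space. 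One small clarification worth making explicit: ``radius $R$'' here means contained in the ball of radius $R$ about the origin (i.e., radius at most $R$), which is the reading used consistently in the SDP formulation and in the cited theorem; with that reading your argument goes through with no gaps, including for infinite index sets $\mathcal{I}$, since an arbitrary intersection of closed convex sets inside a fixed ball remains compact and convex.
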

	
	\begin{lemma}[{Interpolation of a matrix with bounded singular
			value.\citep[Theorem 3.1]{bousselmi2024interpolation}}] \label{lem:matrix-interpolation}Consider
		the sets of pairs $\{(\bm{x}^{i},\bm{p}^{i})\}_{i\in\{1,2,\ldots,T_{1}\}}\subseteq\mathbb{R}^{n}\times\mathbb{R}^{m}$
		and $\{(\bm{y}^{j},\bm{q}^{j})\}_{j\in\{1,2,\ldots,T_{2}\}}\subseteq\mathbb{R}^{m}\times\mathbb{R}^{n}$,
		and define the following matrices: 
		\begin{align*}
			& X=[\bm{x}^{1}\mid\bm{x}^{2}\mid\ldots\mid\bm{x}^{T_{1}}]\in\mathbb{R}^{n\times T_{1}},\\
			& P=[\bm{p}^{1}\mid\bm{p}^{2}\mid\ldots\mid\bm{p}^{T_{1}}]\in\mathbb{R}^{m\times T_{1}},\\
			& Y=[\bm{y}^{1}\mid\bm{y}^{2}\mid\ldots\mid\bm{y}^{T_{2}}]\in\mathbb{R}^{m\times T_{2}},\\
			& Q=[\bm{q}^{1}\mid\bm{q}^{2}\mid\ldots\mid\bm{q}^{T_{2}}]\in\mathbb{R}^{n\times T_{2}}.
		\end{align*}
		Then there exists a matrix $A\in\mathbb{R}^{m\times n}$ with maximum
		singular value $\sigma_{\textup{max}}(A)\leq L$ such that $\bm{p}^{i}=A\bm{x}^{i}$
		for all $i\in\{1,2,\ldots,T_{1}\}$ and $\bm{q}^{j}=A^{\top}\bm{y}^{j}$
		for all $j\in\{1,2,\ldots,T_{2}\}$ if and only if 
		\begin{align*}
			& X^{\top}Q=P^{\top}Y,\\
			& L^{2}X^{\top}X-P^{\top}P\succeq0,\\
			& L^{2}Y^{\top}Y-Q^{\top}Q\succeq0.
		\end{align*}
	\end{lemma}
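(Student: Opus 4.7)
The plan is to prove both directions of the stated equivalence, with the necessity direction being essentially algebraic and the sufficiency direction requiring a careful extension argument. For necessity, given any $A$ with $\sigma_{\textup{max}}(A)\le L$ satisfying the interpolation equalities, the identity $X^\top Q = X^\top A^\top Y = (AX)^\top Y = P^\top Y$ is immediate. The bound $A^\top A \preceq L^2 I_n$ then gives $L^2 X^\top X - P^\top P = X^\top(L^2 I_n - A^\top A) X \succeq 0$, and the analogous computation using $A A^\top \preceq L^2 I_m$ yields the third condition.

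For sufficiency, the first step is to use the PSD condition $L^2 X^\top X - P^\top P \succeq 0$ to show that $\ker(X) \subseteq \ker(P)$: if $Xv = 0$, then $v^\top P^\top P v \le L^2 v^\top X^\top X v = 0$, forcing $Pv = 0$. This lets us unambiguously define a linear map $\widetilde{A} : \textup{col}(X) \to \mathbb{R}^m$ by $\widetilde{A}(Xv) = Pv$, and the same PSD inequality bounds its operator norm by $L$. Symmetrically, define $\widetilde{B} : \textup{col}(Y) \to \mathbb{R}^n$ by $\widetilde{B}(Yw) = Qw$, again with norm at most $L$. Adjoint compatibility then follows from $X^\top Q = P^\top Y$: for $u = Xv_1$ and $y = Yv_2$,
\begin{equation*}
\langle \widetilde{A}u, y\rangle = v_1^\top P^\top Y v_2 = v_1^\top X^\top Q v_2 = \langle u, \widetilde{B}y\rangle,
\end{equation*}
so $\widetilde{B}$ agrees with the restriction of $\widetilde{A}^\top$ to $\textup{col}(Y)$.

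The final step is to extend $\widetilde{A}$ to a global operator $A : \mathbb{R}^n \to \mathbb{R}^m$ that (i) agrees with $\widetilde{A}$ on $\textup{col}(X)$, (ii) has its adjoint agree with $\widetilde{B}$ on $\textup{col}(Y)$, and (iii) still satisfies $\sigma_{\textup{max}}(A) \le L$. Splitting $\mathbb{R}^n = \textup{col}(X) \oplus \textup{col}(X)^\perp$ and $\mathbb{R}^m = \textup{col}(Y) \oplus \textup{col}(Y)^\perp$, three of the four blocks of $A$ in this decomposition are pinned down by $\widetilde{A}$ and $\widetilde{B}$; only the block $\textup{col}(X)^\perp \to \textup{col}(Y)^\perp$ is free. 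The main obstacle is choosing this free block so that the full $2\times 2$ block operator has spectral norm at most $L$. I would invoke Parrott's theorem on norm-preserving completions of partial block matrices, whose hypothesis is precisely that each already-filled row and column strip has norm at most $L$. This is exactly the content of our two PSD inequalities after translating back through $\widetilde{A}$ and $\widetilde{B}$. Once Parrott's theorem furnishes the completion, the resulting $A$ satisfies every interpolation constraint together with the singular-value bound, finishing the sufficiency argument.
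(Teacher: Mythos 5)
The paper does not prove this lemma at all: it is cited verbatim as Theorem 3.1 of \citet{bousselmi2024interpolation}, so there is no in-paper proof to compare against. That said, your blind argument is essentially correct and gives a clean self-contained proof. The necessity computation is exactly right. For sufficiency, the chain of ideas — the PSD conditions yield $\ker X \subseteq \ker P$ and $\ker Y \subseteq \ker Q$ so that $\widetilde{A}$ and $\widetilde{B}$ are well-defined with operator norm $\le L$ on $\textup{col}(X)$ and $\textup{col}(Y)$; the equality constraint $X^\top Q = P^\top Y$ gives exactly the adjoint compatibility needed to pin down the blocks $A_{11}, A_{12}, A_{21}$ consistently; and Parrott's completion theorem then fills in the remaining block $A_{22}$ while keeping $\sigma_{\max}(A)\le L$ — is the right skeleton, and the hypotheses of Parrott's theorem (the filled column strip has norm $\|\widetilde{A}\|\le L$, the filled row strip has norm $\|\widetilde{B}\|\le L$) are precisely what your PSD inequalities deliver. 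One small imprecision worth fixing in a write-up: $\widetilde{A}^\top$ maps $\mathbb{R}^m \to \textup{col}(X)$, whereas $\widetilde{B}$ maps $\textup{col}(Y)\to\mathbb{R}^n$, so the correct statement is that $\Pi_{\textup{col}(X)}\circ\widetilde{B}$ agrees with $\widetilde{A}^\top\vert_{\textup{col}(Y)}$ — the $\textup{col}(X)^\perp$ component of $\widetilde{B}$ is an independent piece of data that determines $A_{12}$, not a redundancy. Your subsequent block-count (three blocks determined, $A_{22}$ free) reflects this correctly, so the slip does not affect the conclusion. Invoking Parrott's theorem is arguably a slicker route than the explicit SVD-based construction that one typically sees in the PEP interpolation literature; the trade-off is that it is less constructive, producing existence of a valid $A_{22}$ rather than a formula for it.
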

	
	In order to apply \cref{lem:set-interpolation} and \cref{lem:matrix-interpolation}
	to (\ref{eq:worst-case-inner-1}), define the following for notational
	convenience: 
	\begin{align*}
		& \text{index }\arb\text{ is denoted by }-1,\\
		& \mathcal{I}_{T}=\{-1,0,1,\ldots,T\},\\
		& \delta_{\mathcal{X}}^{\prime}(\bm{x}^{i})=\gx_{i},\quad i\in\mathcal{I}_{T},\\
		& \delta_{\mathcal{Y}}^{\prime}(\bm{y}^{j})=\gy_{i},\quad i\in\mathcal{I}_{T},\\
		& X=[\bm{x}^{1}\mid\bm{x}^{2}\mid\ldots\mid\bm{x}^{T}]\in\mathbb{R}^{n\times T},\\
		& P=[\bm{p}^{1}\mid\bm{p}^{2}\mid\ldots\mid\bm{p}^{T}]\in\mathbb{R}^{m\times T},\\
		& Y=[\bm{y}^{1}\mid\bm{y}^{2}\mid\ldots\mid\bm{y}^{T}]\in\mathbb{R}^{m\times T},\\
		& Q=[\bm{q}^{1}\mid\bm{q}^{2}\mid\ldots\mid\bm{q}^{T}]\in\mathbb{R}^{n\times T}.
	\end{align*}

	\paragraph{Finite-dimensional inner maximization problem.}
	
	Using \cref{lem:set-interpolation} and \cref{lem:matrix-interpolation}
	and the new notation above, we can reformulate (\ref{eq:worst-case-inner-1})
	as:
	
	\begin{equation}
		\mathcal{P}_{T}(\eta)=\left(\begin{array}{l}
			\underset{\substack{\{\bm{x}^{i},\gx_{i},\bm{q}^{i}\}_{i\in\mathcal{I}_{T}}\subseteq\mathbb{R}^{n},\\
					\{\bm{y}^{i},\gy_{i},\bm{p}^{i}\}_{i\in\mathcal{I}_{T}}\subseteq\mathbb{R}^{m},\\
					m,n\in\mathbb{N}.
				}
			}{\mbox{maximize}}\;\frac{1}{T}\sum_{i=1}^{T}\left((\bm{q}^{-1})^{\top}\bm{x}^{i}-(\bm{y}^{i})^{\top}\bm{p}^{-1}\right)\\
			\textup{subject to}\\
			\gx_{j}^{\top}(\bm{x}^{i}-\bm{x}^{j})\leq0,\quad i,j\in\mathcal{I}_{T},\\
			\|\bm{x}^{i}\|_2^{2}\leq1,\quad i\in\mathcal{I}_{T},\\
			\gy_{j}^{\top}(\bm{y}^{i}-\bm{y}^{j})\leq0,\quad i,j\in\mathcal{I}_{T},\\
			\|\bm{y}^{i}\|_2^{2}\leq1,\quad i\in\mathcal{I}_{T},\\
			\bm{x}^{i}=\bm{x}^{0}-\sum_{j=1}^{i}\gx_{j}-\eta\sum_{j=0}^{i-1}\bm{q}^{j}\quad i\in\{1,2,\ldots,T\}\\
			\bm{y}^{i}=\bm{y}^{0}-\sum_{j=1}^{i}\gy_{j}+\eta\sum_{j=1}^{i}\bm{p}^{j}\quad i\in\{1,2,\ldots,T\}\\
			(\bm{x}^{i})^{\top}\Aty^{j}=(\bm{p}^{i})^{\top}\bm{y}^{j},\quad i,j\in\mathcal{I}_{T}\\
			X^{\top}X-P^{\top}P\succeq0,\\
			Y^{\top}Y-Q^{\top}Q\succeq0.
		\end{array}\right)\label{eq:worst-case-inner-2}
	\end{equation}
	
	Note that the problem does not contain any infinite-dimensional variable
	anymore, however, it still is nonconvex and intractable due to terms
	such as $\gx_{j}^{\top}(\bm{x}^{i}-\bm{x}^{j})$ and $\gy_{j}^{\top}(\bm{y}^{i}-\bm{y}^{j})$
	and presence of dimensions $m$ and $n$ as variables. Next, we show
	how (\ref{eq:worst-case-inner-2}) can be transformed into a semidefinite
	programming problem that is dimension-free without any loss.
	
	\paragraph{Grammian formulation.}
	
	Next we formulate (\ref{eq:worst-case-inner}) into a finite-dimensional
	convex SDP in maximization form. Let 
	\begin{align*}
		H_{\bm{x},\Aty} & =[\bm{x}^{-1}\mid\bm{x}^{0}\mid\gx_{-1}\mid\gx_{0}\mid\gx_{1}\mid\ldots\mid\gx_{T}\mid\Aty^{-1}\mid\Aty^{0}\mid\Aty^{1}\mid\ldots\mid\Aty^{T}]\in\mathbb{R}^{n\times(2T+6)},\\
		G_{\bm{x},\Aty} & =H_{\bm{x},\Aty}^{\top}H_{\bm{x},\Aty}\in\mathbb{S}_{+}^{(2T+6)},\\
		H_{\bm{y},\bm{p}} & =[\bm{y}^{-1}\mid\bm{y}^{0}\mid\gy_{-1}\mid\gy_{0}\mid\gy_{1}\mid\ldots\mid\gy_{T}\mid\Ax^{-1}\mid\Ax^{0}\mid\Ax^{1}\mid\ldots\mid\Ax^{T}]\in\mathbb{R}^{m\times(2T+6)},\\
		G_{\bm{y},\Ax} & =H_{\bm{y},\Ax}^{\top}H_{\bm{y},\Ax}\in\mathbb{S}_{+}^{2T+6},
	\end{align*}
	where $\mathop{\textbf{rank}}G_{\bm{x},\Aty}\leq n$ and $\mathop{\textbf{rank}}G_{\bm{y},\Ax}\leq m$,
	that becomes void when maximizing over $m,n$ as we do in (\ref{eq:worst-case-inner-2}).
	Next define the following notation to select the columns of $H_{\bm{x},\Aty}$
	and $H_{\bm{y},\bm{p}}$: 
	\begin{align*}
		& \widetilde{\mathbf{x}}_{-1}=e_{1}\in\mathbb{R}^{2T+6},\widetilde{\mathbf{x}}_{0}=e_{2}\in\mathbb{R}^{2T+6},\\
		& \bgx_{i}=e_{i+4}\in\mathbb{R}^{2T+6}\text{ for }i\in\mathcal{I}_{T},\\
		& \bAty_{i}=e_{i+T+6}\in\mathbb{R}^{2T+6}\text{ for }i\in\mathcal{I}_{T},\\
		& \widetilde{\mathbf{x}}_{i}=\widetilde{\mathbf{x}}_{0}-\sum_{j=1}^{i}\bgx_{j}-\eta\sum_{j=0}^{i-1}\bAty_{j}\in\mathbb{R}^{2T+6}\text{ for }i\in\{1,2,\ldots,T\},\\
		& \mathbf{X}=[\widetilde{\mathbf{x}}_{-1}\mid\widetilde{\mathbf{x}}_{0}\mid\widetilde{\mathbf{x}}_{1}\mid\ldots\mid\widetilde{\mathbf{x}}_{T}]\in\mathbb{R}^{(2T+6)\times(T+2)}\\
		& \widetilde{\mathbf{y}}_{-1}=e_{1}\in\mathbb{R}^{2T+6},\widetilde{\mathbf{y}}_{0}=e_{2}\in\mathbb{R}^{2T+6},\\
		& \bgy_{i}=e_{i+4}\in\mathbb{R}^{2T+6}\text{ for }i\in\mathcal{I}_{T},\\
		& \bAx_{i}=e_{i+T+6}\in\mathbb{R}^{2T+6}\text{ for }i\in\mathcal{I}_{T},\\
		& \widetilde{\mathbf{y}}_{i}=\widetilde{\mathbf{y}}_{0}-\sum_{j=1}^{i}\bgy_{j}+\eta\sum_{j=1}^{i}\bAx_{j}\in\mathbb{R}^{2T+6}\text{ for }i\in\{1,2,\ldots,T\},\\
		& \mathbf{Y}=[\widetilde{\mathbf{y}}_{-1}\mid\widetilde{\mathbf{y}}_{0}\mid\widetilde{\mathbf{y}}_{1}\mid\ldots\mid\widetilde{\mathbf{y}}_{T}]\in\mathbb{R}^{(2T+6)\times(T+2)}.
	\end{align*}
	Note that $\widetilde{\mathbf{x}}_{i}$ and $\widetilde{\mathbf{y}}_{i}$ depend linearly
	on the stepsize $\eta$ for $i\in\{1,2,\ldots,T\}$. The notation
	above is defined so that for all $i\in\mathcal{I}_{T}$ we have 
	\begin{align*}
		\bm{x}^{i} & =H_{\bm{x},\Aty}\widetilde{\mathbf{x}}_{i},\,\gx_{i}=H_{\bm{x},\Aty}\bgx_{i},\,\Aty^{i}=H_{\bm{x},\Aty}\bAty_{i},\\
		\bm{y}^{i} & =H_{\bm{y},\Ax}\widetilde{\mathbf{y}}_{i},\,\gy_{i}=H_{\bm{y},\Ax}\bgy_{i},\,\Ax^{i}=H_{\bm{y},\Ax}\Ax^{i},
	\end{align*}
	leading to the identities: 
\begin{align*}
 & \frac{1}{T}\sum_{i=1}^{T}\left((\bm{q}^{-1}){}^{\top}\bm{x}^{i}-(\bm{y}^{i})^{\top}\bm{p}^{-1}\right)=\frac{1}{T}\sum_{i=1}^{T}\Big(\mathop{\textbf{tr}}G_{\bm{x},\Aty}\odot(\bAty_{-1},\widetilde{\mathbf{x}}_{i})-\mathop{\textbf{tr}}G_{\bm{y},\Ax}\odot(\widetilde{\mathbf{y}}_{i},\bAx_{-1})\Big)\\
 & \gx_{j}^{\top}(\bm{x}^{i}-\bm{x}^{j})=\mathop{\textbf{tr}}G_{\bm{x},\Aty}\odot(\bgx_{j},\widetilde{\mathbf{x}}_{i}-\widetilde{\mathbf{x}}_{j}),\;\gy_{j}^{\top}(\bm{y}^{i}-\bm{y}^{j})=\mathop{\textbf{tr}}G_{\bm{y},\Ax}\odot(\bgy_{j},\widetilde{\mathbf{y}}_{i}-\widetilde{\mathbf{y}}_{j}),\\
 & \|\bm{x}^{i}\|_{2}^{2}=\mathop{\textbf{tr}}G_{\bm{x},\Aty}\odot(\widetilde{\mathbf{x}}_{i},\widetilde{\mathbf{x}}_{i}),\;\|\bm{y}^{i}\|_{2}^{2}=\mathop{\textbf{tr}}G_{\bm{y},\Ax}\odot(\widetilde{\mathbf{y}}_{i},\widetilde{\mathbf{y}}_{i}),\\
 & (\bm{x}^{i}){}^{\top}\Aty^{j}-(\bm{p}^{i}){}^{\top}\bm{y}^{j}=\mathop{\textbf{tr}}G_{\bm{x},\Aty}\odot(\widetilde{\mathbf{x}}_{i}\odot\bAty_{j})-\mathop{\textbf{tr}}G_{\bm{y},\Ax}\odot(\bAx_{i},\widetilde{\mathbf{y}}_{j})\\
 & X^{\top}X-P^{\top}P=\mathbf{X}^{\top}G_{\bm{x},\Aty}\mathbf{X}-\mathbf{P}^{\top}G_{\bm{y},\Ax}\mathbf{P},\\
 & Y^{\top}Y-Q^{\top}Q=\mathbf{Y}^{\top}G_{\bm{y},\Ax}\mathbf{Y}-\mathbf{Q}^{\top}G_{\bm{x},\Aty}\mathbf{Q}.
\end{align*}
	Using these identities, we can formulate (\ref{eq:worst-case-inner-2})
	as the following semidefinite optimization problem in maximization
	form:
	
	\begin{equation}
		\mathcal{P}_{T}(\eta)=\left(\begin{array}{l}
			\underset{\substack{G_{\bm{x},\Aty}\in\mathbb{S}^{2T+6}\\
					G_{\bm{y},\Ax}\in\mathbb{S}^{2T+6}
				}
			}{\mbox{maximize}}\;\frac{1}{T}\sum_{i=1}^{T}\Big(\mathop{\textbf{tr}}G_{\bm{x},\Aty}\odot(\bAty_{-1},\widetilde{\mathbf{x}}_{i})-\mathop{\textbf{tr}}G_{\bm{y},\Ax}\odot(\widetilde{\mathbf{y}}_{i},\bAx_{-1})\Big)\\
			\textup{subject to}\\
			\mathop{\textbf{tr}}G_{\bm{x},\Aty}\odot(\bgx_{j},\widetilde{\mathbf{x}}_{i}-\widetilde{\mathbf{x}}_{j})\leq0,\quad i,j\in\mathcal{I}_{T},\\
			\mathop{\textbf{tr}}G_{\bm{x},\Aty}\odot(\widetilde{\mathbf{x}}_{i},\widetilde{\mathbf{x}}_{i})-1\leq0,\quad i\in\mathcal{I}_{T},\\
			\mathop{\textbf{tr}}G_{\bm{y},\Ax}\odot(\bgy_{j},\widetilde{\mathbf{y}}_{i}-\widetilde{\mathbf{y}}_{j})\leq0,\quad i,j\in\mathcal{I}_{T},\\
			\mathop{\textbf{tr}}G_{\bm{y},\Ax}\odot(\widetilde{\mathbf{y}}_{i},\widetilde{\mathbf{y}}_{i})-1\leq0,\quad i\in\mathcal{I}_{T},\\
			\mathop{\textbf{tr}}G_{\bm{x},\Aty}\odot(\widetilde{\mathbf{x}}_{i},\bAty_{j})-\mathop{\textbf{tr}}G_{\bm{y},\Ax}\odot(\bAx_{i},\widetilde{\mathbf{y}}_{j})=0,\quad i,j\in\mathcal{I}_{T},\\
			\mathbf{X}^{\top}G_{\bm{x},\Aty}\mathbf{X}-\mathbf{P}^{\top}G_{\bm{y},\Ax}\mathbf{P}\succeq0,\\
			\mathbf{Y}^{\top}G_{\bm{y},\Ax}\mathbf{Y}-\mathbf{Q}^{\top}G_{\bm{x},\Aty}\mathbf{Q}\succeq0,\\
			G_{\bm{x},\Aty}\succeq0, 
			G_{\bm{y},\Ax}\succeq0.
		\end{array}\right)\label{eq:sdp-orig}
	\end{equation}
	Note that this formulation does not contain dimensions $m,n$ anymore
	and is a tractable convex problem that can solved to global optimality
	to compute the convergence bound of AltGDA numerically for a given
	$\eta$ and finite $T$.
	
	\subsection{Detailed Numerical Results}
	\label{app:subsec:data}

    See~\cref{tab:altgda,tab:simgda} for the detailed data values for~\cref{fig:PEP}. 

    \begin{table}[H]
        \caption{Optimized stepsizes and duality gaps given a time horizon of $T$ for AltGDA}
        \label{tab:altgda}
        \begin{center}
            \begin{tabular}{ccc}
            \multicolumn{1}{c}{\bf $T$}  &\multicolumn{1}{c}{\bf Optimized $\eta$}  &\multicolumn{1}{c}{\bf Optimized Duality Gap}
            \\ \hline \\
            $5$ & $1.527$ & $0.614$ \\
            $6$ & $1.389$ & $0.555$ \\
            $7$ & $1.632$ & $0.488$ \\
            $8$ & $1.574$ & $0.411$ \\
            $9$ & $1.467$ & $0.371$ \\
            $10$ & $1.370$ & $0.345$ \\
            $11$ & $1.304$ & $0.327$ \\
            $12$ & $1.517$ & $0.302$ \\
            $13$ & $1.454$ & $0.274$ \\
            $14$ & $1.377$ & $0.256$ \\
            $15$ & $1.314$ & $0.243$ \\
            $16$ & $1.262$ & $0.233$ \\
            $17$ & $1.438$ & $0.220$ \\
            $18$ & $1.387$ & $0.207$ \\
            $19$ & $1.333$ & $0.196$ \\
            $20$ & $1.283$ & $0.188$ \\
            $21$ & $1.239$ & $0.181$ \\
            $22$ & $1.389$ & $0.174$ \\
            $23$ & $1.347$ & $0.166$ \\
            $24$ & $1.302$ & $0.159$ \\
            $25$ & $1.263$ & $0.153$ \\
            $26$ & $1.229$ & $0.149$ \\
            $27$ & $1.355$ & $0.144$ \\
            $28$ & $1.319$ & $0.139$ \\
            $29$ & $1.283$ & $0.134$ \\
            $30$ & $1.249$ & $0.130$ \\
            $31$ & $1.220$ & $0.126$ \\
            $32$ & $1.332$ & $0.123$ \\
            $33$ & $1.301$ & $0.119$ \\
            $34$ & $1.269$ & $0.116$ \\
            $35$ & $1.240$ & $0.112$ \\
            $36$ & $1.214$ & $0.110$ \\
            $37$ & $1.314$ & $0.107$ \\
            $38$ & $1.286$ & $0.104$ \\
            $39$ & $1.258$ & $0.102$ \\
            $40$ & $1.232$ & $0.099$ \\
            $41$ & $1.209$ & $0.097$ \\
            $42$ & $1.300$ & $0.095$ \\
            $43$ & $1.275$ & $0.093$ \\
            $44$ & $1.250$ & $0.091$ \\
            $45$ & $1.226$ & $0.089$ \\
            $46$ & $1.206$ & $0.087$ \\
            $47$ & $1.288$ & $0.086$ \\
            $48$ & $1.266$ & $0.084$ \\
            $49$ & $1.243$ & $0.082$ \\
            $50$ & $1.221$ & $0.080$ \\
            \end{tabular}
        \end{center}
    \end{table}
	
	\begin{table}[H]
        \caption{Optimized stepsizes and duality gaps given a time horizon of $T$ for SimGDA}
        \label{tab:simgda}
        \begin{center}
            \begin{tabular}{ccc}
            \multicolumn{1}{c}{\bf $T$}  &\multicolumn{1}{c}{\bf Optimized $\eta$}  &\multicolumn{1}{c}{\bf Optimized Duality Gap}
            \\ \hline \\
            $5$ & $1.989$ & $1.238$ \\
            $6$ & $1.450$ & $1.150$ \\
            $7$ & $1.165$ & $1.072$ \\
            $8$ & $1.018$ & $1.009$ \\
            $9$ & $0.877$ & $0.958$ \\
            $10$ & $0.769$ & $0.916$ \\
            $11$ & $0.684$ & $0.880$ \\
            $12$ & $0.616$ & $0.850$ \\
            $13$ & $0.567$ & $0.823$ \\
            $14$ & $0.527$ & $0.801$ \\
            $15$ & $0.492$ & $0.781$ \\
            $16$ & $0.466$ & $0.763$ \\
            $17$ & $0.440$ & $0.747$ \\
            $18$ & $0.417$ & $0.733$ \\
            $19$ & $0.398$ & $0.721$ \\
            $20$ & $0.379$ & $0.710$ \\
            $21$ & $0.362$ & $0.699$ \\
            $22$ & $0.347$ & $0.690$ \\
            $23$ & $0.333$ & $0.681$ \\
            $24$ & $0.320$ & $0.673$ \\
            $25$ & $0.308$ & $0.665$ \\
            $26$ & $0.298$ & $0.658$ \\
            $27$ & $0.487$ & $0.654$ \\
            $28$ & $0.472$ & $0.643$ \\
            $29$ & $0.456$ & $0.633$ \\
            $30$ & $0.443$ & $0.623$ \\
            $31$ & $0.431$ & $0.613$ \\
            $32$ & $0.416$ & $0.604$ \\
            $33$ & $0.406$ & $0.596$ \\
            $34$ & $0.394$ & $0.588$ \\
            $35$ & $0.384$ & $0.580$ \\
            $36$ & $0.373$ & $0.573$ \\
            $37$ & $0.363$ & $0.565$ \\
            $38$ & $0.353$ & $0.559$ \\
            $39$ & $0.345$ & $0.552$ \\
            $40$ & $0.335$ & $0.546$ \\
            $41$ & $0.326$ & $0.539$ \\
            $42$ & $0.318$ & $0.533$ \\
            $43$ & $0.310$ & $0.528$ \\
            $44$ & $0.303$ & $0.522$ \\
            $45$ & $0.296$ & $0.517$ \\
            $46$ & $0.289$ & $0.511$ \\
            $47$ & $0.284$ & $0.506$ \\
            $48$ & $0.278$ & $0.501$ \\
            $49$ & $0.272$ & $0.497$ \\
            $50$ & $0.266$ & $0.492$ \\
            \end{tabular}
        \end{center}
    \end{table}

\section{Additional Numerical Experiments}
\label{app:sec:numerical}

\subsection{Numerical performances: AltGDA versus SimGDA}

\begin{figure}[H]
    \centering
    \includegraphics[width=0.325\linewidth]{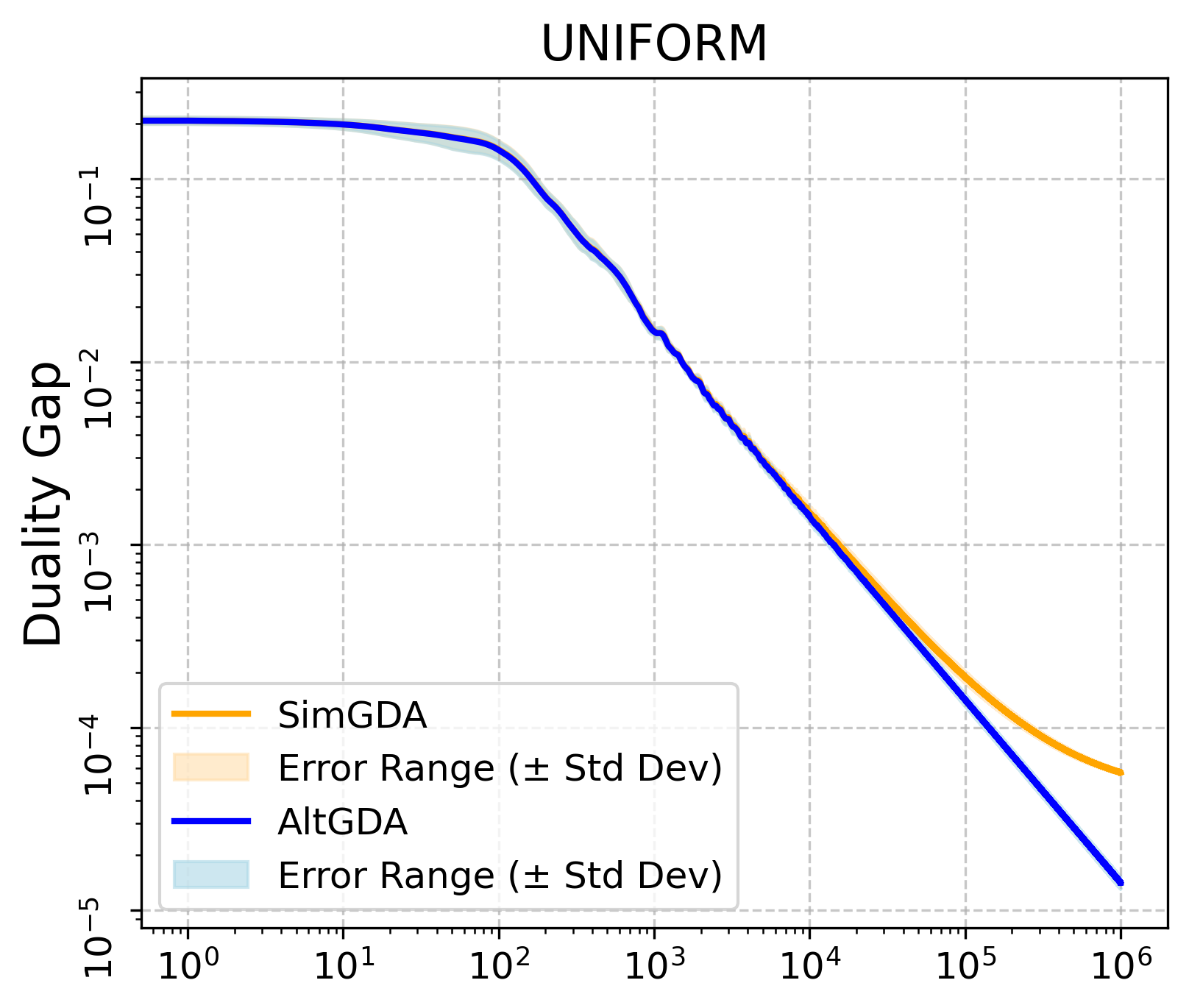}
    \includegraphics[width=0.325\linewidth]{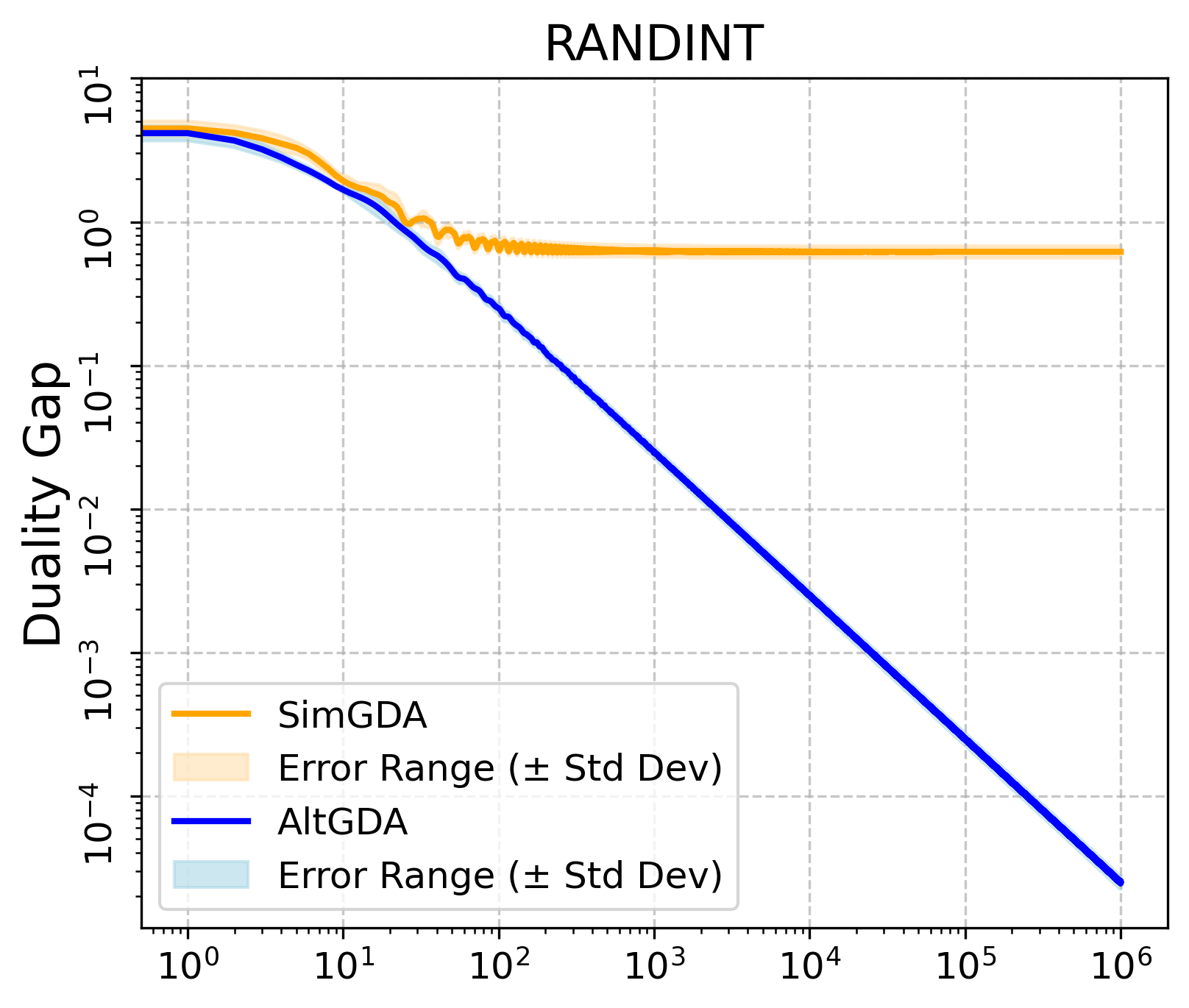}
    \includegraphics[width=0.325\linewidth]{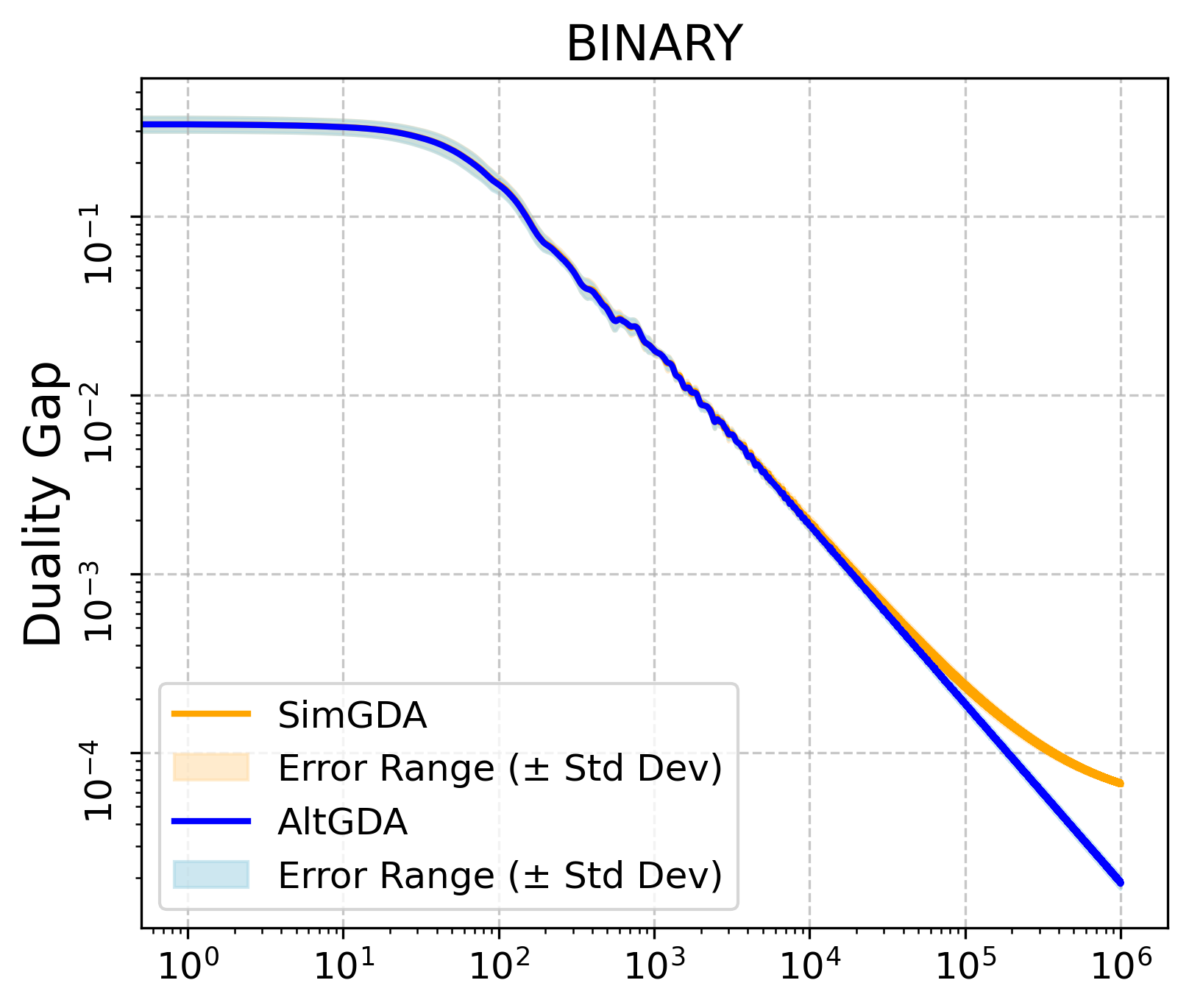}

    \includegraphics[width=0.325\linewidth]{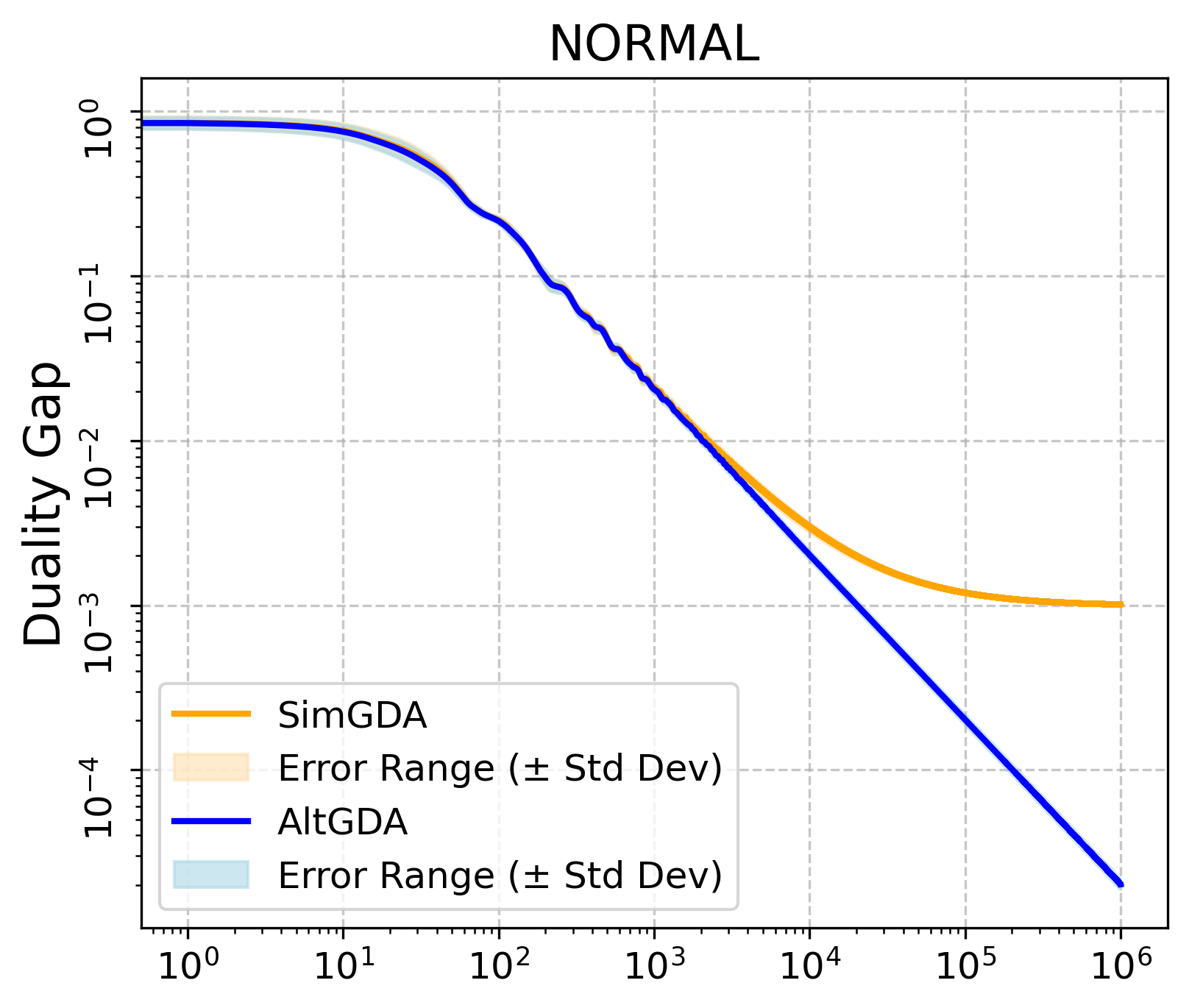}
    \includegraphics[width=0.325\linewidth]{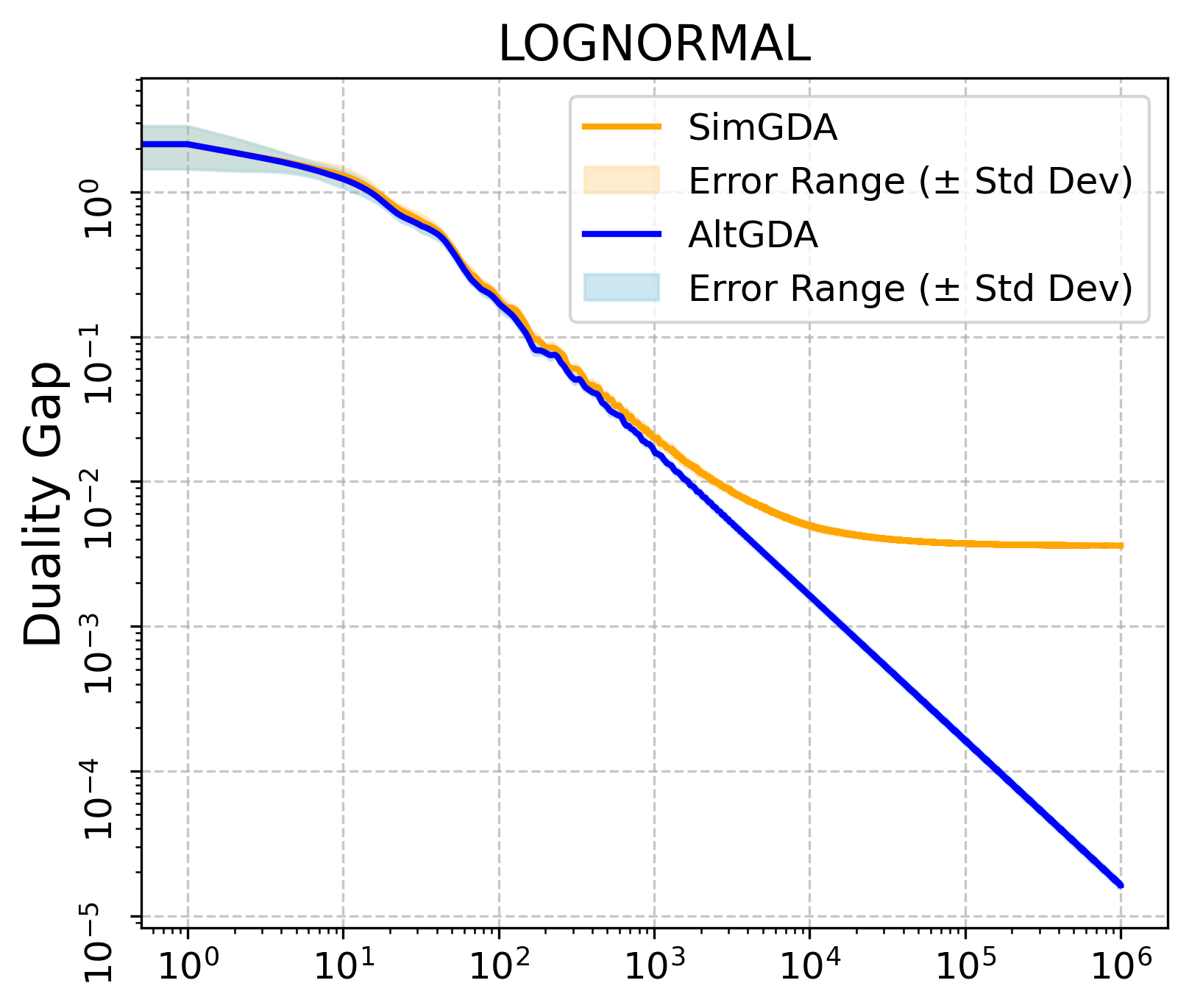}
    \includegraphics[width=0.325\linewidth]{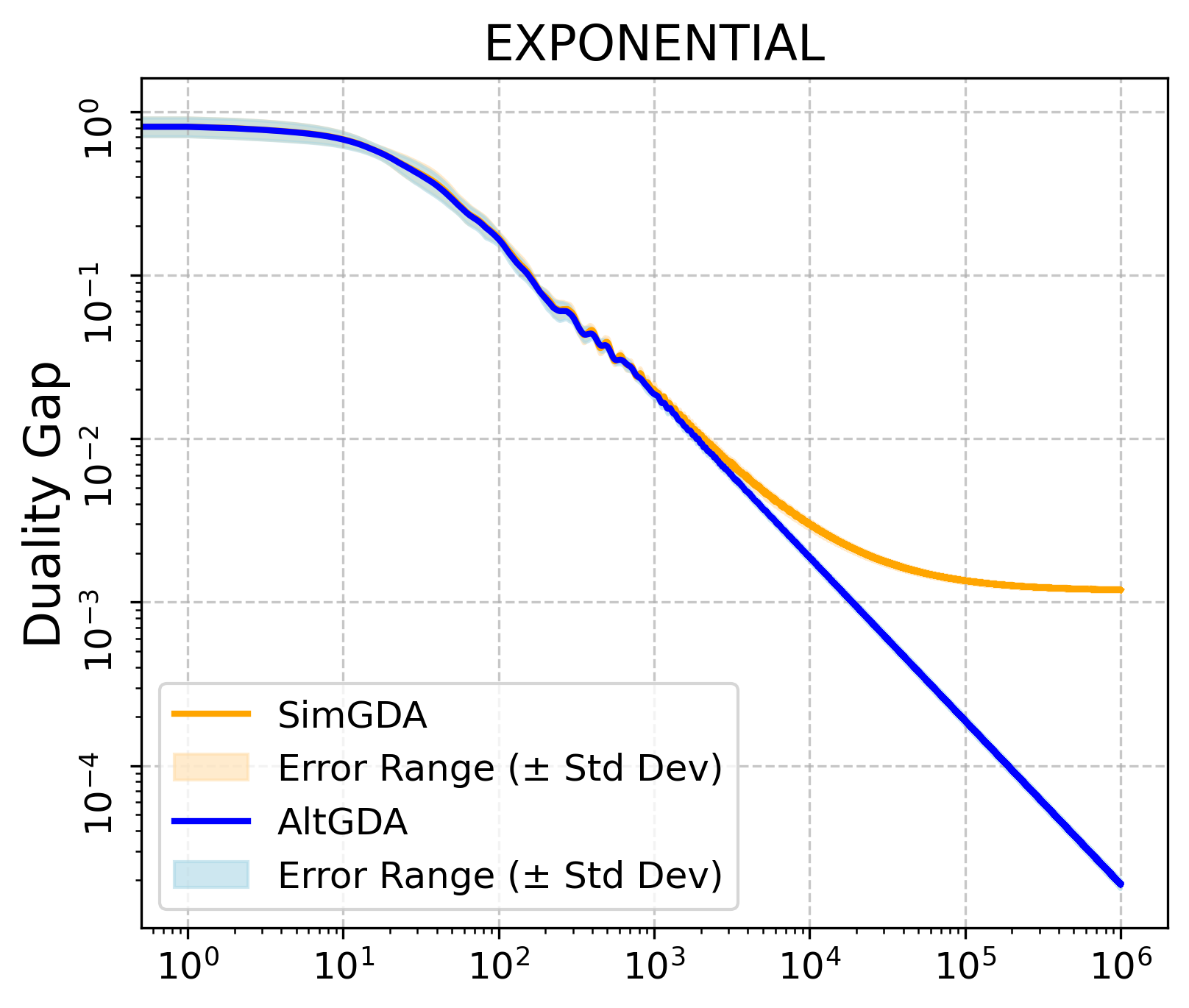}
    \caption{Numerical performances of AltGDA and SimGDA on $30 \times 60$ synthesized matrix games.}
\end{figure}

\begin{figure}[H]
    \centering
    \includegraphics[width=0.325\linewidth]{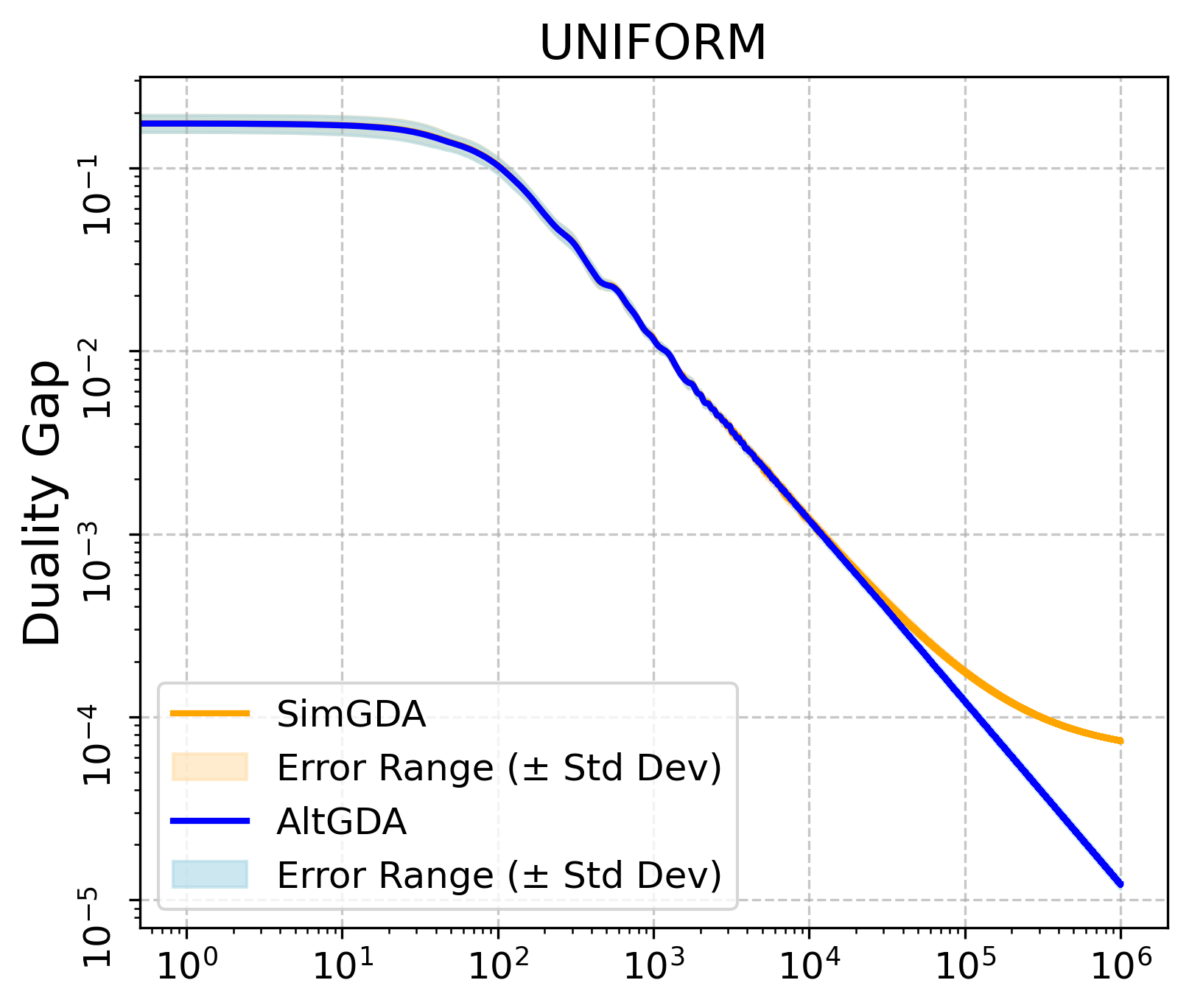}
    \includegraphics[width=0.325\linewidth]{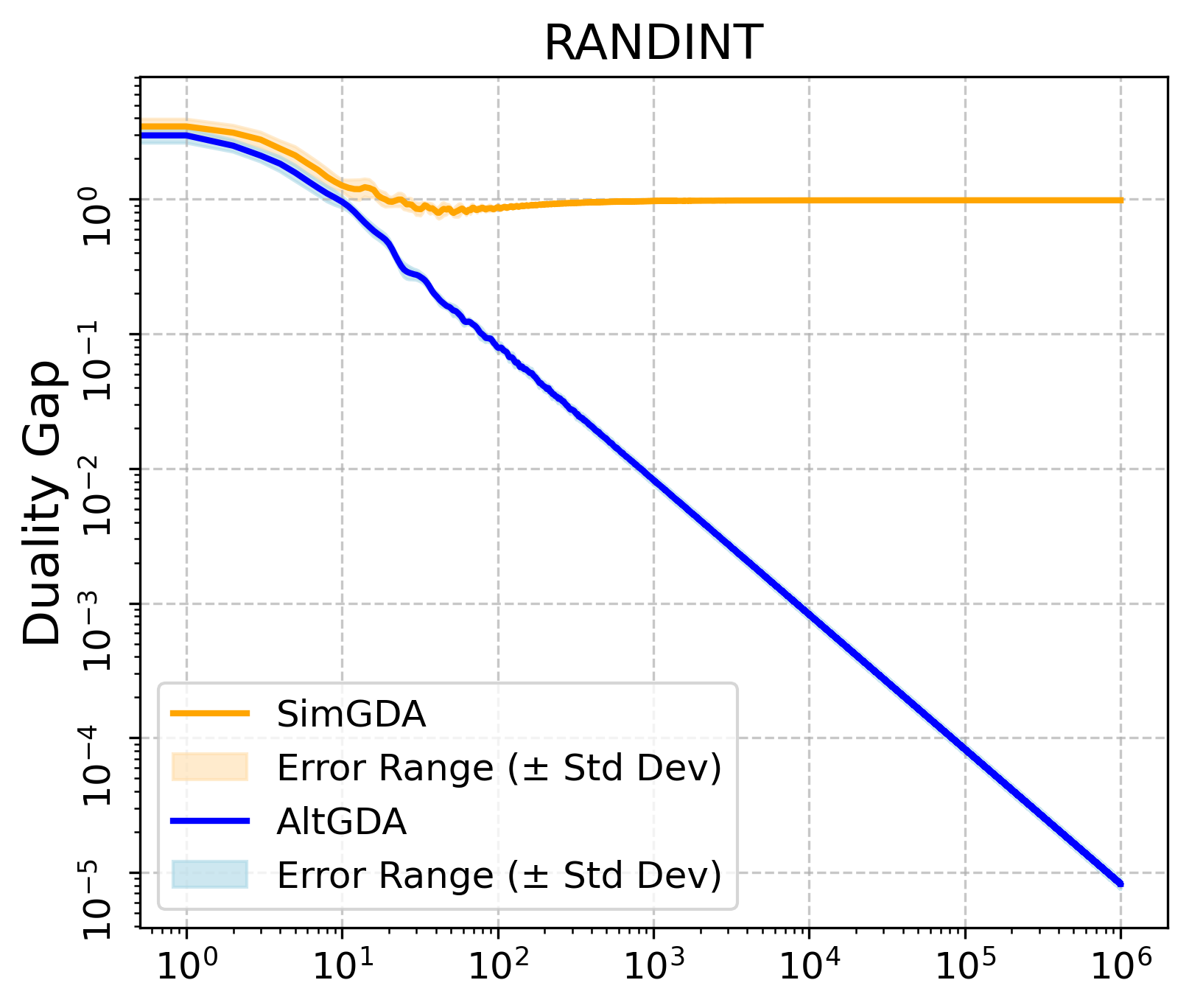}
    \includegraphics[width=0.325\linewidth]{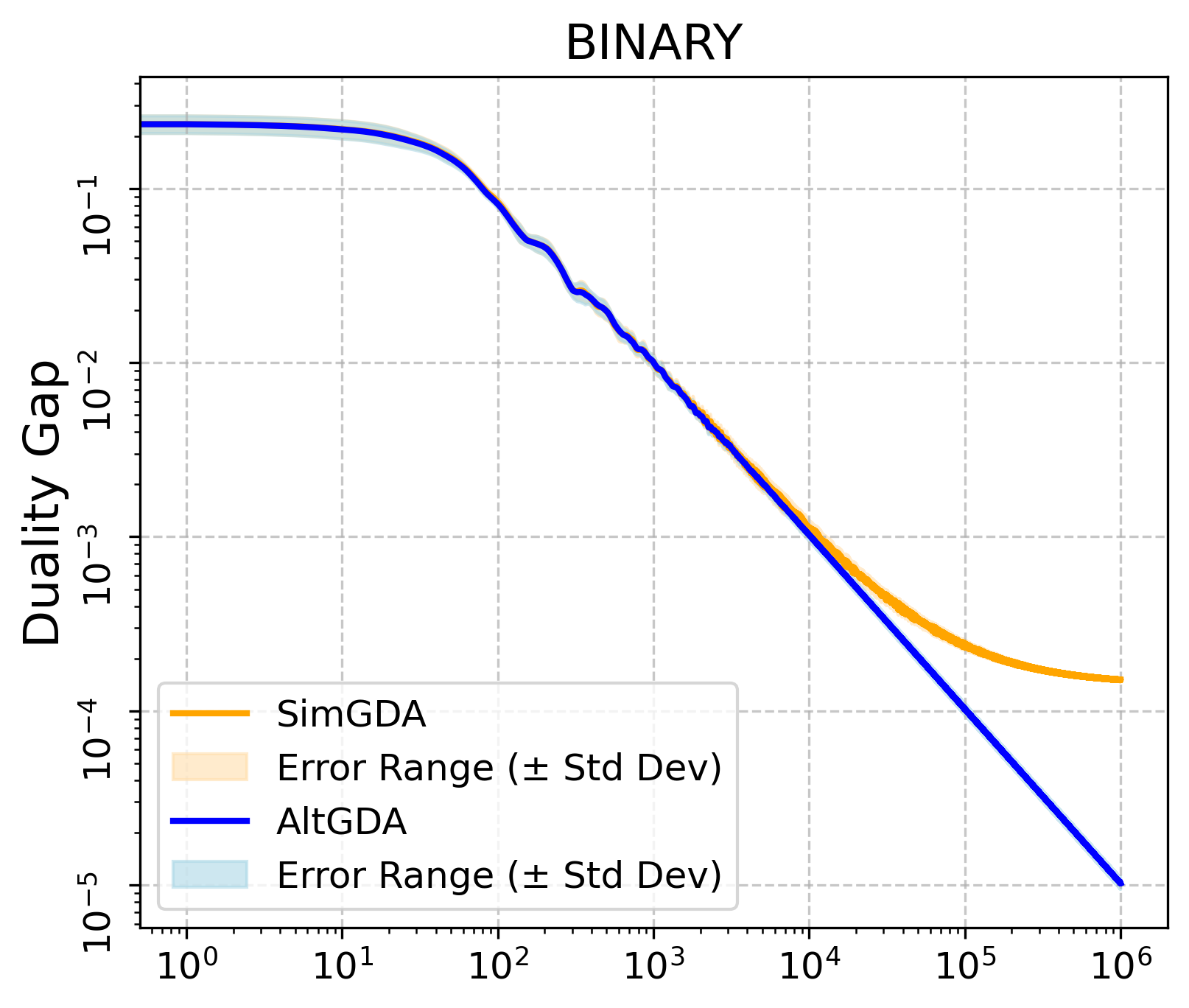}

    \includegraphics[width=0.325\linewidth]{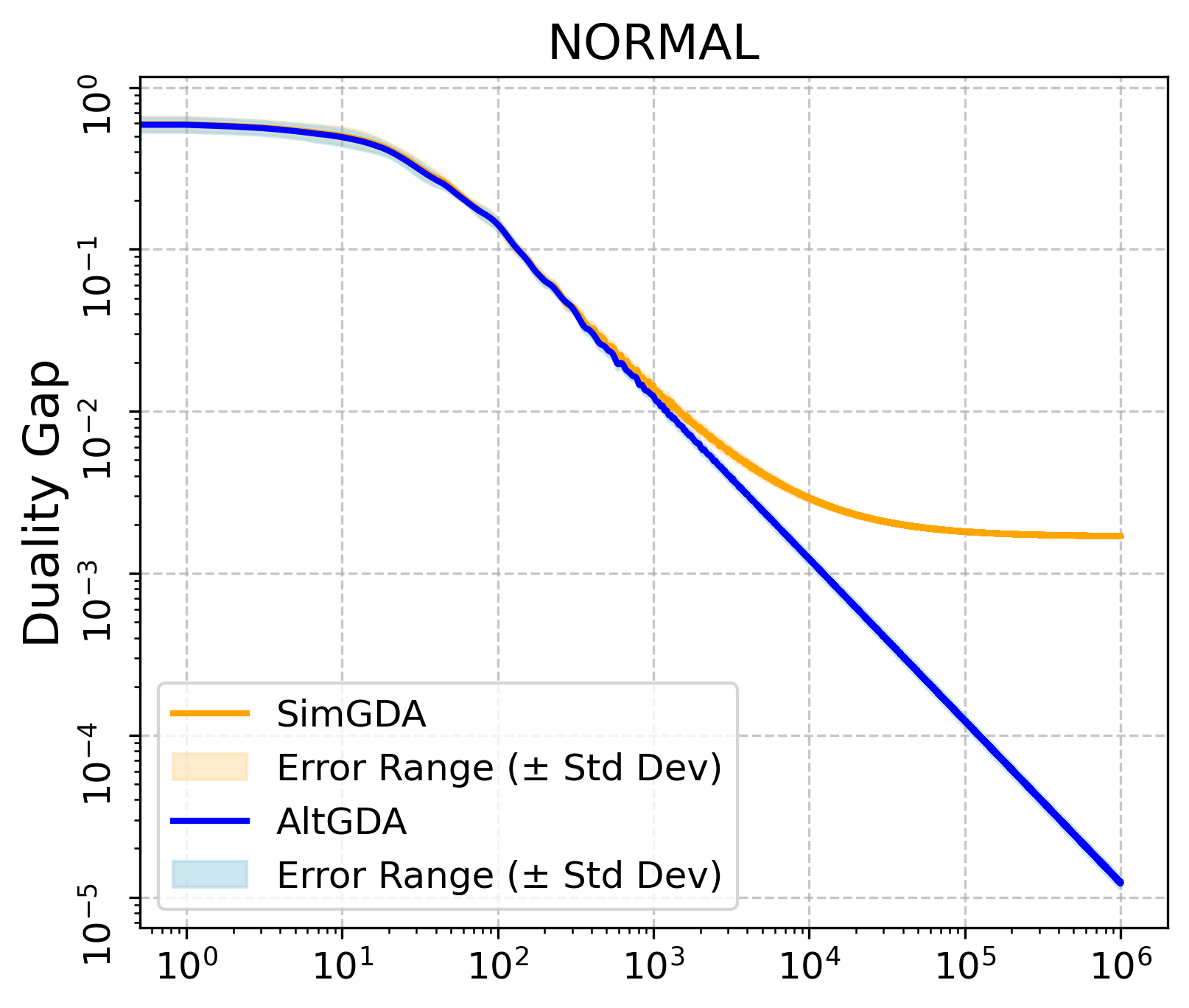}
    \includegraphics[width=0.325\linewidth]{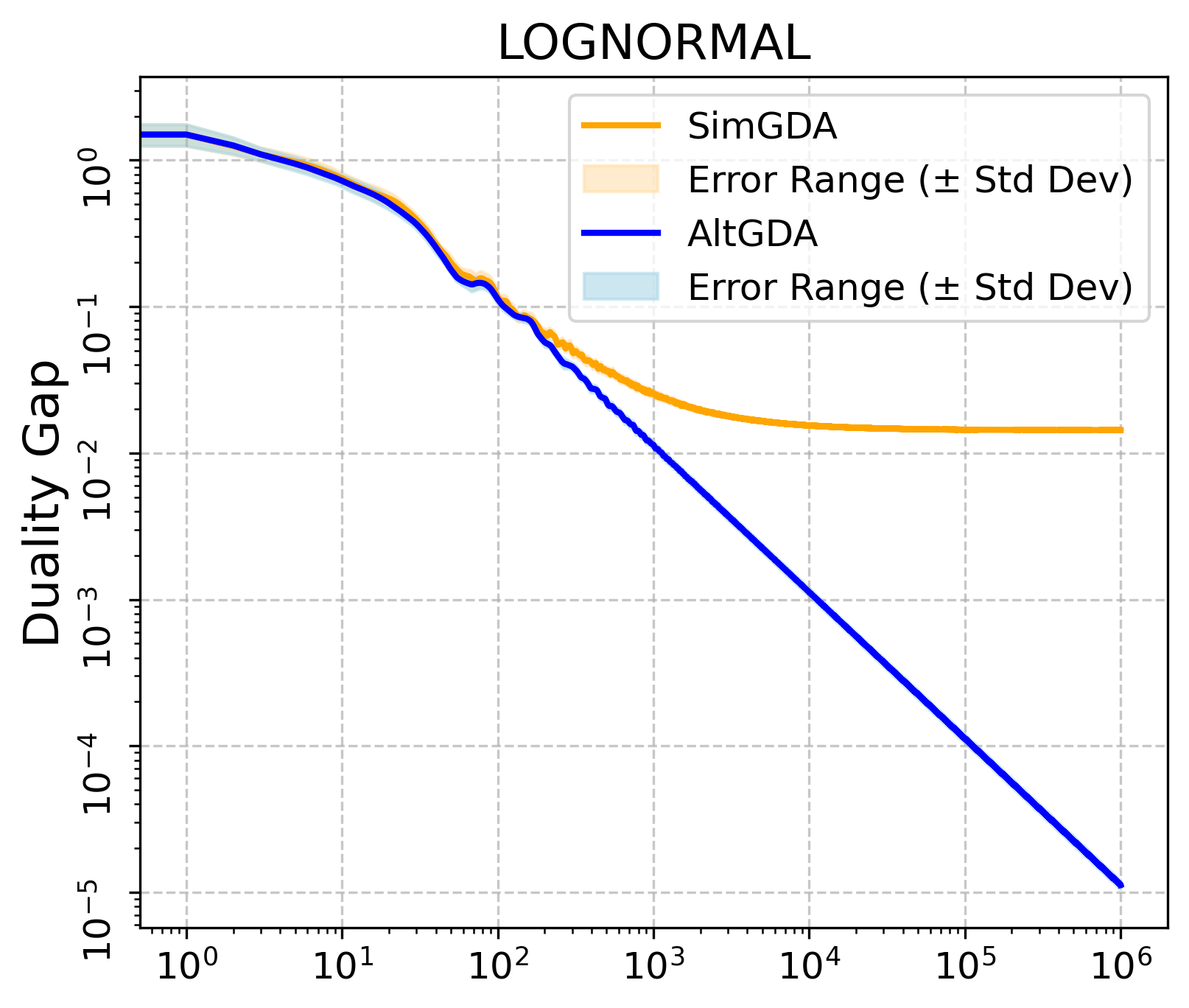}
    \includegraphics[width=0.325\linewidth]{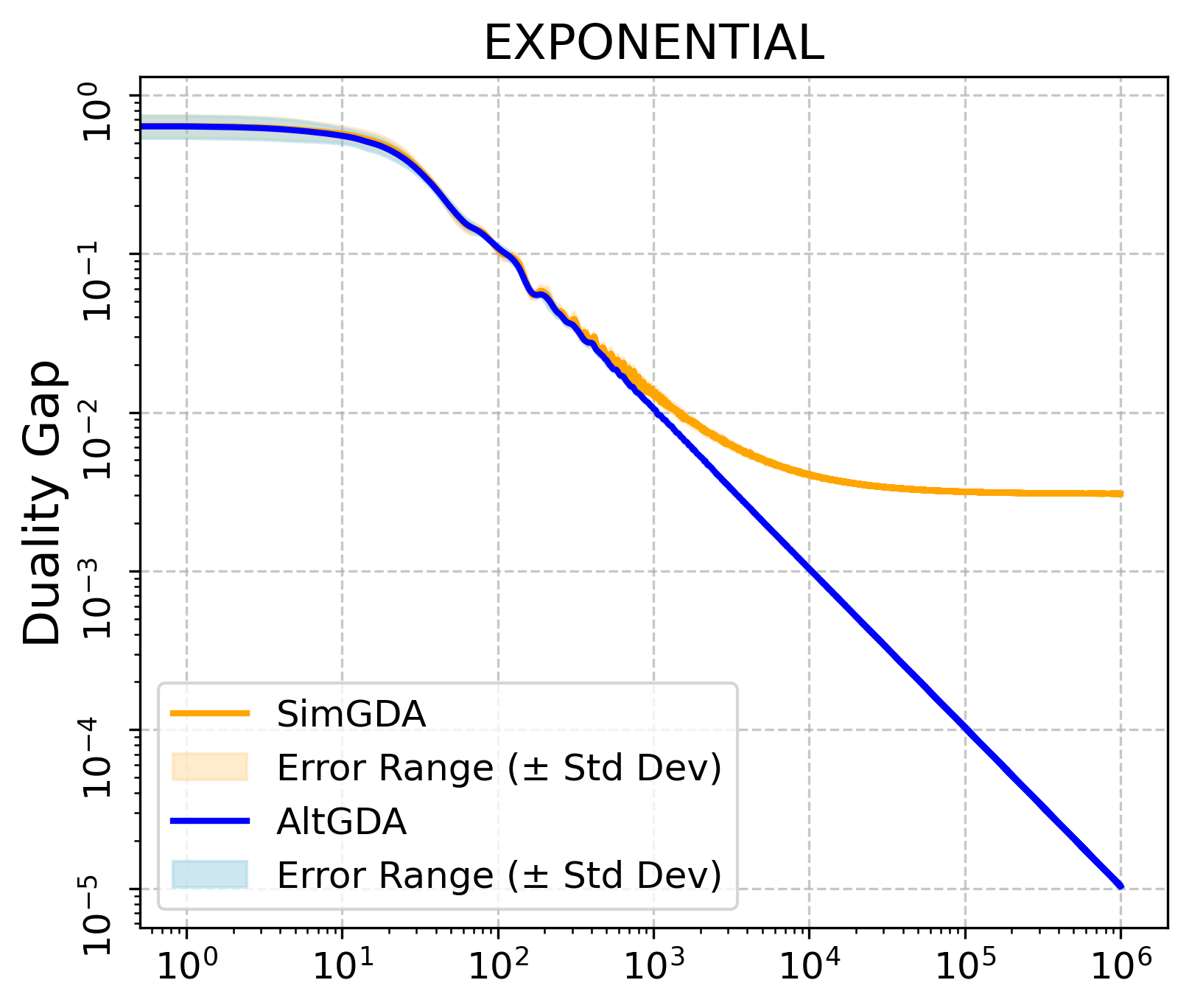}
    \caption{Numerical performances of AltGDA and SimGDA on $60 \times 120$ synthesized matrix games.}
\end{figure}

\newpage

\subsection{Numerical performances: AltGDA with different stepsizes}
\label{app:subsec:numerical-different-stepsizes}

We conduct the numerical experiments for AltGDA in the same setup as in the preceding subsection.
For each instance, we run AltGDA with three different stepsizes: $\eta = 0.001$, $0.01$, and $0.1$.

\begin{figure}[H]
    \centering
    \includegraphics[width=0.325\linewidth]{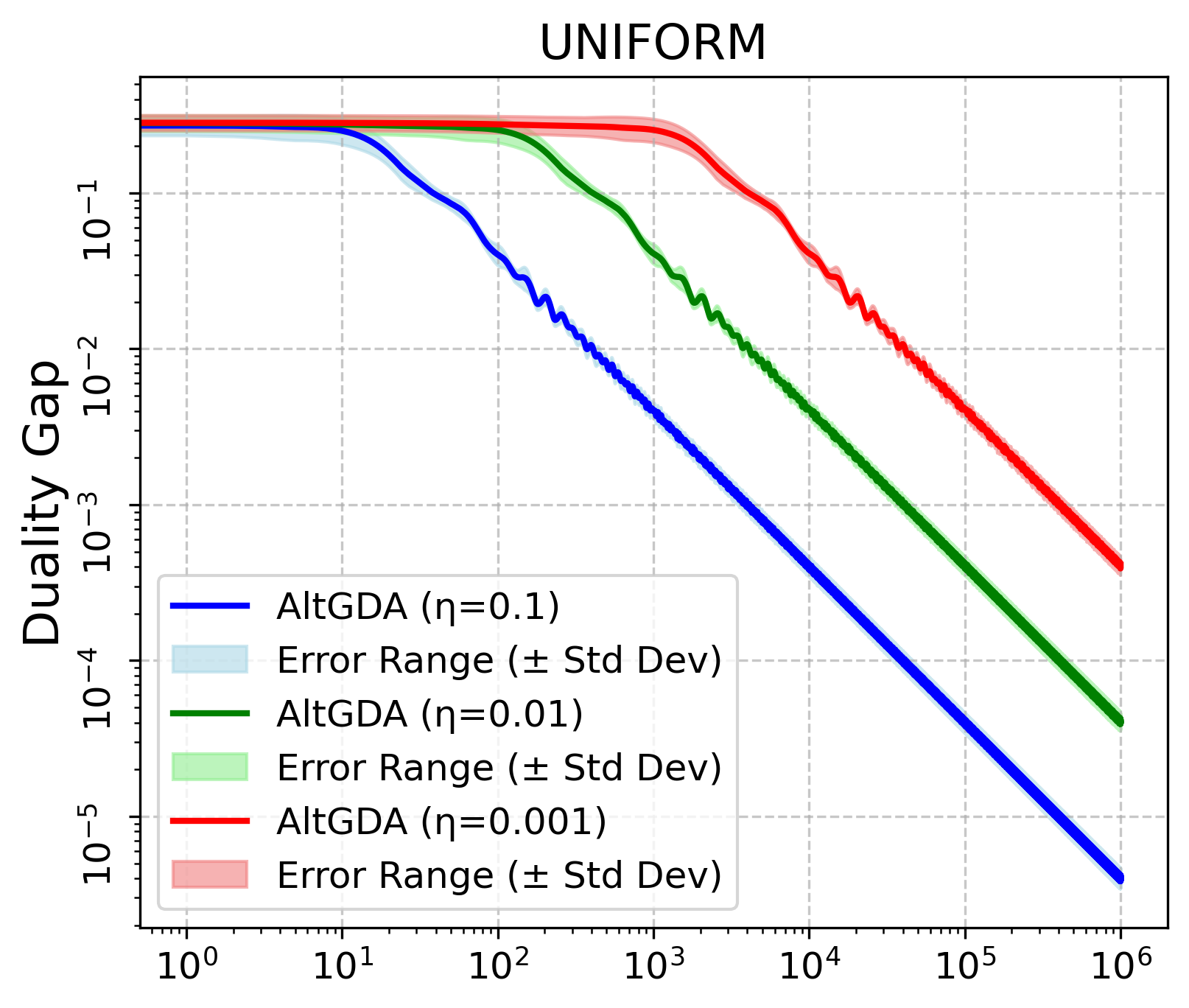}
    \includegraphics[width=0.325\linewidth]{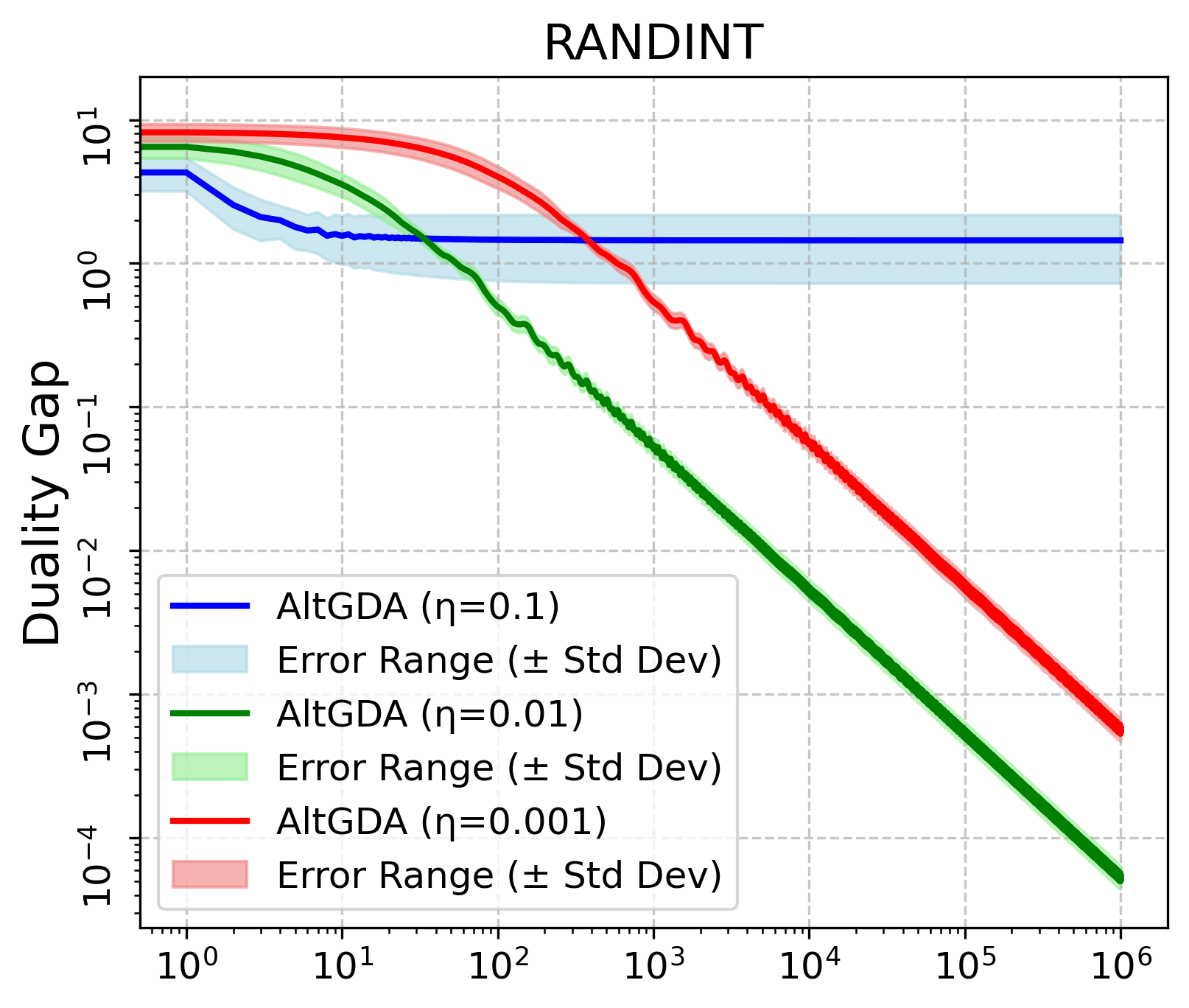}
    \includegraphics[width=0.325\linewidth]{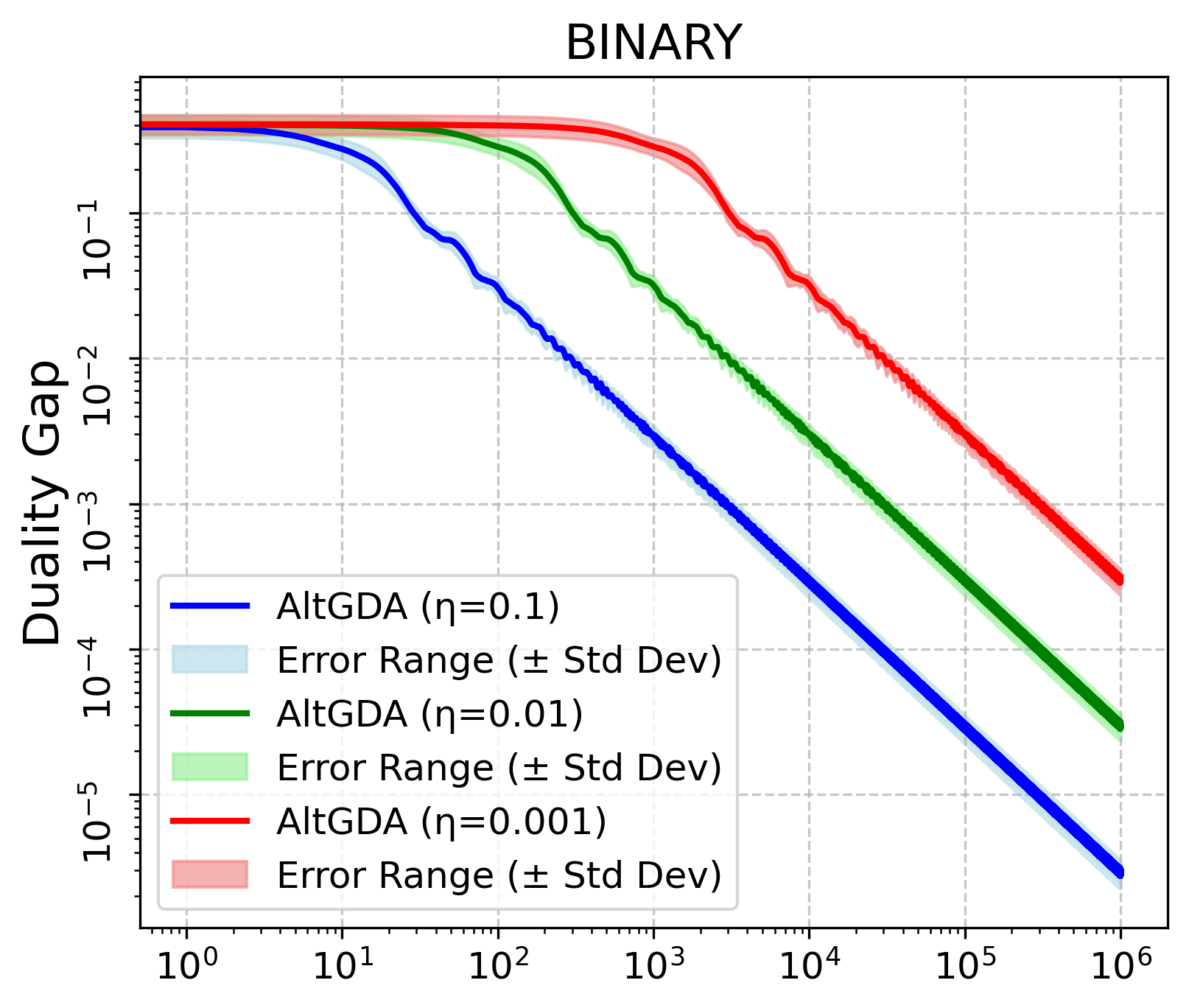}

    \includegraphics[width=0.325\linewidth]{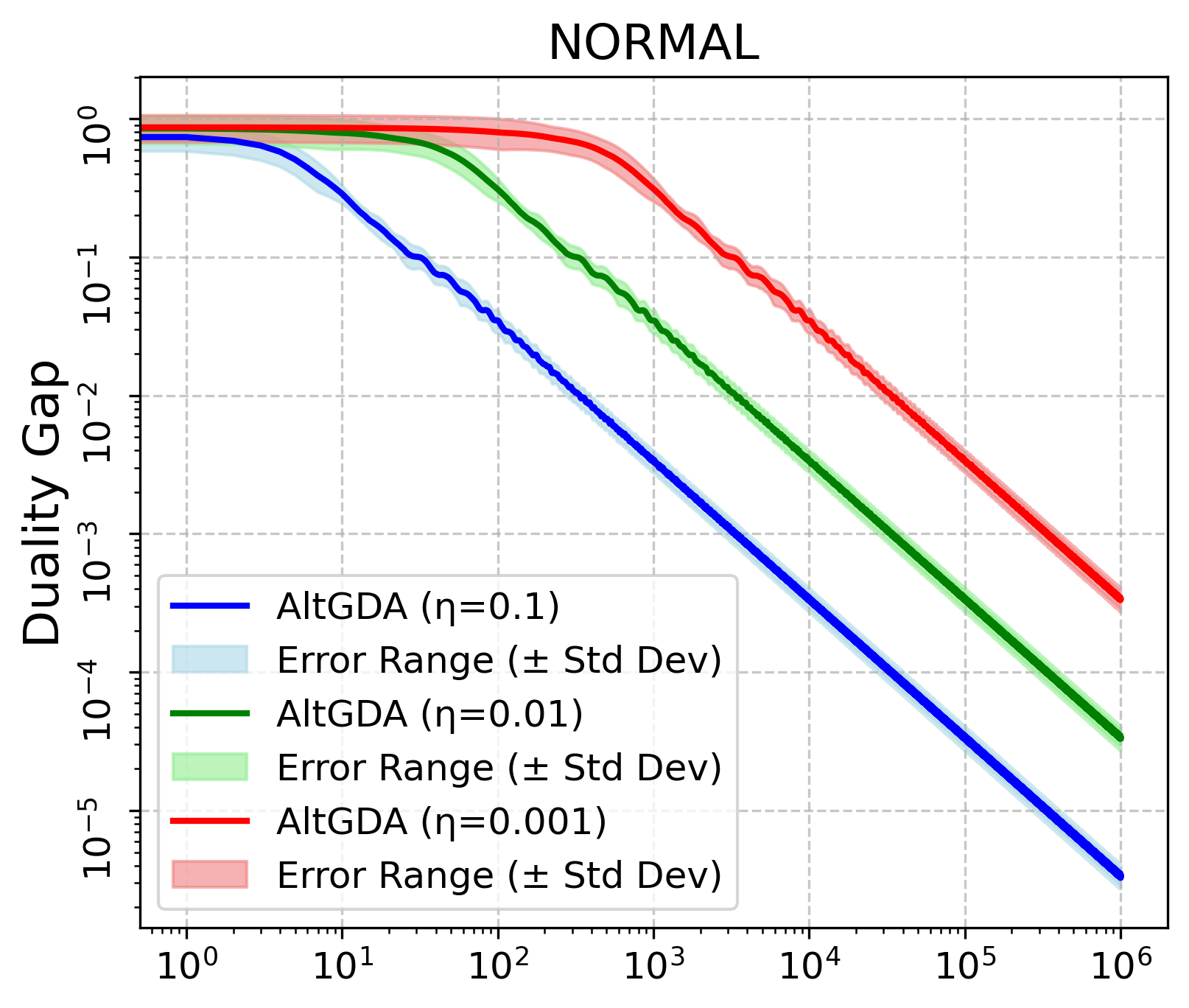}
    \includegraphics[width=0.325\linewidth]{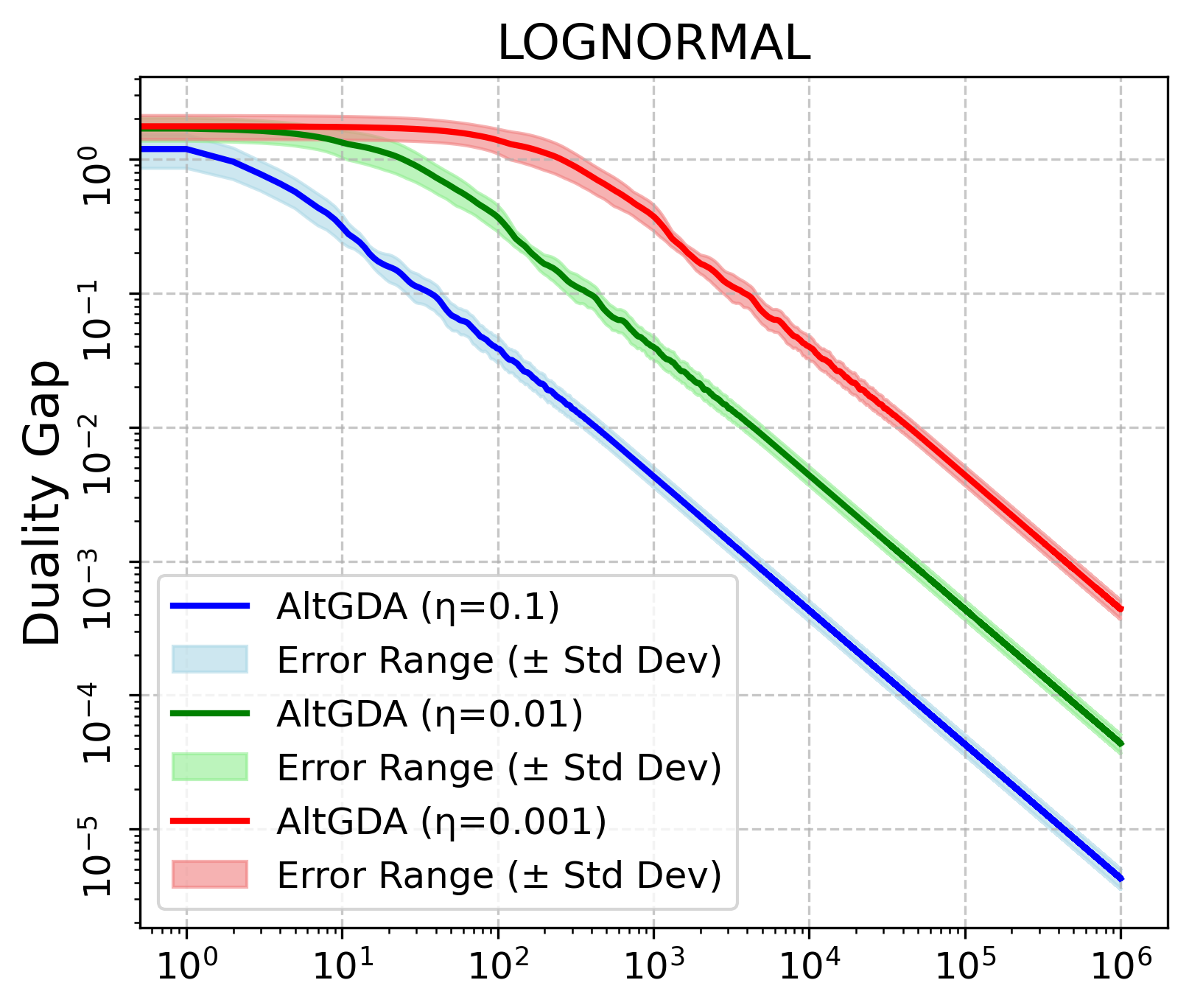}
    \includegraphics[width=0.325\linewidth]{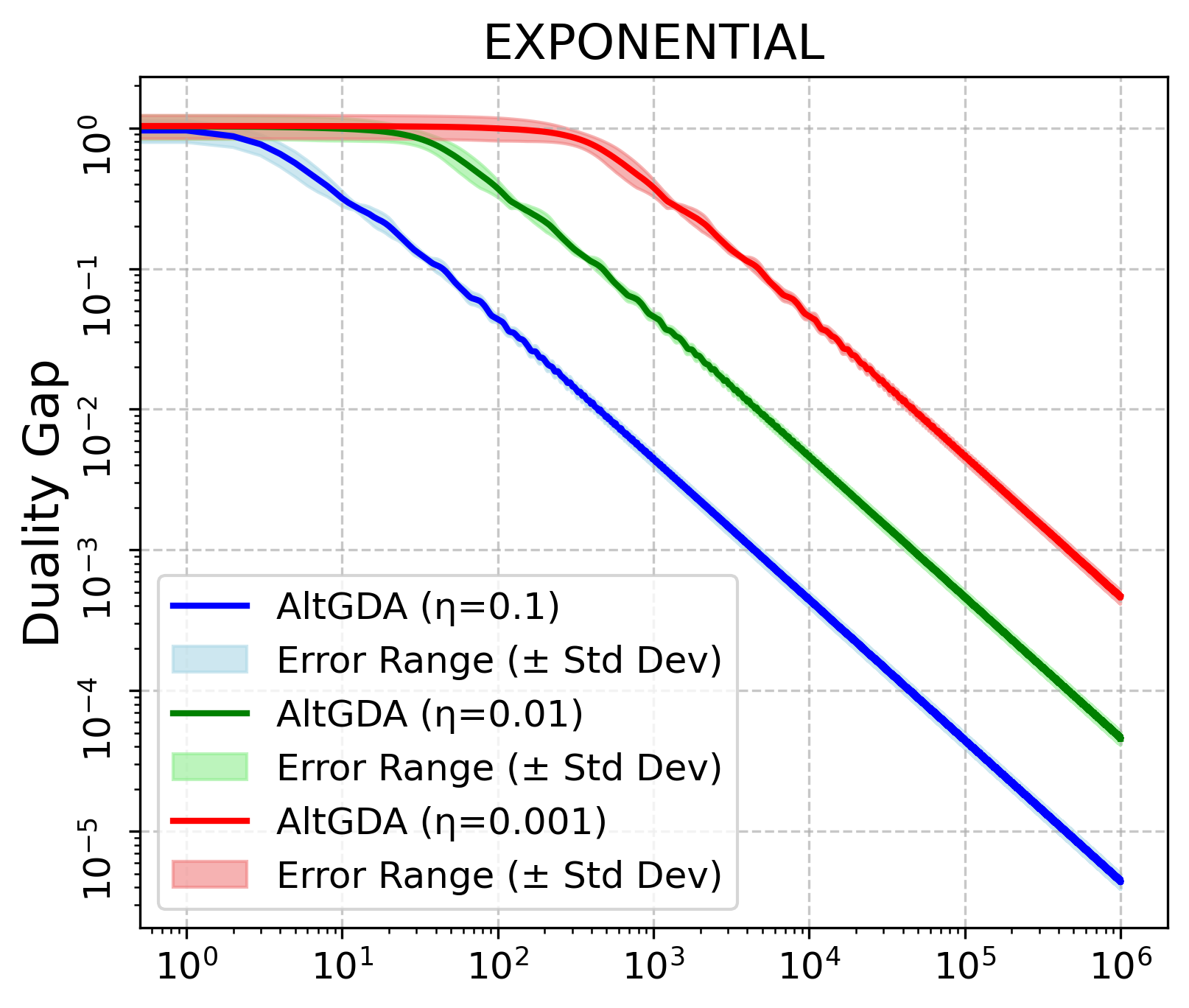}
    \caption{Numerical performances of AltGDA with different stepsizes on $10 \times 20$ synthesized matrix games.}
\end{figure}

\begin{figure}[H]
    \centering
    \includegraphics[width=0.325\linewidth]{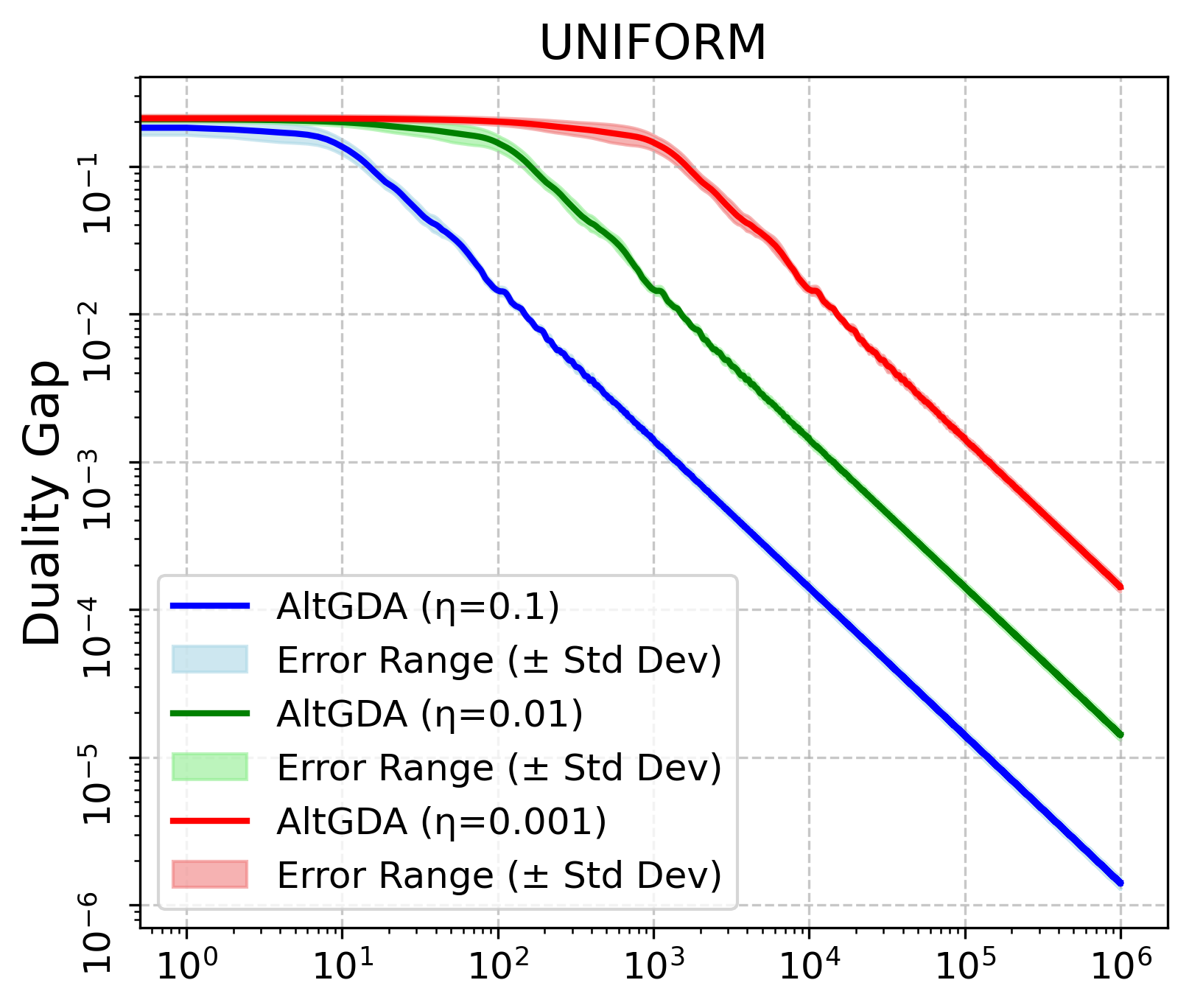}
    \includegraphics[width=0.325\linewidth]{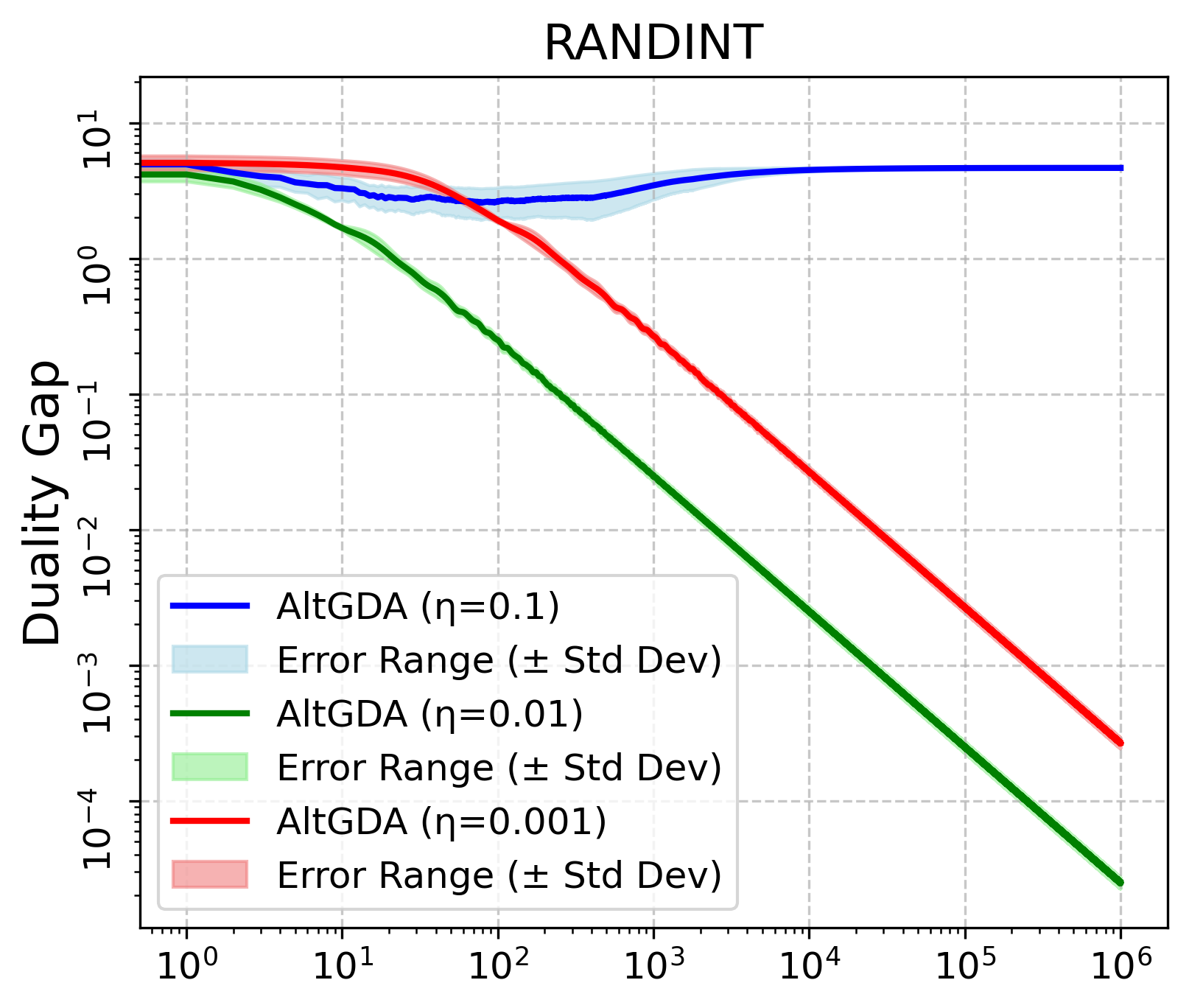}
    \includegraphics[width=0.325\linewidth]{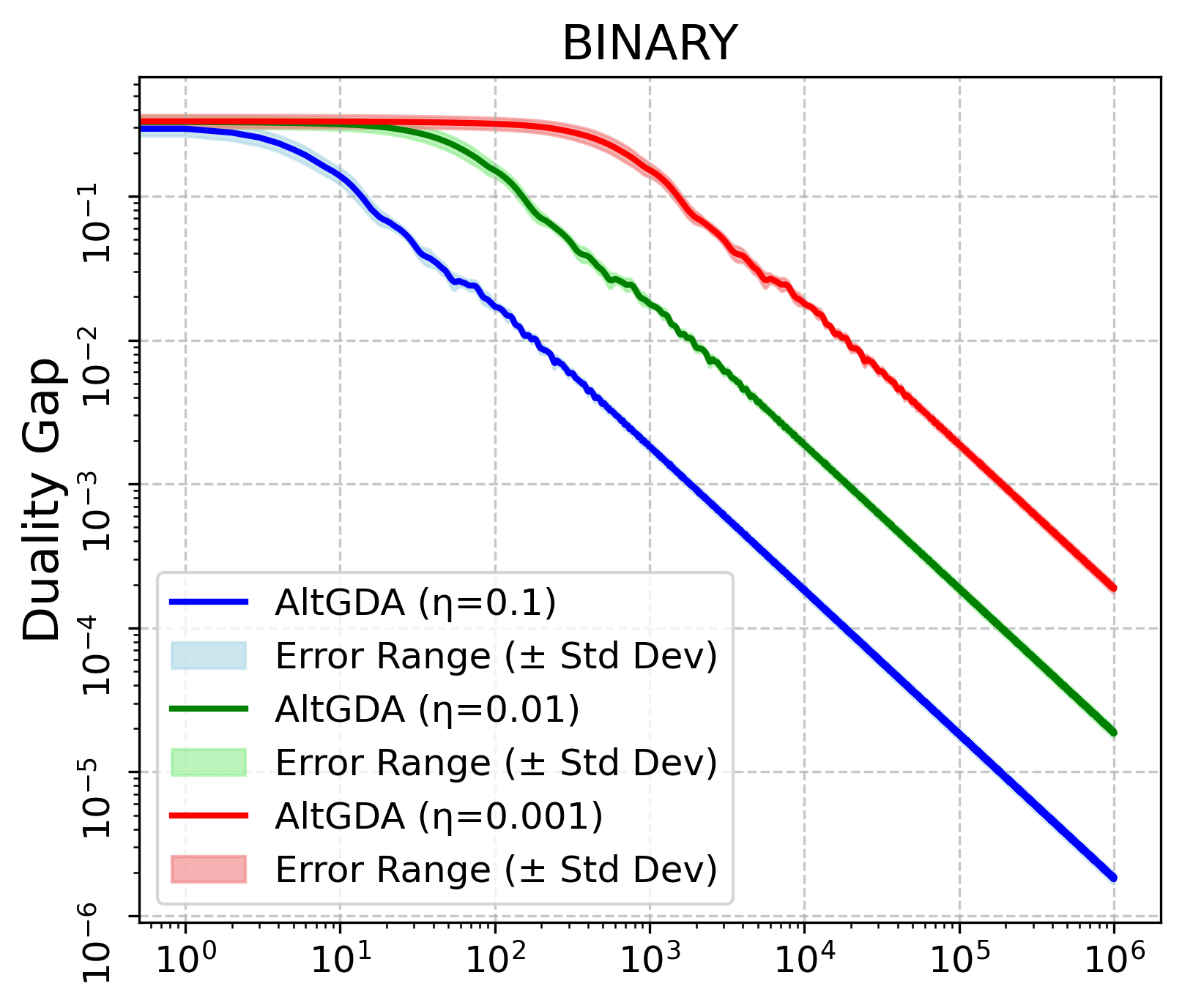}

    \includegraphics[width=0.325\linewidth]{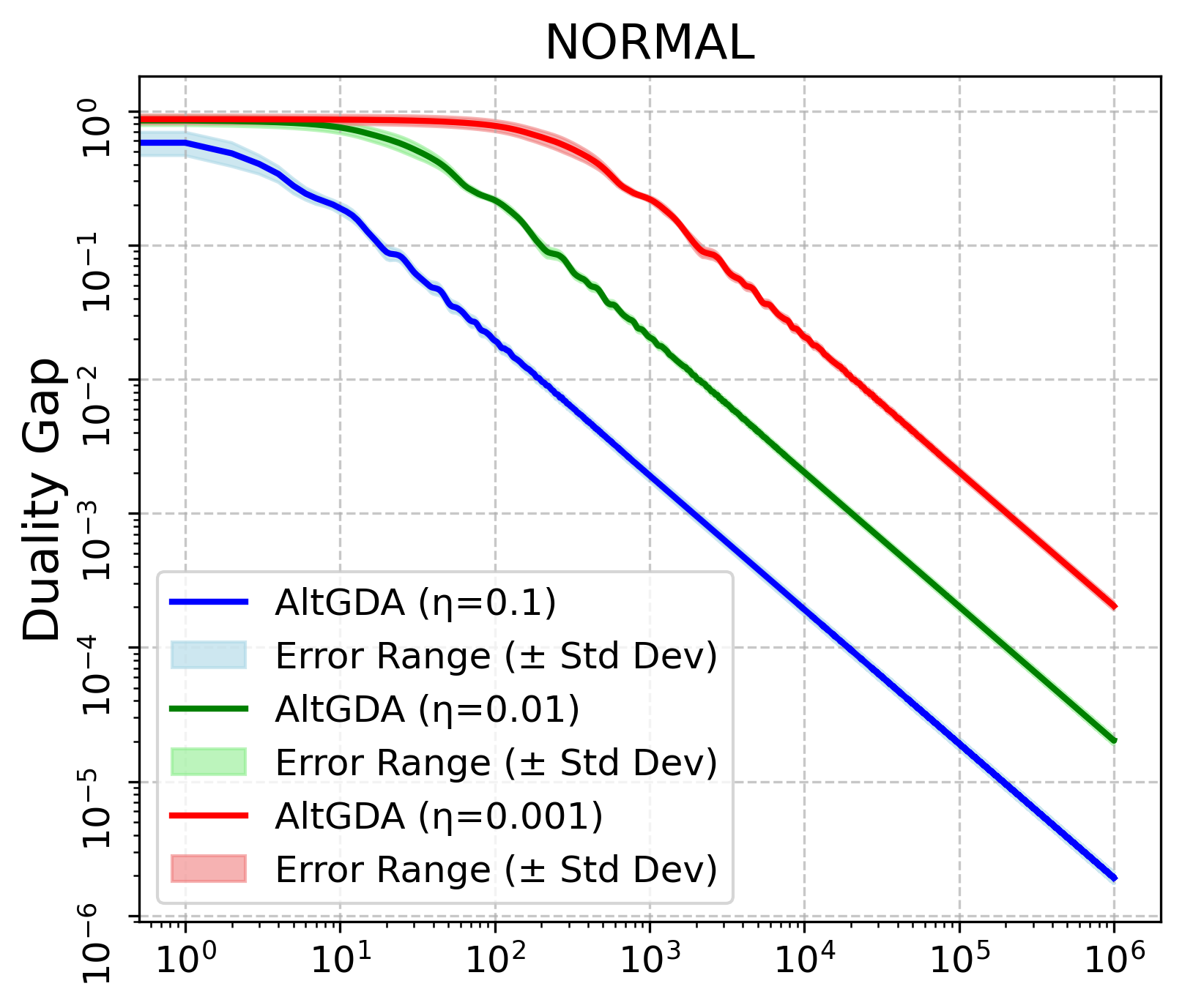}
    \includegraphics[width=0.325\linewidth]{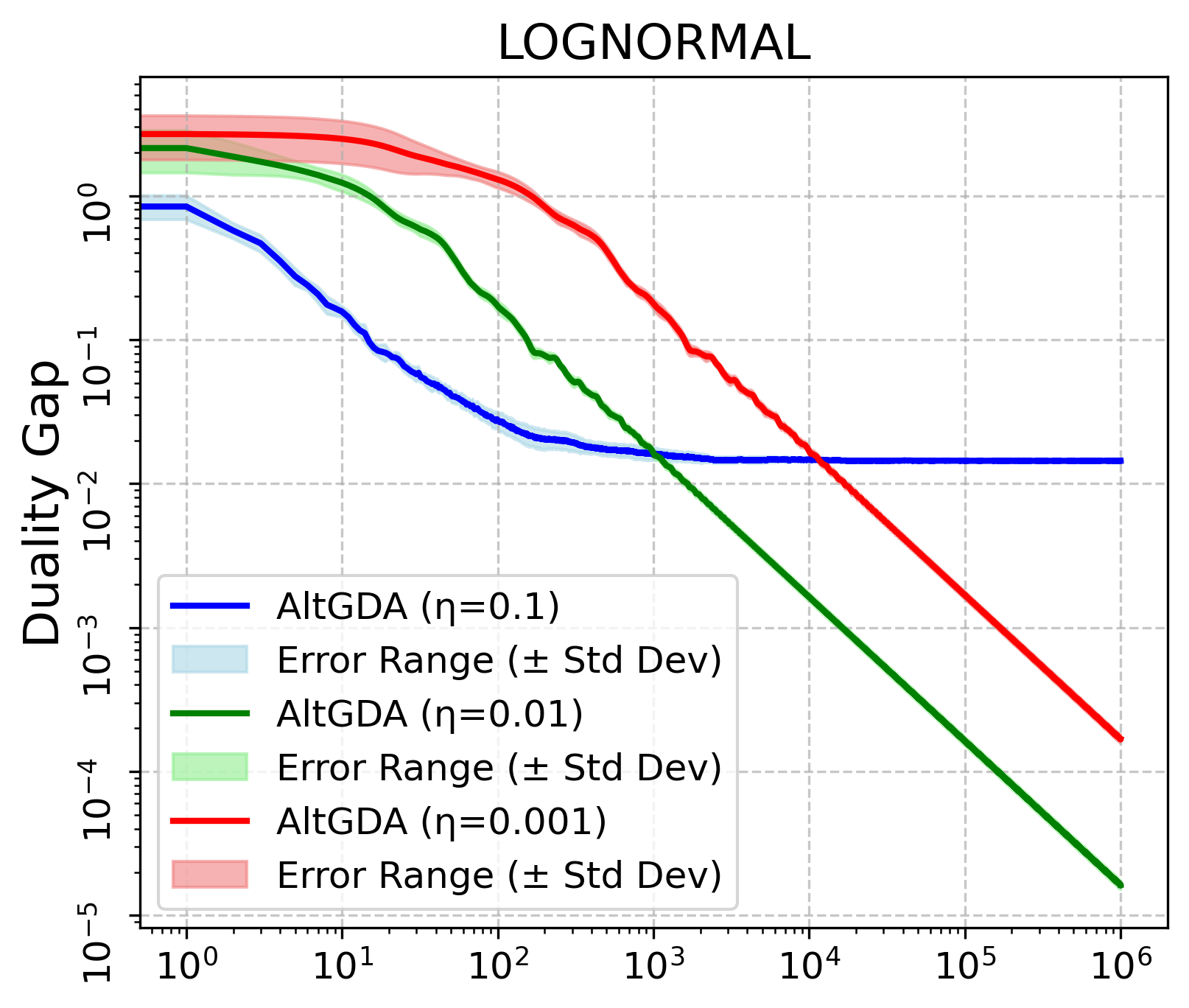}
    \includegraphics[width=0.325\linewidth]{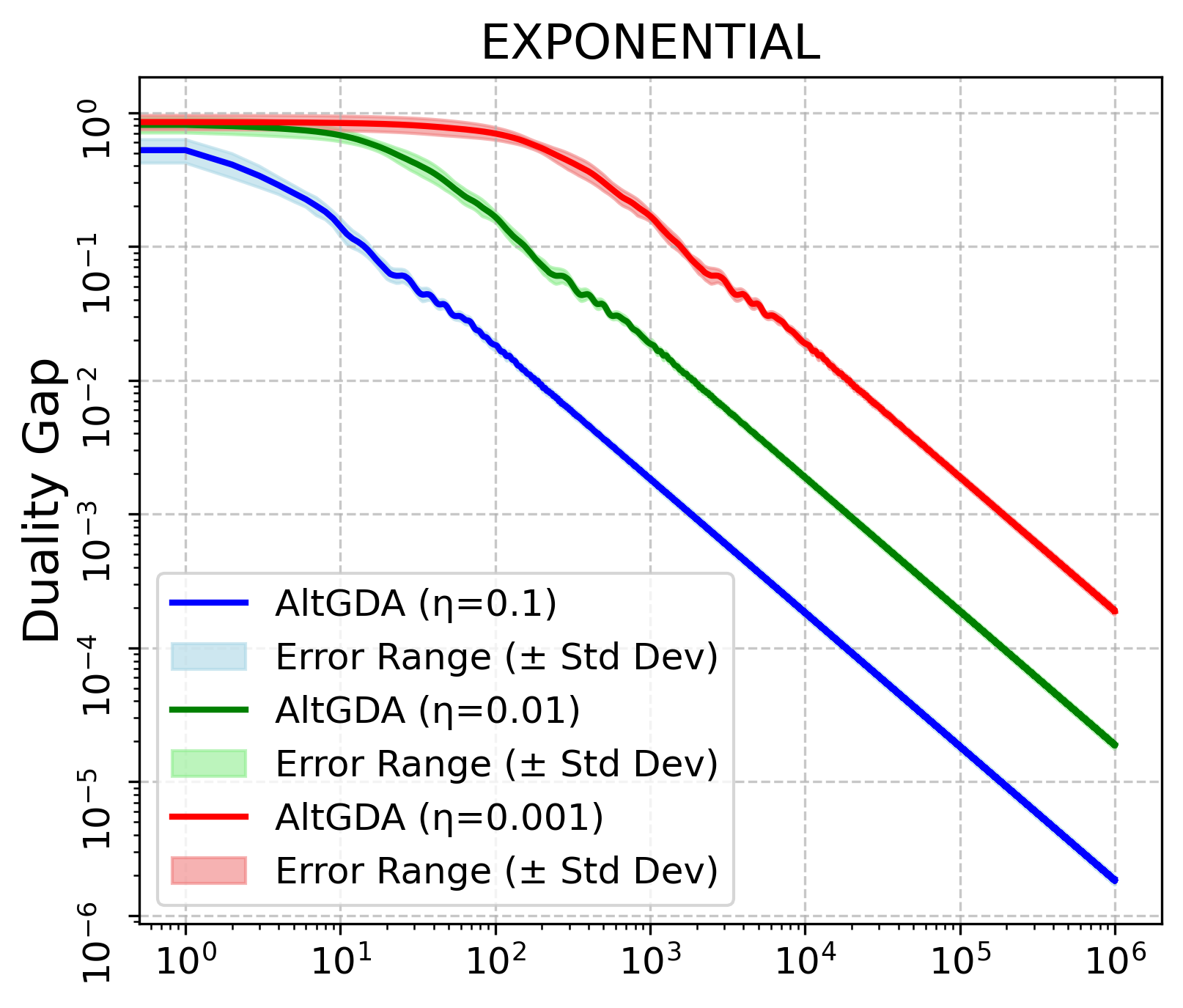}
    \caption{Numerical performances of AltGDA with different stepsizes on $30 \times 60$ synthesized matrix games.}
\end{figure}

\begin{figure}[H]
    \centering
    \includegraphics[width=0.325\linewidth]{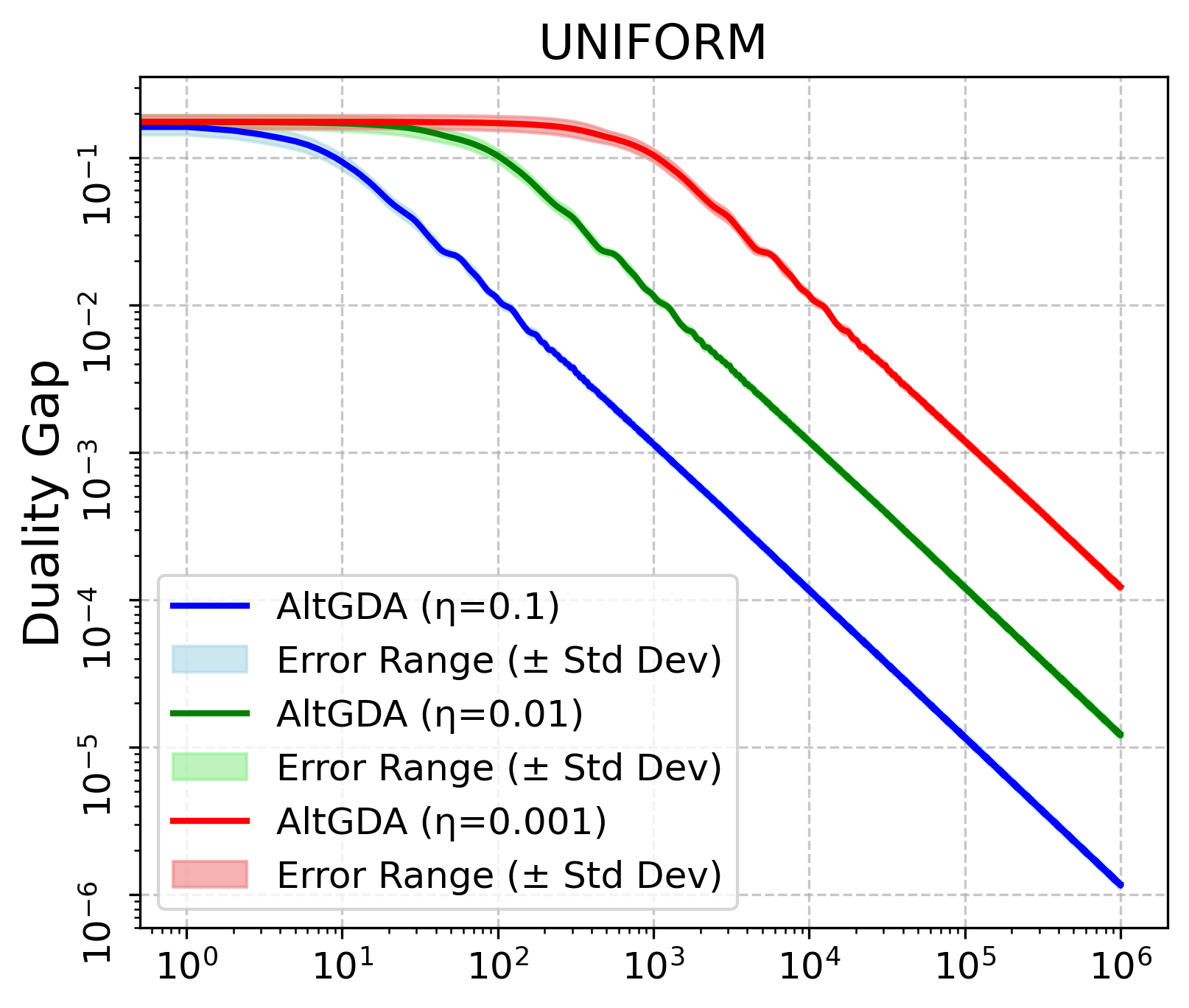}
    \includegraphics[width=0.325\linewidth]{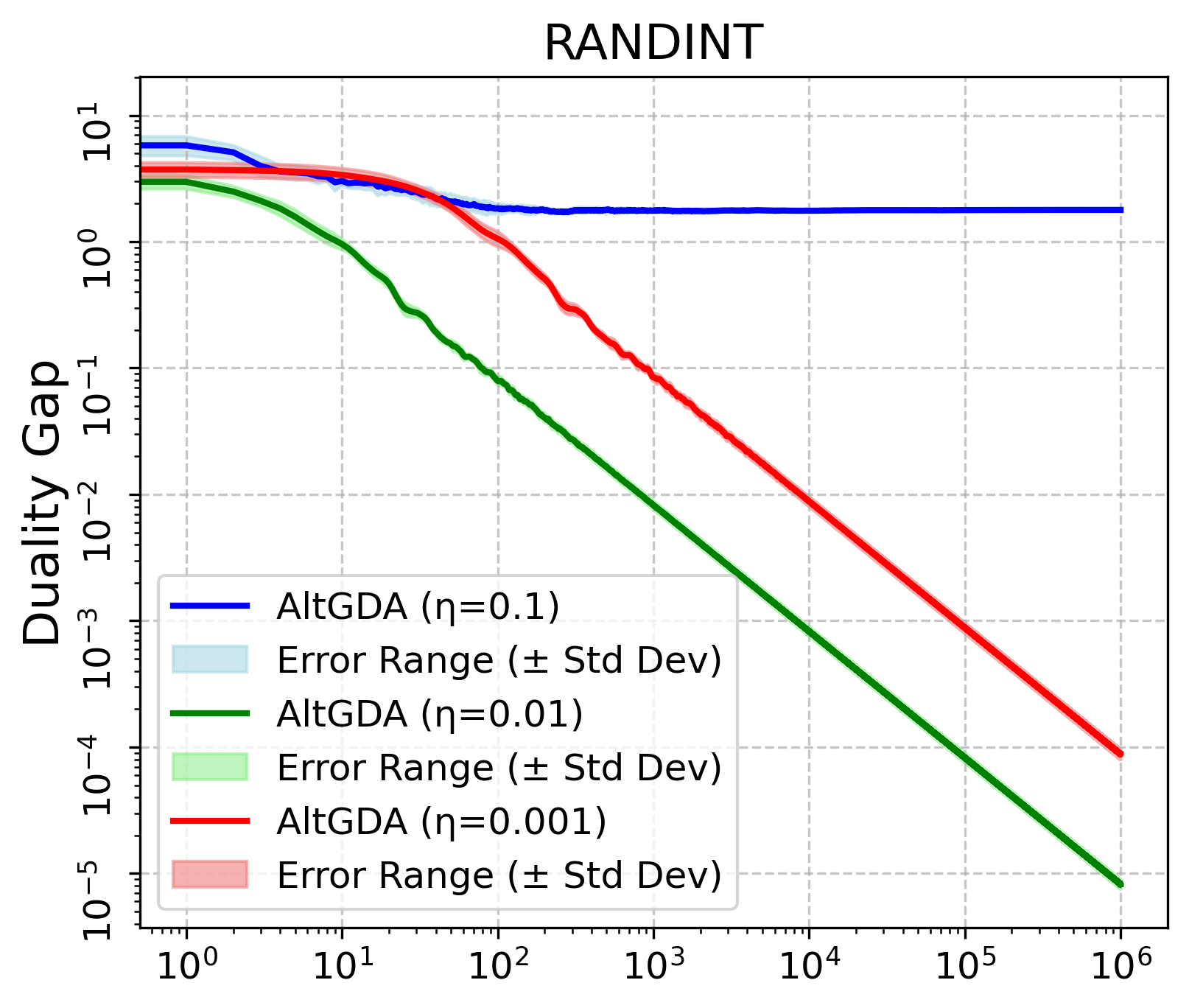}
    \includegraphics[width=0.325\linewidth]{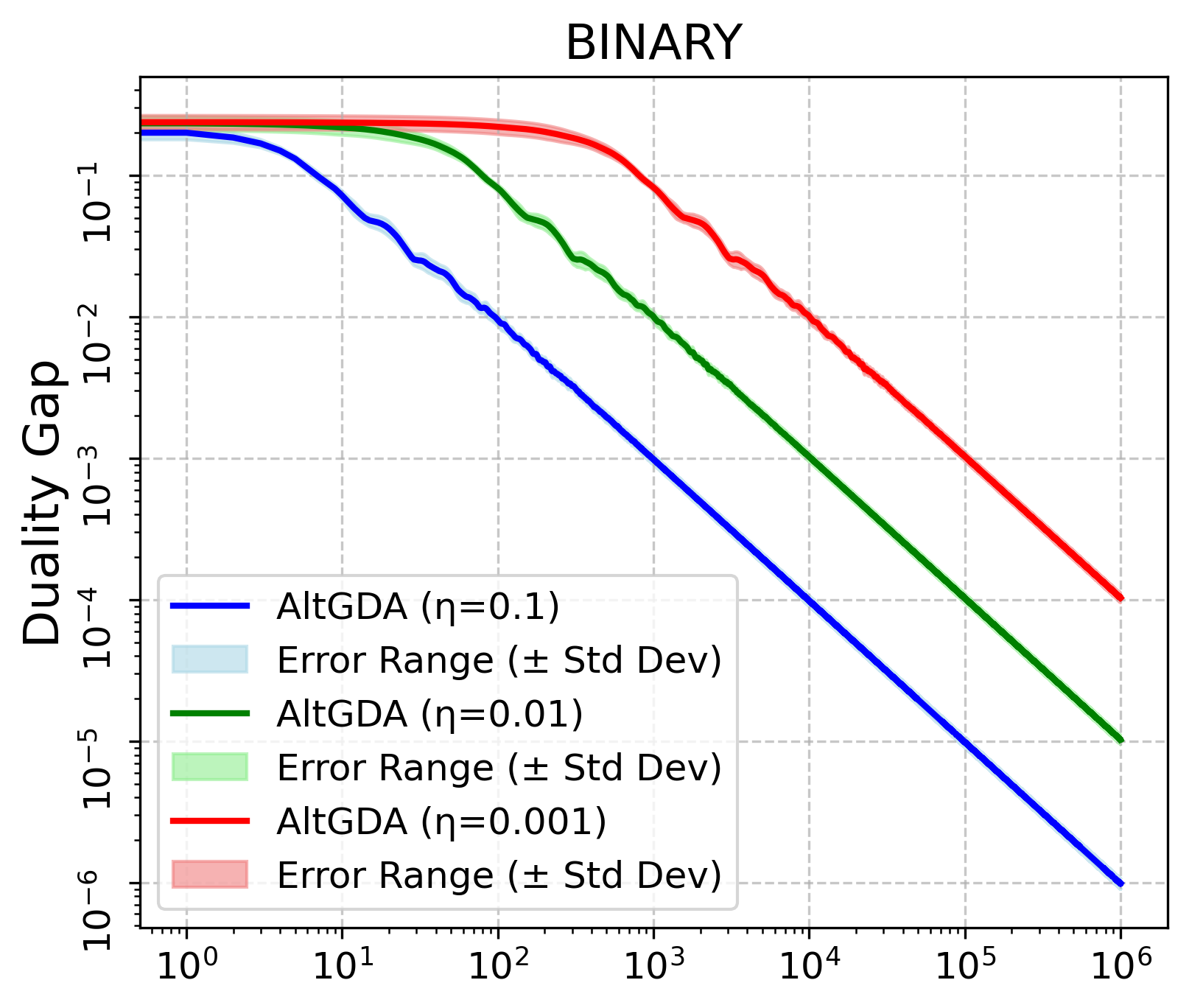}

    \includegraphics[width=0.325\linewidth]{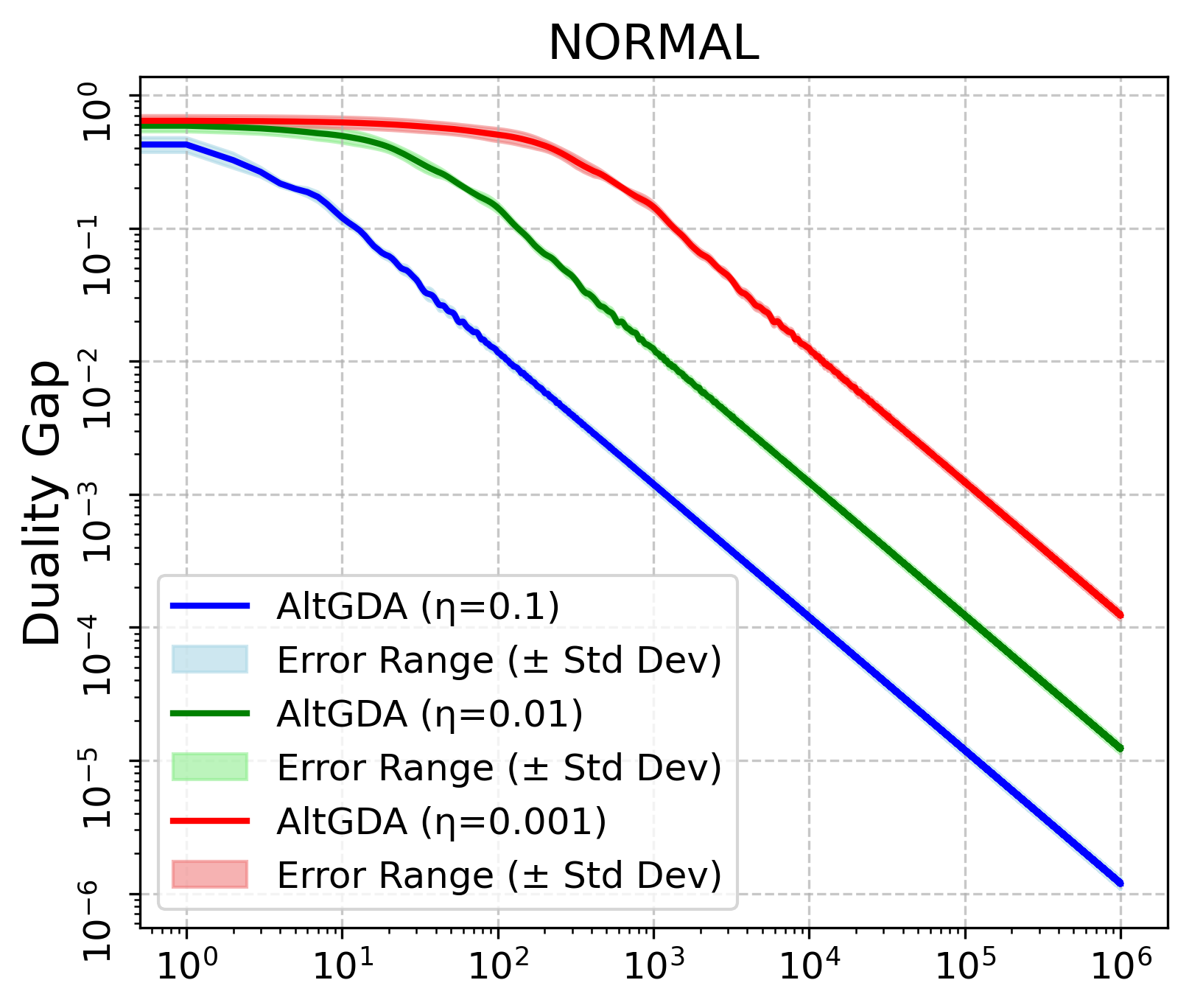}
    \includegraphics[width=0.325\linewidth]{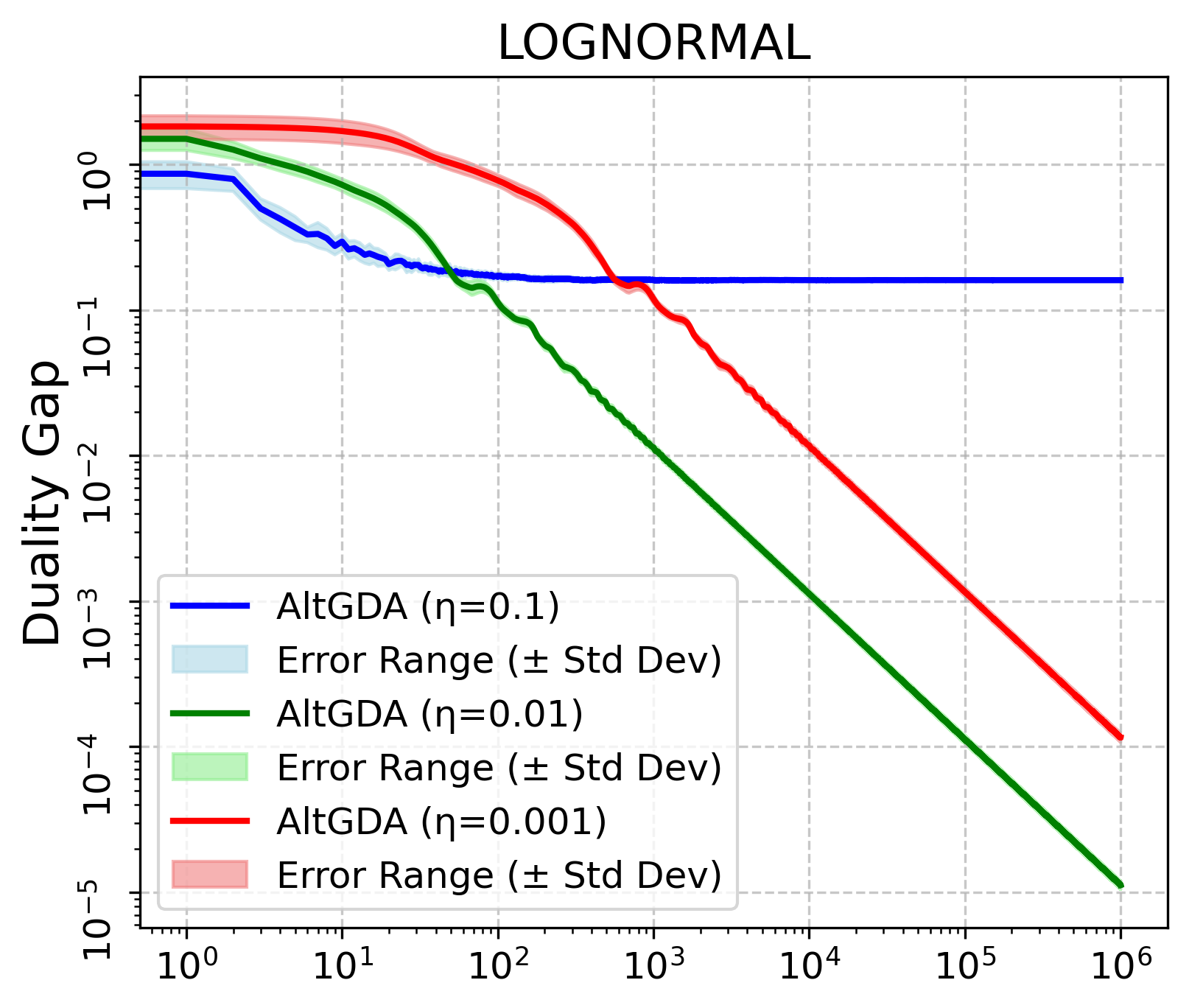}
    \includegraphics[width=0.325\linewidth]{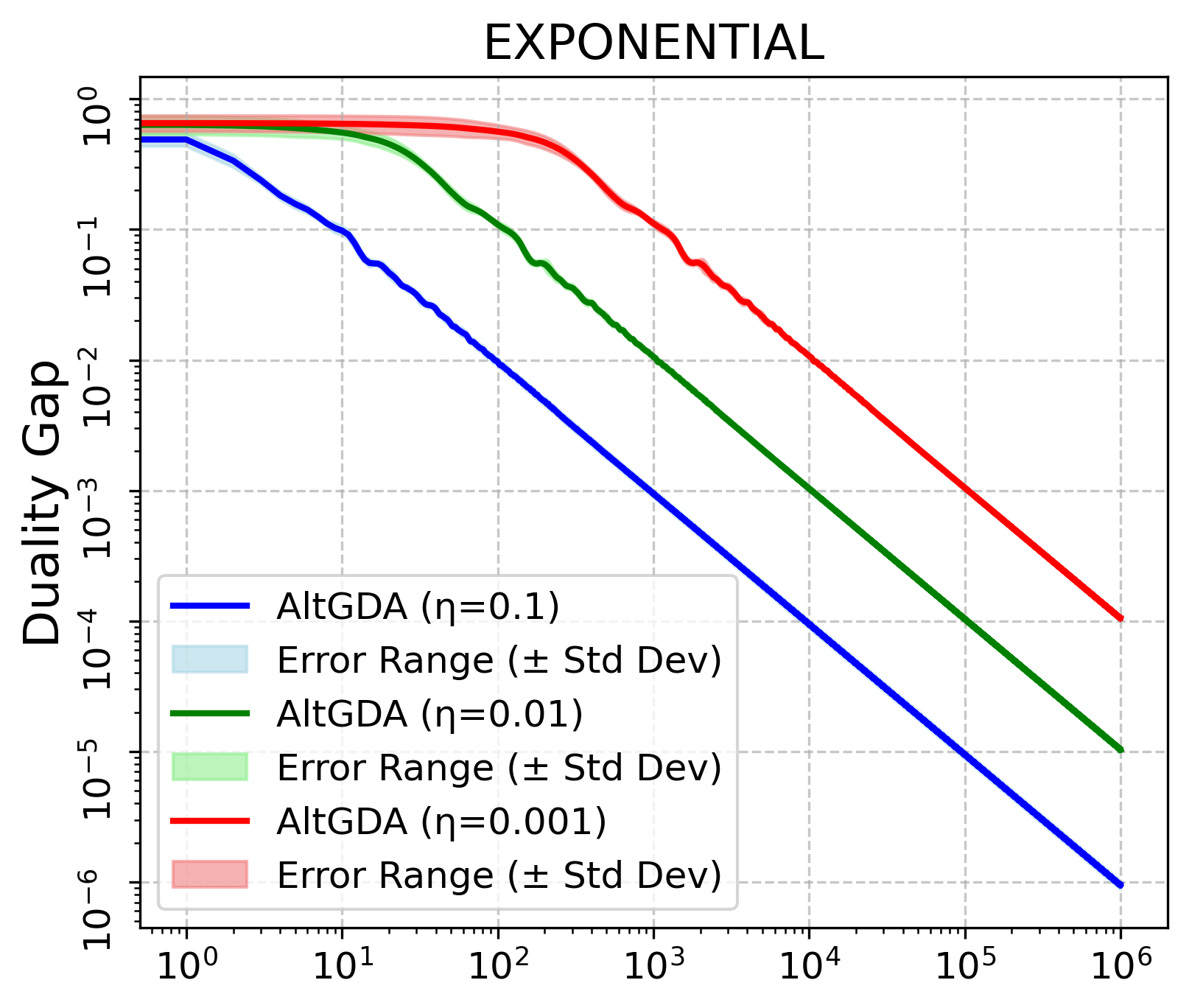}
    \caption{Numerical performances of AltGDA with different stepsizes on $60 \times 120$ synthesized matrix games.}
\end{figure}

\subsection{Evolutions of~\cref{fig:with-interior-NE} and \cref{fig:without-interior-NE}}

\begin{figure}[t]
    \centering
    \begin{minipage}{5in}
        \centering
        \includegraphics[width=0.35\linewidth]{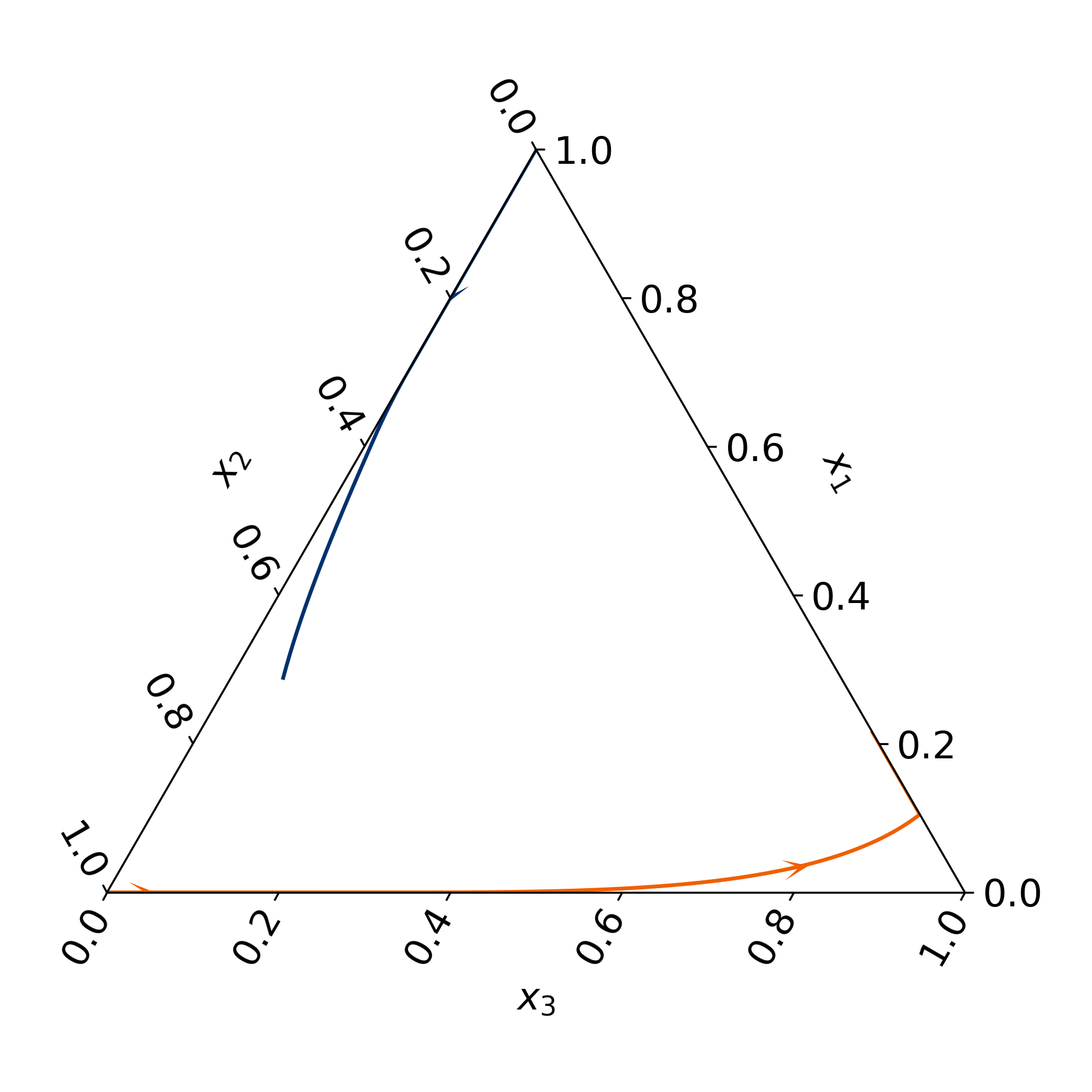}
        \hspace{-12pt}
        \raisebox{0.15\height}{
        \includegraphics[width=0.33\linewidth]{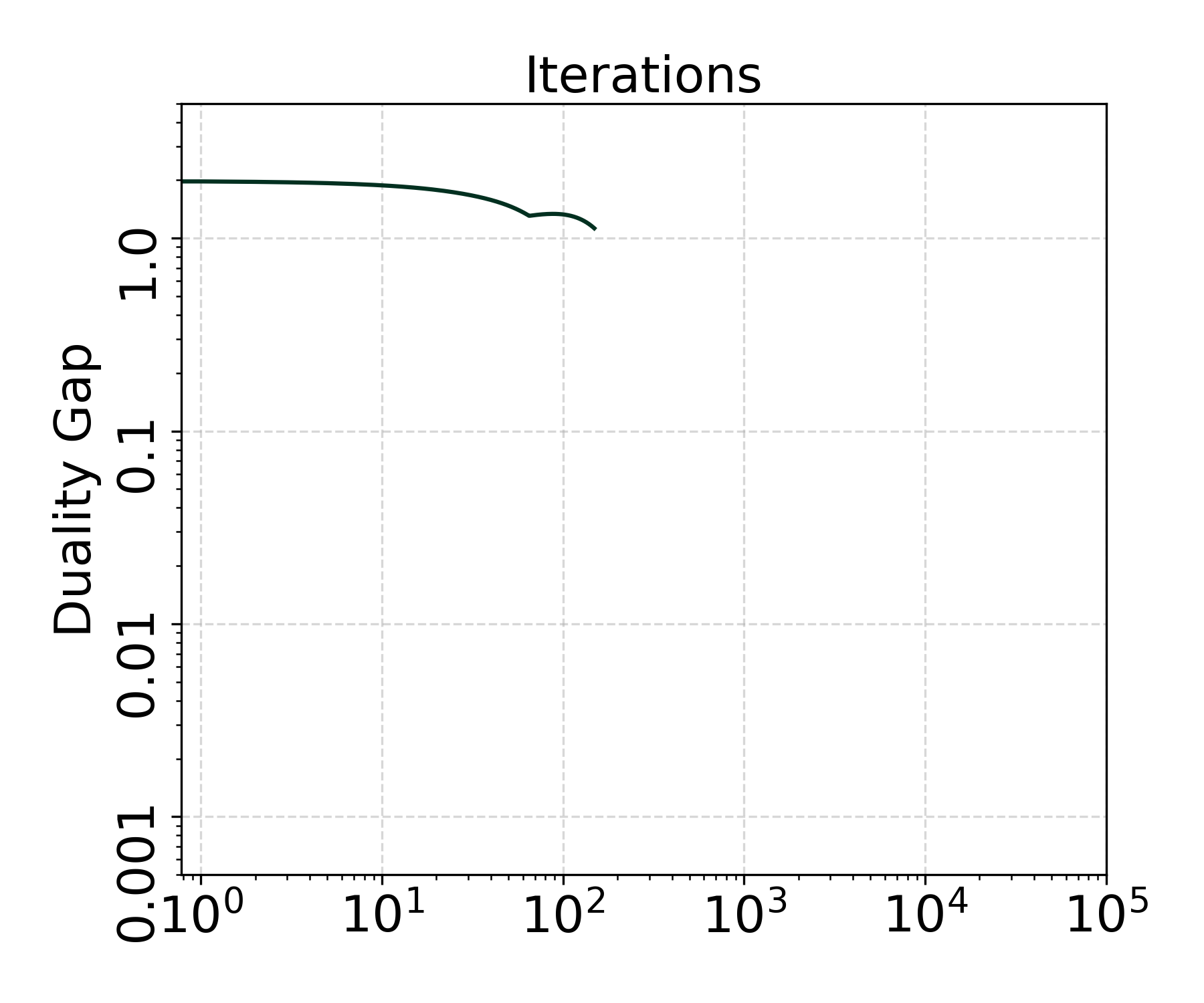}}
        \hspace{-10pt}
        \raisebox{0.15\height}{
        \includegraphics[width=0.33\linewidth]{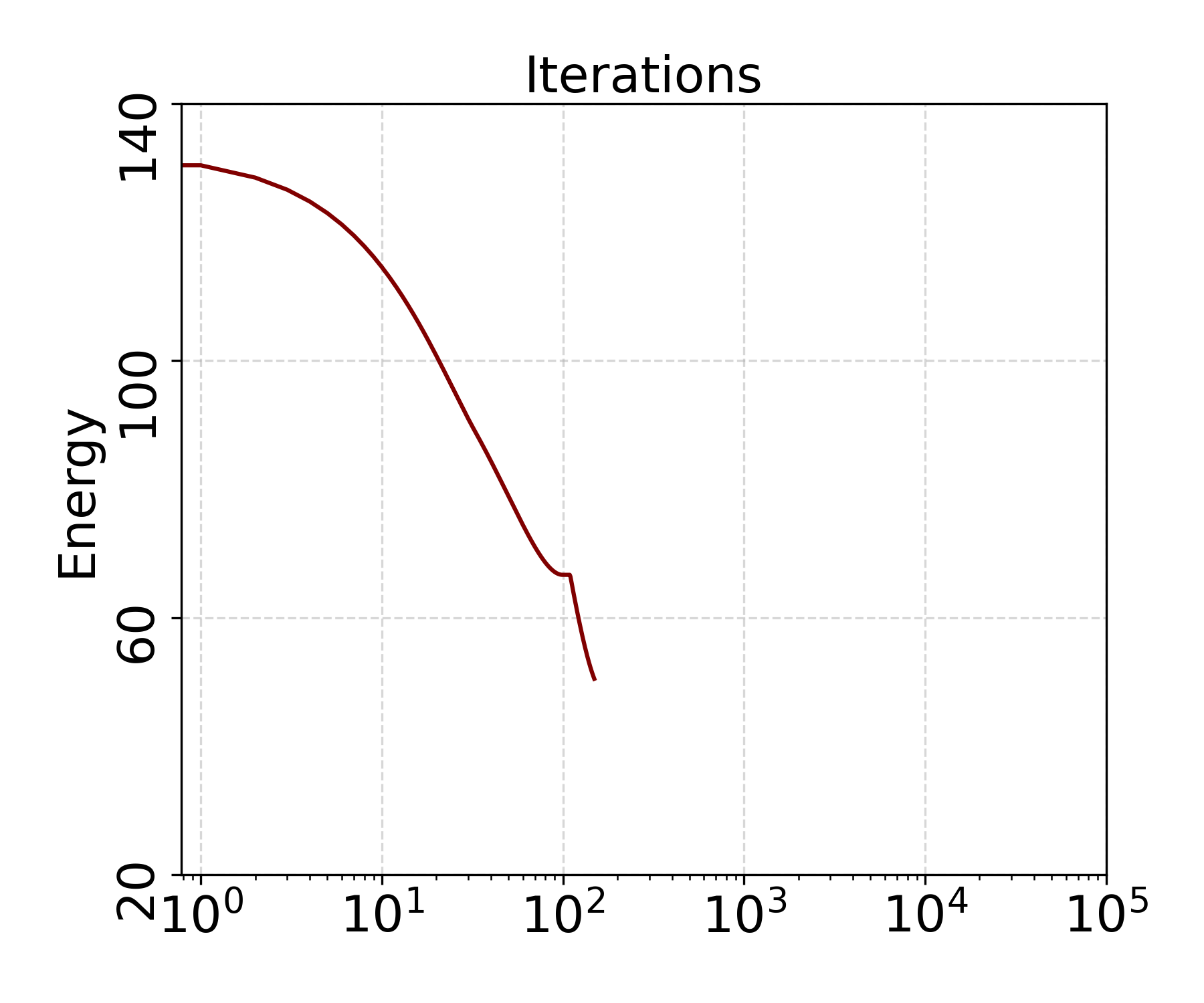}}
    \end{minipage}
    \begin{minipage}{5in}
        \centering
        \includegraphics[width=0.35\linewidth]{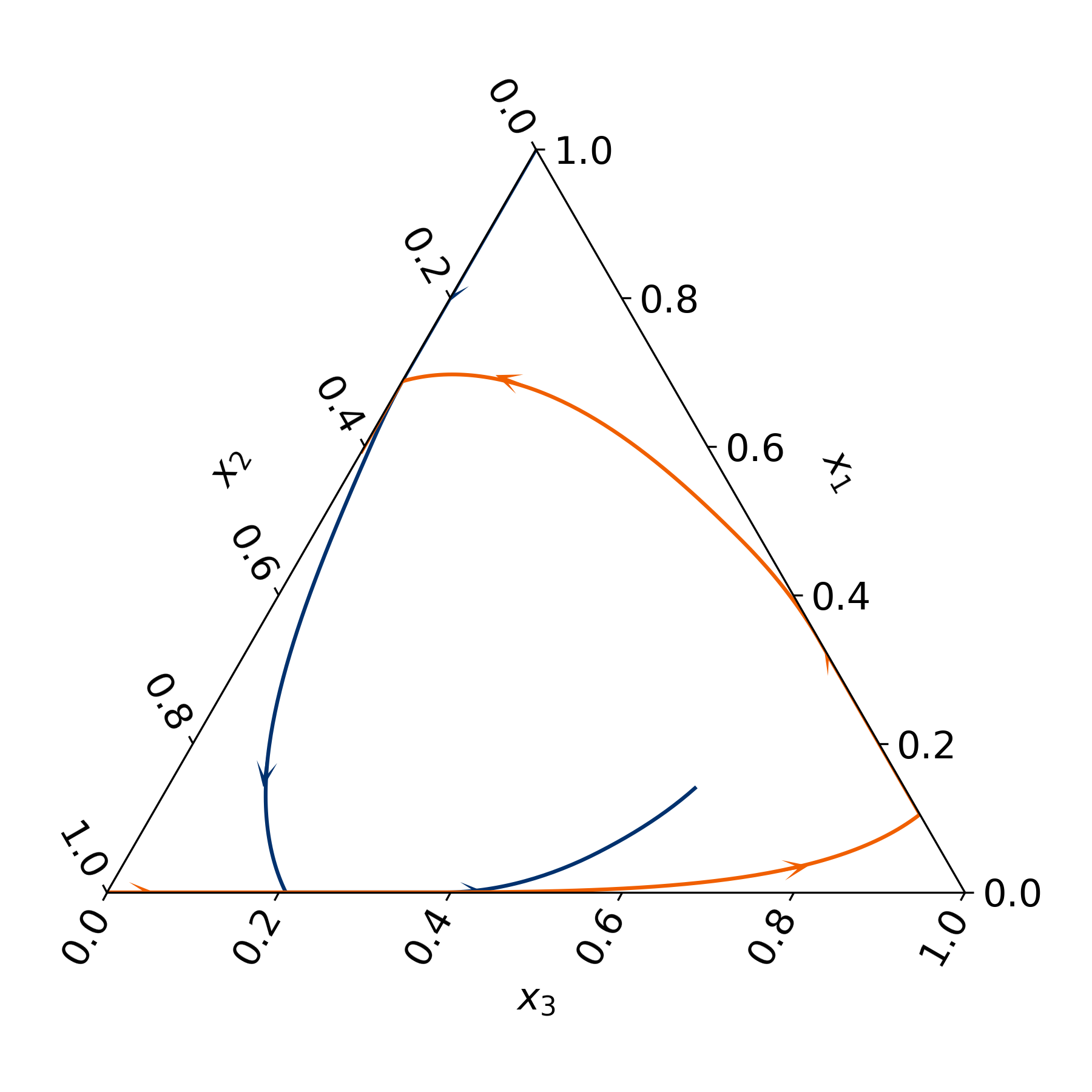}
        \hspace{-12pt}
        \raisebox{0.15\height}{
        \includegraphics[width=0.33\linewidth]{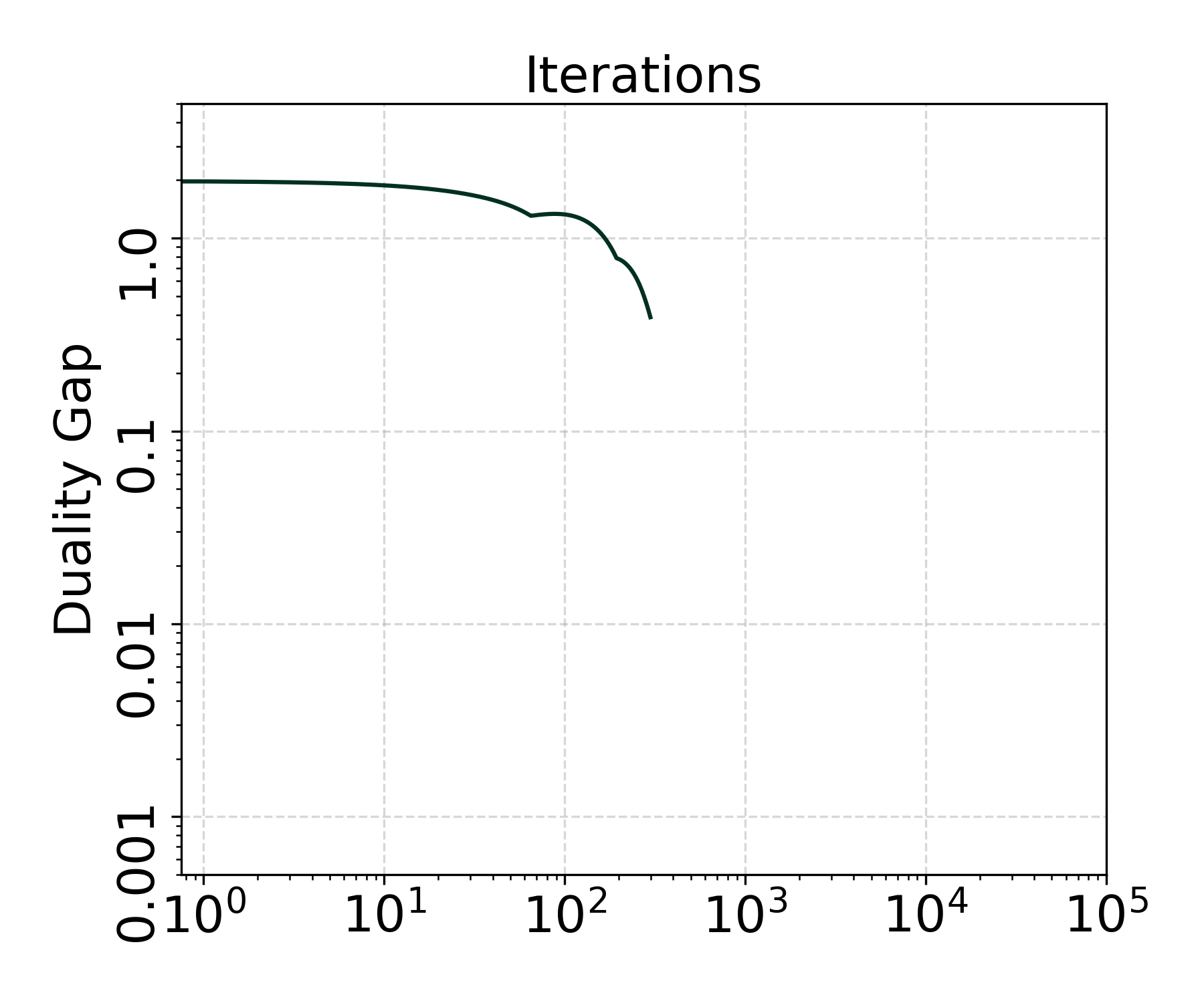}}
        \hspace{-10pt}
        \raisebox{0.15\height}{
        \includegraphics[width=0.33\linewidth]{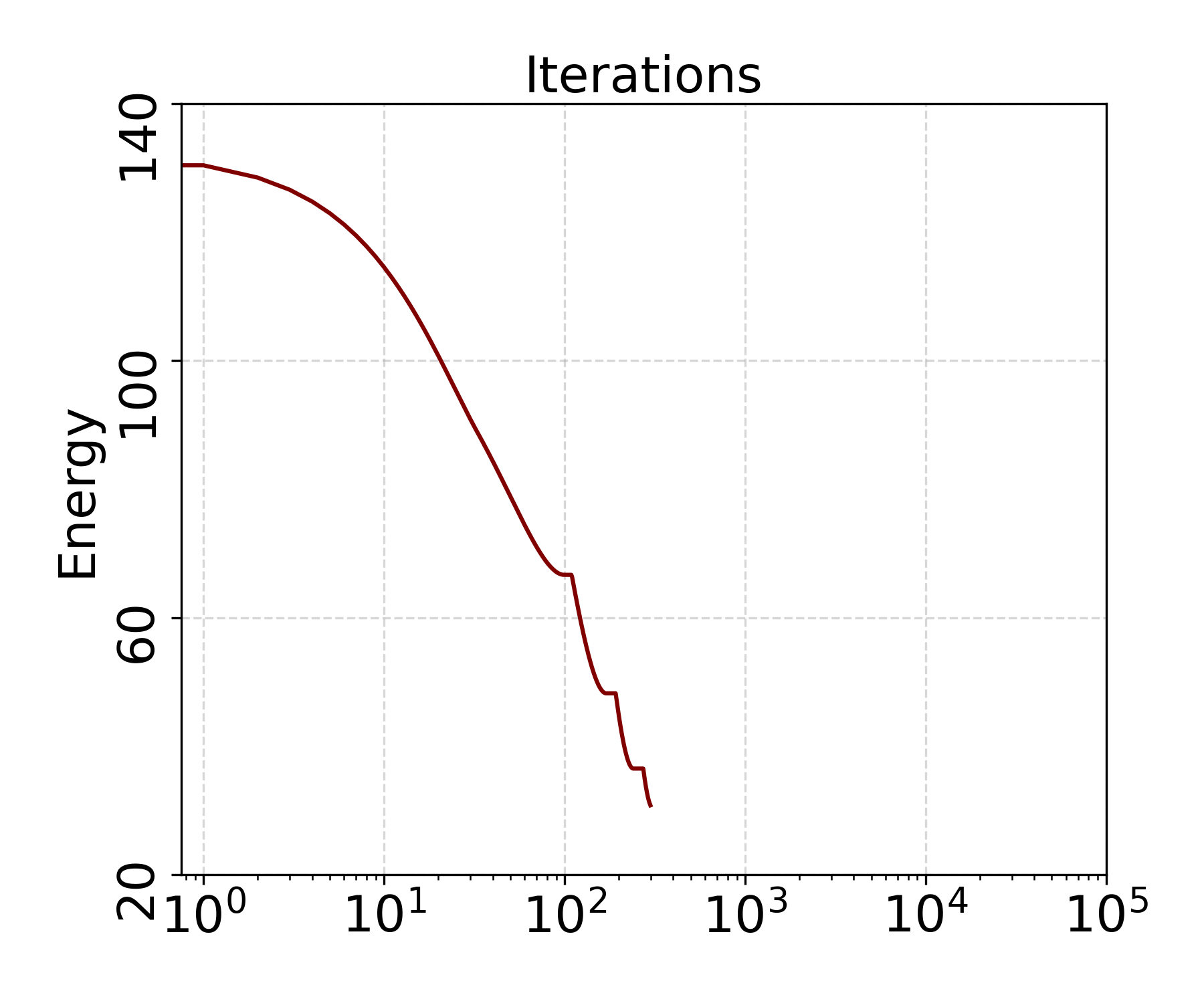}}
    \end{minipage}
    \begin{minipage}{5in}
        \centering
        \includegraphics[width=0.35\linewidth]{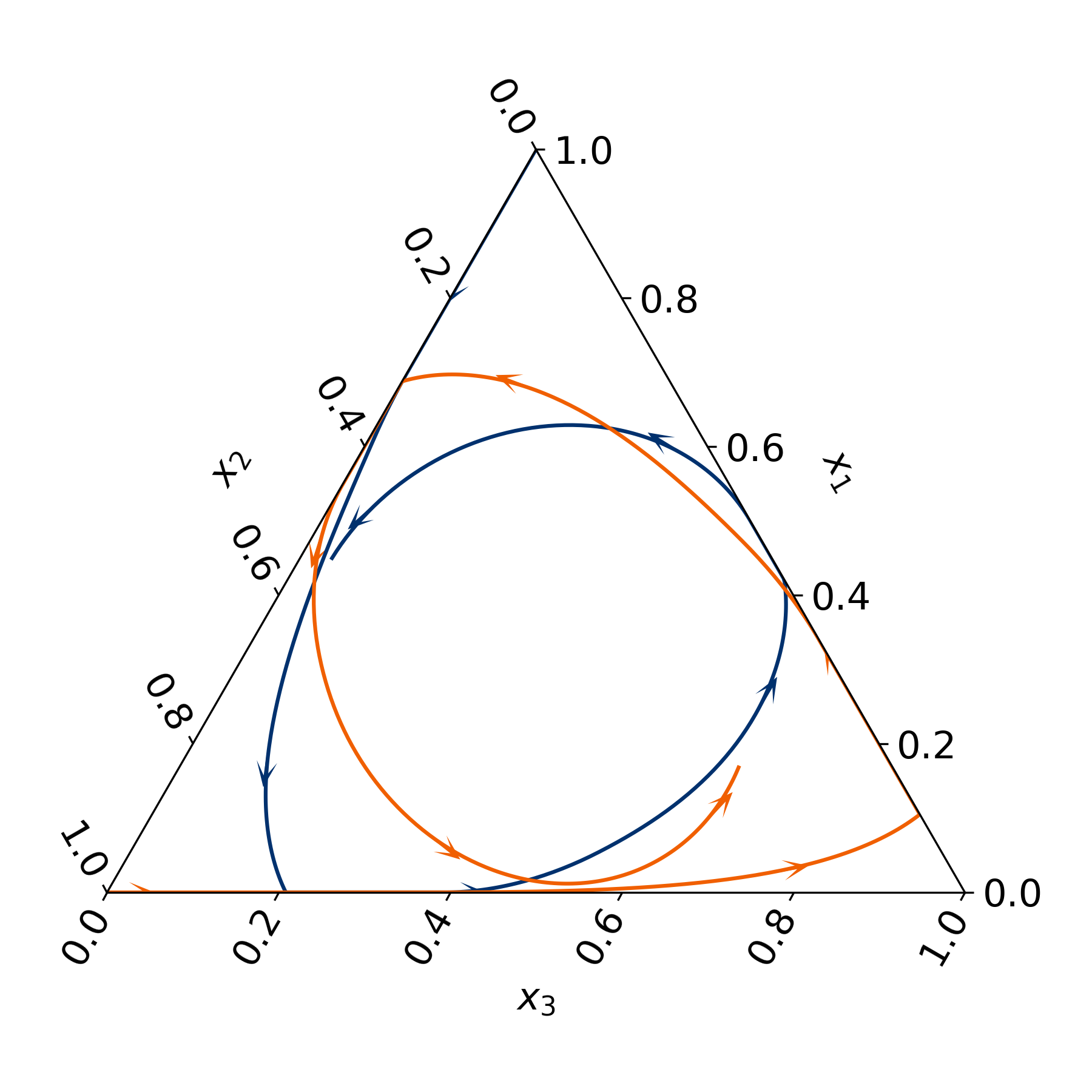}
        \hspace{-12pt}
        \raisebox{0.15\height}{
        \includegraphics[width=0.33\linewidth]{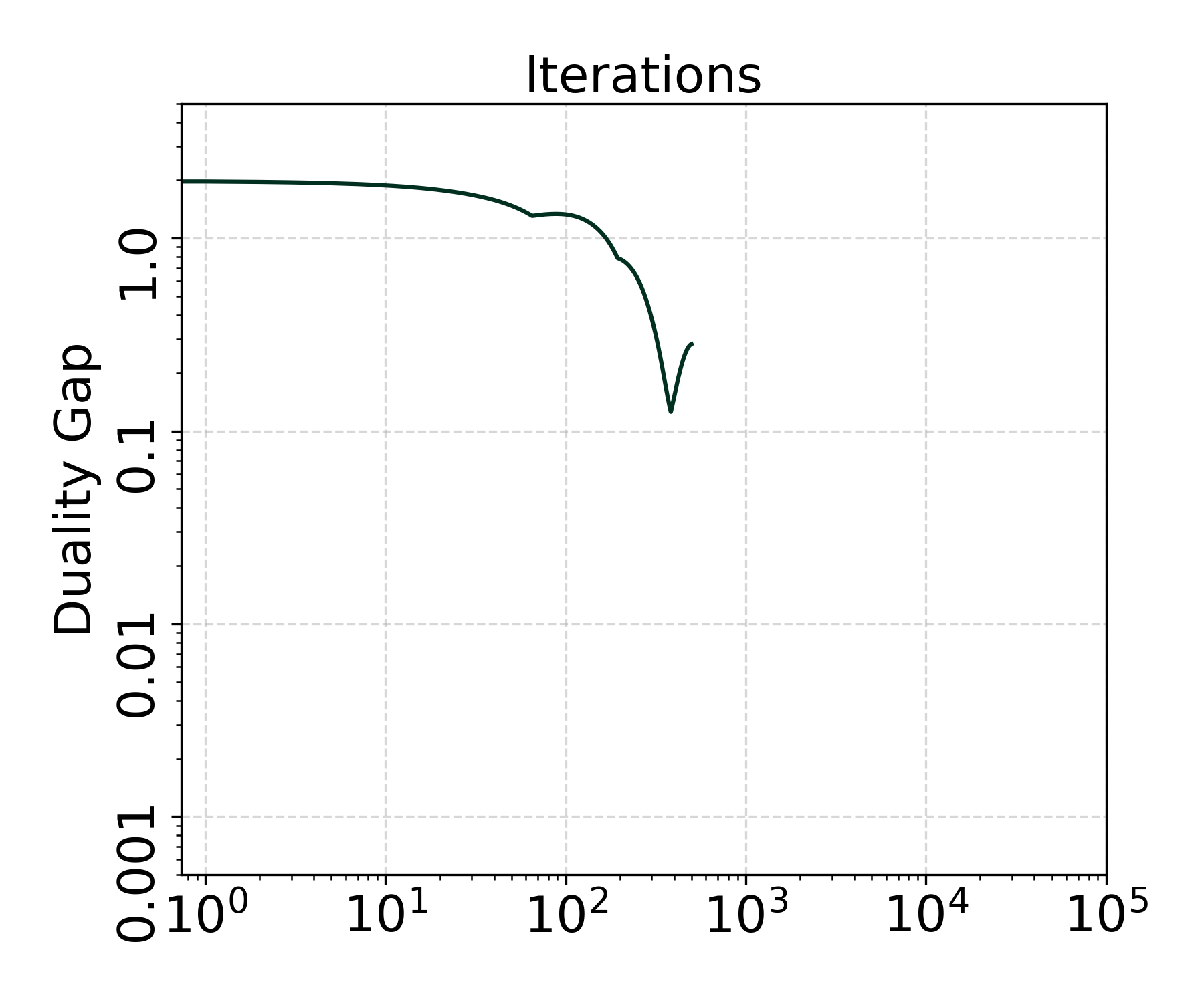}}
        \hspace{-10pt}
        \raisebox{0.15\height}{
        \includegraphics[width=0.33\linewidth]{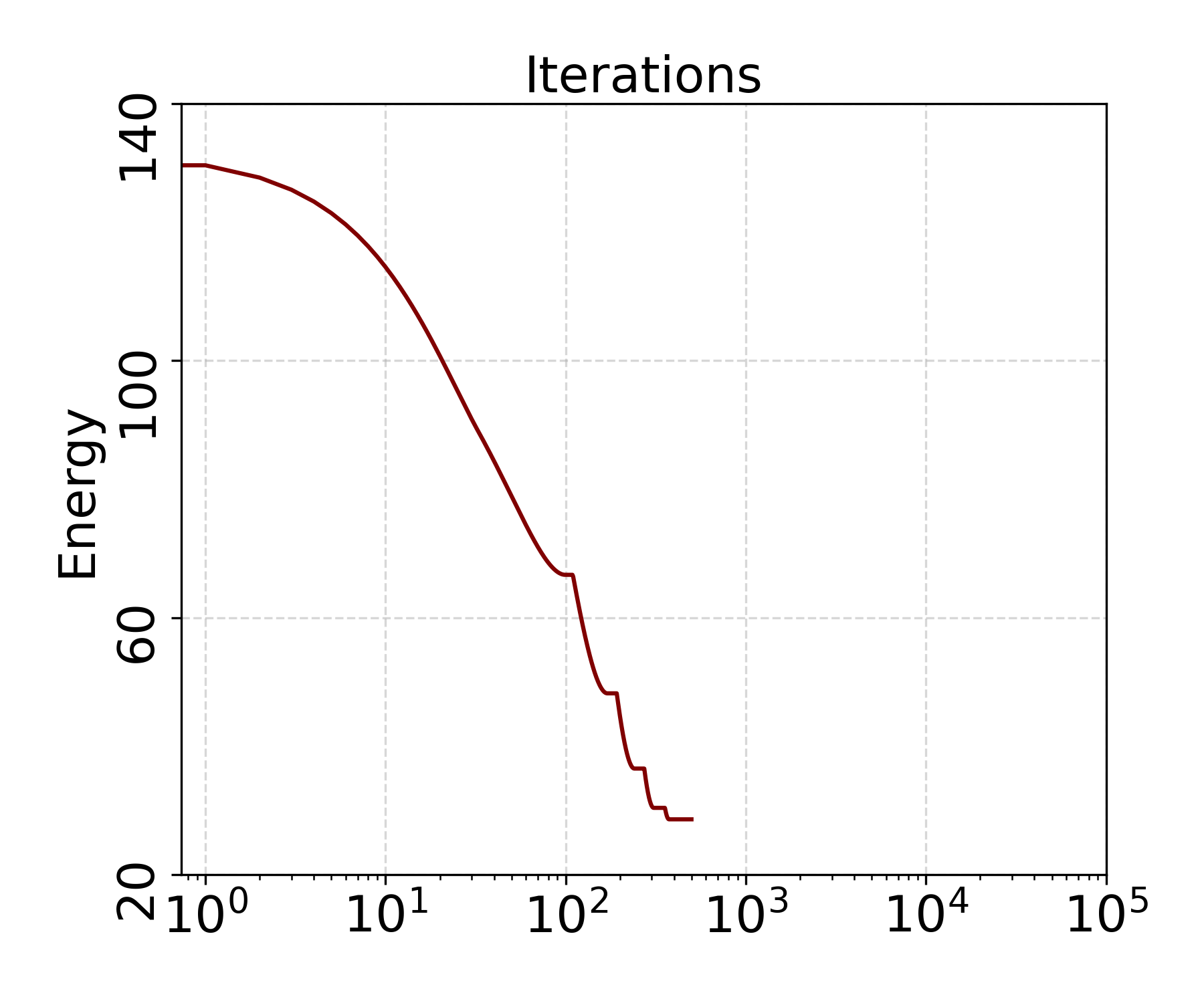}}
    \end{minipage}
    \begin{minipage}{5in}
        \centering
        \includegraphics[width=0.35\linewidth]{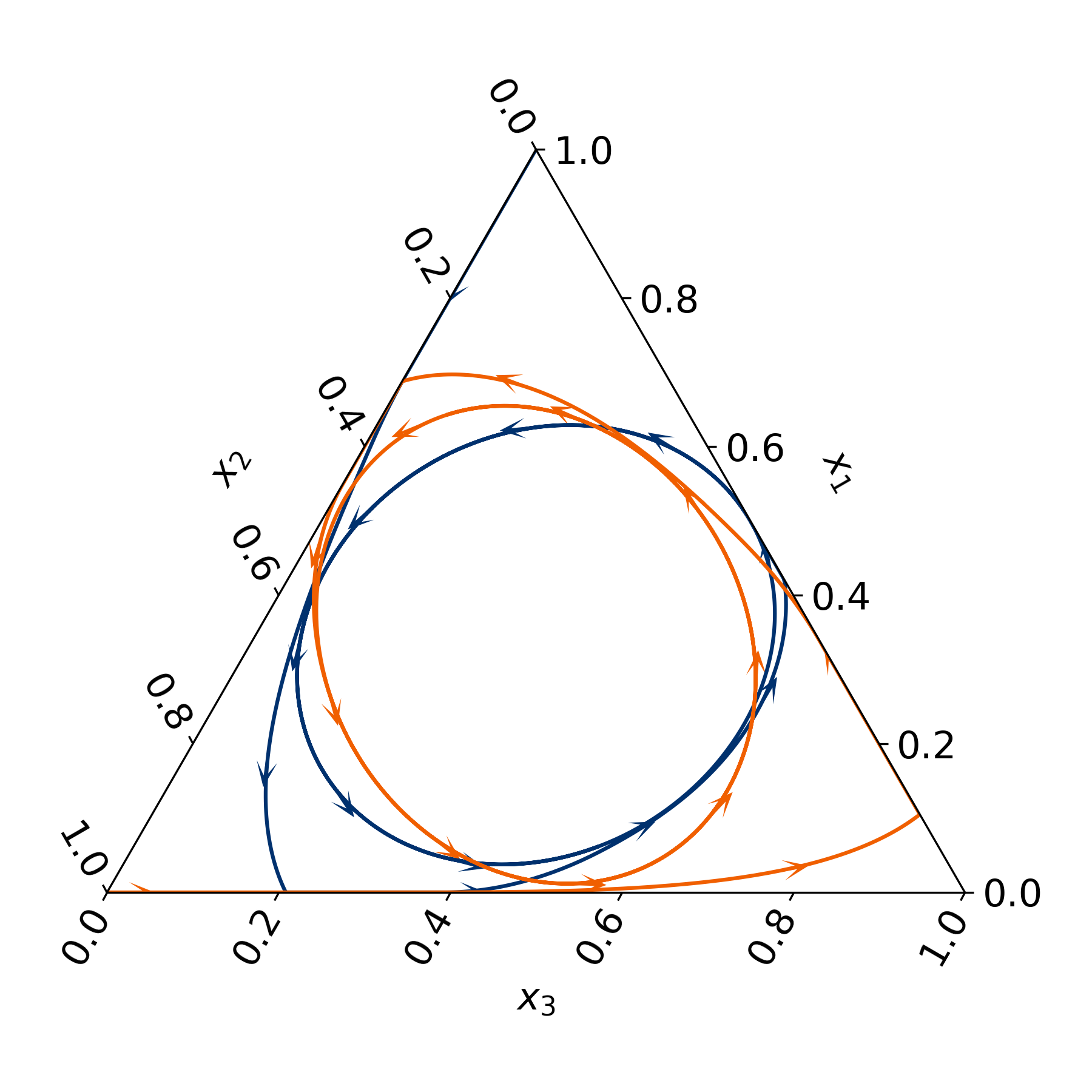}
        \hspace{-12pt}
        \raisebox{0.15\height}{
        \includegraphics[width=0.33\linewidth]{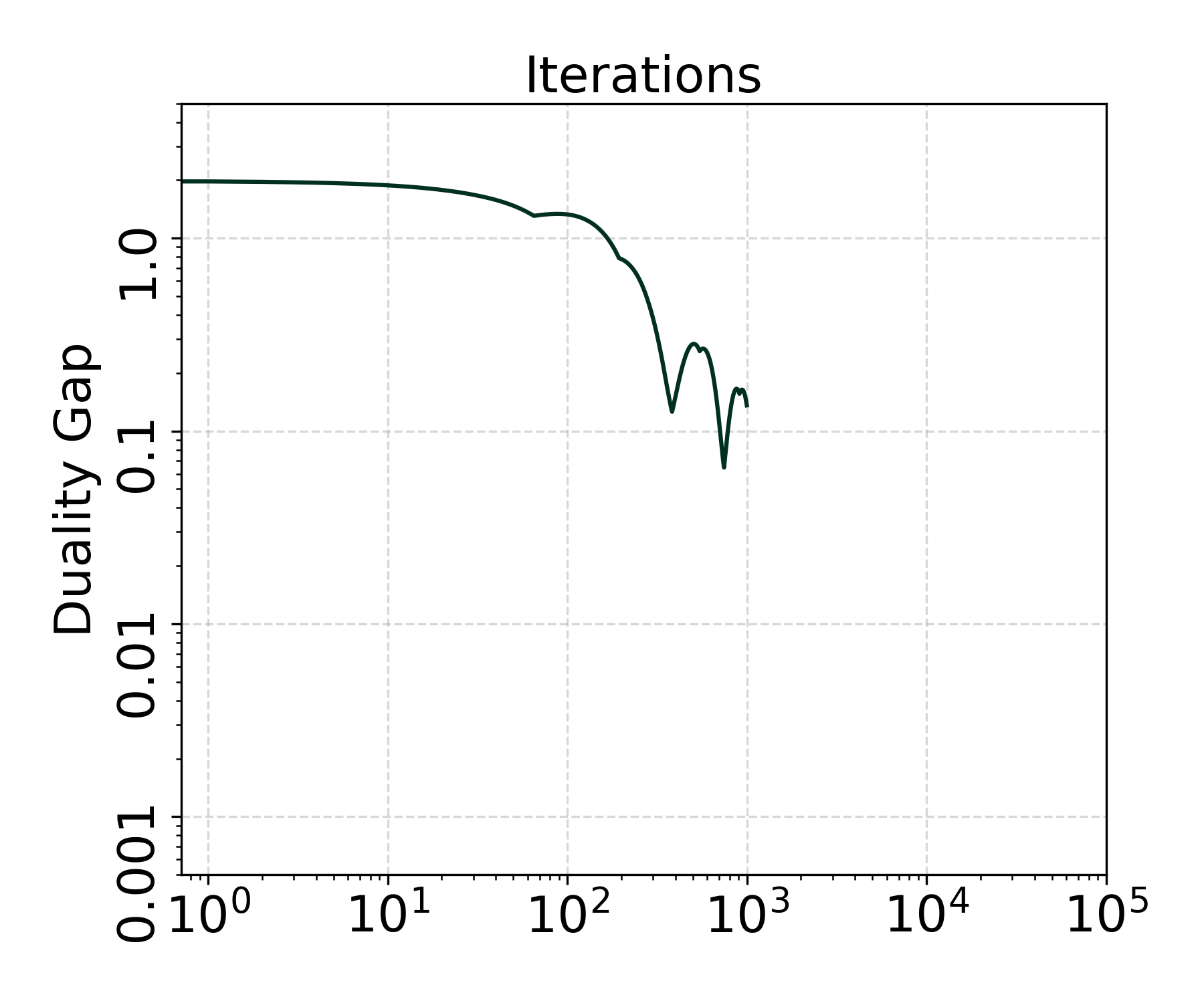}}
        \hspace{-10pt}
        \raisebox{0.15\height}{
        \includegraphics[width=0.33\linewidth]{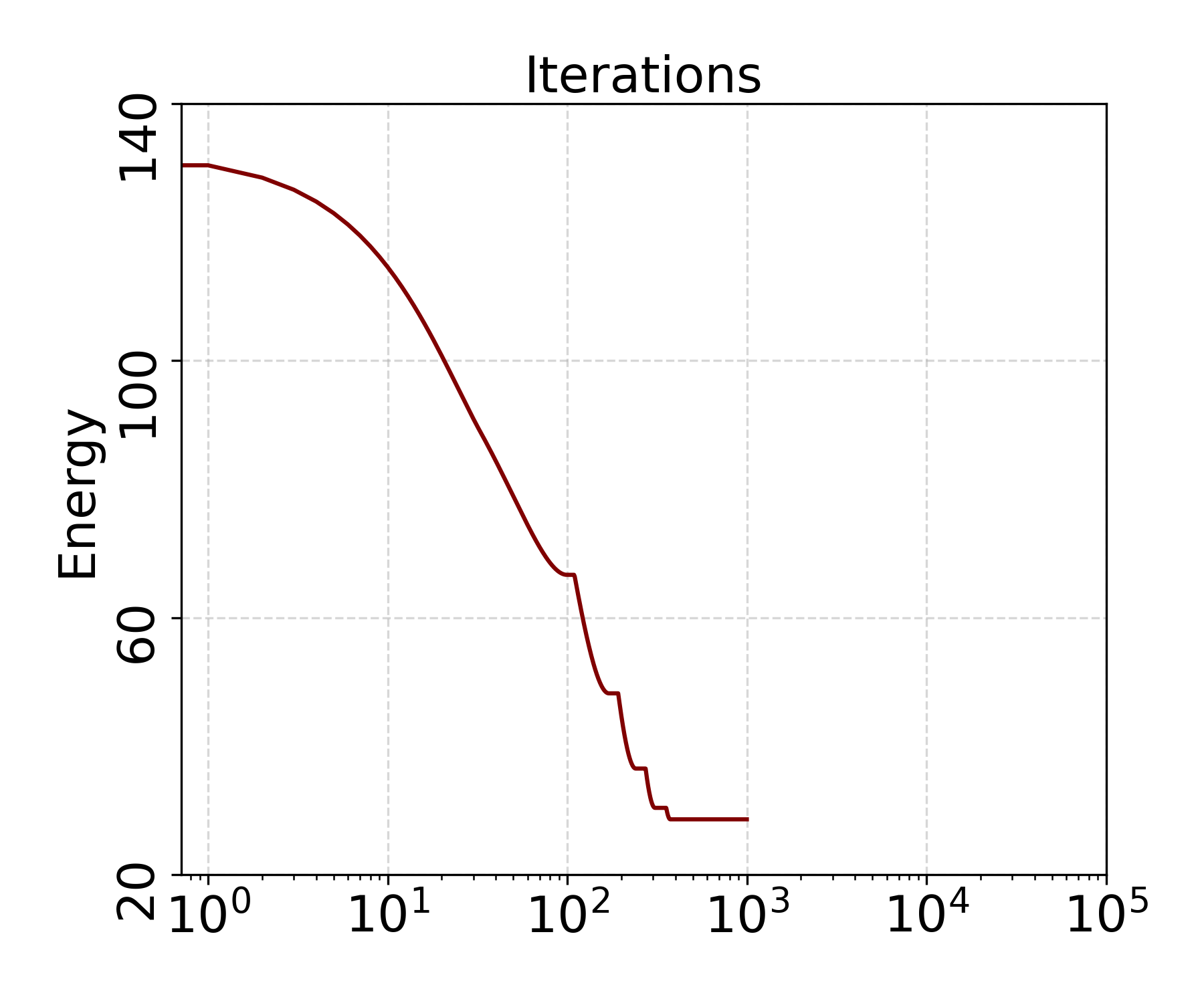}}
    \end{minipage}
    \begin{minipage}{5in}
        \centering
        \includegraphics[width=0.35\linewidth]{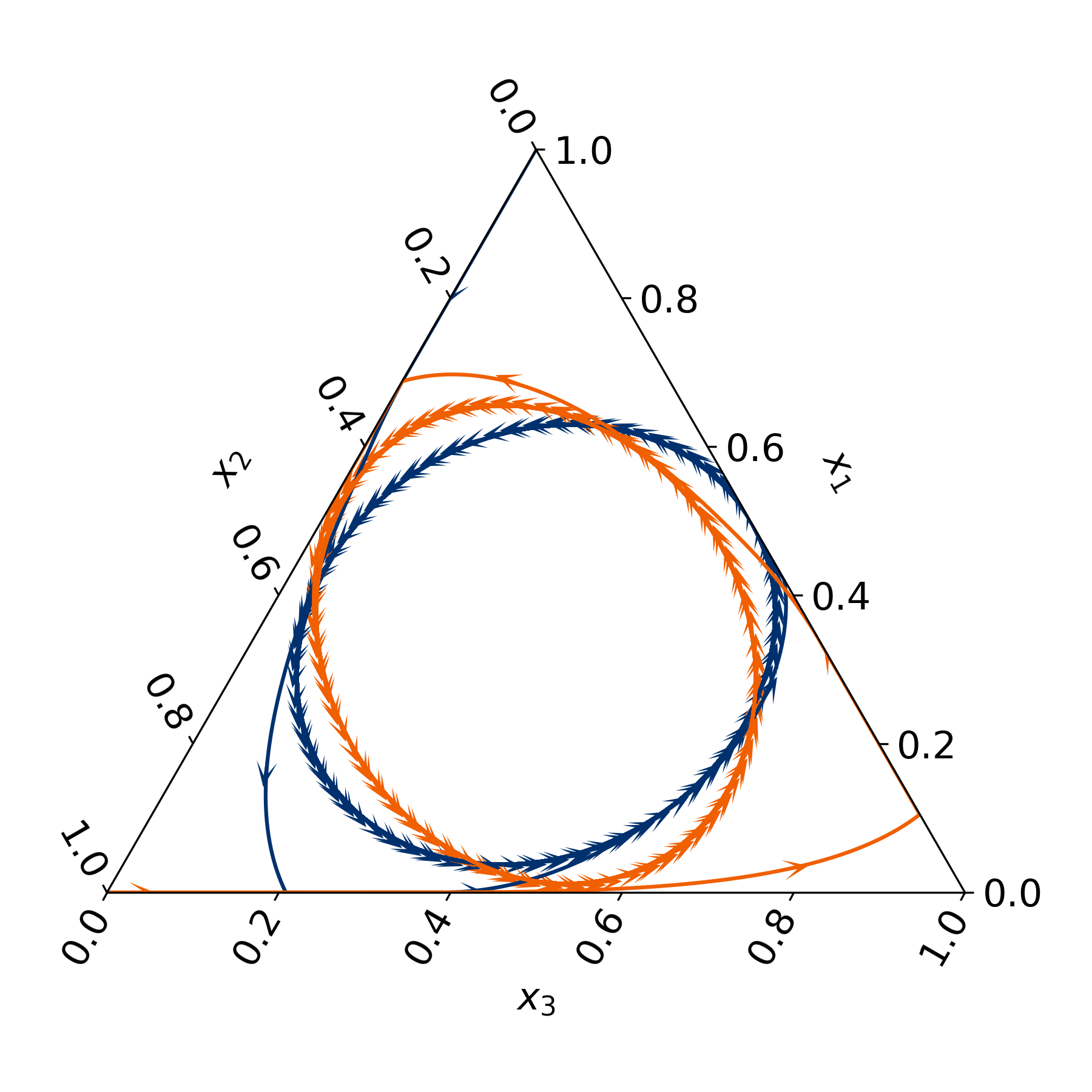}
        \hspace{-12pt}
        \raisebox{0.15\height}{
        \includegraphics[width=0.33\linewidth]{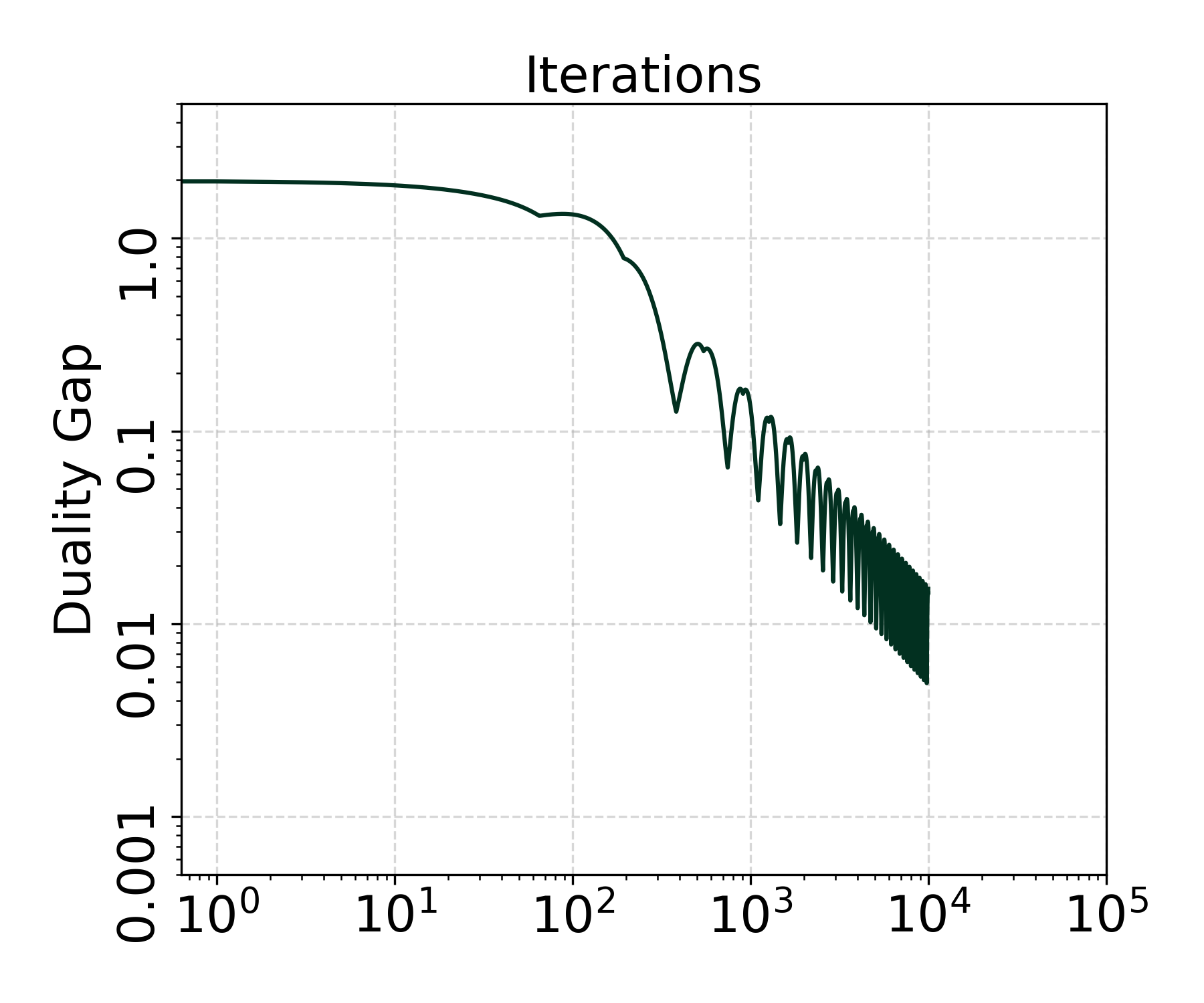}}
        \hspace{-10pt}
        \raisebox{0.15\height}{
        \includegraphics[width=0.33\linewidth]{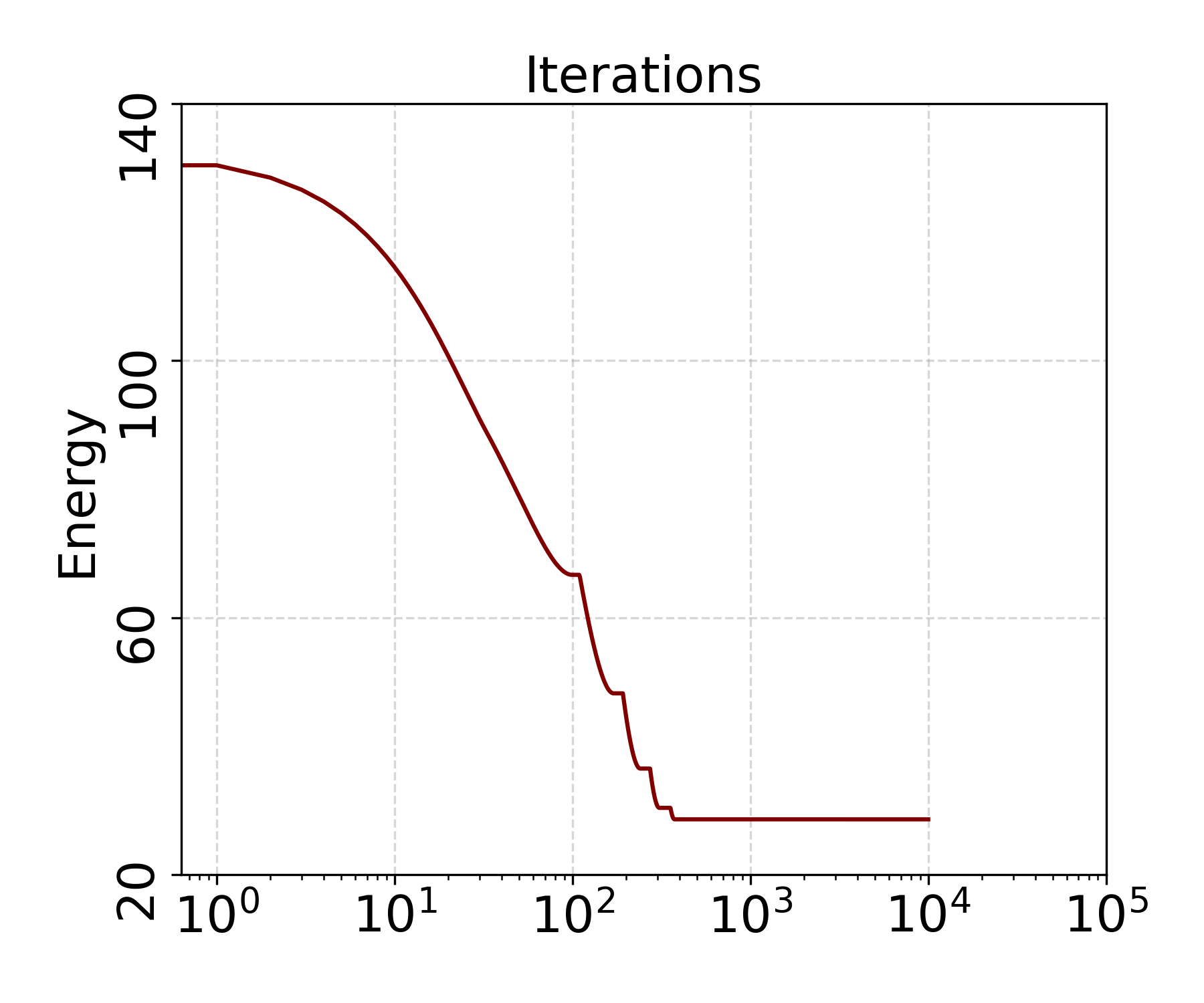}}
    \end{minipage}

    \caption{Evolution of~\cref{fig:with-interior-NE}}
    \label{fig:with-interior-NE-evolution}
\end{figure}

\begin{figure}[t]
    \centering
    \begin{minipage}{5in}
        \centering
        \includegraphics[width=0.35\linewidth]{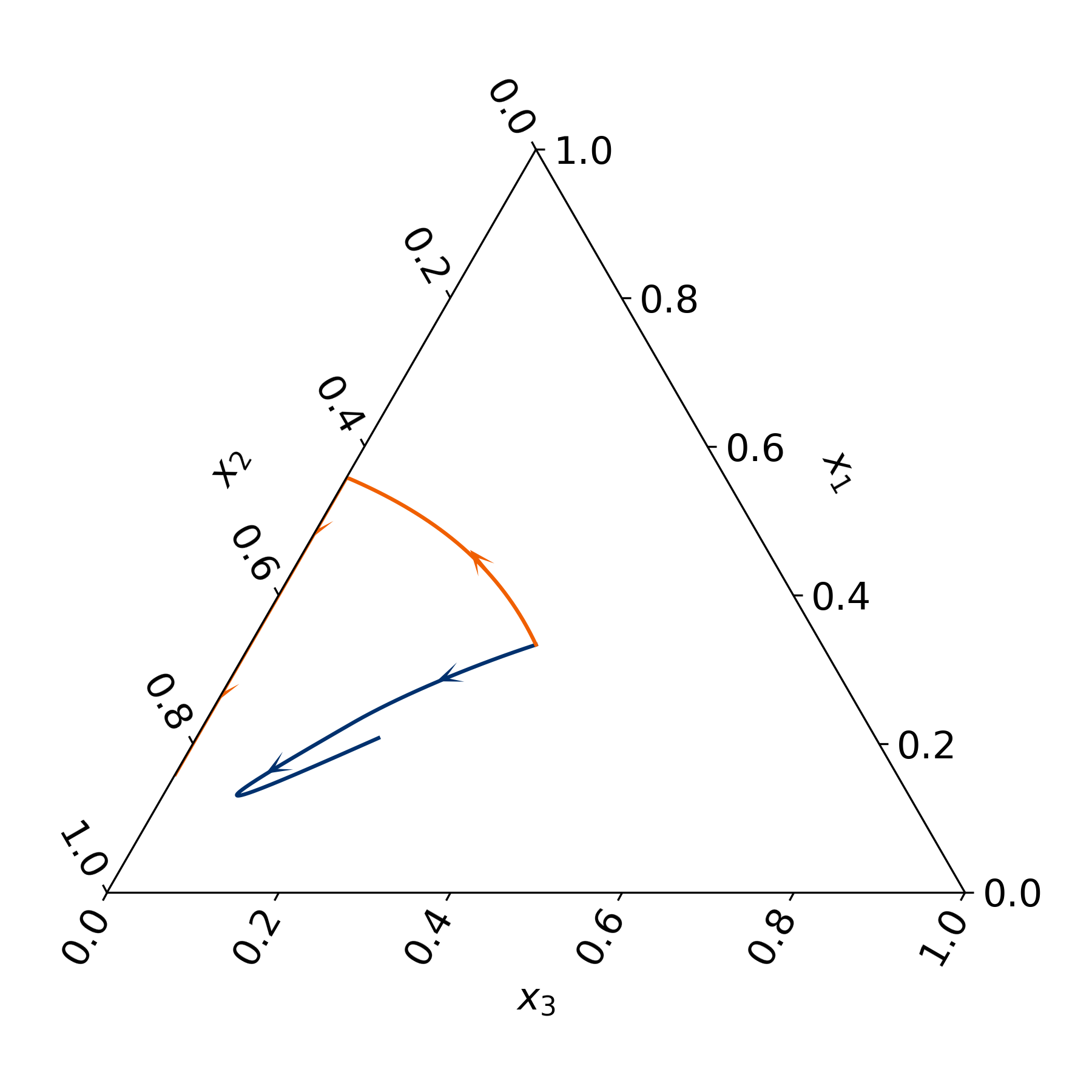}
        \hspace{-12pt}
        \raisebox{0.15\height}{
        \includegraphics[width=0.33\linewidth]{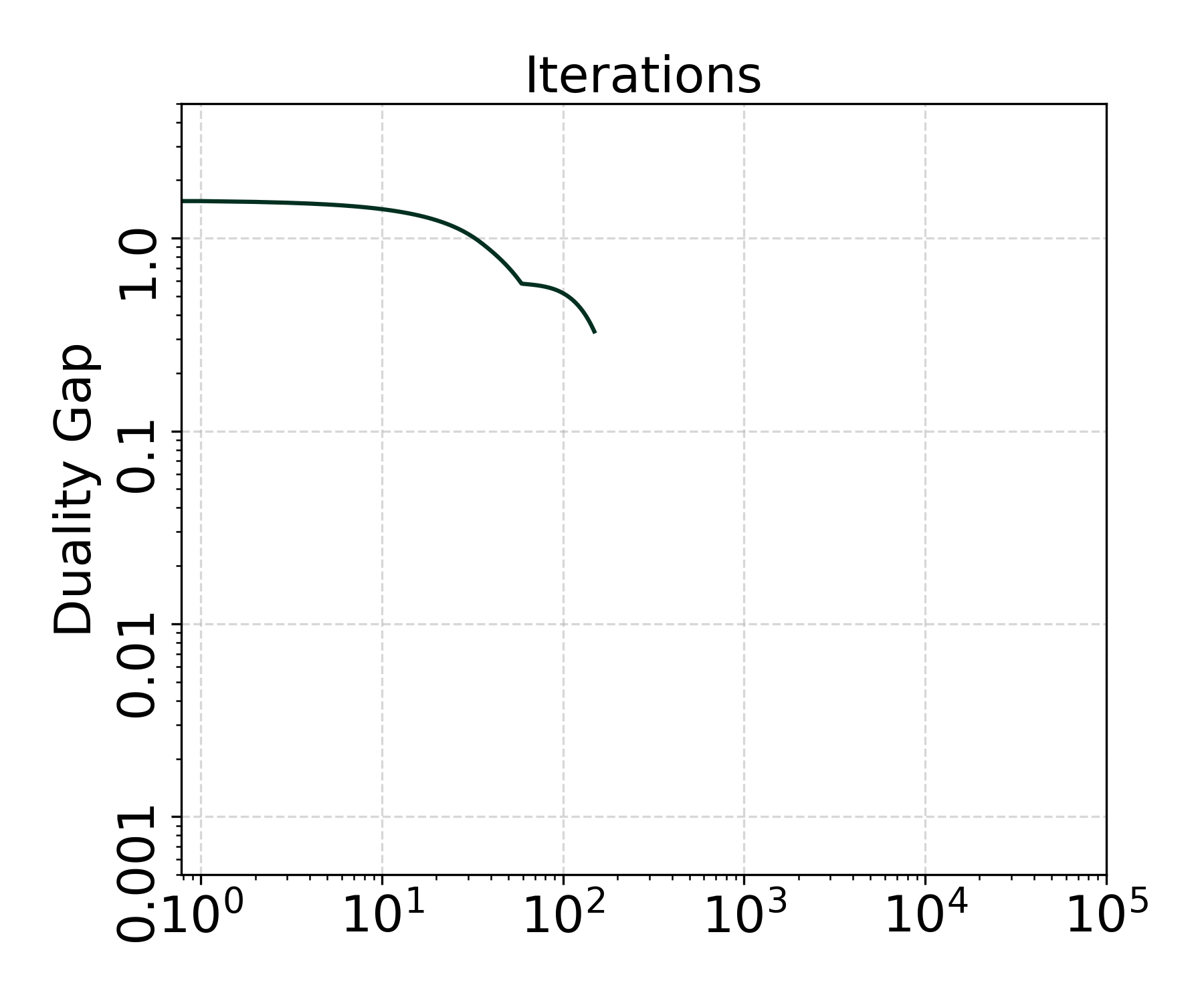}}
        \hspace{-10pt}
        \raisebox{0.15\height}{
        \includegraphics[width=0.33\linewidth]{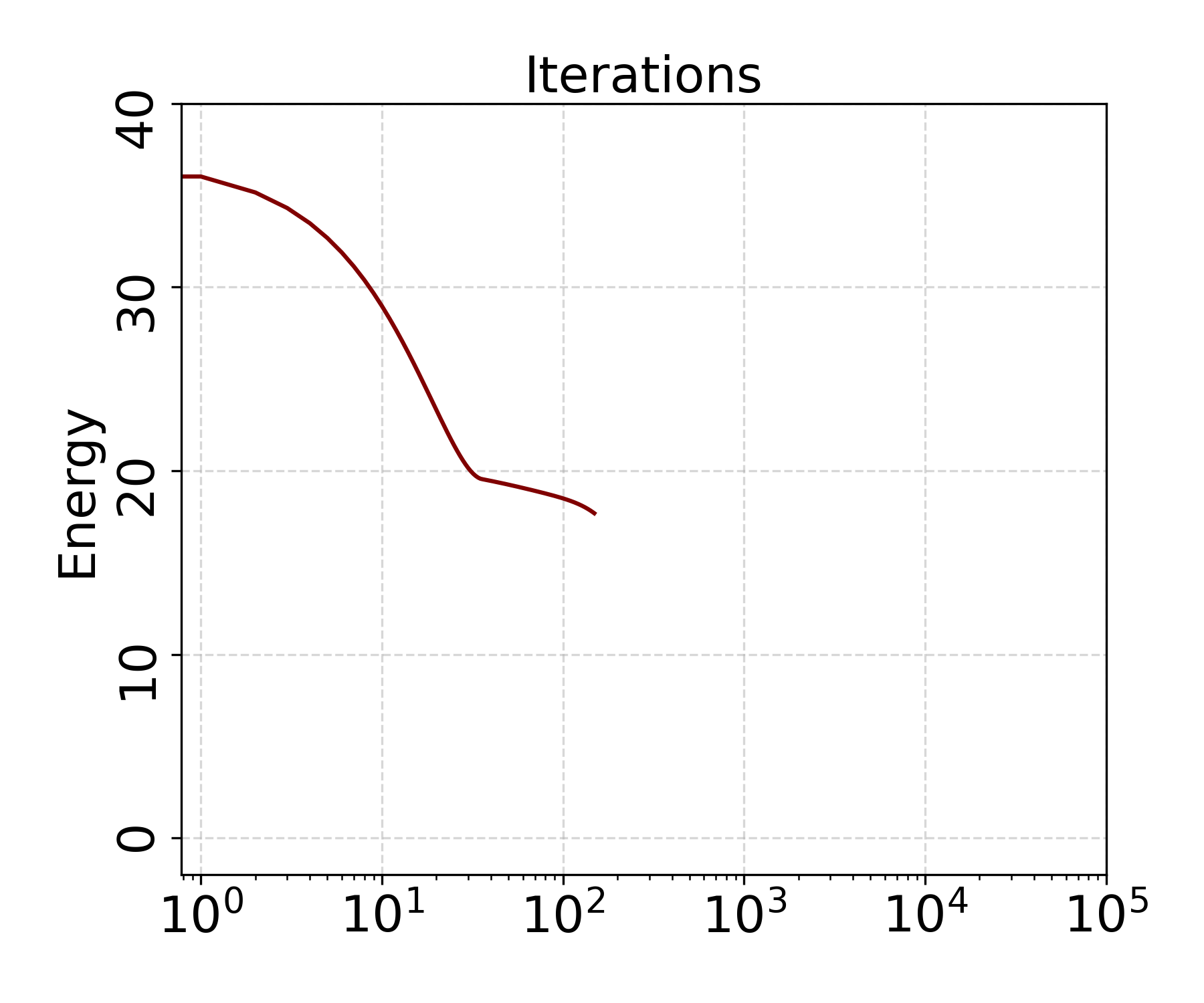}}
    \end{minipage}
    \begin{minipage}{5in}
        \centering
        \includegraphics[width=0.35\linewidth]{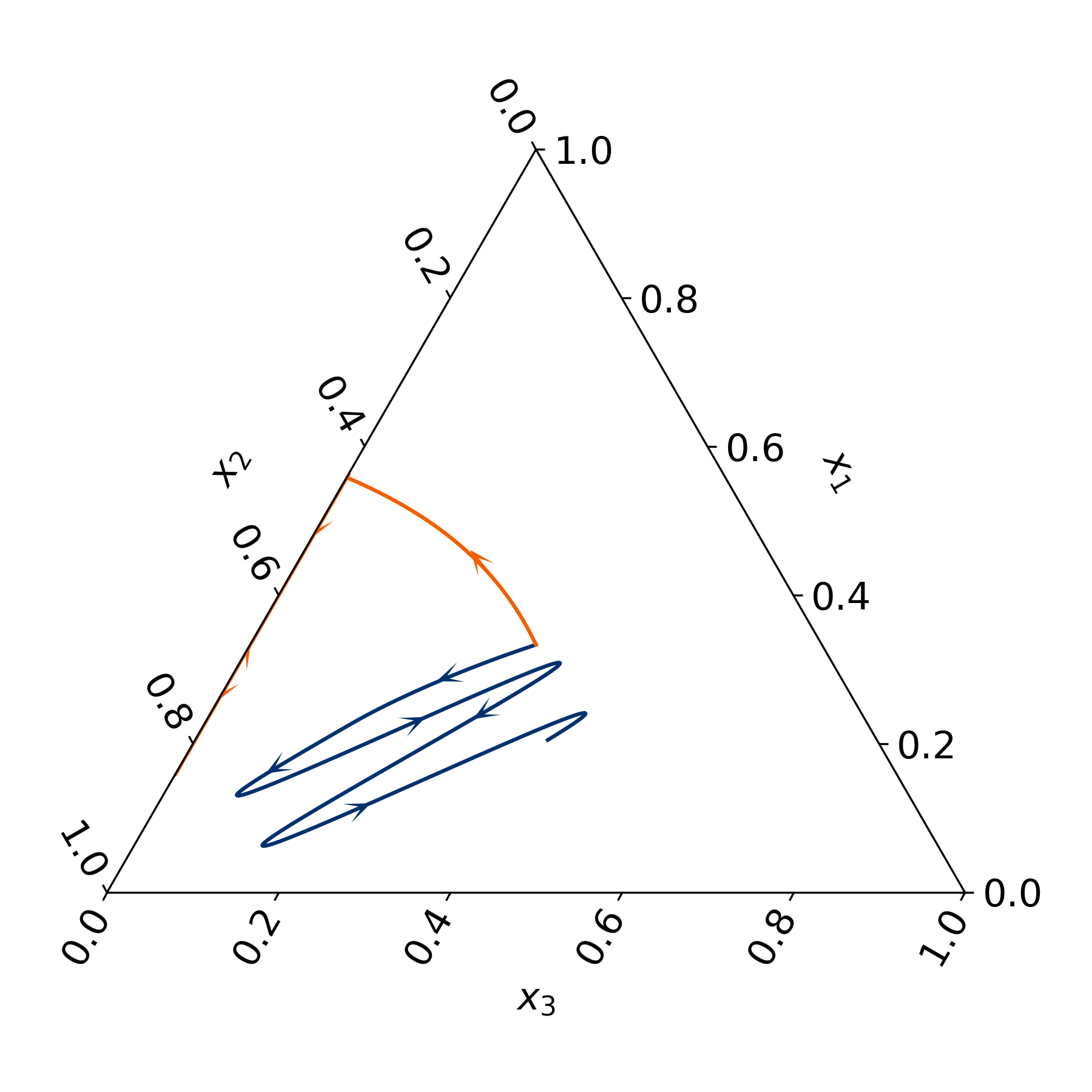}
        \hspace{-12pt}
        \raisebox{0.15\height}{
        \includegraphics[width=0.33\linewidth]{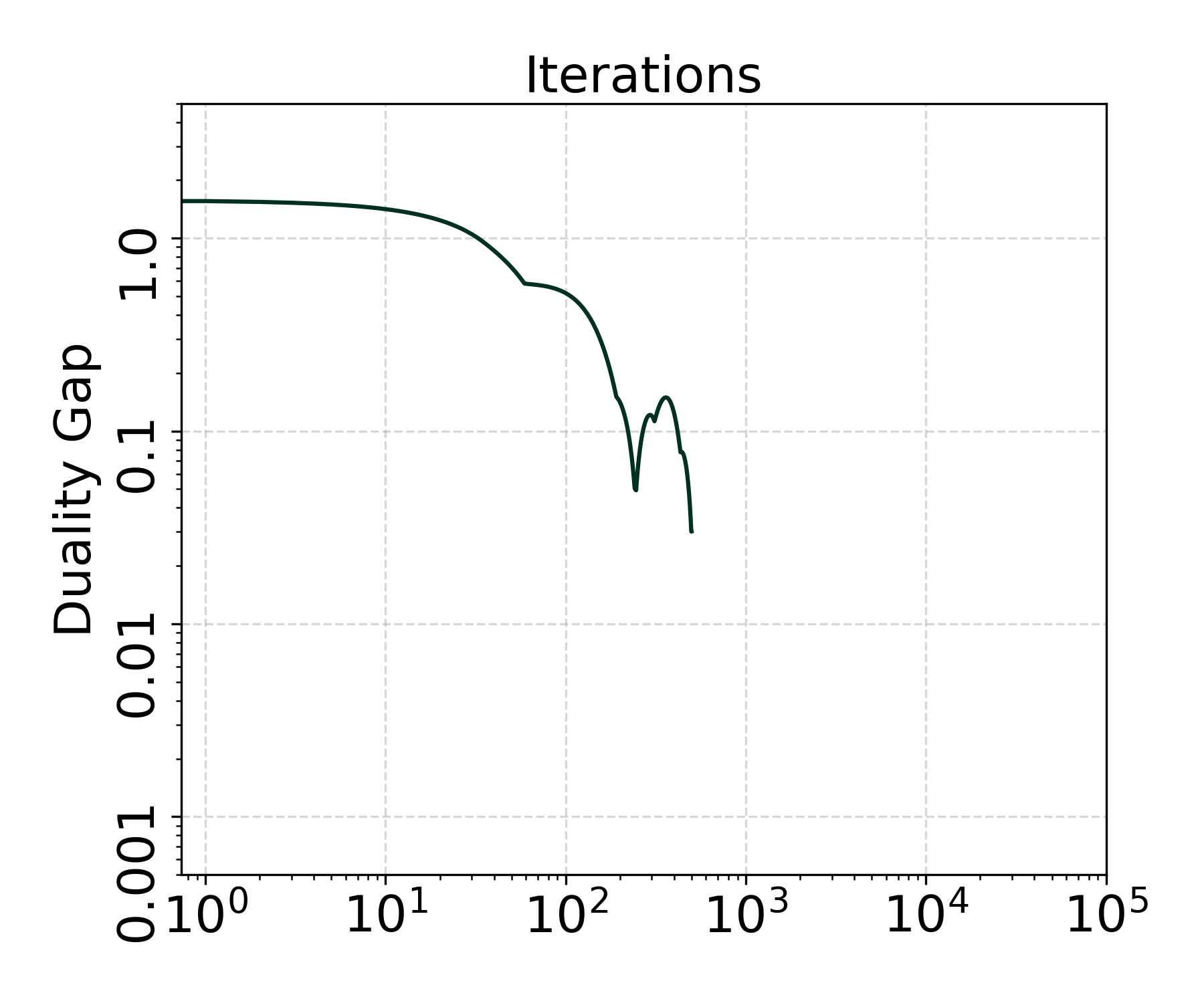}}
        \hspace{-10pt}
        \raisebox{0.15\height}{
        \includegraphics[width=0.33\linewidth]{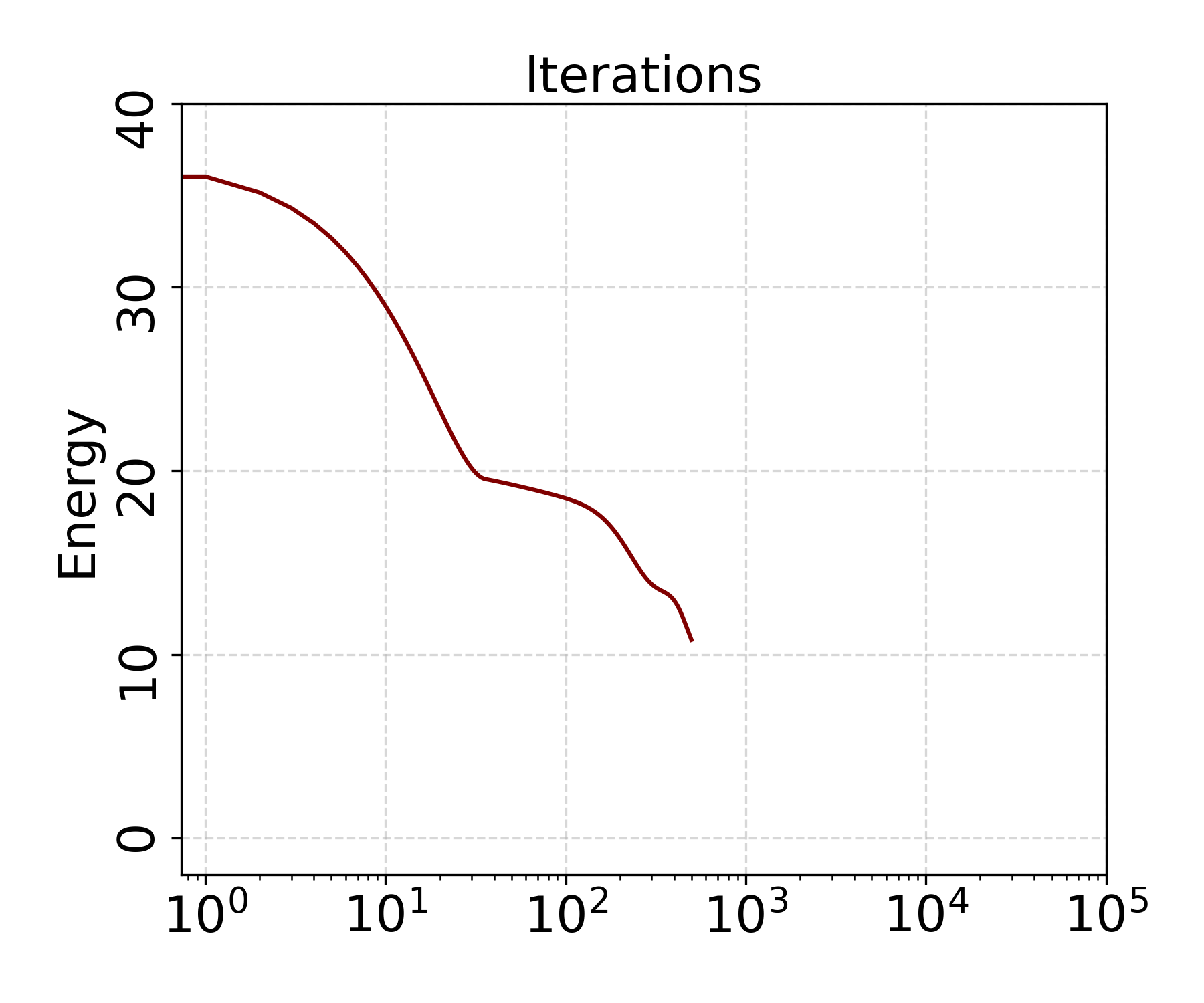}}
    \end{minipage}
    \begin{minipage}{5in}
        \centering
        \includegraphics[width=0.35\linewidth]{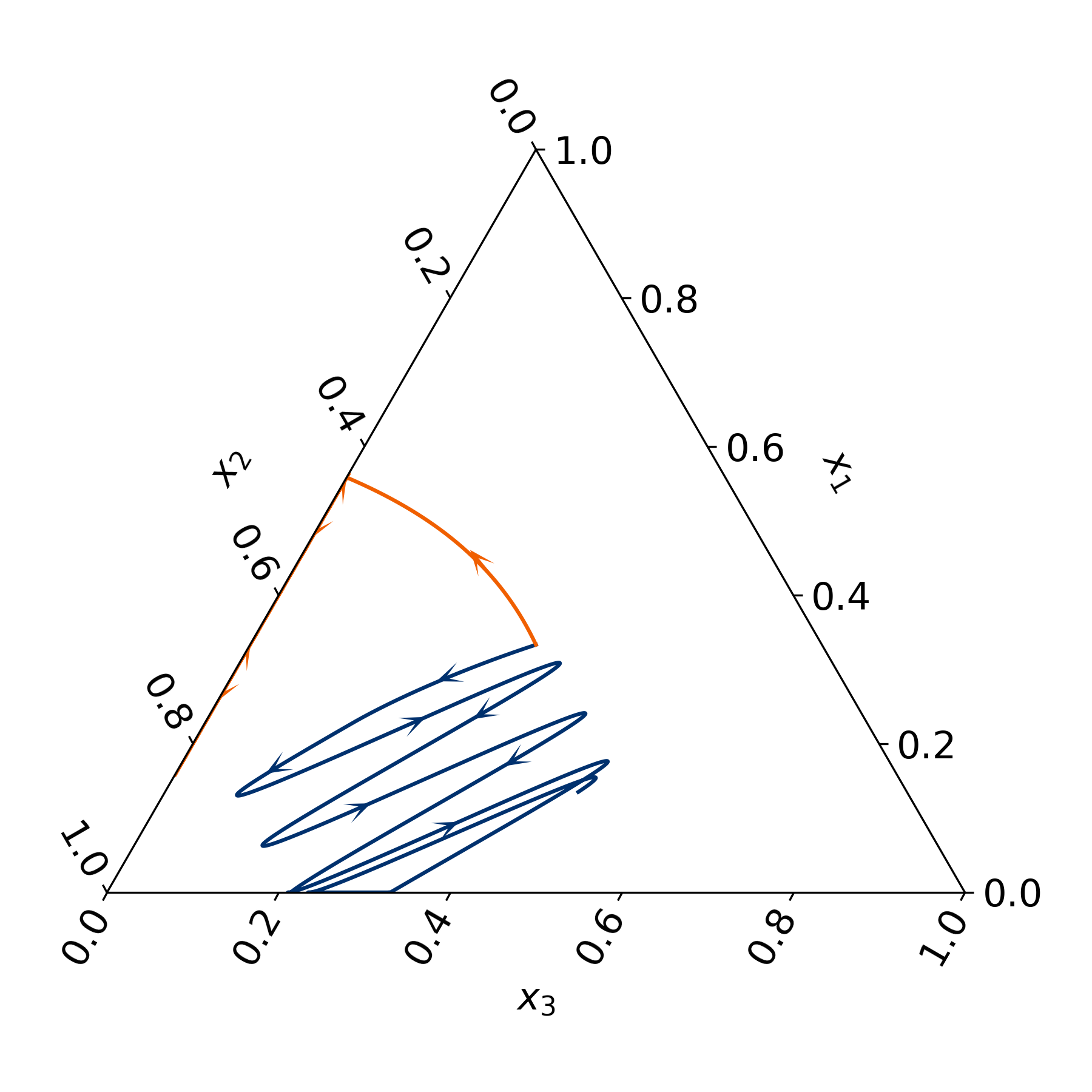}
        \hspace{-12pt}
        \raisebox{0.15\height}{
        \includegraphics[width=0.33\linewidth]{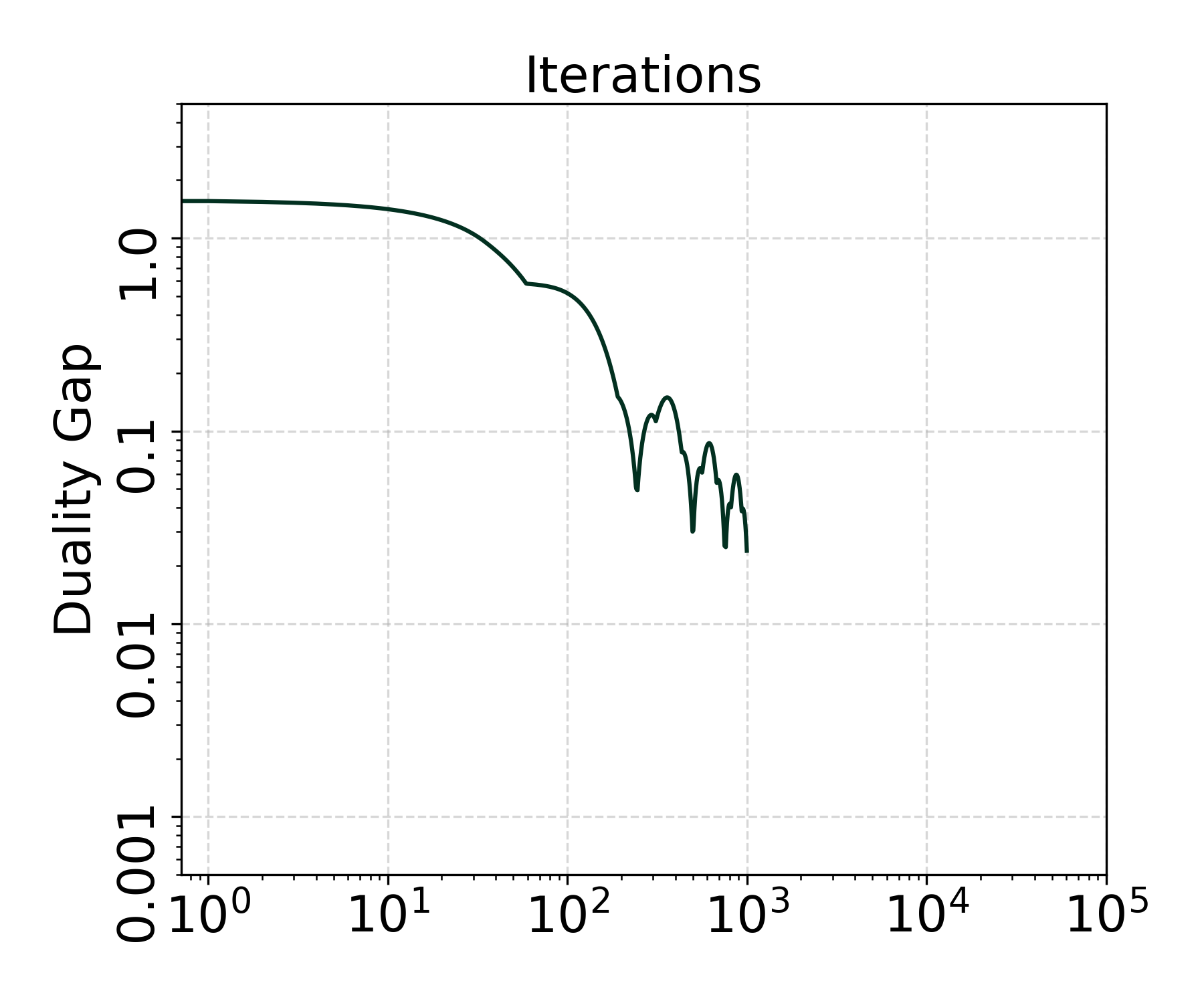}}
        \hspace{-10pt}
        \raisebox{0.15\height}{
        \includegraphics[width=0.33\linewidth]{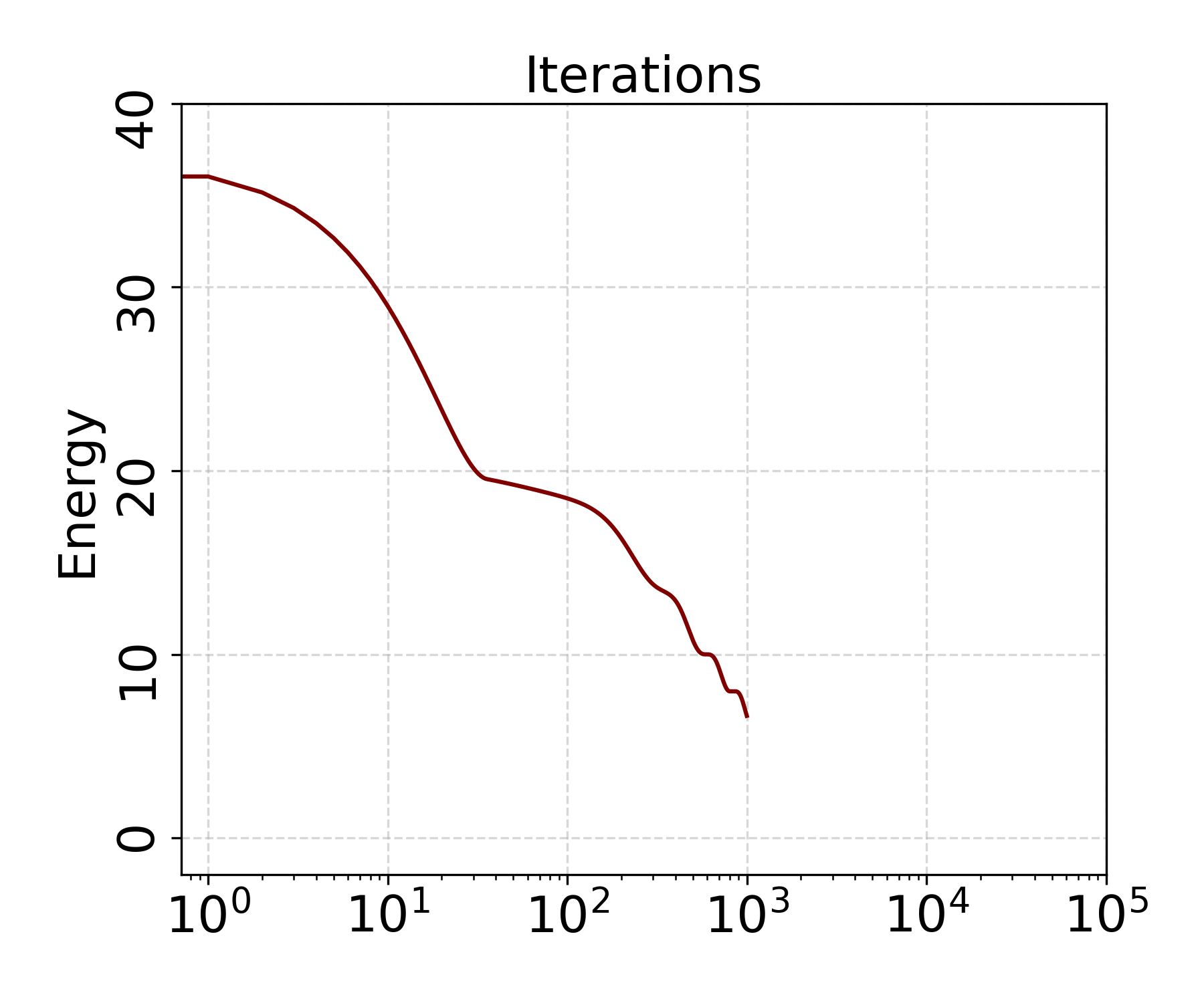}}
    \end{minipage}
    \begin{minipage}{5in}
        \centering
        \includegraphics[width=0.35\linewidth]{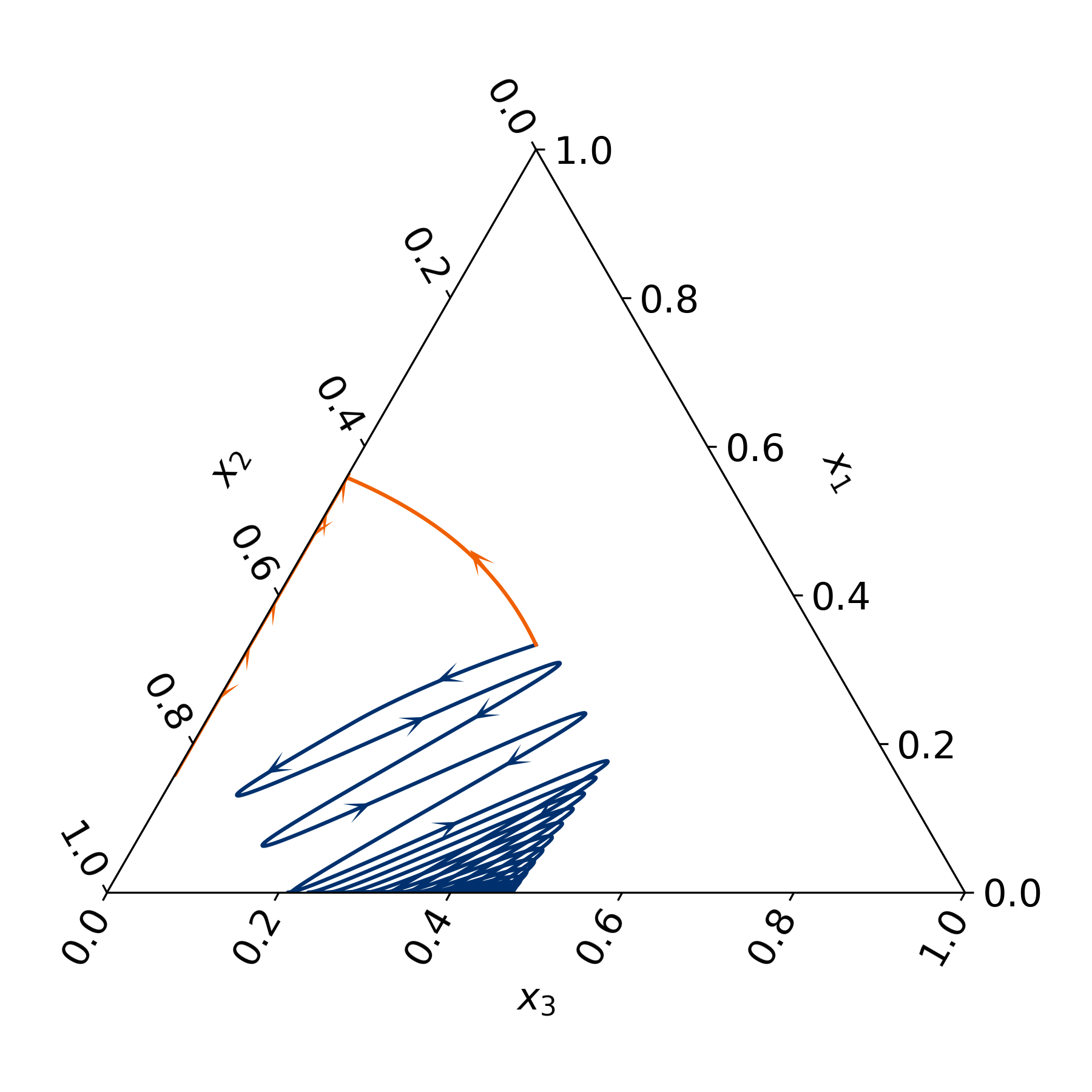}
        \hspace{-12pt}
        \raisebox{0.15\height}{
        \includegraphics[width=0.33\linewidth]{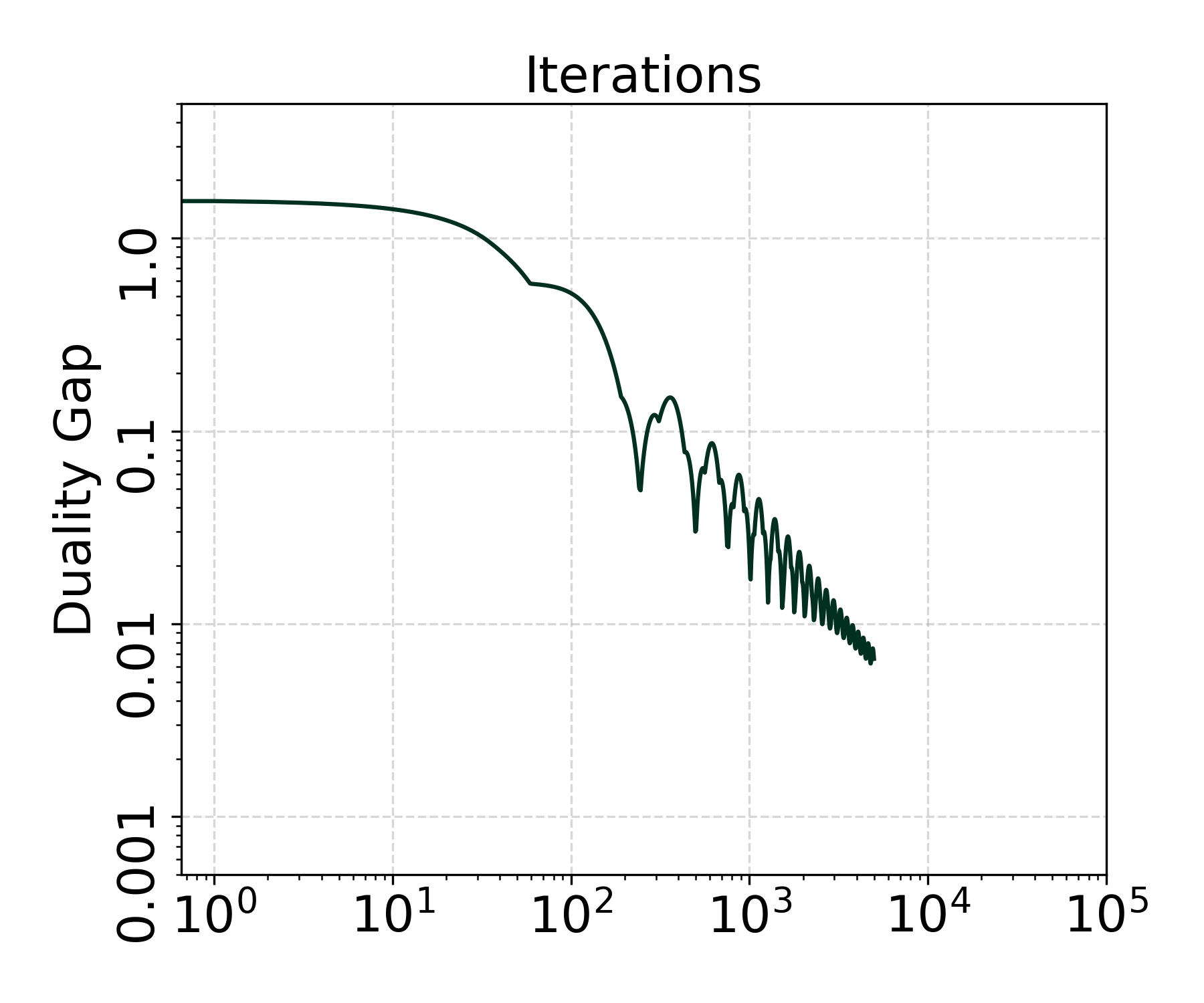}}
        \hspace{-10pt}
        \raisebox{0.15\height}{
        \includegraphics[width=0.33\linewidth]{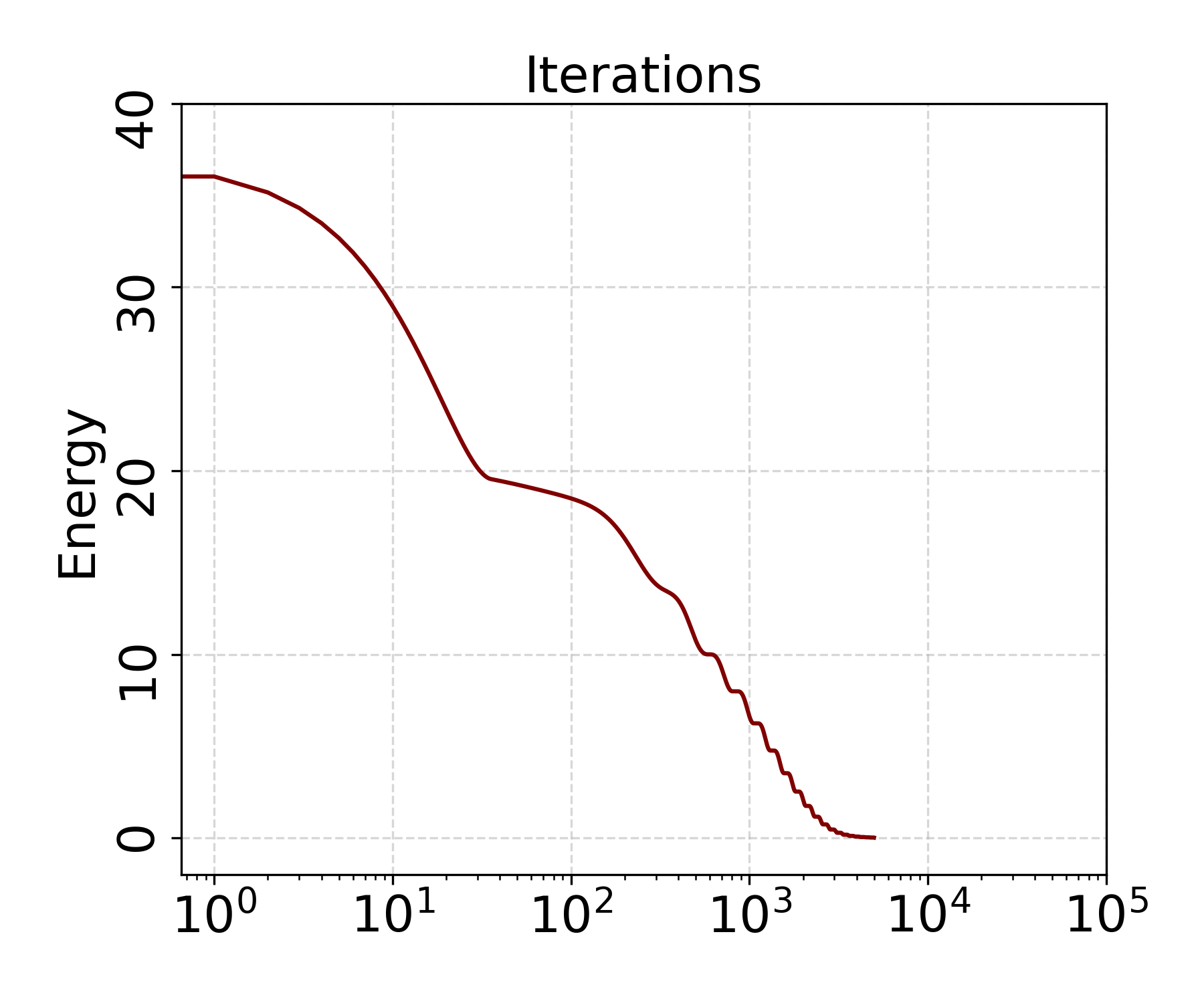}}
    \end{minipage}
    \begin{minipage}{5in}
        \centering
        \includegraphics[width=0.35\linewidth]{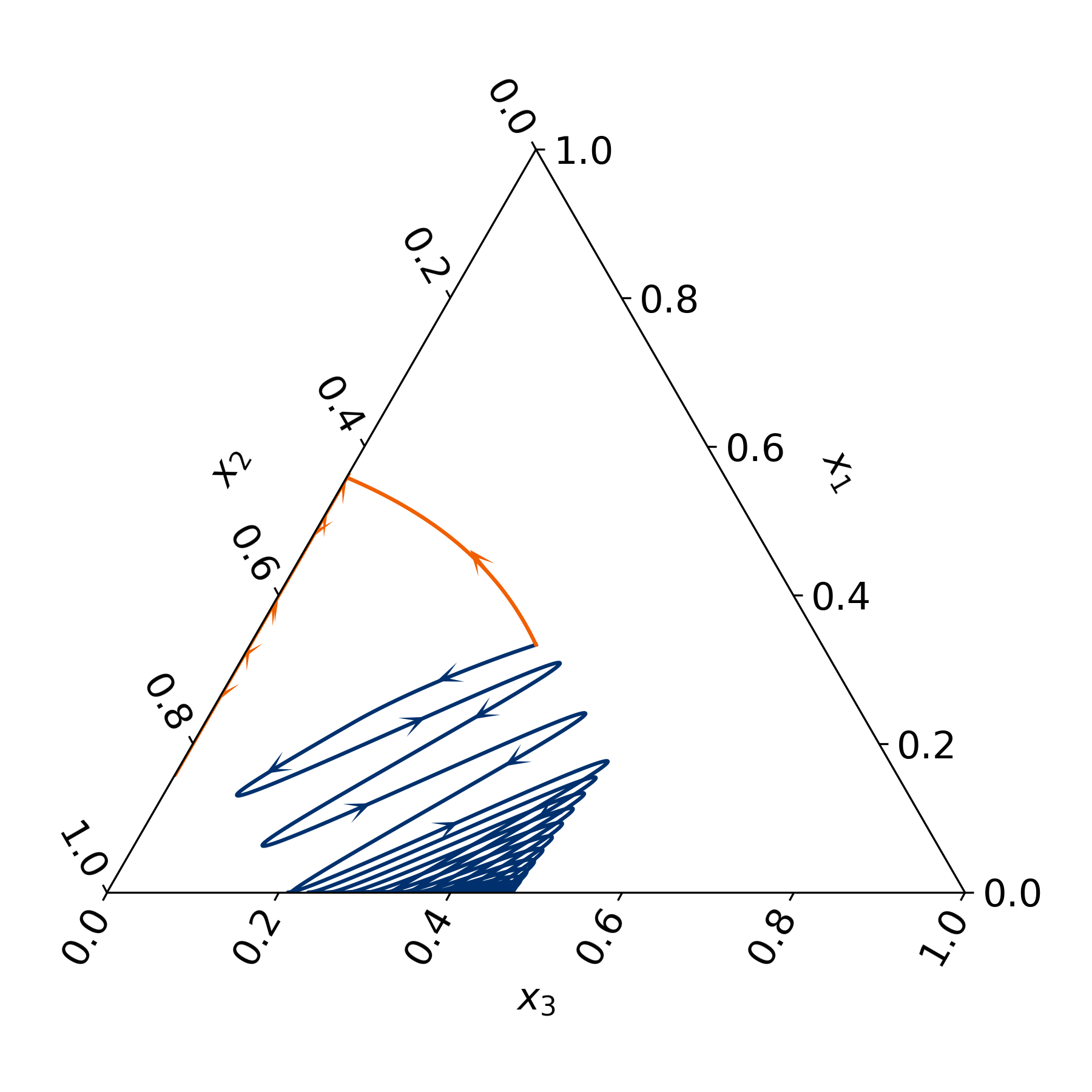}
        \hspace{-12pt}
        \raisebox{0.15\height}{
        \includegraphics[width=0.33\linewidth]{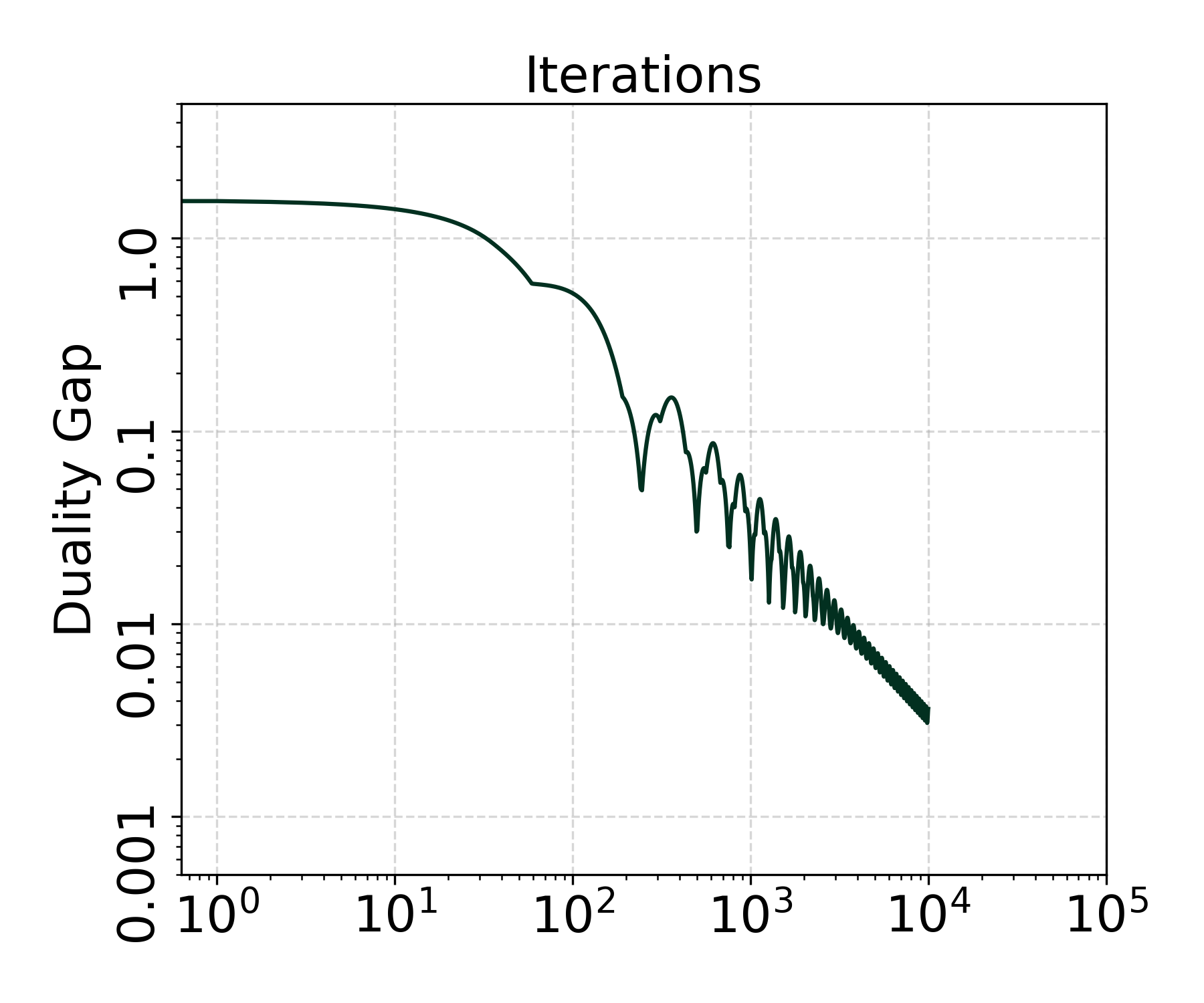}}
        \hspace{-10pt}
        \raisebox{0.15\height}{
        \includegraphics[width=0.33\linewidth]{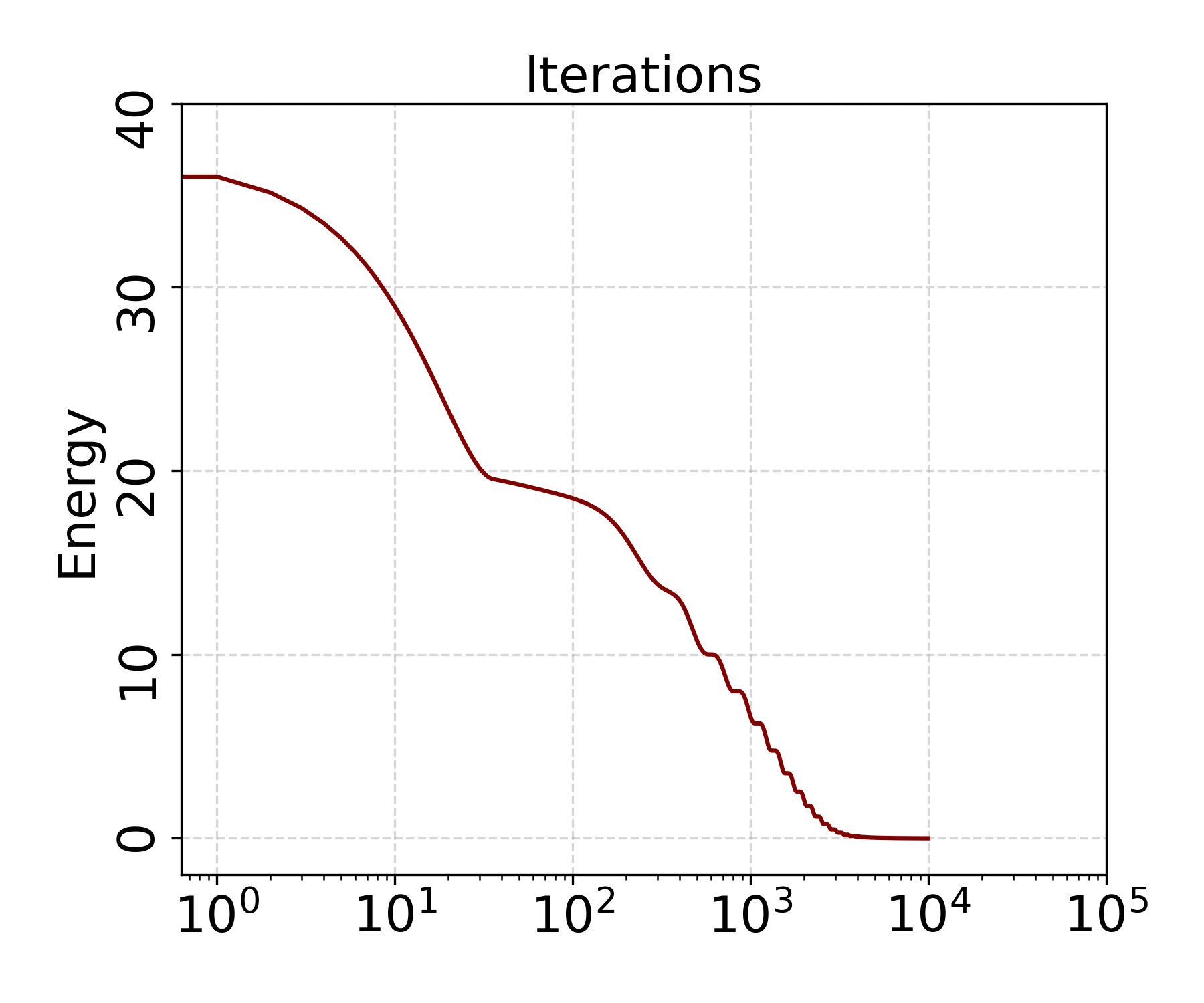}}
    \end{minipage}

    \caption{Evolution of~\cref{fig:without-interior-NE}}
    \label{fig:without-interior-NE-evolution}
\end{figure}



\end{document}